\keywords{tuple generating dependencies, ontology mediated queries, counting answers, parametrised complexity, approximation schemes}
\theoremstyle{plain} %\crefname{satz}{Satz}{S\"atze}
\theoremstyle{plain}\newtheorem{lem*}[thm]{Lemma}
\newcommand{\fun}[3]{\ensuremath{#1\colon #2 \to #3}}
\newcommand{\eqdef}{:=}
\renewcommand{\dom}{\ensuremath{\mn{adom}}}
\newcommand{\maxarity}{\mn{ar}(\Sbf)}
\newcommand{\OMQs}{OMQs\xspace}
\newcommand{\chase}{\mn{ch}}
\newcommand{\core}{\mn{core}}
\newcommand{\marked}{m}
\newcommand{\newschema}{s}
\newcommand{\fpt}{FPT\xspace}
\newcommand{\wOne}{\textnormal{\sc W[1]}\xspace}
\newcommand{\wOneC}{\#\textnormal{\sc W[1]}\xspace}
\newcommand{\wTwoC}{\#\textnormal{\sc W[2]}\xspace}
\newcommand{\aTwoC}{\#\textnormal{\sc A\![2]}\xspace}
\newcommand{\hideMe}[1]{}
\newcommand{\highlightColour}{black}
\newcommand{\highlightColourTwo}{black}
\begin{document}

%opening

\title[Answer Counting under Guarded TGDs]{Answer Counting under Guarded TGDs}

% \titlecomment{{\lsuper*}This article is an extended version of the conference paper
%     \cite{DBLP:conf/icdt/FeierLP21}}

%% etc.

%% required for running head on odd and even pages, use suitable
%% abbreviations in case of long titles and many authors:

%%%%%%%%%%%%%%%%%%%%%%%%%%%%%%%%%%%%%%%%%%%%%%%%%%%%%%%%%%%%%%%%%%%%%%%%%%%

%% the abstract has to PRECEDE the command \maketitle:
%% be sure not to issue the \maketitle command twice!

\begin{abstract}
  We study the complexity of answer counting for ontology-mediated
  queries and for querying under constraints, considering conjunctive
  queries and unions thereof (UCQs) as the query language and guarded
  TGDs as the ontology and constraint language, respectively.  Our
  main result is a classification according to whether answer counting
  is fixed-parameter tractable (FPT), \wOne-equivalent,
  \wOneC-equivalent, \wTwoC-hard, or \aTwoC-equivalent, lifting a
  recent classification for UCQs without ontologies and constraints
  due to Dell et al.~\cite{DBLP:conf/icalp/DellRW19}. The
  classification pertains to various structural measures of queries, namely
  treewidth, contract treewidth, starsize, and linked matching number.
  Our
  results rest
  on the assumption that the arity of relation symbols is bounded by a
  constant and, in the case of~ontology-mediated querying, that all
  symbols from the ontology and query can occur in the data (so-called
  full data schema).
  We also study the meta-problems for the mentioned structural
  measures, that is, to decide
  whether a given ontology-mediated query or constraint-query
  specification is equivalent to one for which the structural
  measure
  is bounded. % If the actual query is a
  % conjunctive query, this enables approximate answer counting in FPT
  % even when the original query is hard to count.
\end{abstract}

\author[C.~Feier]{Cristina Feier}[a]	%required
\address{Department of Computer Science, University of Warsaw, Poland}	%required
%\email{feier@uni-bremen.de}  %optional
%\thanks{thanks 1, optional.}	%optional

\author[C.~Lutz]{Carsten Lutz$^{\ast}$}[b]	%optional
\address{Institute of Computer Science, Leipzig University, Germany}	%optional
%\email{clu@informatik.uni-leipzig.de}  %optional
\thanks{$^\ast$The author was supported by the DFG project LU 1417/3-1 QTEC}	%optional

\author[M.~Przyby\l{}ko]{Marcin Przyby\l{}ko}[\!\ast\; b]	%optional
%\address{Institute of Computer Science, Leipzig University, Germany}	%optional
%\email{przybyl@uni-bremen.de}  %optional
%\thanks{The author was supported by the DFG project LU 1417/3-1 QTEC}%\thanks{thanks 3, optional.}	%optional

\maketitle

%\tableofcontents

%introduction
%\section*{TODO}
%\begin{enumerate}
%    \item move text
%    \item no restatable, no appendix
%    \item no subfigure -- fix some figures
%    \item remember that lmcs uses thm, lem, defi, etc\dots
%    \item does the reduction work for w/o bounded arity?
%\end{enumerate}

\section{Introduction}
\label{sec:intro}

Tuple-generating dependencies (TGDs) are a prominent formalism for
formulating database constraints. A TGD states that if certain facts
are true, then certain other facts must be true as well. This can be
interpreted in different ways. In \emph{ontology-mediated querying},
TGDs give rise to ontology languages and are used to derive new facts
in addition to those that are present in the database. This makes it
possible to obtain additional answers if the data is~incomplete and
also enriches the vocabulary that is available for querying. In a more
classical setup that we refer to as \emph{querying under constraints},
TGDs are used as integrity constraints on the database, that is, a TGD
expresses the promise that if certain facts are present in the
database, then certain other facts are present as well. Integrity
constraints are relevant to query optimization as their presence might
enable the reformulation of a~query into a~`simpler' one. TGDs
generalize a~wide range of other integrity constraints such as
referential integrity constraints (also known as inclusion
dependencies), which was the original motivation for introducing
them~\cite{AbHV95}.

When unrestricted TGDs are used as an ontology language,
ontology-mediated querying is undecidable even for unary queries that
consist of a single atom \cite{CaGK13}. This has led to intense
research on identifying restricted forms of TGDs that regain
decidability, see for instance \cite{BLMS11,CaGK13,CaGP12,LMTV19} and references
therein. In this paper, we consider guardedness as a basic and robust
such restriction: a TGD is guarded if some body atom, the guard,
contains all body variables~\cite{CaGK13}. % Let $\class{G}$ denote
% the class of guarded TGDs.
Guarded TGDs are useful also for formalizing integrity constraints.
For example, inclusion dependencies are a special case of guarded
TGDs. % \cite{AbHV95}.

In what follows, an ontology-mediated query (OMQ) is a triple
$(\Omc,\Sbf,q)$ with \Omc a set of TGDs (the ontology), \Sbf a data
schema, and $Q$ a union of conjunctive queries (UCQ).  Note that \Sbf
contains the relation symbols that can be used in the data while both
the ontology and query can also use additional symbols.  We use
$(\class{G},\class{CQ})$ to denote the language of OMQs in which the
ontology \Omc is a set of guarded TGDs and where $q$ is a conjunctive
query (CQ), and likewise for $(\class{G},\class{UCQ})$.  For querying
under constraints, we consider constraint query specifications (CQSs)
of the form $(\Tmc,\Sbf,q)$ where \Tmc is a set of TGDs (the integrity
constraints) and $q$ is a query, both over schema~\Sbf. Overloading
notation, we use $(\class{G},\class{(U)CQ})$ also to denote the class
of CQSs in which the constraints are guarded TGDs and the queries are
(U)CQs; it will always be clear from the context whether
$(\class{G},\class{(U)CQ})$ denotes an OMQ language or a class of
CQSs.

While being decidable, both ontology-mediated querying and querying
under constraints with guarded TGDs is computationally intractable in
combined complexity. Let us make this precise for query evaluation,
which is the following problem: given a database $D$, a query $Q$, and
a candidate answer $\bar c$, decide whether $\bar c$ is indeed an
answer to $Q$ on $D$.  Evaluating OMQs from $(\class{G},\class{CQ})$
is \TwoExpTime-complete in combined complexity and the same holds for
$(\class{G},\class{UCQ})$~\cite{CaGK13}; in both cases, the complexity
drops to \ExpTime if the arity of relation symbols is bounded by a
constant. Query evaluation for CQSs from~$(\class{G},\class{CQ})$ and
$(\class{G},\class{UCQ})$ is \NPclass-complete.

In this article, we are interested in counting the number of answers to
OMQs and to  queries posed under integrity constraints,
% clu: a CQS is not a query, unlike an OMQ!
% CQSs,
with an
emphasis on the limits of efficiency from the viewpoint of
parameterized complexity theory. Counting the number of answers
is~important to inform the user when there are too many answers to
compute all of them, and it is supported by almost every data
management system. It is also a~fundamental operation in data
analytics and in~decision support where often the count is more
important than the actual answers. Despite its relevance, however, the
problem has received little attention in ontology-mediated
querying and querying under constraints. We refer to
\cite{DBLP:journals/ws/KostylevR15,DBLP:journals/ojsw/KostovK18,Meghyn-IJCAI20,Diego-IJCAI20,DBLP:conf/kr/BienvenuMT22} %underConstraints??
for notable exceptions regarding ontology-mediated querying
that, however, study different counting problems than the present
paper.

We equate efficiency with fixed-parameter tractability (FPT), the
parameter being the size of the OMQ and of the CQS,
respectively. Evaluation is \wOne-hard both for ontology-mediated
querying in $(\class{G},\class{(U)CQ})$ and for querying under
constraints in $(\class{G},\class{(U)CQ})$ \cite{BDFLP-PODS20}. These
lower bounds apply already to Boolean queries where evaluation and
counting coincide, and therefore answer counting is in general not
fixed-parameter tractable in the mentioned cases unless FPT =
\wOne. % \footnote{Evaluation and answer counting is the same problem
  % for Boolean queries.}
The main question that we address is: how can we characterize the
parameterized complexity of answer counting for classes of OMQs or
CQSs $\class{Q} \subseteq (\class{G},\class{(U)CQ})$ and, most
importantly, for which such classes $\class{Q}$ can we count answers
in FPT? The classes $\class{Q}$ will primarily be defined in terms of
structural measures of the (U)CQ, but will also take into account
the interplay between the ontology/constraints and the (U)CQ.  Note
that \PTime combined complexity, a (significant) strengthening of FPT,
cannot be obtained by structural restrictions on the UCQ in
ontology-mediated querying with $(\class{G},\class{(U)CQ})$ because
evaluating Boolean OMQs is \TwoExpTime-complete already for unary
single-atom queries. For querying under constraints, in contrast,
\PTime combined complexity is not excluded up-front and in the case
of query evaluation can in fact sometimes be attained in
$(\class{G},\class{(U)CQ})$ \cite{BaGP16,BFGP20}.

A seminal result due to Grohe states that a recursively enumerable
class $\class{Q}$ of CQs can be evaluated in FPT if and only if there
is a constant that bounds the treewidths of CQs in~$\class{Q}$, modulo
equivalence \cite{Grohe07}. Grohe considers only Boolean CQs, but it
is well-known that the result lifts to the non-Boolean case when the
treewidth of a CQ $q$ is taken to mean the treewidth of the Gaifman
graph of $q$ after dropping all answer variables, see for instance
\cite{BFLP19,BDFLP-PODS20}.  The result rests on the assumptions that
$\text{FPT} \neq \wOne$ and that the arity of relation symbols is
bounded by a constant, which we shall also assume throughout this
article. Grohe's result extends to UCQs in the expected way, that is,
the characterization for UCQs is in terms of the maximum treewidth of
the constituting CQs modulo equivalence, assuming w.l.o.g.\ that there
are no containment relations among the CQs.  An adaptation of Grohe's
proof was used by Dalmau and Jonsson to show that a class $\class{Q}$
of CQs without quantified variables admits answer counting in FPT if
and only if the treewidths of CQs in $\class{Q}$ is bounded by a
constant \cite{countingHomomorphisms}.
%                                required.
In a series of papers by Pichler and Skritek
\cite{DBLP:journals/jcss/PichlerS13}, Durand and Mengel
\cite{DBLP:journals/jcss/DurandM14,countingQueriesSructural}, Chen and
Mengel \cite{countingTrichotomy,countingPositiveQueries}, and Dell et
al.~\cite{DBLP:conf/icalp/DellRW19}, this was extended to a rather
detailed classification of the parameterized complexity of answer
counting for classes of CQs and UCQs that may contain both answer
variables and quantified variables. The characterization is based on
the structural measure of treewidth, which now refers to the entire
Gaifman graph including the answer variables. It also refers to the
additional measures of contract treewidth,
starsize,\footnote{The measure is called dominating starsize in
  \cite{DBLP:conf/icalp/DellRW19} and strict starsize in
  \cite{countingTrichotomy}. We only speak of starsize.  Note that
  this is not identical to the original notion of starsize from
  \cite{DBLP:journals/jcss/DurandM14,countingQueriesSructural}.} and
linked matching number. It links boundedness of these measures by a
constant, modulo equivalence, to the relevant complexities, which turn
out to be FPT, \wOne-equivalence, \wOneC-equivalence, \wTwoC-hardness,
and \aTwoC-equivalence. Here, we speak of `equivalence' rather than of
`completeness' to emphasize that hardness is defined in terms of
(parameterized counting) Turing (fpt-)reductions.

The main results of this article are classifications of the complexity
of answer counting for classes of ontology-mediated queries from
$(\class{G},\class{(U)CQ})$, assuming that the data schema contains
all symbols used in the ontology and query, and for classes of
constraint query specifications from $(\class{G},\class{(U)CQ})$.  Our
classifications parallel the one for the case without TGDs, involve
the same five complexities mentioned above, and link them to the same
structural measures. There is, however, a twist. The ontology interacts
with all of the mentioned structural measures in the sense that for
each measure, there is a class of CQs $\class{Q}$ and an ontology~\Omc
such that the measure is unbounded for $\class{Q}$ modulo
equivalence while there is a constant $k$ such that each OMQ
$(\Omc,\Sbf,q)$, $q \in \class{Q}$, is equivalent to an OMQ
$(\Omc,\Sbf,q')$ with the measure of $q'$ bounded by $k$. % In particular, answer counting is \aTwoC-equivalent for
% $\class{C}$, but it is in FPT for the described class of OMQs.
A similar effect can be observed for querying under constraints. We
can thus not expect to link the complexity of a class $\class{Q}$ of
OMQs to the structural measures of the actual queries in the
OMQs. Instead, we consider a certain class $\class{Q}^\ast$ of CQs
that we obtain from the OMQs in $\class{Q}$ by first rewriting away
the existential quantifiers in TGD heads in the ontology, then taking
the CQs that occur in the resulting OMQs, combining them conjunctively
guided by the inclusion-exclusion principle, next chasing them with
the ontology (which is a finite operation due to the first step), and
then taking the homomorphism core. The structural measures of
$\class{Q}^\ast$ turn out to determine the complexity of answer
counting for the original class of OMQs $\class{Q}$. Interestingly,
the same is also true for classes of constraint query specifications
and thus the characterizations for OMQs and for CQSs coincide. We in
fact establish the latter by mutual reduction between answer counting
for OMQs and answer counting for CQSs.

We also take a brief look at approximate counting. For CQs without
ontologies, significant progress has recently been made by Arenas et
al.\ \cite{counting-cq-approx} who show that a class of CQs
$\class{Q}$ admits a fully polynomial randomized approximation scheme
(FPRAS) if and only if there is a constant bound $k$ on the treewidth
of the queries in $\class{Q}$ where, as for exact counting, treewidth
refers to the Gaifman graph including answer variables. This result is
subject to the assumptions that $\wOne \neq \text{FPT}$,
$\text{P}=\text{BPP}$, the arity of relation symbols is bounded by a
constant, and for every $q \in \class{Q}$, there is a self-join free
$q' \in \class{Q}$ that has the same hypergraph. We observe that it is
not hard to derive from this the existence of a fixed-parameter
tractable randomized approximation scheme (FPTRAS) for classes of OMQs
$\class{Q} \subseteq (\class{G},\class{UCQ})$ that have bounded
treewidth modulo equivalence. We leave a matching lower bound as an
open problem. It is an interesting contrast that the condition for the
existence of an FPTRAS is much less intricate than that for exact
counting, in particular bypassing the class of CQs $\class{Q}^\ast$
mentioned above.

Inspired by our complexity classifications, we then proceed to
  study the meta problems to decide whether a given query is
  equivalent to a query in which some selected structural measures are
  small, and to construct the latter query if it exists.  We do this
  both for ontology-mediated queries and for queries under constraints,
  considering all four measures that are featured in the
  classifications (and sets thereof). We start with querying under
  constraints where we are able to obtain decidability results in all
  relevant cases.
% This leads, in particular, to an approximate
% version of answer counting in FPT: given a CQS $(\Tmc,\Sbf,q) \in
% (\class{G},\class{CQ})$, an \Sbf-database $D$, and a $k \geq 1$, we
% can count in FPT the answers on $D$ to the approximation of $q$ under
% \Tmc in terms of both treewidth $k$ and contract treewith $k$.
%
These results can also be applied to ontology-mediated querying
when (i)~the data schema contains all symbols used in the ontology and
query and (ii)~we require that the ontology used in the OMQ cannot
be replaced with a different one. % Also there,
% we then obtain an approximate version of answer counting in FPT.
For contract treewidth and starsize, we additionally show that it is
never necessary to modify the ontology to attain equivalent OMQs
with small measures, and we provide decidability results without
assumptions~(i) and~(ii). We also observe that treewidth behaves
differently in that modifying the ontology might result in smaller
measures. Deciding the meta problem for the measure of treewidth is
left open as an interesting and non-trivial open problem.

This article is an extended version of the conference paper
\cite{DBLP:conf/icdt/FeierLP21}.  Some proofs are deferred to the
appendix. Since we rely profoundly on (refinements of) results
due to Chen and Mengel
\cite{countingTrichotomy,countingPositiveQueries}, we also provide
in the appendix summaries of the proofs of those results.

% \bigskip
% \bigskip
% \bigskip

% The limits of efficiency of OMQ evaluation in
% $(\class{G},\class{(U)CQ})$ and of query evaluation under constraints
% in $(\class{G},\class{(U)CQ})$ have recently been analyzed in
% \cite{PODS} from the viewpoint of parameterized complexity, the
% parameter being the size of the OMQ and of the CQS, respectively.
% Query evaluation is \wOne-hard in both cases, and thus in general not
% fixed-parameter tractable. A main result of \cite{PODS} is that
% evaluation for a class of OMQs $\class{Q} \subseteq
% (\class{G},\class{(U)CQ})$ is in FPT if and only if there is a
% constant $k \geq 1$ such that every OMQ $Q=(\Omc,\Sbf,q) \in
% \class{Q}$ is equivalent to an OMQ $Q'=(\Omc',\Sbf,q') \in
% (\class{G},\class{(U)CQ})$ with each CQ in $q'$ of treewidth at
% most $k$; likewise, a class of CQSs $\class{C} \subseteq
% (\class{G},\class{(U)CQ})$ is in FPT if and only if there is a
% constant $k \geq 1$ such that every $q \in \class{C}$ is equivalent to
% a UCQ $q'$ in which every CQ has treewidth at most $k$.

\medskip

{\bf Related Work.} The complexity of ontology-mediated querying has
been a subject of intense study from various angles, see for example
\cite{DBLP:conf/rweb/BienvenuO15,DBLP:journals/tods/BienvenuCLW14,DBLP:journals/jods/PoggiLCGLR08}
and references therein.  The parameterized complexity of evaluating
ontology-mediated queries has been studied in
\cite{BFLP19,BDFLP-PODS20, feier-icdt-2022}. While \cite{BFLP19}
consider description logics such as $\mathcal{ELI}$ as the ontology
language, \cite{BDFLP-PODS20, feier-icdt-2022} focus on
$(\class{G},\class{UCQ})$. Query evaluation in FPT coincides with
bounded treewidth modulo equivalence when the arity of relation
symbols is bounded by a constant, unless $\text{FPT} = \wOne$
\cite{BDFLP-PODS20}. When there is no such bound, then it coincides
with bounded submodular width modulo equality unless the exponential
time hypothesis fails\cite{feier-icdt-2022}.
Counting in ontology-mediated querying has been considered in
\cite{DBLP:journals/ws/KostylevR15,DBLP:journals/ojsw/KostovK18,Meghyn-IJCAI20,Diego-IJCAI20,DBLP:conf/ijcai/BienvenuMT21,
  DBLP:conf/dlog/BienvenuMT21}. There, conjunctive queries are
equipped with dedicated counting variables and the focus is to decide,
given an OMQ $Q=(\Omc,\Sbf,q)$, an \Sbf-database $D$ and a $k \geq 0$
whether there is a model of $D$ and \Omc such that the homomorphisms
from $q$ to that model yield at least/at most $k$ bindings of the
counting variables.  The ontology languages studied are versions of
the description logic DL-Lite, with the exception of
\cite{DBLP:conf/dlog/BienvenuMT21} which studies the description logic
$\mathcal{ELI}$. These can all be viewed as guarded TGDs, up to a
certain syntactic normalization in the case of $\mathcal{ELI}$.

Query evaluation under constraints that are guarded TGDs has been
considered in \cite{BaGP16,BFGP20}. A main result is an~FPT upper
bound for CQs that have bounded generalized hypertreewidth modulo
equivalence. These papers also study the meta problem for querying
under constraints that are guarded TGDs and for the measure of
generalized hypertree width. A topic closely related to the evaluation
of queries under constraints is query containment under constraints,
see for example \cite{DBLP:conf/pods/CalvaneseGL98,JoKl84,
  DiegoFunctional16}. We are not aware that answer counting under
integrity constraints has been studied before.

%\cMP{Do we want Diego functional paper? \cite{DiegoFunctional16}}
%%% Local Variables:
%%% mode: latex
%%% TeX-master: "paper_main"
%%% End:

%preliminaries and basic definitions
\section{Preliminaries}
\label{sect:prelim}

% We consider the disjoint countably infinite sets $\Cbf$ and $\Vbf$ of {\em constants} and {\em variables}, respectively.
% %(used in queries and dependencies), respectively.
% We refer to constants and variables as {\em terms}.
For an integer $n
\geq 1$, we use $[n]$ to denote the set $\{1,\ldots,n\}$. To indicate
the cardinality of a set $S$, we may write $\#S$ or~$|S|$.

\medskip
\noindent
%\paragraph{Relational Databases.}
{\bf Relational Databases.}
A {\em schema} \Sbf
is a set of relation symbols $R$ with associated arity $\mn{ar}(R) \geq 0$.
We write $\mn{ar}(\Sbf)$ for
$\max_{R \in \Sbf} \{\mn{ar}(R)\}$.
% and call $\Sbf_f$ the \emph{full
%   schema}.
An {\em \Sbf-fact} is an expression of
the form $R(\bar c)$, where $R \in \Sbf$ and $\bar c$ is an
$\mn{ar}(R)$-tuple of constants. % from \Cbf.
An {\em $\Sbf$-instance} is a (possibly infinite) set of \Sbf-facts
and an {\em \Sbf-database} is a finite $\Sbf$-instance. We write
$\mn{adom}(I)$ for the set of constants in an instance $I$.
For a set $S \subseteq \mn{adom}(I)$, we denote by $I_{|S}$ the
restriction of $I$ to
facts that mention only constants from $S$.
%
%As usual, we write $\size{I}$ for the size of $I$.
%
A {\em homomorphism} from $I$ to an instance $J$ is a function $h :
\mn{adom}(I) \rightarrow \mn{adom}(J)$ such that $R(h(\bar c)) \in J$
for every $R(\bar c) \in I$ where $h(\bar c)$ means the component-wise
application of $h$.
% We write $I \rightarrow J$ for the fact that there is a homomorphism from $I$ to $J$.
A \emph{guarded set} in a database $D$ is a set $S \subseteq
\mn{adom}(D)$ such that all constants in $S$ jointly occur in a fact
in~$D$, possibly together with other constants.  With a maximal
guarded set, we mean a guarded set that is maximal regarding set
inclusion.

We next introduce some operations on instances that are used in the
paper.  An \emph{induced subinstance} of an \Sbf-instance $I$ is any
\Sbf-instance $I'$ obtained from $I$ by choosing a
$\Delta \subseteq \mn{adom}(I)$ and putting
$I'=\{ R(\bar c) \in I \mid \bar c \in \Delta^{\mn{ar}(R)}\}$. If $I$ and
$I'$
are finite, then we speak of an \emph{induced subdatabase}.
The
\emph{disjoint union} of two \Sbf-instances $I_1$ and $I_2$ with
$\mn{adom}(I_1) \cap \mn{adom}(I_2)=\emptyset$ is simply
$I_1 \cup I_2$.  The \emph{direct product} of two \Sbf-instances $I_1$
and $I_2$ is the \Sbf-instance $I$ with domain
$\mn{adom}(I)=\mn{adom}(I_1) \times \mn{adom}(I_2)$ defined as
$$I=\{R((a_1,b_1),\dots,(a_n,b_n)) \mid R(a_1,\dots,a_n) \in I_1 \text{
  and } R(b_1,\dots,b_n) \in I_2 \}.$$
An instance $I'$ is obtained from an instance $I$ by \emph{cloning
  constants} if $I' \supseteq I$ can be constructed by choosing
$c_1,\dots,c_n \in \mn{adom}(I)$ and positive integers
$m_1,\dots,m_n$, reserving fresh constants
$c^{i_1}_1,\dots,c^{i_n}_n$ with $1 \leq i_\ell \leq m_\ell$ for $1 \leq
\ell \leq n$, and adding to $I$ each atom $R(\bar c')$ that can be
obtained from some $R(\bar c) \in I$ by replacing each occurrence of
$c_i$, $1 \leq i \leq n$, with $c^j_i$ for some $j$ with $1 \leq j
\leq m_i$.

\medskip
\noindent
%\paragraph{CQs and UCQs.}
{\bf CQs and UCQs.}
A {\em conjunctive query} (CQ) $q(\bar x)$ over a schema $\Sbf$ is a first-order
formula of the form
$\exists \bar y \, \varphi(\bar x, \bar y)$
%\end{equation}
where $\bar x$ and $\bar y$ are disjoint tuples of variables and
$\varphi$ is a conjunction that may contain \emph{relational atoms}
$R_i(\bar x_i)$ with $R_i \in \Sbf$ and $\bar x_i$ a tuple of
variables of length $\mn{ar}(R_i)$ as well as \emph{equality atoms}
$x_1 = x_2$. The variables used in $\vp$ must be exactly those in
$\bar x$ and $\bar y$, and only variables from $\bar x$ may appear in
equality atoms. {\color{\highlightColour}We assume that $\bar x$ contains no
  repeated variables, which is w.l.o.g.\ due to the presence of
  equality atoms.}  With $\mn{var}(q)$, we denote the set of variables
that occur in $\bar x$ or in $\bar y$.  Whenever convenient, we
identify a conjunction of atoms with a set of atoms. When we are not
interested in order and multiplicity, we treat $\bar x$ as a set of
variables. {\color{\highlightColour}A CQ is \emph{equality-free} if it contains no
  equality atoms.}
%
%If $\bar x$ is empty, then $q$ is a \emph{Boolean CQ}.
%
%We write $\var{q}$ for the set of variables occurring in $q$. By abuse of notation, we may write $\alpha \in q$ to indicate that the atom $\alpha$ occurs in $q$.
%
%The evaluation of CQs is defined in terms of homomorphisms.
Note that we do not admit constants in CQs.\footnote{\color{\highlightColour}We believe that,
  in principle,
  our results can be adapted to the case with constants. This requires
  a suitable revision of the structural measures defined in
  Section~\ref{sect:notgds} as, for example, constants should not
  contribute to the treewidth of a CQ. Also, the results for CQs
  without ontologies that we build upon would first have to be
  extended to include constants.}  We write $\class{CQ}$ for the class
of all CQs.

Every CQ $q(\bar x)$ can be seen as a database $D_q$ in a natural way,
namely by dropping the existential quantifier prefix and the equality
atoms, and viewing variables as constants.
% $D_q$, known as the {\em
%   canonical database} of~$q$, obtained by dropping the existential
% quantifier prefix and the equality atoms, and viewing variables as
% constants.
%
%We may simply write $q$ instead of $D[q]$.
%
  A \emph{homomorphism}
$h$  from a CQ $q$ to an instance $I$ is a homomorphism from $D_q$ to
 $I$ such that $x=y \in q$ implies $h(x)=h(y)$.
 A tuple $\bar c \in \mn{adom}(I)^{|\bar x|}$ is an {\em answer} to
 $q$ on $I$ if there is a homomorphism $h$ from $q$ to $I$ with
 $h(\bar x) = \bar c$.

 % {\color{cyan} The presence of equality atoms brings some subtleties.
 %   In particular, unless $q(\bar x)$ is equality-free it is not
 %   guaranteed that there is a homomorphism from $q$ to $D_q$ that is
 %   the identity on $\bar x$.  We use $\tilde q(\bar y)$ to denote the
 %   CQ obtained from $q$ by removing all equality atoms and identifying
 %   any two variables $x_1,x_2$ with $x_1 = x_2 \in q$.  To achieve
 %   that the result is unique, we assume a fixed order $\prec$ on the
 %   variables and that the identification results in $x_1$ when
 %   $x_1 \prec x_2$.  With $D^\sim_q$, we mean the database
 %   $D_{\tilde q}$. There is a homomorphism from $q$ to $D^\sim_q$ that
 %   is the identity on~$\bar y$. We shall thus use $D^\sim_q$ in place
 %   of $D_q$ whenever this property is important to us.  We remark that
 %   $q$ and $\tilde q$ have the same number of answers on any database.}

 A {\em union of conjunctive queries} (UCQ) over a schema $\Sbf$ is a first-order formula of the form
$
q(\bar x) := q_1(\bar x) \vee \cdots \vee q_n(\bar x),
$
where $n \geq 1$, and $q_1(\bar x),\dots, q_n(\bar x)$ are CQs over
$\Sbf$.
We refer to the variables in $\bar x$ as the \emph{answer variables}
of $q$ and the {\em arity} of $q$ is defined as the number of its answer
variables. An example for a UCQ with two answer variables $x_1,x_2$ is $x_1=x_2 \vee \exists
y \, R(x_1,y) \wedge R(x_2, y)$.
 A tuple $\bar c \in \mn{adom}(I)^{|\bar x|}$ is an {\em answer} to
 $q$ on instance $I$ if it is an answer to $q_i$ on $i$, for some $i$
 with $1 \leq i \leq n$.
The \emph{evaluation} of $q$ on an instance $I$, denoted $q(I)$, is
the set of
all answers to $q$ on $I$.
A (U)CQ of arity zero is called {\em Boolean}. The only possible
answer to a Boolean query is the empty tuple. For a Boolean (U)CQ $q$,
we may write $I \models q$ if $q(I) = \{()\}$ and $I \not\models q$
otherwise. Note that all notions defined for UCQs also apply to CQs,
which are simply UCQs with a single disjunct.
We write $\class{UCQ}$ for the class of all UCQs.

Let $q_1(\bar x)$ and $q_2(\bar x)$ be two UCQs over the same schema
\Sbf. We say that $q_1$ is \emph{contained} in
 $q_2$, written $q_1 \subseteq_\Sbf  q_2$, if
 $q_1(D) \subseteq q_2(D)$ for every \Sbf-database $D$. Moreover, $q_1$
 and $q_2$ are \emph{equivalent}, written $q_1 \equiv_\Sbf q_2$, if
 $q_1 \subseteq_\Sbf q_2$ and $q_2 \subseteq_\Sbf q_1$.

% \smallskip
% \noindent
% \paragraph{Cores.}
 {\color{\highlightColour}
     We next define the important notion of a homomorphism
   core of a CQ $q(\bar x)$. The potential presence of equality atoms
   in $q$ brings some subtleties.  In particular, it is not guaranteed
   that there is a homomorphism from $q$ to $D_q$ that is the identity
   on $\bar x$. To address this issue, we resort to the database
   $D^\sim_q$ obtained from $D_q$ by identifying any
   constants/variables $x_1,x_2$ such that $x_1 = x_2 \in q$. For $V$
   the set of all variables from $\bar x$ that occur as constants in
   $D^\sim_q$, it is easy to see that there is a homomorphism from $q$
   to $D^\sim_q$ that is the identity on all variables in $V$. We say
   that $q$ is a \emph{core} if every homomorphism $h$ from $q$ to
   $D^\sim_q$ that is the identity on $V$ is surjective.  } Every CQ
 $q(\bar x)$ is equivalent to a CQ $p(\bar x)$ that is a core and can
 be obtained from $q$ by dropping atoms.  In~fact, $p$ is unique up to
 isomorphism and we call it \emph{the core of} $q$. For a UCQ $q$, we
 use $\mn{core}(q)$ to denote the disjunction whose disjuncts are the
 cores of the CQs in~$q$.

For a UCQ $q$, but also for any other syntactic object $q$, we use
$||q||$ to denote the number of symbols needed to write $q$ as
a word over a suitable alphabet.

Our main interest is in the complexity of counting the number of
answers. Every choice of a query language $\class{Q}$, such as
$\class{CQ}$ and $\class{UCQ}$, and a class of databases $\class{D}$
gives rise to the following answer counting problem:

\begin{center}
	\fbox{\begin{tabular}{ll}
			PROBLEM : & {\sf AnswerCount}$(\class{Q},\class{D})$
			\\INPUT : & A query $q
                                              \in
                                             \class{Q}$ over some schema
                                             \Sbf and
			an $\Sbf$-database $D \in \class{D}$
			\\
			OUTPUT : &  $\#q(D)$
	\end{tabular}}
\end{center}

\medskip
\noindent
Our main interest is in the parameterized version of the above problem
where we generally assume that the parameter is the size of the input
query, see below for more details. When $\class{D}$ is the class of
all databases, we simply write {\sf AnswerCount}$(\class{Q})$.

\medskip
\noindent
%\paragraph{TGDs, Guardedness.}
{\bf TGDs, Guardedness, Fullness}
A {\em
  tuple-generating dependency} (TGD) $T$ over $\Sbf$ is a first-order
sentence of the form
$ \forall \bar x \forall \bar y \, \big(\phi(\bar x,\bar y)
\rightarrow \exists \bar z \, \psi(\bar x,\bar z)\big) $ such that
$\exists \bar y \, \phi(\bar x,\bar y)$ and
$\exists \bar z \, \psi(\bar x,\bar z)$ are CQs without equality
atoms. As a special case, we also allow $\phi(\bar x,\bar y)$ to be
the empty conjunction, i.e.~logical truth, denoted by \mn{true}.
For simplicity, we write $T$ as
$\phi(\bar x,\bar y) \rightarrow \exists \bar z \, \psi(\bar x,\bar
z)$. We call $\phi$ and $\psi$ the {\em body} and {\em head} of $T$,
denoted $\mn{body}(T)$ and $\mn{head}(T)$, respectively.  An instance
$I$ over $\Sbf$ \emph{satisfies} $T$, denoted $I \models T$, if
$q_\phi(I) \subseteq q_\psi(I)$.  It {\em satisfies} a set of TGDs
$S$, denoted $I \models S$, if $I \models T$ for each $T \in S$. We
then also say that $I$ is a \emph{model} of $S$.
%We work with finite sets of TGDs.
We write $\class{TGD}$ to denote the class of all TGDs.

A TGD $T$ is {\em guarded} if $\mn{body}(T)$ is \mn{true} or there
exists an atom $\alpha$ in its body that contains all variables that
occur in $\mn{body}(T)$~\cite{CaGK13}. Such an atom $\alpha$ is a {\em
  guard} of $T$. While there may be multiple guard atoms in the body
of a TGD, we generally assume that one of them is chosen as the actual
guard and may thus speak of `the' guard atom.
We write $\class{G}$ for the class of guarded TGDs. A~TGD $T$ is
\emph{full} if the tuple $\bar z$ of variables is empty, that is,
it uses no existential quantification in the head. We use
$\class{FULL}$ to denote the class of full TGDs and shall often refer
to $\class{G} \cap \class{FULL}$, the class of TGDs that are both
guarded and full. Note that this class is essentially the class of
Datalog programs with guarded rule bodies.

\smallskip
\noindent
{\bf Ontology-Mediated Queries.}
An \emph{ontology} \Omc is a finite set of TGDs.  An \emph{ontology
  mediated query (OMQ)} takes the form $Q= (\Omc,\Sbf,q)$ where $\Omc$
is an ontology, $\Sbf$ is a finite schema called the \emph{data
  schema}, and $q$ is a UCQ.  Both \Omc and $q$ can use symbols from
\Sbf, but also additional symbols, and in particular \Omc can
`introduce' additional symbols to enrich the vocabulary available for
querying.  We assume w.l.o.g.\ that all relation symbols in $q$ that
are not from \Sbf occur also in~\Omc.
% This can always be achieved by introducing dummy TGDs $R(\bar x)
% \rightarrow R(\bar x)$.
In fact, any OMQ violating this condition is trivial in that it never
returns any answers.
When \Omc and $q$ only use symbols from \Sbf, then we say that the
data schema of $Q$ is \emph{full}.  The {\em arity} of $Q$ is defined
as the arity of $q$. We write $Q(\bar x)$ to emphasize that the answer
variables of $q$ are $\bar x$ and for brevity often refer to the
data schema simply as the schema.

A tuple $\bar c \in \mn{adom}(D)^{|\bar x|}$ is an {\em answer} to $Q$
on \Sbf-database $D$ if $\bar c \in q(I)$ for each model $I$ of \Omc with
$I \supseteq D$.
The {\em evaluation of $Q(\bar x)$} on $D$, denoted $Q(D)$, is the set of all answers to $Q$ on~$D$.
\begin{exa}
  Consider the OMQ $(\Omc,\Sbf,q)$ where \Omc consist of the following
  TGDs:
  $$
  \begin{array}{r@{\;}c@{\;}lcr@{\;}c@{\;}l}
    \mn{Book}(x) &\rightarrow& \mn{Publication}(x) & &
    \mn{Article}(x) &\rightarrow& \mn{Publication}(x) \\[1mm]
    \mn{Publication}(x) & \rightarrow & \exists y \, \mn{hasPublisher}(x,y) &&
    \mn{Publication}(x) & \rightarrow & \exists y \, \mn{hasAuthor}(x,y) \\[1mm]
    % \exists y \, \mn{hasPublisher}(x,y) & \rightarrow &     \mn{Publication}(x) \\[1mm]
    \mn{hasPublisher}(x,y) & \rightarrow &
                                           \mn{Publisher}(y) &&
    \mn{hasAuthor}(x,y) & \rightarrow &
                                        \mn{Author}(y) \\[1mm]
\multicolumn{7}{c}{\mn{Publication}(x) \wedge \mn{hasAuthor}(x,y)
    \wedge \mn{hasPublisher}(x,y) \rightarrow
    \mn{SelfPublication}(x),}
  \end{array}
  $$
  \Sbf is the set of all relation symbols in \Omc, and
  $$
  q(x)= \exists y \,
    \mn{Author}(x) \wedge \mn{hasAuthor}(y,x) %\wedge \mn{Book}(y)
    \wedge \mn{SelfPublication}(y).
  $$
  The conjunctive query $q$ asks to return all authors that have
  self-published and the ontology \Omc adds knowledge about the domain
  of publications. Now consider the \Sbf-database $D$ that consists of
  the following facts:
  $$
  \begin{array}{lll}
    \mn{Book}(\mn{alice}) & \mn{hasAuthor}(\mn{alice},\mn{carroll}) &
    \mn{hasPublisher}(\mn{alice},\mn{macmillan}) \\[1mm]
    \mn{Book}(\mn{finn}) & \mn{hasAuthor}(\mn{finn},\mn{twain})  &
                                                                   \mn{hasPublisher}(\mn{finn},\mn{twain})
    \\[1mm]
    \mn{Book}(\mn{beowulf}) & \mn{SelfPublication}(\mn{beowulf}).
  \end{array}
  $$
  A straightforward semantic analysis shows that
  $\mn{twain} \in Q(D)$, despite the fact that the database $D$
  does not explicitly state the fact that \mn{finn} is a
  self-publication. While $\mn{beowulf}$ is a self-publication and we
  know from the ontology that it has an author, this author is not
  returned as an answer because their identity is unknown.
  In fact, $Q(D)=\{ \mn{twain} \}$.
\end{exa}
% In fact, use of ontologies
% is to enrich the schema that is available for formulating queries.

%
%
An \emph{OMQ language} is a class of OMQs. For a class of TGDs
$\class{C}$ and a class of UCQs $\class{Q}$, we write
$(\class{C},\class{Q})$ to denote the OMQ language that consists of
all OMQs $(\Omc,\Sbf,q)$ where \Omc is a set of TGDs from $\class{C}$
and $q \in \class{Q}$. For example, we may write
$(\class{G} \cap \class{FULL},\class{UCQ})$. %  and
% $(\class{ELH}^{\mn{dr}},\class{CQ})$
We say that an OMQ language
$(\class{C},\class{Q})$ has \emph{full data schema} if every
OMQ in it has.

\medskip
\noindent
{\bf The Chase.}
We next introduce the well-known chase procedure for making explicit
the consequences of a set of TGDs \cite{MaMS79,JoKl84, FKMP05,
  CaGK13}.  We first define a single chase step. Let $I$ be an
instance over a schema~$\Sbf$ and
$T= \phi(\bar x,\bar y) \rightarrow \exists \bar z \, \psi(\bar x,\bar
z)$ a TGD over~$\Sbf$. We say that $T$ is \emph{applicable} to a tuple
$(\bar c,\bar c')$ of constants in $I$ if
$\phi(\bar c,\bar c') \subseteq I$. In this case, {\em the result of
  applying $T$ in $I$ at $(\bar c,\bar c')$} is the instance
$J = I \cup \psi(\bar c,\bar c'')$, where $\bar c''$ is the tuple
obtained from $\bar z$ by simultaneously replacing each variable $z$
with a fresh distinct constant that does not occur in $I$. We describe
such a single chase step by writing
$I \xrightarrow{T, \,(\bar c,\bar c')} J$.
Let $I$ be an instance and $S$ a finite set of TGDs. A {\em chase
  sequence for $I$ with $S$} is a sequence of chase steps
$$ I_0
\xrightarrow{T_0,\,(\bar c_0,\bar c'_0)} I_1
\xrightarrow{T_1,\,(\bar c_1,\bar c'_1)} I_2 \dots$$
such that
(1) $I_0 = I$, (2) $T_i \in S$ for each $i \geq 0$, and (3)
$J \models S$ with $J = \bigcup_{i \geq 0} I_i$.  The instance
$J$ is the (potentially infinite) {\em result} of this chase sequence,
which always exists. The chase sequence is \emph{fair} if whenever
a TGD $T \in S$ is applicable to a tuple $(\bar
c,\bar c')$ in some $I_i$, then $I_j \xrightarrow{T, \,(\bar c,\bar c')}
I_{j+1}$ is part of the sequence for some $j \geq i$.
Note that our chase is oblivious, that is, a TGD is triggered whenever
its body is satisfied, even if also its head is already satisfied. As
a consequence, every fair chase sequence for $I$ with $S$ leads to
the same result, up to isomorphism. Thus, we can refer to {\em the}
result of chasing $I$ with $S$, denoted $\mn{ch}_S(I)$. The following
lemma gives the well-known main properties of the chase.

\begin{lem}\label{pro:chase}
  ~\\[-4mm]
  \begin{enumerate}

  \item
  Let $S$ be a finite set of TGDs and $I$ an instance. Then
  for every model $J$ of $S$
  with $I \subseteq J$, there is a homomorphism $h$ from
  $\mn{ch}_S(I)$ to $J$ that is the identity on $\mn{adom}(I)$.

\item $Q(D) = q(\mn{ch}_\Omc(D))$ for every OMQ
  $Q = (\Omc,\Sbf,q) \in (\class{TGD},\class{UCQ})$ and \Sbf-database
  $D$.

  \end{enumerate}
\end{lem}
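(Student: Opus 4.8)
I would prove the two items in order, using item~(1) to establish item~(2), since together they express the standard universal-model property of the chase.

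For item~(1), the plan is to construct $h$ incrementally along the fair chase sequence $I_0 \xrightarrow{T_0,(\bar c_0,\bar c_0')} I_1 \xrightarrow{T_1,(\bar c_1,\bar c_1')} \cdots$ whose result is $\mn{ch}_S(I)=\bigcup_{i\ge 0} I_i$. I would build a chain of homomorphisms $h_0\subseteq h_1\subseteq\cdots$, where each $h_i$ is a homomorphism from $I_i$ to $J$ and $h_0$ is the identity on $\mn{adom}(I)$; the latter is a homomorphism from $I_0=I$ to $J$ precisely because $I\subseteq J$. Setting $h=\bigcup_{i\ge 0}h_i$ then yields the desired map: every fact of $\mn{ch}_S(I)$ already occurs in some $I_i$ and is therefore preserved, every constant of $\mn{ch}_S(I)$ is introduced at some finite stage so $h$ is a well-defined total function, and $h$ is the identity on $\mn{adom}(I)$ because $h_0$ is.

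The crux is the inductive step. Assume $h_i\colon I_i\to J$ is given and the next step applies $T_i=\phi(\bar x,\bar y)\to\exists\bar z\,\psi(\bar x,\bar z)$ at $(\bar c_i,\bar c_i')$, so that $\phi(\bar c_i,\bar c_i')\subseteq I_i$ and $I_{i+1}=I_i\cup\psi(\bar c_i,\bar c_i'')$ with $\bar c_i''$ fresh. Since $h_i$ is a homomorphism, $\phi(h_i(\bar c_i),h_i(\bar c_i'))\subseteq J$, so $h_i(\bar c_i)$ is an answer to the body CQ $q_\phi$ on $J$. As $J\models T_i$, i.e.\ $q_\phi(J)\subseteq q_\psi(J)$, the tuple $h_i(\bar c_i)$ is also an answer to the head CQ $q_\psi$, giving a tuple $\bar d$ in $J$ with $\psi(h_i(\bar c_i),\bar d)\subseteq J$. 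I then extend $h_i$ to $h_{i+1}$ by mapping $\bar c_i''$ to $\bar d$ and leaving it unchanged elsewhere; freshness of $\bar c_i''$ prevents any clash, so $h_{i+1}\supseteq h_i$ maps the new atoms into $J$ while fixing the frontier, and is a homomorphism from $I_{i+1}$ to $J$. The one point to verify carefully is that the TGD semantics, phrased as $q_\phi(J)\subseteq q_\psi(J)$ over answers to the frontier $\bar x$, is exactly what guarantees the witness $\bar d$ for $\psi$ agrees with $h_i$ on $\bar c_i$.

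For item~(2), I would prove both inclusions for candidate tuples $\bar c\in\mn{adom}(D)^{|\bar x|}$. The inclusion $Q(D)\subseteq q(\mn{ch}_\Omc(D))$ is immediate: by condition~(3) in the definition of a chase sequence, $\mn{ch}_\Omc(D)$ is itself a model of $\Omc$ containing $D$, so every certain answer is in particular an answer on $\mn{ch}_\Omc(D)$. Conversely, suppose $\bar c\in q(\mn{ch}_\Omc(D))$ via a homomorphism $g$ from $q$ to $\mn{ch}_\Omc(D)$ with $g(\bar x)=\bar c$, and let $I$ be any model of $\Omc$ with $I\supseteq D$. By item~(1) there is a homomorphism $h$ from $\mn{ch}_\Omc(D)$ to $I$ that is the identity on $\mn{adom}(D)$; then $h\circ g$ is a homomorphism from $q$ to $I$ with $(h\circ g)(\bar x)=h(\bar c)=\bar c$, because $\bar c\in\mn{adom}(D)^{|\bar x|}$. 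Hence $\bar c\in q(I)$ for every such $I$, i.e.\ $\bar c\in Q(D)$; for a UCQ this is applied disjunct-wise. I would also remark that the equality is read over tuples from $\mn{adom}(D)$, as the fresh nulls introduced by the chase can never occur in a certain answer.

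The main obstacle is the possibly infinite, non-terminating chase: I must ensure the inductively defined $h_i$ genuinely cohere into one total function on the infinite result $\mn{ch}_S(I)$, which works because each atom and each constant appears at a finite stage. Note that fairness is not actually needed for item~(1)---any sequence whose result is a model of $S$ suffices---while for item~(2) it is used only through condition~(3). The remaining work is routine bookkeeping: keeping the introduced constants fresh so that the extensions never conflict, and unwinding $\models$ for TGDs to see that the frontier is preserved.
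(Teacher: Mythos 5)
Your proposal is correct and follows exactly the route the paper indicates: Point~(1) by building $h$ step by step along the chase sequence starting from the identity on $\mn{adom}(I)$, and Point~(2) as a consequence of Point~(1) together with the semantics of OMQs and the fact that $\mn{ch}_\Omc(D)$ is itself a model of $\Omc$ containing $D$. Your added remark that the equality in Point~(2) is read over tuples from $\mn{adom}(D)$ is a sensible clarification of what the paper leaves implicit, not a deviation.
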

Point~1 can be proved by constructing $h$ step by step, starting from
the identity on $\mn{adom}(I)$ and following chase rules.  Point~2 is
an easy consequence of Point~1 and the semantics of OMQs.
% We consider the standard (oblivious) chase procedure for guarded
% TGDs, defined in detail in the appendix. The result of chasing
% instance \Dmc with set of TGDs $S$ is denoted $\mn{ch}_S(I)$.

\smallskip

We shall often chase with sets $S$ of guarded full TGDs, that is, TGDs
from $\class{G} \cap \class{FULL}$. In contrast to the case of guarded
TGDs, the chase is then clearly finite. Moreover, it can be
constructed within the following time bounds.
\begin{lem}
  \label{lem:chasefpt}
  Given a database $D$ and finite set $S$ of TGDs from
  $\class{G} \cap \class{FULL}$, $\mn{ch}_S(D)$ can be
  constructed in time $f(||S||) \cdot O(||D||^3)$ for some
  computable function $f$. % and constant $c$.
\end{lem}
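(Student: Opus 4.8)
The plan is to view $S$ as a guarded Datalog program and to compute $\mn{ch}_S(D)$ as its least fixpoint on $D$, exploiting guardedness to keep both the size of the result and the cost of each derivation step low. First I would record two easy observations. Since every TGD in $S$ is full, no fresh constants are ever introduced, so $\mn{adom}(\mn{ch}_S(D)) = \mn{adom}(D)$ and hence $|\mn{adom}(\mn{ch}_S(D))| \le ||D||$. Moreover, as already noted, the chase is finite here: each application only adds facts over the existing domain, so the resulting set of facts coincides with the least fixpoint obtained by exhaustively applying the rules, and constructing $\mn{ch}_S(D)$ amounts to a standard fixpoint computation.

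The crucial step is to bound the number of facts in $\mn{ch}_S(D)$ \emph{linearly} in $||D||$. I claim that every fact of $\mn{ch}_S(D)$ uses only constants from a single maximal guarded set of $D$, which I would prove by induction over the chase. Facts of $D$ satisfy the claim by definition of a guarded set, and whenever a fact is derived by applying a guarded full TGD $T$, all head variables are frontier variables and hence occur in the guard atom of $\mn{body}(T)$; mapping that guard atom to a fact $\beta$ of the current instance therefore confines the constants of the derived fact to those of $\beta$, which by the induction hypothesis lie within a single maximal guarded set of $D$. Since the arity of relation symbols is bounded by a constant, each maximal guarded set has constantly many constants, and the relation symbols of derived facts come from the heads of $S$ (at most $||S||$ of them), so only a number of distinct facts depending on $||S||$ can be built over any one maximal guarded set. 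As $D$ has at most $||D||$ maximal guarded sets, we obtain $|\mn{ch}_S(D)| \le f_1(||S||) \cdot O(||D||)$.

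For the running time I would use that guardedness makes a single derivation step cheap. To enumerate all applications of a TGD $T$ in the current instance $I$, it suffices to iterate over the facts of $I$ matching the relation symbol of the guard of $T$: such a fact fixes the images of \emph{all} body variables at once, so it yields at most one candidate homomorphism of $\mn{body}(T)$, which can be verified by at most $||T||$ membership tests. Maintaining $I$ with a hash index keyed by relation symbol and argument tuples makes each membership test cost polynomial in $||S||$, so a single scan that computes all new consequences of all TGDs costs $f_2(||S||) \cdot O(|I|) = f_2(||S||) \cdot O(||D||)$ by the size bound above. A naive fixpoint loop performs such scans repeatedly; since every scan that changes $I$ adds at least one of the at most $f_1(||S||) \cdot O(||D||)$ possible facts, the number of scans is $f_1(||S||) \cdot O(||D||)$. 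Multiplying the two bounds gives total time $f(||S||) \cdot O(||D||^2)$, which lies within the claimed $f(||S||) \cdot O(||D||^3)$.

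The main obstacle is the confinement argument of the second paragraph: without it the number of derivable facts is only bounded by $||D||^{\maxarity}$, which would make both the number of fixpoint rounds and the cost of each round polynomial of degree $\maxarity$ in $||D||$ and thus overshoot the cubic bound whenever the arity exceeds one. It is precisely guardedness, together with the global bounded-arity assumption, that keeps the result linear in $||D||$ and hence the whole computation within the required bound; the remaining bookkeeping (the indexing, the membership tests, and the degenerate case of an empty body) is routine.
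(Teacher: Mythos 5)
Your proof is correct and follows essentially the same route as the paper: guardedness lets a single guard match fix all body variables at once, and an induction over chase steps confines every derived fact to the constants of one fact of $D$, so only $f(||S||)\cdot O(||D||)$ facts are ever added; multiplying the number of rounds by the per-round cost then gives the bound (your hash-indexed variant even lands at $f(||S||)\cdot O(||D||^2)$, comfortably inside the claimed cubic bound, which the paper reaches with plain linear scans for the membership tests). One small point: where you bound the size of a maximal guarded set by appealing to the global bounded-arity assumption, the paper instead observes that the fact of $D$ to which a derived fact is confined has a relation symbol occurring in $S$, so its arity is at most $||S||$ and the count $||D||\cdot k^k\cdot \ell$ depends only on $||S||$ — this keeps the lemma valid even when $\mn{ar}(\Sbf)$ is unbounded, which matters because the upper bounds of Theorem~\ref{thm:main-results-OMQs} are claimed to hold without the arity restriction.
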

The time bound stated in Lemma~\ref{lem:chasefpt} can be achieved in a
straightforward way. To find a homomorphism from a TGD
$\phi(\bar x, \bar y) \rightarrow \psi(\bar x)$ in $S$ with guard
$R(\bar x, \bar y)$ to $D$, we can scan $D$ linearly to find all facts
that $R(\bar x, \bar y)$ can be mapped to and then verify by
additional scans that the remaining atoms in $\phi$ are also
satisfied. This takes time $||D||^2 \cdot n$, where $n$ is the number
of atoms in $\phi$.  {\color{\highlightColourTwo}Because all TGDs are guarded, it
  is easy to prove by induction on the number of chase rule
  applications that for every added fact $R(\bar{b})$, all constants
  in $\bar b$ must co-occur in some fact $T(\bar c)$ in $D$ where $T$
  occurs in $S$. Consequently, the chase can add at most
  $||D||\cdot k^k \cdot \ell$ fresh facts where $k$ is the maximum
  arity of relation symbols in $S$ and $\ell$ is the number of
  relation symbols that occur on the right-hand side of a TGD in
  $\Omc$. Note that $k^k$ is the maximum number of ways to choose a
  $k$-tuple of constants from a fact $T(\bar c)$ in $D$ where $T$
  occurs in $S$. }

For sets $S$ of TGDs from $\class{G} \cap \class{FULL}$, we may also
chase a CQ $q(\bar x)$ with $S$, denoting the result with
$\mn{ch}_S(q)$. What we mean is the (finite) result of chasing
database $D_q$ with $S$, viewing the result as a CQ with answer
variables $\bar x$, and adding back the equality atoms of $q$ (that
are dropped in the construction of $D_q$). We then have the
following.
\begin{lem}
  \label{lem:chaseoklem}
  $q(\mn{ch}_S(D)) = \mn{ch}_S(q)(\mn{ch}_S(D))$ for all databases
  $D$, CQs $q$, and finite sets of TGDs $S$ from
  $\class{G} \cap \class{FULL}$.
\end{lem}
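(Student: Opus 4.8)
The plan is to fix $I = \mn{ch}_S(D)$ and prove the two inclusions $\mn{ch}_S(q)(I) \subseteq q(I)$ and $q(I) \subseteq \mn{ch}_S(q)(I)$ separately. Two observations drive the argument. First, since $I$ is the result of a fair chase of $D$ with $S$, condition~(3) in the definition of the chase gives $I \models S$. Second, because all TGDs in $S$ are full, chasing introduces no fresh constants, so $\mn{adom}(\mn{ch}_S(D_q)) = \mn{adom}(D_q)$ and the CQ $\mn{ch}_S(q)$ has exactly the same answer variables $\bar x$ and equality atoms as $q$, while its relational atoms form a superset of those of $q$.

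The inclusion $\mn{ch}_S(q)(I) \subseteq q(I)$ is routine: since the relational atoms of $q$ are contained in those of $\mn{ch}_S(q)$ and both queries carry the same answer variables and equality atoms, any homomorphism $h$ from $\mn{ch}_S(q)$ to $I$ with $h(\bar x) = \bar c$ restricts to a homomorphism from $q$ to $I$ witnessing $\bar c \in q(I)$.

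The interesting inclusion is $q(I) \subseteq \mn{ch}_S(q)(I)$. Given $\bar c \in q(I)$, I would fix a homomorphism $h$ from $q$ to $I$ with $h(\bar x) = \bar c$; equivalently, $h$ is a homomorphism from $D_q$ to $I$ that respects the equality atoms of $q$ (which by definition involve only variables from $\bar x$). The crucial claim is that this same $h$ is already a homomorphism from $\mn{ch}_S(D_q)$ to $I$. Since by fullness the chase does not change the active domain, $h$ is defined on all of $\mn{adom}(\mn{ch}_S(D_q))$, so it suffices to show that $h$ maps every chased atom into $I$. I would prove this by induction along a chase sequence $D_q = I_0 \to I_1 \to \cdots$ for $D_q$ with $S$, maintaining the invariant that $h$ is a homomorphism from $I_i$ to $I$. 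In the inductive step, a full guarded TGD $T = \phi(\bar u, \bar v) \to \psi(\bar u)$ is applied at $(\bar a, \bar a')$ with $\phi(\bar a, \bar a') \subseteq I_i$; the hypothesis yields $h(\phi(\bar a, \bar a')) \subseteq I$, and then $I \models T$ forces $h(\psi(\bar a)) \subseteq I$, so $h$ remains a homomorphism from $I_{i+1} = I_i \cup \psi(\bar a)$ to $I$. This is the homomorphism-lifting behind Lemma~\ref{pro:chase}(1), specialised to the full case where no target-side extension of $h$ is needed. Once the claim holds, $h$ witnesses $\bar c \in \mn{ch}_S(q)(I)$, since $\mn{ch}_S(q)$ carries the same answer variables $\bar x$ and equality atoms as $q$.

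The main obstacle is the inductive claim in the previous paragraph, and its proof hinges entirely on fullness: without it the chase would introduce nulls, $h$ would have to be \emph{extended} along the chase rather than reused verbatim, and the clean identity of answer sets could fail. I note also that the argument uses only $I \models S$, so the statement in fact holds for every model $I$ of $S$, with $\mn{ch}_S(D)$ being the instance relevant to the lemma.
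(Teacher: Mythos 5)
Your proof is correct and follows essentially the same route as the paper: the easy inclusion holds because the atoms of $q$ are a subset of those of $\mn{ch}_S(q)$, and the converse inclusion is shown by inducting along the chase sequence of $D_q$, using that $\mn{ch}_S(D) \models S$ and that fullness lets the same homomorphism be reused without extension. The extra care you take with equality atoms and the observation that the argument works for any model of $S$ are consistent with, and slightly more explicit than, the paper's sketch.
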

It is clear that $\mn{ch}_S(q_i)(D) \subseteq q_i(D)$ for every
database $D$ because any homomorphism from $\mn{ch}_\Omc(q_i)$ to $D'$
is also a homomorphism from $q_i$ to~$D'$. The converse containment
also holds as every homomorphism from $q_i$ to $\mn{ch}_S(D)$ is also
a homomorphism from $\mn{ch}_\Omc(q_i)$ to~$\mn{ch}_S(D)$. This can be
shown by induction, considering all CQs
$q_i=p_1,\dots,p_\ell=\mn{ch}_\Omc(q_i)$ that arise when chasing $q_i$
with~\Omc.

\medskip
\noindent
%\paragraph{Treewidth.}
{\bf Treewidth.}
Treewidth is a widely used notion that measures the degree of tree-likeness of a graph.
Let $G = (V,E)$ be an undirected graph. A  \emph{tree decomposition} of $G$ is a pair $\delta = (T_\delta, \chi)$, where $T_\delta = (V_\delta,E_\delta)$ is a tree, and $\chi$ is a labeling function $V_\delta \rightarrow 2^{V}$, i.e., $\chi$  assigns a subset of $V$ to each node of $T_\delta$, such that:
\begin{enumerate}
	\item $\bigcup_{t \in V_\delta} \chi(t) = V$,

	\item if $\{u,v\} \in E$, then $u,v \in \chi(t)$ for some $t \in V_\delta$,

	\item for each $v \in V$, the set of nodes $\{t \in V_\delta \mid v \in \chi(t)\}$ induces a connected subtree of~$T_\delta$.
\end{enumerate}
The {\em width} of $\delta$ is the number $\max_{t \in V_\delta}
\{|\chi(t)|\} - 1$. If the edge set $E$ of $G$ is non-empty, then the
{\em treewidth} of $G$ is the minimum width over all its tree
decompositions; otherwise, it is defined to be one. Note that trees
have treewidth~1.
Each instance $I$ is associated with an undirected graph (without self
loops) $G_I = (V,E)$, called the {\em Gaifman graph} of $I$, defined
as follows: $V = \mn{adom}(I)$, and $\{a,b\} \in E$ iff there is a
fact $R(\bar c) \in I$ that mentions both $a$ and~$b$.
%, i.e., $a$ and $b$ coexist in $R(\bar t)$.
%
The \emph{treewidth} of $I$ is the treewidth of~$G_I$.

\medskip
\noindent
%\paragraph{Parameterized Complexity.}
{\bf Parameterized Complexity.}
A \emph{counting problem} over a finite alphabet $\Lambda$ is a
function $P:\Lambda^* \rightarrow \mathbb{N}$ and a
\emph{parameterized counting problem} over $\Lambda$ is a pair
$(P,\kappa)$, with $P$ a counting problem over $\Lambda$ and $\kappa$
the \emph{parameterization} of $P$, a function
$\kappa: \Lambda^* \rightarrow \mathbb{N}$ that is computable in
\PTime. An example of a parameterized counting problem is $\#${\sf pClique} in which $P$ maps (a suitable encoding of) each pair $(G,k)$
with $G$ an undirected graph and $k \geq 0$ a clique size to the
number of $k$-cliques in $G$, and where
$\kappa(G,k)=k$. Another example is $\#${\sf pDomSet} where
$P$ maps each pair $(G,k)$ to the number of dominating sets of
size $k$, and where again $\kappa(G,k)=k$.

A counting problem $P$ is a \emph{decision problem} if the range of
$P$ is $\{0,1\}$, and a \emph{parameterized decision problem} is
defined accordingly. An example of a parameterized decision problem is
{\sf pClique} in which $P$ maps each pair $(G,k)$ to~1 if the
undirected graph $G$ contains a $k$-clique and to 0 otherwise,
and where $\kappa(G,k)=k$.

A parameterized problem $(P,\kappa)$ is \emph{fixed-parameter
  tractable} (fpt) if there is a computable function
$f: \mathbb{N} \rightarrow \mathbb{N}$ such that $P(x)$ can be
computed in time $|x|^{O(1)}{\cdot}f(\kappa(x))$ for all inputs~$x$.
We use \FPT\xspace  to denote the class of all
parameterized counting problems that are
fixed-parameter tractable.

A \emph{Turing fpt-reduction} from a parameterized counting problem
$(P_1,\kappa_1)$ %over $\Lambda_1$
to a parameterized counting problem $(P_2,\kappa_2)$ %over $\Lambda_2$
is an algorithm that computes $P_1$ with %an~
oracle access to $P_2$, runs
within the time bounds of fixed parameter tractability for
$(P_1,\kappa_1)$, and when started on input $x$ only makes oracle
calls with argument $y$ such that $\kappa_2(y) \leq f(\kappa_1(x))$,
for some computable function~$f$.  The reduction is called a
\emph{parsimonious fpt-reduction} if only a single oracle call is made
at the end of the computation and its output is then returned as the
output of the algorithm without any further modification.

A parameterized counting problem $(P,\kappa)$ is \emph{$\wOneC$-easy}
if it can be reduced to $\#${\sf pClique} and it is
\emph{$\wOneC$-hard} if $\#${\sf pClique} reduces to $(P,\kappa)$,
both in terms of Turing fpt-reductions. $\wOne$-easiness and
-hardness are defined analogously, but using {\sf pClique} in place of
$\#${\sf pClique}, and likewise for $\wTwoC$ and $\#${\sf pDomSet},
and for $\aTwoC$ and the parameterized problem of counting the answers
to CQs, the parameter being the size of the CQ. For $C \in \{ \wOne,
\wOneC, \allowbreak \wTwoC, \aTwoC\}$, $(P,\kappa)$ is \emph{$C$-equivalent} if it
is $C$-easy and $C$-hard.  Note that we follow
\cite{countingTrichotomy,DBLP:conf/icalp/DellRW19} in defining both
easiness and hardness in terms of Turing fpt-reductions; stronger
notions would rely on parsimonious fpt-reductions
\cite{DBLP:journals/siamcomp/FlumG04}.

\section{The Classification Without TGDs}
\label{sect:notgds}

In the series of papers
\cite{DBLP:journals/jcss/DurandM14,countingQueriesSructural,countingTrichotomy,countingPositiveQueries,DBLP:conf/icalp/DellRW19},
the parameterized complexity of answer counting is studied for classes of
CQs and UCQs, resulting in a rather detailed classification. We present it in
this section as a reference point and as a basis for establishing our
own classifications later on. We start with introducing the various
structural measures that play a role in the classification.

\begin{figure}
    \begin{tikzpicture}[yscale=1, xscale=1.1]
        \scriptsize
        \tikzstyle{quantified}=[circle, draw, scale=.6]
        \tikzstyle{answer} = [circle, fill, scale=.6]
        
\node (off) at (-5,0) {};
        
\node (cap) at (-2.5,-1) {(a) query {$q$}};
\node[answer] (x1) at ($(3,4) + (off)$) {};
\node[answer] (x2) at ($((4,3.5) + (off)$) {};
\node[answer] (x3) at ($(3,3) + (off)$) {}; 
\node[answer] (x4) at ($(3,2) + (off)$) {}; 
\node[answer] (x5) at ($(2,1) + (off)$) {}; 
\node[answer] (x6) at ($(2,3.5)  + (off)$) {}; 

\node (x1n) at ($(2.6,4) + (off)$) {$x_1$};
\node (x2n) at ($(4.4,3.5) + (off)$) {$x_2$};
\node (x3n) at ($(3.3,2.8) + (off)$) {$x_3$}; 
\node (x4n) at ($(3.4,2) + (off)$) {$x_4$}; 
\node (x5n) at ($(2.25,1.1) + (off)$) {$x_5$}; 
\node (x6n) at ($(2.3,3.5) + (off)$) {$x_6$}; 

%    \node[circle, fill] (x5) at (4,0) {};%

\node[quantified] (y1) at ($(2,3) + (off)$) {};
\node (y1n) at ($(2.4,2.8) + (off)$) {$y_1$};
%        \node[quantified] (y2) at (2,1) {};
%\node[quantified] (y3) at (1,0) {}; 
%\node[quantified] (y4) at (0,1) {}; 
\node[quantified] (y5) at ($(1,2) + (off)$) {}; 
\node[quantified] (y6) at ($(1,3) + (off)$) {};
\node (y5n) at ($(.65,2.2) + (off)$) {$y_2$}; 
\node (y6n) at ($(.65,3.2) + (off)$) {$y_3$};  
%        \node[quantified] (y7) at (0,2) {}; 

\node[quantified] (c1)  at ($(3.8,4.2) + (off)$) {};
\node (c1n)  at ($(4.2,4.2) + (off)$) {$y_4$};
%\node[quantified] (c2)  at (5.5,.5) {};
%    \node[circle, draw] (y5) at (4,0) {};

%triangles

\draw [rounded corners=4mm] ($(y1) +(.4,.2)$)--($(y5) +(-.2,-.4)$)--($(y6) +(-.2,+.2)$)--cycle;
\draw [rounded corners=2mm] ($(c1) +(.3,.2)$)--($(x1) +(-.2,.2)$)--($(x3) +(-.1,-.4)$)--cycle;
\draw [rounded corners=2mm] ($(x2) +(.2,-.1)$)--($(x3) +(-.3,-.3)$)--($(c1) +(0.05,+.4)$)--cycle;

%\draw [brown]  ($(1.5,2.5) + (off)$) circle [radius=.85];
%\draw [rotate around={-45:((-1.8,4)}, green]  (-2.8,4) ellipse [x radius=.5, y radius=1];
%\draw[rotate around={-30:(-1.4,3.5)}, brown] (-1.4,3.5) ellipse (17pt and 30pt);
%\draw[rotate around={-30:(-1.7,3.8)}, brown] (-1.7,3.8) ellipse (17pt and 30pt);

%\draw 	 
%(x1) -- 
%(x2)--(x3) -- (x1) 
%(y1) -- (y6)
%(y5) -- (y6)
%(y1) -- (y5);

\draw%[green]
%(x1) -- (c1)
%(x3) -- (c1)
%        (x3) -- (y1)
%(x4) -- (y2)
%        (x4) -- (y5)
(x5) -- (y5);

\draw[] (x6) -- (y1);
\draw[] (x4) -- (y5);

%\draw
%(x2) -- (c1);
%        (x4) -- (c2)
%        (x5) -- (c2);
%
%\draw [blue, dashed]  (1.2,2.7) circle [radius=1];
%\draw [blue, dashed]  (3.7,4.5) circle [radius=.5];
%\draw [blue, dashed]  (5,1.5) circle [radius=1.2];
        
%%%%%%%%%%%%%%%%%%%%%%%%%%%%%%%%%%%%%%%%%%%%%%%%%%%%%%%%%%%%%%%%%%%%
        \node (cap) at (2,-1) {(b) Gaifman graph {$G_q$} of {$q$}};
        \node[answer] (x1) at (3,4) {};
        \node[answer] (x2) at (4,3.5) {};
        \node[answer] (x3) at (3,3) {}; 
        \node[answer] (x4) at (3,2) {}; 
        \node[answer] (x5) at (2,1) {}; 
        %    \node[answer] (x6) at (-2,2) {}; 
        
        \node (x1n) at (2.6,4) {$x_1$};
        \node (x2n) at (4.4,3.5) {$x_2$};
        \node (x3n) at (3.8,2.9) {$x_3{=}x_6$}; 
        \node (x4n) at (3.4,2) {$x_4$}; 
        \node (x5n) at (2.2,1.1) {$x_5$}; 
        
        %    \node[circle, fill] (x5) at (4,0) {};%
        
        \node[quantified] (y1) at (2,3) {};
        \node (y1n) at (2.3,2.8) {$y_1$};
%        \node[quantified] (y5) at (2,1) {};
        %\node[quantified] (y6) at (1,0) {}; 
        %\node[quantified] (c1) at (0,1) {}; 
        \node[quantified] (y2) at (1,2) {}; 
        \node[quantified] (y3) at (1,3) {};
        \node (y2n) at (.65,2.2) {$y_2$}; 
        \node (y3n) at (.65,3.2) {$y_3$};  
%        \node[quantified] (y7) at (0,2) {}; 
        
        \node[quantified] (y4)  at (3.8,4.2) {};
        \node (y4n)  at (3.55,4.29) {$y_4$};
        %\node[quantified] (c2)  at (5.5,.5) {};
        %    \node[circle, draw] (y5) at (4,0) {};

        \draw 	 
        %(x1) -- 
        (x2)--(x3) -- (x1) 
        (y1) -- (y3)
        (y2) -- (y3)
        (y1) -- (y2);

        \draw%[green]
        (x1) -- (y4)
        (x3) -- (y4)
%        (x3) -- (y1)
        %(x4) -- (y2)
%        (x4) -- (y5)
        (x5) -- (y2);
        
        \draw[purple] (x3) -- (y1);
        \draw[purple] (x4) -- (y2);
        
        \draw
        (x2) -- (y4);
%        (x4) -- (c2)
%        (x5) -- (c2);
        
%        \draw [blue, dashed]  (1.2,2.7) circle [radius=1];
%        \draw [blue, dashed]  (3.7,4.2) circle [radius=.5];
        \draw [rounded corners=4mm, blue, dashed] ($(x3) +(.4,.2)$)--($(y3) +(-.2,.3)$)--($(y2) +(-.2,-.2)$)--($(x5) +(0,-.3)$)--($(x4) +(+.2,-.1)$)--cycle;
        \draw [rounded corners=4mm, blue, dashed] ($(y4) +(.1,.35)$)--($(x1) +(-.2,.2)$)--($(x3) +(-.2,-.3)$)--($(x2) +(.3,0)$)--cycle;
        %\draw [blue, dashed]  (5,1.5) circle [radius=1.2];

%%%%%%%%%%%%%%%%%%%%%%%%%%%%%%%%%%%%%%%%%%%%%%%%%%%%%%%%%%%%%%%%%%
        \node (cap) at (7,-1) {(c) contract of {$G_q$}};
        \node[answer] (z1) at (7,4) {};
        \node[answer] (z2) at (8,3.5) {};
        \node[answer] (z3) at (7,3) {}; 
        \node[answer] (z4) at (7,2) {}; 
        \node[answer] (z5) at (6,1) {}; 
        %    \node[answer] (x6) at (-2,2) {}; 
        
        \node (z1n) at (7.6,4) {$x_1$};
        \node (z2n) at (8.4,3.5) {$x_2$};
        \node (z3n) at (7.65,2.9) {$x_3{=}x_6$}; 
        \node (z4n) at (7.4,2) {$x_4$}; 
        \node (z5n) at (6.3,1) {$x_5$}; 
        \draw 	 
        (z3) -- (z2)
        %(z2) -- (z1) 
        (z3) -- (z1);

        \draw[red] (z4) -- (z5);
        \draw[red] (z3) -- (z4);
        \draw[red] (z1) -- (z2);
        \draw[red, out=150,in=30] (z5) --  (z3);
        
        %vertical
        \draw[dashed] (0, -.5) -- (0, 4.5);
        \draw[dashed] (5, -.5) -- (5, 4.5);
      \end{tikzpicture}
      \caption{An example for Gaifman graphs and their contracts.}
    % \caption{\scriptsize The Gaifman graph (b) and the contract (c) of the CQ (a)
    %  The brown shapes in (a) are the tertiary relations $T$ and the edges depict the binary relations $R$. 
    %  The red edges in (c) indicate the edges added by the connected components of quantified variables.
    %  Those components are indicated in (b) by the blue ellipses.
    %  The purple edges in (b) indicate the maximal matching for the linked matching number.\\
    %  Here we have: $\text{TW}=2$, $\text{CTW}=2$, $\text{SS}=3$, and $\text{LMN}=2$.}
\label{fig:exquery}
 \end{figure}
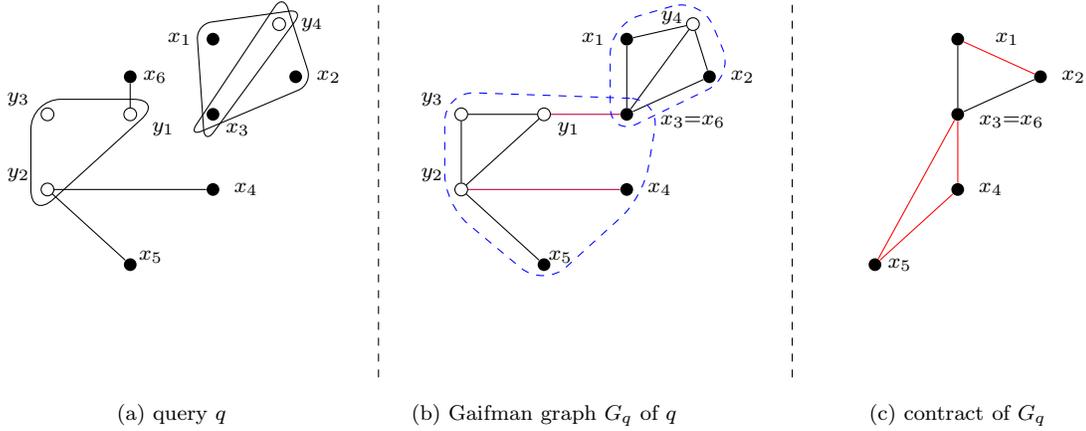

Let $q(\bar x) = \exists \bar y \, \varphi(\bar x, \bar y)$ be a CQ.
The \emph{Gaifman graph} of $q$, denoted $G_q$, is defined as
$G_{D^\sim_q}$.
The \emph{treewidth (TW)} of $q(\bar x)$ is the treewidth of $G_q$.

An \emph{$\bar x$-component} of $G_q$ is any undirected graph that
can be obtained
as follows: (1)~take the subgraph of $G_q$ induced by vertex set
$\bar y$,
%$\mn{var}(q) \setminus \bar x$,
(2)~choose a maximal connected
component $(V_c,E_c)$, and (3)~re-add all edges from $G_q$
that contain at least one vertex from~$V_c$.
Note that the last step may re-add answer variables as
vertices, but no quantified variables.
The
\emph{contract} of $G_q$, denoted $\mn{contract}(G_q)$, is the
restriction of $G_q$ to the answer variables, extended with every
edge $\{x_1,x_2\} \subseteq \bar x$ such that $x_1,x_2$ co-occur in
some $\bar x$-component of~$G_q$. We shall often be interested in the
treewidth of the contract of a CQ %$G_q$
$q$, which we refer to as the
\emph{contract treewidth (CTW)} of $q$.
An example is given in Figure~\ref{fig:exquery}. Part~(a) shows CQ
$$
\begin{array}{rcl}
q(x_1, x_2, x_3, x_4, x_5, x_6) &=& {\color{\highlightColourTwo} \exists y_1  \exists y_2 \exists y_3 \exists y_4\,} T(y_1,y_2,y_3) \wedge T(x_2,x_3,y_4),
                                    T(x_1,y_4, x_3) \; \wedge \\[1mm]
  && \hspace*{2.2cm} R(x_6, y_1) \wedge R(x_4,y_2) \wedge R(y_2, x_5) \wedge  x_3 = x_6
  \end{array}
$$
where filled nodes indicate answer variables and hollow nodes
quantified variables, the triangles represent the ternary relation
$T$, and the edges the binary relation $R$. Part (b) shows the
Gaifman graph $G_q$ of $q$, where $x_3$ and $x_6$ have been
identified. The dashed blue boxes show the $\bar x$-components
and the contract of $G_q$ is shown in Part~(c) with edges that
have been added due to the $\bar x$-components shown in red.
Both the treewidth and contract treewidth of $q$ are two.

The \emph{starsize (SS)} of $q$ is the maximum number of answer
variables in any $\bar x$-component of $G_q$. Note that the same
notion is called strict starsize in \cite{countingTrichotomy} and
dominating starsize in~\cite{DBLP:conf/icalp/DellRW19}. It is
different from the original notion of starsize from
\cite{DBLP:journals/jcss/DurandM14,countingQueriesSructural}. The
starsize of the CQ in Figure~\ref{fig:exquery} is three.

% Let $Y = \mn{var}(q) \setminus \bar x$.
A set of quantified variables $S$ in $q$ is \emph{node-well-linked} if
for every two disjoint sets $S_1,S_2 \subseteq S$ of the same
cardinality, there are $|S_1|$ vertex disjoint paths in $G_q$ that
connect the vertices in $S_1$ with the vertices in $S_2$. For example,
$S$ is node-well-linked if $G_q|_S$ takes the form of a grid or of a
clique.  A matching $M$ from the answer variables $\bar x$ to the
quantified variables $\bar y$ in the graph $G_q$ (in the standard
sense of graph theory) is \emph{linked} if the set $S$ of quantified
variables that occur in $M$ is node-well-linked. The \emph{linked
  matching number (LMN)} of $q$ is the size of the largest linked
matching from $\bar x$ to $\bar y$ in $G_q$. One should think of the
linked matching number as a strengthening of starsize. We do not only
demand that many answer variables are interlinked by the same
$\bar x$-component, but additionally require that this component is
sufficiently large and highly connected (`linked').  In Part~(b) of
Figure~\ref{fig:exquery}, the purple edges in~(b) indicate the maximal
matching. The LMN of the CQ in that figure is two.

Figure~\ref{fig:examples} contains some example CQs with associated
measures. For a class of CQs $\class{C}$, the contract treewidths of
CQs in $\class{C}$ being bounded by a constant implies that the same
is true for starsizes, and bounded starsizes in turn imply bounded
linked matching numbers. In fact, the starsize of a CQ
  $q$ is bounded by the contract treewidth of $q$ plus one and its
  linked matching number is bounded by its starsize.  There are no
implications between treewidth and contract treewidth. In
Figure~\ref{fig:examples}, Example~(a) generalizes to any treewidth
while always having contract treewidth~1 and Example~(c), which has
contract treewidth~3, generalizes to any contract treewidth (and
starsize) while always having treewidth~1. We refer to
\cite{countingTrichotomy,DBLP:conf/icalp/DellRW19} for additional
examples.
\begin{figure}
  \tikzstyle{quantified}=[circle, draw, scale=.6]
  \tikzstyle{answer} = [circle, fill, scale=.6]
  \newcommand{\noPic}{4}
  
  \centering

  \begin{subfigure}[b]{0.24\textwidth}
     \centering
     \resizebox{\linewidth}{!}{
         \begin{tikzpicture}%[scale=1.]
          
%         \node (a) at (-3 , -2) {};
%         \node (b) at (4,4) {};
          
          \node[quantified] (y1) at (-2,-2) {};
          \node[quantified] (y2) at (-2,0) {};
          \node[quantified] (y3) at (-2,2) {}; 
          %    \node[circle, fill] (x5) at (4,0) {};%
          
          \node[quantified] (x1) at (2,-2) {};
          \node[quantified] (x2) at (2,0) {};
          \node[quantified] (x3) at (2,2) {}; 
          %    \node[circle, draw] (y5) at (4,1) {};

          \draw 	 
          (x1) -> (y1)
          (x1) -> (y2)
          (x1) -> (y3)
          
          (x2) -> (y1)
          (x2) -> (y2)
          (x2) -> (y3)

          (x3) -> (y1)
          (x3) -> (y2)
          (x3) -> (y3);

          \end{tikzpicture}
      }
      \caption{TW = 3, CTW = 1}
      \label{fig:example-TW}
  \end{subfigure}
%%%%%%%%%%%%%%%%%%%%%%%%%%%%%%%%%%%%%%%%%%%%%%%%%%%%%%%%%%%%%%
  \begin{subfigure}[b]{0.24\textwidth}
     \centering
     \resizebox{\linewidth}{!}{
         \begin{tikzpicture}%[yscale=1.25]
         
%         \node (a) at (-3 , -2) {};
%         \node (b) at (4,4) {};
         
         \node[answer] (x1) at (-1,0) {};
         \node[answer] (x2) at (.5,2) {};
         \node[answer] (x3) at (2,0) {}; 
         \node[answer] (x4) at (.5,-2) {};
         %    \node[circle, fill] (x5) at (4,0) {};%
         
         \node[quantified] (y) at (.5,0) {};
         \node[quantified] (y2) at (-2,0) {};
         \node[quantified] (y3) at (-.25,-1) {}; 
         %   \node[quantified] (y4) at (0,-1) {};
         %    \node[circle, draw] (y5) at (4,1) {};

         \draw
         (x1) -> (y)
         %    (x2) -> (y)
         (x3) -> (y)
         %    (x4) -> (y)
         (x2) -- (y2)
         (y2) -- (x4)
         
         (x1) -- (y3)
         (y3) -- (x4)
         
         (x1) -- (x2)
         (x2) -- (x3)
         (x3) -- (x4);

         \end{tikzpicture}
     }
     \caption{ CTW {=} 3, SS {=} 2}
     \label{fig:example-CTW}
  \end{subfigure}
%%%%%%%%%%%%%%%%%%%%%%%%%%%%%%%%%%%%%%%%%%%%%%%%%%%%%%%%%%%%%%
  \begin{subfigure}[b]{0.24\textwidth}
      \centering
      \resizebox{\linewidth}{!}{
          \begin{tikzpicture}%[scale=1]
          
%          \node (a) at (-3 , -2) {};
%          \node (b) at (4,4) {};
          
          \node[answer] (x1) at (-2,0) {};
          \node[answer] (x2) at (0,2) {};
          \node[answer] (x3) at (2,0) {}; 
          \node[answer] (x4) at (0,-2) {};
          %    \node[circle, fill] (x5) at (4,0) {};%
          
          \node[quantified] (y) at (0,0) {};
          %   \node[quantified] (y2) at (0,1) {};
          %   \node[quantified] (y3) at (1,0) {}; 
          %   \node[quantified] (y4) at (0,-1) {};
          %    \node[circle, draw] (y5) at (4,1) {};

          \draw
          (x1) -> (y)
          (x2) -> (y)
          (x3) -> (y)
          (x4) -> (y);

          \end{tikzpicture}
      }
      \caption{SS = 4, LMN = 1}
      \label{fig:example-SS}
  \end{subfigure}
%%%%%%%%%%%%%%%%%%%%%%%%%%%%%%%%%%%%%%%%%%%%%%%%%%%%%%%%%%%%%%
  \begin{subfigure}[b]{0.24\textwidth}
    \centering
    \resizebox{\linewidth}{!}{
        \begin{tikzpicture}[scale=1]

%        \node (a) at (-3 , -2) {};
%        \node (b) at (4,4) {};
        
        \node[answer] (x1) at (-2,0) {};
        \node[answer] (x2) at (0,2) {};
        \node[answer] (x3) at (2,0) {}; 
        \node[answer] (x4) at (0,-2) {};
        %    \node[circle, fill] (x5) at (4,0) {};%
        
        \node[quantified] (y1) at (-1,0) {};
        \node[quantified] (y2) at (0,1) {};
        \node[quantified] (y3) at (1,0) {}; 
        \node[quantified] (y4) at (0,-1) {};
        %    \node[circle, draw] (y5) at (4,1) {};

        \draw
        (y1) -> (y2)
        (y1) -> (y3)
        (y1) -> (y4)
        
        (y2) -> (y1)
        (y2) -> (y3)
        (y2) -> (y4)
        
        (y3) -> (y1)
        (y3) -> (y2)
        (y3) -> (y4)
        
        (y4) -> (y1)
        (y4) -> (y2)
        (y4) -> (y3)
        (y4) -> (y4)
        
        (x1) -> (y1)
        (x2) -> (y2)
        (x3) -> (y3)
        (x4) -> (y4);

        \end{tikzpicture}
    }
    \caption{LMN = 4}
    \label{fig:example-LMN}
  \end{subfigure}
%%%%%%%%%%%%%%%%%%%%%%%%%%%%%%%%%%%%%%%%%%%%%%%%%%%%%%%%%%%%%%

\caption{Examples for structural measures:
  Example~(\subref{fig:example-TW}) is the (3,3)\=/complete bipartite
  graph, the contract of Example~(\subref{fig:example-CTW}) is the
  4\=/clique, Example~(\subref{fig:example-SS}) is a 4\=/star, and
  Example~(\subref{fig:example-LMN}) is a 4\=/star with the~4\=/clique in
  the~centre.}
    \label{fig:examples}
\end{figure}
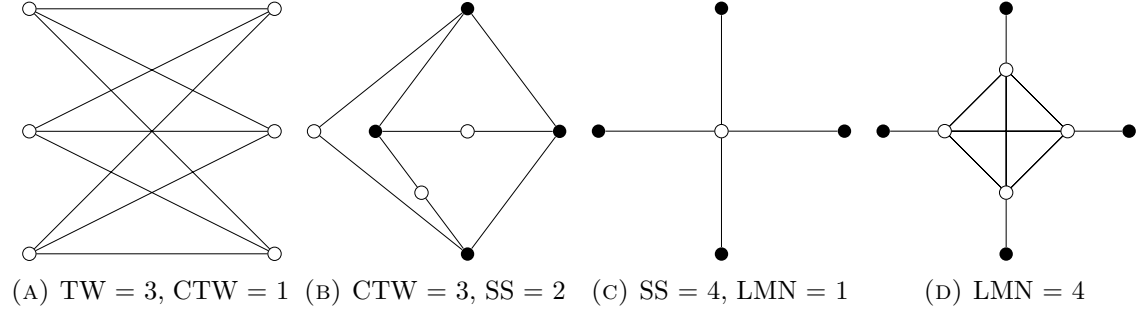

It is a fundamental observation that cores of CQs are guaranteed to
have minimum measures among all equivalent CQs, as stated by the
following lemma \cite{countingTrichotomy,DBLP:conf/icalp/DellRW19}.
\begin{lem}
\label{lem:coresaregreat}
  If a CQ $q$ is equivalent to a CQ of treewidth $k$, then
  $\mn{core}(q)$
  has treewidth at most $k$. The same is true for contract treewidth,
  starsize,
  and linked matching number.
\end{lem}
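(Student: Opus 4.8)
The plan is to reduce the claim to two ingredients: that equivalent CQs have isomorphic cores, and that each of the four measures is monotone under passing to an induced subquery that retains all answer variables. So suppose $q$ is equivalent to a CQ $p$ whose treewidth (respectively, contract treewidth, starsize, linked matching number) is at most $k$. Since $q \equiv p$ gives $\core(q) \equiv \core(p)$, both sides are cores, and a core is unique up to isomorphism within its equivalence class, we obtain $\core(q) \cong \core(p)$. As all four measures are isomorphism-invariant, it suffices to bound the measures of $\core(p)$ by those of $p$.

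The second ingredient I would extract from the structural description of the core. Folding equality atoms away via $D^\sim_p$, the core $\core(p)$ is the image of an idempotent retraction $r\colon D^\sim_p \to D^\sim_p$ that is the identity on the answer variables (precisely, on the set $V$ defined in the preliminaries). Hence $\core(p)$ is an \emph{induced} substructure of $D^\sim_p$: its universe is a set $U$ of variables containing all answer variables, and its atoms are exactly those atoms of $D^\sim_p$ all of whose arguments lie in $U$. Consequently the Gaifman graph $G_{\core(p)}$ is the induced subgraph $G_p[U]$, and $U$ contains every (representative of an) answer variable.

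The core of the argument is then monotonicity of the four measures under taking such an induced subgraph. For plain treewidth this is immediate: restricting the bags of an optimal tree decomposition of $G_p$ to $U$ yields a tree decomposition of $G_p[U]$ of no greater width. For the remaining three measures the key observation is that every $\bar x$-component of $G_p[U]$ sits inside an $\bar x$-component of $G_p$: a connected component of the quantified subgraph of $G_p[U]$ lies in a connected component of the quantified subgraph of $G_p$ (fewer vertices and fewer edges), and any answer variable adjacent to the former in $G_p[U]$ is, via the very same edge, adjacent to the latter in $G_p$. From this I would conclude (i) that $\contract(G_p[U])$ is a subgraph of $\contract(G_p)$ on the same answer-variable vertex set, so contract treewidth does not increase; (ii) that the answer variables occurring in any $\bar x$-component of $G_p[U]$ form a subset of those occurring in the enclosing $\bar x$-component of $G_p$, so starsize does not increase; and (iii) for the linked matching number, that any linked matching of $G_p[U]$ is still a linked matching of $G_p$, since its edges survive in the supergraph and node-well-linkedness is preserved when passing from a subgraph to a supergraph (all witnessing families of vertex-disjoint paths persist). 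Combining, $\mathrm{measure}(\core(q)) = \mathrm{measure}(\core(p)) \le \mathrm{measure}(p) \le k$.

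I expect the main obstacle to be the bookkeeping around equality atoms and the precise claim that the core is an induced substructure retaining all answer variables: one must check that the passage through $D^\sim$ neither spuriously merges nor drops answer variables, that the contracts of $\core(p)$ and $p$ genuinely live on the same vertex set, and that the ``co-occur in an $\bar x$-component'' relation really is monotone as described. Once this induced-subgraph picture is pinned down, the four monotonicity arguments are routine; the linked matching number is the most delicate because it quantifies over all equal-size pairs of subsets of the linked set, but the direction we need (subgraph to supergraph) is the easy one, so no real difficulty arises there.
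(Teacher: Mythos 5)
The paper does not prove this lemma itself; it imports it from Chen--Mengel and Dell et al.\ \cite{countingTrichotomy,DBLP:conf/icalp/DellRW19}, so there is no in-paper proof to compare against. Your argument --- cores of equivalent CQs are isomorphic, $\mn{core}(p)$ is a subquery of $p$ obtained by dropping atoms so that $G_{\mn{core}(p)}$ is a subgraph of $G_p$ on the same answer variables, and all four measures are monotone under such subgraphs because every $\bar x$-component of the subgraph sits inside one of the supergraph --- is the standard proof of this fact and is correct. One small overclaim: the paper only guarantees that the core arises by \emph{dropping atoms}, not that it is an \emph{induced} substructure of $D^\sim_p$; but this is harmless, since plain subgraph monotonicity (which is all your four case analyses actually use) already suffices.
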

An additional ingredient needed to formulate the classification for
UCQs emerges from \cite{countingPositiveQueries}. There, Chen and
Mengel associate with every UCQ $q$ a set of CQs
$\mn{cl}_{\mn{CM}}(q)$ such that counting the number of answers to $q$
is closely tied to counting the number of answers to the CQs in
$\mn{cl}_{\mn{CM}}(q)$. We now introduce this set, which we refer to
as the \emph{Chen-Mengel closure}, in detail.

Two CQs $q_1(\bar x_1)$ and $q_2(\bar x_2)$ over the same schema \Sbf
are \emph{counting equivalent} if $\#q_1(D)=\#q_2(D)$ for all
\Sbf-databases $D$.  Let $q(\bar x)=p_1 \vee \cdots \vee p_n$.  The
starting point for defining $\mn{cl}_{\mn{CM}}(q)$ is the observation
that, by the inclusion-exclusion principle,
every database $D$ satisfies
$$
\begin{array}{rcl}
  \#q(D) &=& \displaystyle \sum_{I \subseteq [n]} (-1)^{|I|+1} \cdot \#\big (\bigwedge_{i
             \in I}p_i(D) \big ).
\end{array}
$$
We can manipulate this sum as follows: if there are two summands
$c_1 \cdot \# \big (\bigwedge_{i \in I_1}p_i(D) \big )$ and
$c_2 \cdot \# \big (\bigwedge_{i \in I_2}p_i(D) \big )$ such that
$\bigwedge_{i \in I_1}p_i$ and $\bigwedge_{i \in I_2}p_i$ are
counting equivalent, then delete both summands and add
$(c_1+c_2) \cdot \#(\bigwedge_{i \in I_1}p_i(D) \big )$ to the sum.
After doing this exhaustively, delete all summands with coefficient zero. The elements
of $\mn{cl}_{\mn{CM}}(q)$ are all CQs $\bigwedge_{i \in I}p_i$
in the original sum that are counting equivalent to some CQ
%CQ
$\bigwedge_{i \in J}p_i$ which remains in the sum.\footnote{This
  definition slightly deviates from that of Chen and Mengel, who
  include no two CQs that are counting equivalent. For all
   relevant purposes, however, the two definitions are
   interchangable.
 } Note that the number of CQs in $\mn{cl}_{\mn{CM}}(q)$ might be
 exponentially larger than the number of CQs in $q$ and that
 $\mn{cl}_{\mn{CM}}(q)$ does not need to contain all CQs from the
 original UCQ $q$. For a class $\class{Q}$ of UCQs, we use
 $\mn{cl}_{\mn{CM}}(\class{Q})$ to denote
 $\bigcup_{q \in \class{Q}} \mn{cl}_{\mn{CM}}(q)$.

\theoremstyle{defC}
\newtheorem{exaC}[thm]{Example}

\begin{exaC}[\cite{countingPositiveQueries}]
  Consider schema $\Sbf = \{A,R\}$ with $A$ unary and $R$ binary.
  Let
  $$
  \begin{array}{rcl}
%    p'(x,y,z,t) &=& A(x) \wedge A(y) \wedge A(z) \wedge A(t) \\[1mm]
q(x,y,z,t) &=& p_1(x,y,z,t) \lor p_2(x,y,z,t) \lor p_3(x,y,z,t) \text{ where}
    \\[1mm]
    p_1(x,y,z,t) &=& R(x,y) \wedge  R(y,z) \wedge A(x) \wedge A(y) \wedge A(z) \wedge A(t)\\[1mm]
    p_2(x,y,z,t) &=& R(z,t) \wedge  R(t,x) \wedge A(x) \wedge A(y) \wedge A(z) \wedge A(t)\\[1mm]
    p_3(x,y,z,t) &=& R(y,z) \wedge R(z,t) \wedge A(x) \wedge A(y) \wedge A(z) \wedge A(t)
  \end{array}
    $$
    and for $I \subseteq \{1,2,3\}$, let $p_{I}$ be the CQ $\bigwedge_{i\in I} p_i$.
    By inclusion-exclusion,
    for every $\Sbf$\=/database $D$ we have
    $$
    \begin{array}{rcl}
    \#q(D) &=& \#p_{\{1\}}(D) + \#p_{\{2\}}(D) + \#p_{\{3\}}(D)
      \\[1mm]
      &&- \, \#p_{\{1,2\}}(D) - \#p_{\{1,3\}}(D) - \#p_{\{2,3\}}(D)
      \\[1mm]
      &&+ \, \#p_{\{1,2,3\}}(D)
    \end{array}
    $$

    It is not hard to see that $p_{\{1\}}$, $p_{\{2\}}$, and
    $p_{\{3\}}$ are pairwise counting equivalent, and so are
    $p_{\{1,3\}}$ and $p_{\{2,3\}}$. Moreover, $p_{\{1,2\}}$ and
    $p_{\{1,2,3\}}$ %, the two queries of treewidth $2$,
    are equivalent and thus counting equivalent.  Applying the
    manipulation described above, we thus arrive at the sum
    \[
    \#q(D) = 3 \cdot \#p_{\{1\}}(D) -2 \cdot \#p_{\{1,3\}}(D).
    \]

    It follows that $\mn{cl}_{\mn{CM}}(q) = \{\#p_{\{1\}}, \#p_{\{2\}},\#p_{\{3\}}, \#p_{\{1,3\}},\#p_{\{2,3\}}\}$.
    Note that the CQs $p_{\{1,2\}}$ and $p_{\{1,2,3\}}$ cancelled each other out.
\end{exaC}
{\color{\highlightColourTwo}
Note that $\mn{cl}_{\mn{CM}}(q)$ is defined so that for every
\Sbf-database $D$, $\#q(D)$ can be computed in polynomial time from
the counts $\#q'(D)$, $q' \in \mn{cl}_{\mn{CM}}(q)$. This, in fact, is
the raison d'etre of the Chen-Mengel closure.
}

We are now ready to state the characterization.
\begin{thmC}[\cite{countingTrichotomy,countingPositiveQueries,DBLP:conf/icalp/DellRW19}]
  \label{thm:chenmengeldell}
  Let $\class{Q} \subseteq \class{UCQ}$ be recursively enumerable and
  have relation symbols of bounded arity, and let
  $\class{Q}^\star=\{ \mn{core}(q) \mid q \in \mn{cl}_{\mn{CM}}(\class{Q})\}$. Then the following holds:
	\begin{enumerate} %\itemsep=0pt

        \item If the treewidths %of $\class{Q}$
          and the contract treewidths of CQs in $\class{Q}^\star$
          are bounded, then \linebreak[4]{\sf AnswerCount}$(\class{Q})$ is in~\fpt;
          it is even in \PTime when
          $\class{Q} \subseteq \class{CQ}$.

        \item If the treewidths of CQs in $\class{Q}^\star$ are unbounded
          and the contract treewidths of CQs in $\class{Q}^\star$ are bounded, then
{\sf
            AnswerCount}$(\class{Q})$ is \wOne-equivalent.
        \item If the contract treewidths of CQs in $\class{Q}^\star$ are
          unbounded and the starsizes of CQs in $\class{Q}^\star$
          are bounded, then {\sf
            AnswerCount}$(\class{Q})$ is \wOneC-equivalent.
        \item If the starsizes of CQs in
          $\class{Q}^\star$ are unbounded, % and the linked
          % matching numbers of CQs in $\class{Q}^\star$
          % are bounded,
          then {\sf AnswerCount}$(\class{Q})$ is
          \wTwoC-hard.
        \item If the linked matching numbers of CQs in
          $\class{Q}^\star$ are unbounded, then {\sf
            AnswerCount}$(\class{Q})$ is~\aTwoC-equivalent.
	\end{enumerate}
\end{thmC}
We remark that $\mn{cl}_{\mn{CM}}(q)=\{q\}$ when $q$ is a CQ,
and thus $\class{Q}^\star=\{ \mn{core}(q) \mid
q \in \class{Q}\}$ when $\class{Q} \subseteq \class{CQ}$ in
Theorem~\ref{thm:chenmengeldell}. The assumption that relation
symbols have bounded arity is needed only for the lower bounds,
but not for the upper bounds.

\medskip

Note that the classification given by Theorem~\ref{thm:chenmengeldell}
is not complete. It leaves open the possibility that there is a class
of (U)CQs $\class{Q}$ such that {\sf AnswerCount}$(\class{Q})$ is
\wTwoC-hard, but neither \wTwoC-equivalent nor \aTwoC-equivalent.
It is conjectured in \cite{DBLP:conf/icalp/DellRW19} that such a class
$\class{Q}$ indeed exists and in particular that there might be
classes $\class{Q}$ such that {\sf AnswerCount}$(\class{Q})$  is \#\textnormal{\sc W$_\mn{func}$[2]}-equivalent. The classification
also leaves open whether having unbounded linked matching numbers is a
necessary condition for \aTwoC-hardness. While a complete
classification
is certainly desirable we note that, from our perspective, the most
relevant aspect is the delineation of the FPT cases from the hard
cases, achieved by Points~(1)-(3) of the theorem.

%\section{Problems Studied and Results on Exact Counting}
\section{Problems Studied and Main Results}

We introduce the problems studied and state the main results of this
paper. We start with ontology-mediated querying and then proceed
to querying under constraints.
Every OMQ language $\class{Q}$ gives rise to an answer
counting problem, defined exactly as in Section~\ref{sect:prelim}:

\begin{center}
	\fbox{\begin{tabular}{ll}
			PROBLEM : & {\sf AnswerCount}$(\class{Q})$
			\\INPUT : & A query $q
                                              \in
                                             \class{Q}$ over schema
                                             \Sbf and
			an $\Sbf$-database $D$
			\\
			OUTPUT : &  $\#q(D)$
	\end{tabular}}
\end{center}

Our first main result is a counterpart of
Theorem~\ref{thm:chenmengeldell} for the OMQ language
$(\class{G},\class{UCQ})$, restricted to OMQs based on the full
schema.  To illustrate the effect on the complexity of counting of
adding an ontology, we first show that the ontology interacts with
all of the measures in Theorem~\ref{thm:chenmengeldell}.

\begin{exa}
  \label{ex:lowering-measures}
  Let $\Omc = \{ R(x,y) \rightarrow S(x,y) \}$ and $\Sbf = \{R,S\}$. For all $n \geq 0$, let
  \[
  \begin{array}{r c l @{\ } l}
  q_n(x_1,\dots,x_n, z_1,\dots,z_n) &=& \exists_{1 < i+j < n+2}\
                                          y_{i,j} &  \bigwedge_{1 \leq i \leq n} R(x_i,z_1) \wedge
  \bigwedge_{1 \leq i < n } R(z_{i} , z_{i+1}) \, \wedge \\
  & & &\bigwedge_{i+j = n+ 1} S(x_{i}, y_{i,j} ) \, \wedge\\
  & & &\bigwedge_{2 < i+j < n + 2} S(y_{i+1,j}, y_{i,j}) \wedge S(y_{i,j+1}, y_{i,j}).
  \end{array}
  \]
  Then $q_n$ is a core of   treewidth $\lfloor \frac{n}{2} \rfloor$,
contract treewidth $n$, starsize $n$, and
  linked matching number $n$. But the OMQ
  $(\Omc,\Sbf,q_n)$ is equivalent
  to $(\Omc,\Sbf,p_n)$ with $p_n$ obtained from $q_n$ by
  dropping all $S$-atoms. Since $p_n$ is tree\=/shaped and has no
  quantified variables, all measures are at most 1.
  Figure~\ref{fig:lowering-measures-example} depicts query $q_4$.
\end{exa}

\begin{figure}
    %\begin{wrapfigure}{r}{0.45\textwidth}
    \centering
    \begin{tikzpicture}[xscale=1.5]
    \tikzstyle{quantified}=[circle, draw, scale=.6]
    \tikzstyle{answer} = [circle, fill, scale=.6]

    \node[quantified] (y10) at (1,0) {};
    \node[quantified] (y20) at (1,1) {};
    \node[quantified] (y30)at (1,2) {}; 
    \node[quantified] (y40) at (1,3) {};
    
    \node[quantified] (y11) at (1.5,0.5) {};
    \node[quantified] (y21) at (1.5,1.5) {};
    \node[quantified] (y31)at (1.5,2.5) {};

    \node[quantified] (y12) at (2,1) {};
    \node[quantified] (y22) at (2,2) {};

    \node[quantified] (y13) at (2.5,1.5) {};

    \node[answer] (x1) at (0,0) {};
    \node[answer] (x2) at (0,1) {};
    \node[answer] (x3) at (0,2) {}; 
    \node[answer] (x4) at (0,3) {};

    \node[answer] (z1) at (-1,1.5) {};
    \node[answer] (z2) at (-1.5,1.5) {};
    \node[answer] (z3) at (-2,1.5) {}; 
    \node[answer] (z4) at (-2.5,1.5) {};
    
    \draw[->] (x1) -> (y10) node[midway,above]{$S$};
    \draw[->] (x2) -> (y20) node[midway,above]{$S$};
    \draw[->] (x3) -> (y30) node[midway,above]{$S$};
    \draw[->] (x4) -> (y40) node[midway,above]{$S$};
    
    \draw[->] (y10) -> (y11) node[midway,above]{$S$};
    \draw[->] (y20) -> (y11) node[midway,above]{$S$};
    \draw[->] (y20) -> (y21) node[midway,above]{$S$};
    \draw[->] (y30) -> (y21) node[midway,above]{$S$};
    \draw[->] (y30) -> (y31) node[midway,above]{$S$};
    \draw[->] (y40) -> (y31) node[midway,above]{$S$};
    
    \draw[->] (y11) -> (y12) node[midway,above]{$S$};
    \draw[->] (y21) -> (y12) node[midway,above]{$S$};
    \draw[->] (y21) -> (y22) node[midway,above]{$S$};
    \draw[->] (y31) -> (y22) node[midway,above]{$S$};
    
    \draw[->] (y12) -> (y13) node[midway,above]{$S$};
    \draw[->] (y22) -> (y13) node[midway,above]{$S$};

    \draw[->] (x1) -> (z1) node[midway,above]{$R$};
    \draw[->] (x2) -> (z1) node[midway,above]{$R$};
    \draw[->] (x3) -> (z1) node[midway,above]{$R$};
    \draw[->] (x4) -> (z1) node[midway,above]{$R$};
    
    \draw[->] (z1) -> (z2) node[midway,above]{$R$};
    \draw[->] (z2) -> (z3) node[midway,above]{$R$};
    \draw[->] (z3) -> (z4) node[midway,above]{$R$};
    \end{tikzpicture}
    \caption{CQ $q_4$ from Example~\ref{ex:lowering-measures}.
    Filled circles indicate answer variables.}
    \label{fig:lowering-measures-example}
\end{figure}
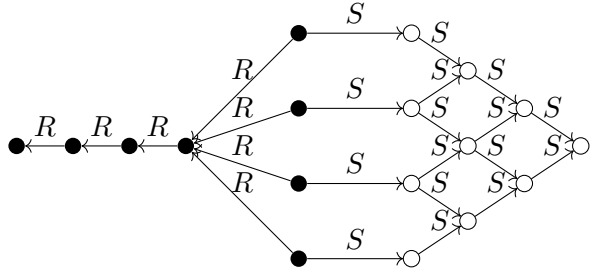

\medskip

Before we state our characterization, we observe as a
preliminary that OMQs from $(\class{G},\class{UCQ})$ can be rewritten
into equivalent ones from $(\class{G} \cap \class{FULL},\class{UCQ})$,
that is, existential quantifiers can be removed from rule heads when
the actual query is adjusted in a suitable way.  This has already been
observed in the literature.
\begin{thmC}[\cite{BDFLP-PODS20}]
  \label{thm:removeEx}
  For every OMQ $Q \in (\class{G},\class{UCQ})$, there
  is an equivalent OMQ from $(\class{G} \cap
  \class{FULL},\class{UCQ})$ that can be effectively computed.
  % such that the semantic treewidth, the
  % contract treewidth, and the starsize of $Q^\exists$ are identical to
  % that of~$Q$.
\end{thmC}
%
%
% the statement about semantic treewidth made in
% Theorem~\ref{thm:removeEx} is actually trivial since $Q$ and
% $Q^\exists$ are equivalent.
% A proof of Theorem~\ref{thm:removeEx} is
% given in the appendix.
The proof of Theorem~\ref{thm:removeEx} is constructive, that is, it
provides an explicit way of computing, given an OMQ
$Q=(\Omc,\Sbf,q) \in (\class{G},\class{UCQ})$, an equivalent OMQ from
$(\class{G} \cap \class{FULL},\class{UCQ})$. We denote this OMQ with
$Q^\exists
=(\Omc^\exists,\Sbf,q^\exists)$
  % such that the semantic treewidth, the
  % contract treewidth, and the starsize of $Q^\exists$ are identical to
  % that of~$Q$.
and call it the \emph{$\exists$-rewriting} of~$Q$.  It is worth noting
that even if $q$ contains no equality atoms, such atoms might be
introduced during the construction of $q^\exists$. What is more,
different CQs in the produced UCQ can comprise different equalities on
answer variables, and thus repeated answer variables cannot be
used. This is actually the main reason for admitting equality atoms in
(U)CQs in this paper.

For OMQs $Q \in (\class{G},\class{UCQ})$, we define a set
$\mn{cl}_{\mn{CM}}(Q)$ of OMQs from $(\class{G},\class{CQ})$
in exact analogy with the definition of
$\mn{cl}_{\mn{CM}}(q)$ for UCQs $q$, that is, for
$Q=(\Omc,\Sbf, p_1 \vee \cdots \vee p_n)$, we use the OMQs
$(\Omc,\Sbf,p_i)$ in place of the CQs $p_i$ from the UCQ $q$ in the
definition of $\mn{cl}_{\mn{CM}}(Q)$. This requires the use of
counting equivalence for OMQs, which is defined in the expected
way. %$\mn{cl}_{\mn{CM}}(Q)$
For a class $\class{Q}$ of OMQs, we use $\mn{cl}_{\mn{CM}}(\class{Q})$
to denote $\bigcup_{Q \in \class{Q}} \mn{cl}_{\mn{CM}}(Q)$.
% The following is~immediate from the definition of $\mn{cl}_{\mn{CM}}(Q)$.
% %
% \begin{lem}
%   \label{lem:cmOMQ}
%   For each $Q=(\Omc,\Sbf,q) \in (\class{G},\class{UCQ})$ and
%   \Sbf-database $D$, $\#Q(D)$ can be computed in polynomial
%   time from the counts $\#Q'(D)$, $Q' \in \mn{cl}_{\mn{CM}}(Q)$.
% \end{lem}
%

For a class
$\class{Q} \subseteq (\class{G},\class{UCQ})$, we now identify a class
$\class{Q}^\star$ of CQs by setting
%
% $$
% \class{Q}^\star = \{ \mn{core}(\mn{ch}_{\Omc^\exists}(p))\mid \exists Q \in \class{C}: Q^\exists
% = (\Omc^\exists,\Sbf,q^\exists) \text{ and } p \in
% \mn{cl}_{\mn{CM}}(q^\exists)\}.
% $$
%
\[
\class{Q}^\star = \{ \mn{core}(\mn{ch}_{\Omc^\exists}(p))\mid \exists
Q \in \class{Q}:
%Q^\exists
%= (\Omc^\exists,\Sbf,q^\exists) \text{ and }
(\Omc^\exists,\Sbf,p) \in
\mn{cl}_{\mn{CM}}(Q^\exists)\}.
\]
% Note that the set $\mn{cl}_{\mn{CM}}(Q^\exists)$  is built by~exploiting counting equivalence with respect to~the set of tuple generating dependencies~$\Omc^\exists$.
%
% \smallskip
%
% For technical reasons, we will sometimes also consider answer
%
% As a convention, if the class $\class{D}$ of considered databases is
% the class of all $\Sbf$\=/databases, then we may write
% $\answerCount(\class{O})$ instead of
% $\answerCount(\class{O}, \class{D})$.  Similarly, if
% $\class{O} = \{Q\}$ is a singleton, we may simply write
% $\answerCount(Q,\class{D})$.
%
%\smallskip
%\noindent
%
% We next give an initial complexity classification for the OMQ language
% $(\class{G} \cap \class{FULL},\class{CQ})$. Ultimately, we are of
% interested in $(\class{G},\class{UCQ})$. However, it turns out that
% our results are easier to state and to prove for the former, so this
% is our starting point.
% For a
% class $\class{C}$ of OMQs, we say that \emph{the treewidths of
%   $\class{C}$ are bounded} if there is a $k \geq 0$ such that every
% OMQ in $\class{C}$ is of semantic treewidth at most~$k$, and likewise
% for semantic treewidth, contract treewidth, and starsize.
%
In other words, the CQs in $\class{Q}^\star$ are obtained by choosing
an OMQ from $\class{Q}$, replacing it with~$Q^\exists$, then choosing
an OMQ $(\Omc^\exists,\Sbf,p)$ from $\mn{cl}_{\mn{CM}}(Q)$, chasing
$p$ with $\Omc^\exists$, and finally taking the core.
Our first main result is as follows.

\begin{thm}
  \label{thm:main-results-OMQs}
  Let $\class{Q} \subseteq (\class{G},\class{UCQ})$ be a recursively
  enumerable class of OMQs with full data schema and relation symbols of
  bounded arity. Then the  following hold:
	\begin{enumerate}

        \item If the treewidths %of $\class{Q}$
          and contract treewidths of CQs in $\class{Q}^\star$
          are bounded, then {\sf AnswerCount}$(\class{Q})$ is in FPT.
          % it is even in \PTime when
          % $\class{Q} \subseteq (\class{ELH}^{\mn{dr}},\class{CQ})$.

        \item If the treewidths of CQs in $\class{Q}^\star$ are unbounded
          and the contract treewidths of CQs in $\class{Q}^\star$ are bounded, then
{\sf
            AnswerCount}$(\class{Q})$ is \wOne-equivalent.
        \item If the contract treewidths of CQs in $\class{Q}^\star$ are
          unbounded and the starsizes of CQs in $\class{Q}^\star$
          are bounded, then {\sf
            AnswerCount}$(\class{Q})$ is \wOneC-equivalent.
        \item If the starsizes of CQs in
          $\class{Q}^\star$ are unbounded, % and the linked
          % matching numbers of CQs in $\class{Q}^\star$
          % are bounded,
          then {\sf AnswerCount}$(\class{Q})$ is
          \wTwoC-hard.
        \item If the linked matching numbers of CQs in
          $\class{Q}^\star$ are unbounded, then {\sf
            AnswerCount}$(\class{Q})$ is \aTwoC-equivalent.

	\end{enumerate}
        The upper bounds also hold when the arity of relation symbols
        is unbounded.
\end{thm}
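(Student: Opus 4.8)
The plan is to reduce {\sf AnswerCount}$(\class{Q})$ to the TGD-free setting of Theorem~\ref{thm:chenmengeldell} by showing it is Turing-fpt-equivalent to {\sf AnswerCount}$(\class{Q}^\star)$, and then invoking that theorem for the class of CQs $\class{Q}^\star$. Note first that, since each $q \in \class{Q}^\star$ is already a core CQ, we have $\mn{cl}_{\mn{CM}}(q) = \{q\}$ and $\mn{core}(q) = q$, so $(\class{Q}^\star)^\star = \class{Q}^\star$; hence Theorem~\ref{thm:chenmengeldell} applied to $\class{Q}^\star$ classifies {\sf AnswerCount}$(\class{Q}^\star)$ precisely by the four measures of $\class{Q}^\star$ appearing in the statement. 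Moreover $\class{Q}^\star$ is recursively enumerable because $\class{Q}$ is and the operations defining $\class{Q}^\star$ (the effective $\exists$-rewriting of Theorem~\ref{thm:removeEx}, forming the Chen--Mengel closure, chasing with $\Omc^\exists$, and taking cores) are all computable; bounded arity transfers from $\class{Q}$ to $\class{Q}^\star$ because the $\exists$-rewriting keeps arities bounded and the chase introduces no new predicates.

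First I would treat the \emph{easy direction} {\sf AnswerCount}$(\class{Q}) \leq_{\mathrm{fpt}} ${\sf AnswerCount}$(\class{Q}^\star)$, which yields all upper bounds. Given $Q \in \class{Q}$ and an $\Sbf$-database $D$, replace $Q$ by its $\exists$-rewriting $Q^\exists = (\Omc^\exists,\Sbf,q^\exists)$; by equivalence and Lemma~\ref{pro:chase}(2), $\#Q(D) = \#q^\exists(D^\star)$ where $D^\star = \mn{ch}_{\Omc^\exists}(D)$ is fpt-computable by Lemma~\ref{lem:chasefpt} and satisfies $\mn{adom}(D^\star) = \mn{adom}(D)$ as the TGDs are full. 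By inclusion--exclusion over $q^\exists$ and the definition of the Chen--Mengel closure, $\#q^\exists(D^\star) = \sum_{(\Omc^\exists,\Sbf,p)\in\mn{cl}_{\mn{CM}}(Q^\exists)} c_p \cdot \#p(D^\star)$ for integer coefficients $c_p$ depending only on $Q$. Finally, Lemma~\ref{lem:chaseoklem} gives $\#p(D^\star) = \#\mn{ch}_{\Omc^\exists}(p)(D^\star) = \#q'_p(D^\star)$ with $q'_p = \mn{core}(\mn{ch}_{\Omc^\exists}(p)) \in \class{Q}^\star$, the last step because a CQ and its core agree on all databases. Thus $\#Q(D)$ is a fixed linear combination of oracle answers $\#q'_p(D^\star)$, each oracle call carrying a query of size bounded by a function of $\|Q\|$, so this is a Turing fpt-reduction. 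Since it uses neither the arity bound (Lemma~\ref{lem:chasefpt} is fpt in $\|Q\|$ irrespective of arity) nor any lower-bound machinery, composing it with the upper bounds of Theorem~\ref{thm:chenmengeldell} already proves Point~(1), the easiness halves of Points~(2),(3),(5), and the final sentence on unbounded arity.

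The harder, \emph{lower-bound direction} {\sf AnswerCount}$(\class{Q}^\star) \leq_{\mathrm{fpt}} ${\sf AnswerCount}$(\class{Q})$ is where the main obstacle lies. Given $q' \in \class{Q}^\star$, one enumerates $\class{Q}$ to recover the $Q$ and conjunction $p$ with $q' = \mn{core}(\mn{ch}_{\Omc^\exists}(p))$, and must compute $\#q'(D')$ from oracle calls to $\#Q(\cdot)$. I would split this into (a) \emph{inverting the inclusion--exclusion}, recovering the single summand $\#p(\mn{ch}_{\Omc^\exists}(D)) = \#q'(\mn{ch}_{\Omc^\exists}(D))$ from the aggregate $\#q^\exists(\mn{ch}_{\Omc^\exists}(D)) = \#Q(D)$ by evaluating $Q$ on databases obtained from $D$ via cloning of constants and solving the resulting Vandermonde-type linear system, exactly as in the Chen--Mengel refinement we rely on; this needs the commutation lemma that cloning constants in $D$ induces the corresponding cloning in $\mn{ch}_{\Omc^\exists}(D)$, which holds since guarded full TGDs derive facts only within guarded sets. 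And (b) \emph{realizing the hard instances through the chase}, reducing the counting of $q'$ on arbitrary databases over its (derived) schema to its counting on chase-databases $\mn{ch}_{\Omc^\exists}(D)$ for $\Sbf$-databases $D$.

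Step~(b) is the crux, since $q'$ may mention predicates outside $\Sbf$. The hard part will be to show that the hardness reductions underlying Theorem~\ref{thm:chenmengeldell} for each measure build their databases from the canonical database of $q'$ by gadget constructions (products with cliques, grids, or stars) that can themselves be generated through the chase from suitable $\Sbf$-data, using that $q'$ is chase-closed up to core and that the full data schema provides enough vocabulary to seed the required derived structure. Composing (a) and (b) with the hardness of {\sf AnswerCount}$(\class{Q}^\star)$ from Theorem~\ref{thm:chenmengeldell} then yields the hardness halves of Points~(2)--(5) and completes the classification; here the bounded-arity assumption on $\class{Q}$, inherited by $\class{Q}^\star$, is used precisely because it is needed for those lower bounds of Theorem~\ref{thm:chenmengeldell}.
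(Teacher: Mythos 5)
Your upper-bound direction is essentially the paper's argument: $\exists$-rewrite, chase the database (Lemma~\ref{lem:chasefpt}), expand $\#Q^\exists(D)$ over the Chen--Mengel closure, and replace each $p$ by $\mn{core}(\mn{ch}_{\Omc^\exists}(p)) \in \class{Q}^\star$. One point you gloss over: to \emph{compute} the coefficients $c_p$ you must effectively construct $\mn{cl}_{\mn{CM}}(Q^\exists)$, which requires deciding counting equivalence of OMQs; the paper supplies this via the decidability of (semi-)counting equivalence over the chase-closed classes $\class{D}^\Sbf_\Omc$ (Corollary~\ref{prop:OMQCEdec}). That is fixable. The genuine gap is in your lower-bound step~(b).

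You propose to reduce $\mn{AnswerCount}(\class{Q}^\star)$ itself to $\mn{AnswerCount}(\class{Q})$ and to discharge the crux by arguing that the hard databases built inside the proofs of Theorem~\ref{thm:chenmengeldell} ``can be generated through the chase from suitable $\Sbf$-data.'' This would require opening up the hardness constructions for each of the four measures and verifying, gadget by gadget, that every database they feed to $q'$ lies in the image of $\mn{ch}_{\Omc^\exists}$ --- but nothing in the statement of Theorem~\ref{thm:chenmengeldell} supports this, those databases range over the full derived schema of $q'$ with no reason to be chase-closed, and the theorem is only usable as a black box. The paper sidesteps this entirely by \emph{not} reducing from $\class{Q}^\star$: instead (Theorem~\ref{thm:OMQ-to-CQ}) it manufactures a different class $\class{C}$ of CQs whose members have the \emph{same Gaifman graphs} --- hence the same treewidth, contract treewidth, starsize, and linked matching number --- as the CQs $\mn{core}(\mn{ch}_\Omc(q))$, and shows $\mn{AnswerCount}(\class{C}) \leq_{\mathrm{fpt}} \mn{AnswerCount}(\class{Q})$. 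Concretely, each variable of the core-chased query is marked with a fresh unary predicate (Lemma~\ref{lemma:CQ-to-marked-CQ}), and each maximal guarded set is then replaced by a single fresh relation symbol, yielding a self-join-free core $q^\newschema$ over a new schema; given an arbitrary database over that schema, one takes the product with $D_{q^\newschema}$, ``floods'' every guarded set with all $\Sbf$-facts so the chase adds nothing, and adds the markings, which pin each variable to its own fiber of the product so that answers to the OMQ biject with answers to $q^\newschema$ (Lemma~\ref{lemma:marked-CQ-to-plainCQ-redux}). Theorem~\ref{thm:chenmengeldell} is then applied to $\class{C}$ as a black box. Without this surrogate-class construction (or an equivalent device), your step~(b) does not go through.
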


\smallskip
Points~(1) to~(5) of Theorem~\ref{thm:main-results-OMQs} parallel exactly
those of Theorem~\ref{thm:chenmengeldell}, but of course the
definition of $\class{Q}^\star$ is a different one. It is through this definition
that we capture the potential interaction between the ontology and the
structural measures. Note, for example, that the class of OMQs
$(\Omc,\Sbf,q_n)$, $n \geq 1$,
from Example~\ref{ex:lowering-measures} would be classified
as \aTwoC-equivalent if $\mn{core}(\mn{ch}_{\Omc^\exists}(p))$
was replaced with $p$ in the definition of $\class{Q}^\star$ while
it is in fact in FPT.
Also note that the \PTime statement in Point~(1) of
Theorem~\ref{thm:chenmengeldell} is absent in
Theorem~\ref{thm:main-results-OMQs}. In fact, evaluating Boolean OMQs
from $(\class{G},\class{UCQ})$ is \TwoExpTime-complete (\ExpTime-hard
when
the arity of relation symbols is bounded by a constant) \cite{CaGK13} and
since for Boolean OMQs evaluation coincides with answer counting, \PTime cannot be
attained.

\smallskip Our second main result concerns querying under integrity
constraints that take the form of guarded TGDs. In contrast to OMQs,
the constraints are thus not used for deductive reasoning, but instead
give rise to a promise regarding the shape of the input database.
Following \cite{BDFLP-PODS20}, we define a {\em constraint-query
  specification} (CQS) to be a triple $S = (\Tmc,\Sbf,q)$ where $\Tmc$ is
a finite set of TGDs over finite schema $\Sbf$ and $q$ a UCQ over~$\Sbf$.  We
call $\Tmc$ the set of {\em integrity constraints}.
%We may write $\sch{\mi{Sp}}$ for $\ins{S}$.
%
Overloading notation, we write $(\class{C},\class{Q})$ for the class
of CQSs in which the set of integrity constraints is formulated in the
class of TGDs~$\class{C}$, and the query is coming from the class of
queries $\class{Q}$. It will be clear from the context whether
$(\class{C},\class{Q})$ is an OMQ language or a class of CQSs.
Every class $\class{C}$ of CQSs gives rise to the following answer counting
problem.
\begin{center}
	\fbox{\begin{tabular}{ll}
			PROBLEM : & {\sf AnswerCount}$(\class{C})$
			\\INPUT : & A set of TGDs $\Tmc$, a query $q$, and
                        an \Sbf-database $D$ that satisfies $\Tmc$ \\
                        & such that $(\Tmc,\Sbf,q) \in \class{C}$.
			\\
			OUTPUT : &  $\#q(D)$
	\end{tabular}}
\end{center}

Our second main result parallels
Theorems~\ref{thm:chenmengeldell} and~\ref{thm:main-results-OMQs}.
We refrain from explicitly listing all cases again.
\begin{thm}
  \label{thm:closedmain}
  Let $\class{Q} \subseteq (\class{G},\class{UCQ})$ be a recursively
  enumerable class of CQSs with relation symbols of bounded
  arity. Then Statements~1-5 of Theorem~\ref{thm:main-results-OMQs}
  hold.
\end{thm}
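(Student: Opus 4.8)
The plan is to derive Theorem~\ref{thm:closedmain} from Theorem~\ref{thm:main-results-OMQs} by means of a mutual fpt-reduction between answer counting for CQSs and answer counting for OMQs. Given a recursively enumerable class $\class{Q}$ of CQSs of bounded arity, I read each CQS $(\Tmc,\Sbf,q) \in \class{Q}$ as the OMQ $(\Tmc,\Sbf,q)$; because $\Tmc$ and $q$ range over $\Sbf$, this OMQ has full data schema, and the resulting OMQ class $\class{Q}_{\mn{OMQ}}$ is again recursively enumerable and of bounded arity. Since $\class{Q}^\star$ is built solely from the underlying triples (via $\exists$-rewriting, the Chen--Mengel closure, chasing, and core), the classes $\class{Q}$ and $\class{Q}_{\mn{OMQ}}$ induce \emph{the same} $\class{Q}^\star$. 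It therefore suffices to show that \answerCount$(\class{Q})$ and \answerCount$(\class{Q}_{\mn{OMQ}})$ are fpt-equivalent: the five cases of Theorem~\ref{thm:main-results-OMQs}, and its upper bounds for unbounded arity, then transfer verbatim.

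For the reduction of \answerCount$(\class{Q})$ to \answerCount$(\class{Q}_{\mn{OMQ}})$ I would use the elementary fact that $q(D)=Q(D)$ whenever $D \models \Tmc$, where $Q=(\Tmc,\Sbf,q)$. Indeed, monotonicity of UCQs gives $q(D) \subseteq q(I)$ for every model $I \supseteq D$, hence $q(D) \subseteq Q(D)$, while $D$ itself being a model of $\Tmc$ forces $Q(D) \subseteq q(D)$. As the CQS semantics promises $D \models \Tmc$, mapping the instance $(\Tmc,\Sbf,q,D)$ to the OMQ instance $(Q,D)$ is a parsimonious fpt-reduction. Composing it with Theorem~\ref{thm:main-results-OMQs} already yields all upper bounds (membership in \fpt and \wOne-, \wOneC-, and \aTwoC-easiness) for \answerCount$(\class{Q})$.

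The reverse reduction, of \answerCount$(\class{Q}_{\mn{OMQ}})$ to \answerCount$(\class{Q})$, is what lifts the lower bounds, and here lies the real work. Given an OMQ $Q=(\Tmc,\Sbf,q)$ and an \emph{arbitrary} database $D$, I must construct in fpt a database $D^\ast$ with $D^\ast \models \Tmc$ and $\#q(D^\ast)=\#Q(D)$, so that a single CQS oracle call returns the count. When $\Tmc$ is full this is immediate: by Lemma~\ref{lem:chasefpt} the database $\chase_\Tmc(D)$ is finite and fpt-computable, introduces no new constants, satisfies $\Tmc$, and meets $Q(D)=q(\chase_\Tmc(D))$ by Lemma~\ref{pro:chase}, so $D^\ast=\chase_\Tmc(D)$ works. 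By the $\exists$-rewriting of Theorem~\ref{thm:removeEx} the difficulty thus concentrates entirely in turning the \emph{certain}-answer count of an existential guarded ontology into the ordinary answer count of $q$ on a finite model of $\Tmc$.

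I expect this existential case to be the main obstacle. The oblivious chase of $\Tmc$ is infinite, and a finite model obtained by folding it back threatens \emph{spurious} matches of $q$: either homomorphisms whose answer tuple uses fresh witness elements, or homomorphisms that use short cycles of the folding to promote a non-answer over $\mn{adom}(D)$ into an answer. To rule out the latter I would invoke finite controllability of guarded TGDs and build a sufficiently acyclic finite model $D^\ast \models \Tmc$ extending $D$ in which every radius-$|q|$ neighbourhood agrees with the chase, so that the answers of $q$ over $\mn{adom}(D)$ are exactly $Q(D)$. To discard the former I would exploit the cloning operation of Section~\ref{sect:prelim}: guarded TGDs are preserved under the resulting blow-ups, so cloning each constant of $\mn{adom}(D)$ with a common multiplicity $m$ keeps $D^\ast$ a model of $\Tmc$ while turning $\#q(\cdot)$ into a polynomial $\sum_k a_k\,m^k$ in which $a_k$ counts the answers with exactly $k$ answer variables on $\mn{adom}(D)$; the top coefficient $a_{|\bar x|}$ counts the answers with all answer variables on $\mn{adom}(D)$, which by the genericity of $D^\ast$ is exactly $\#Q(D)$, and is recovered by interpolation from $|\bar x|+1$ CQS oracle calls on blow-ups of $D^\ast$. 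Verifying that the generic model is fpt-computable and that the interpolation isolates $a_{|\bar x|}$ is the technical heart; once this is in place, $\class{Q}$, $\class{Q}_{\mn{OMQ}}$, and the $\exists$-rewritten class all share $\class{Q}^\star$, and Theorem~\ref{thm:main-results-OMQs} delivers the complete classification.
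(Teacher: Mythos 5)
Your proposal follows essentially the same route as the paper: the easy direction views a CQS as an OMQ with full schema (using $q(D)=Q(D)$ for $D\models\Tmc$), and the hard direction combines a finite model from finite controllability of guarded TGDs (the paper's Theorem~\ref{thm:finmod}) with cloning of the original constants and Vandermonde interpolation to isolate the count of answers over $\mn{adom}(D)$ (the paper's Lemma~\ref{lemma:partial-domain-answers}). The only cosmetic difference is that you sketch how the finite controllable model might be built, whereas the paper imports it as a black box from \cite{BDFLP-PODS20}.
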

Note that the delineation of the considered complexities is
identical for ontology-mediated querying and for querying under
constraints. In particular, Theorem~\ref{thm:closedmain} (implicitly)
uses exactly the same class of CQs $\class{Q}^\star$ and the same associated
measures.

It would be interesting to know whether $\mn{AnswerCount}(\class{Q})$
being in FPT coincides with $\mn{AnswerCount}(\class{Q})$ being in
\PTime for classes of CQSs
$\class{Q} \subseteq (\class{G},\class{CQ})$. Note that this is the
case for evaluation in the presence of constraints that are guarded
TGDs \cite{BaGP16,BFGP20} and also for answer counting without constraints
\cite{countingTrichotomy}. The proofs of these results, however,
break in our setting.

\section{Querying Under Integrity Constraints}

We derive Theorem~\ref{thm:closedmain} from
Theorem~\ref{thm:main-results-OMQs} by means of reduction, so that in the
rest of the paper we may concentrate on the case of ontology-mediated
querying. In fact, Theorem~\ref{thm:closedmain} is a consequence of
Theorem~\ref{thm:main-results-OMQs} and the following result.%
\begin{thm}
%\begin{restatable}{thm}{closedmain}
    \label{thm:reduction-open-world-closed-world}
    Let $\class{C} \subseteq (\class{G}, \class{UCQ})$ be a~recursively
    enumerable class of CQSs and let $\class{C}'$ be $\class{C}$ viewed
    as a class of OMQs based on the full schema.%
    \footnote{Syntactically, a CQS $(\Tmc,\Sbf,q)$ and an OMQ $(\Tmc,\Sbf,q)$ are actually the same
        thing except that the definition of CQSs is more strict
        regarding the schema \Sbf; as a consequence when
        viewing a CQS as an OMQ,  the latter is based on the full schema.}
    Then there is a
    Turing fpt\=/reduction from $\mn{AnswerCount}(\class{C}')$ to
    $\mn{AnswerCount}(\class{C})$ and there is a parsimonious
    polynomial time reduction from $\mn{AnswerCount}(\class{C})$ to
    $\mn{AnswerCount}(\class{C}')$.
%\end{restatable}
\end{thm}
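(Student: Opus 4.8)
The plan is to prove the two reductions separately. The direction from CQSs to OMQs is essentially immediate, while the converse requires first manufacturing a finite model of the constraints and then a counting argument to cut the answer count down to the correct active domain.

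For the reduction of $\mn{AnswerCount}(\class{C})$ to $\mn{AnswerCount}(\class{C}')$, I would use the key observation that for a CQS $(\Tmc,\Sbf,q)\in\class{C}$ and a database $D$ with $D\models\Tmc$, the OMQ $Q=(\Tmc,\Sbf,q)\in\class{C}'$ satisfies $Q(D)=q(D)$. Indeed, $D$ is itself a model of $\Tmc$ containing $D$, so every certain answer is already an answer on $D$, giving $Q(D)\subseteq q(D)$; conversely, by monotonicity of UCQs any answer on $D$ is an answer in every superinstance of $D$, hence in every model of $\Tmc$ extending $D$, giving $q(D)\subseteq Q(D)$. Thus $\#q(D)=\#Q(D)$, and the reduction simply forwards the instance $((\Tmc,\Sbf,q),D)$ to the OMQ oracle and returns its output unchanged. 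Since $(\Tmc,\Sbf,q)$ is syntactically a member of $\class{C}'$ and no post\=/processing is done, this is a parsimonious polynomial time reduction.

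For the harder direction I am given an OMQ $Q=(\Tmc,\Sbf,q)\in\class{C}'$ (say with answer variables $\bar x$) and an arbitrary database $D$, and must compute $\#Q(D)=\#\bigl(q(\mn{ch}_\Tmc(D))\cap \mn{adom}(D)^{|\bar x|}\bigr)$ via an oracle that, for $(\Tmc,\Sbf,q)\in\class{C}$, returns $\#q(M)$ on any $M\models\Tmc$. Two difficulties arise: the chase of $\Tmc$ is in general infinite, and any finite model $M\models\Tmc$ will carry witness constants on which $q$ yields answers that are \emph{not} OMQ answers. The first step is therefore to build, in FPT, a finite model $M\models\Tmc$ with $M\supseteq D$ and $q(M)\cap \mn{adom}(D)^{|\bar x|}=Q(D)$. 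This rests on the tree\=/like shape of the guarded chase: since guarded TGDs have the bounded\=/treewidth finite\=/model property and $q$ has bounded size, I would chase $D$ with $\Tmc$ up to a depth linear in $\|q\|$ and fold the frontier onto earlier elements of the same guarded type, so that the resulting finite model agrees with $\mn{ch}_\Tmc(D)$ on all $q$\=/matches that touch $\mn{adom}(D)$.

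It then remains to isolate, among the answers counted by $\#q(M)$, exactly those using only constants of $A:=\mn{adom}(D)$. Here I would use cloning. Writing $B=\mn{adom}(M)\setminus A$ for the witness constants, let $M_t$ be obtained from $M$ by cloning every constant of $B$ so that it has $t$ copies. Guardedness guarantees $M_t\models\Tmc$: a satisfied body maps, through its guard, into the clones of a single original fact, hence projects to a satisfied body of $M$, whose head witnesses lift back to $M_t$ since all clone\=/variants of the relevant facts are present. A direct count then gives $\#q(M_t)=\sum_{\bar d\in q(M)} t^{\beta(\bar d)}$, where $\beta(\bar d)$ is the number of coordinates of $\bar d$ lying in $B$; thus the constant coefficient of this polynomial of degree $\le|\bar x|$ equals $\#\bigl(q(M)\cap A^{|\bar x|}\bigr)=\#Q(D)$. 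I would therefore query the oracle on $((\Tmc,\Sbf,q),M_t)$ for $t=1,\dots,|\bar x|+1$, all with the same query and hence bounded parameter, and recover $\#Q(D)$ by interpolation, which is a Turing fpt\=/reduction. The main obstacle is the finite\=/model construction: making the folded structure a genuine model of $\Tmc$ while ensuring no new $q$\=/match reaching $\mn{adom}(D)$ is introduced requires choosing the truncation depth carefully against $\|q\|$ and arguing, from the bounded\=/treewidth shape of the guarded chase, that matches hitting $\mn{adom}(D)$ cannot see the folded part.
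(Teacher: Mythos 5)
Your overall architecture matches the paper's: the direction from CQSs to OMQs is the trivial parsimonious reduction you describe, and the converse direction is handled by (i) replacing the possibly infinite chase with a finite model of $\Tmc$ that agrees with the certain answers on tuples over $\mn{adom}(D)$, and (ii) a cloning-plus-Vandermonde-interpolation argument to extract the count of answers lying entirely in $\mn{adom}(D)$. Step (ii) is essentially correct: you clone the fresh witness constants and read off the constant coefficient, whereas the paper (Lemma~\ref{lemma:partial-domain-answers}) clones the constants of $\mn{adom}(D)$ and reads off the top coefficient; these are dual and both work, and your observation that guardedness makes the class of models closed under cloning is exactly the paper's Lemma~\ref{lem:closure}. (The paper is slightly more careful about justifying the fiber-counting identity for UCQs, via inclusion-exclusion over the disjuncts, but your direct argument via the collapse homomorphism also goes through.)

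The genuine gap is in step (i). The paper does not construct the finite model by hand; it invokes Theorem~\ref{thm:finmod}, a small-model result for guarded TGDs imported from \cite{BDFLP-PODS20} and ultimately resting on finite controllability of the guarded fragment \cite{DBLP:conf/lics/BaranyGO10}. Your proposed construction --- chase to depth linear in $||q||$ and ``fold the frontier onto earlier elements of the same guarded type'' --- does not work as stated: identifying frontier elements with earlier elements of the same type creates new cycles in the folded structure, and such cycles can produce homomorphic images of (disjuncts of) $q$ that touch $\mn{adom}(D)$ yet have no counterpart in $\mn{ch}_\Tmc(D)$, so spurious answers may be introduced no matter how the truncation depth is chosen relative to $||q||$. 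Avoiding these short cycles is precisely the hard part of finite controllability for guarded logics and requires a substantially more delicate construction than a greedy fold. You correctly flag this as ``the main obstacle,'' but as written the step is not established; you should either cite the finite-model theorem as the paper does or supply an actual cycle-avoiding construction.
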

The reduction from $\mn{AnswerCount}(\class{C})$ to
$\mn{AnswerCount}(\class{C}')$ is immediate: given a set of guarded
TGDs $\Tmc$, a CQ $q$, and an \Sbf-database $D$ that satisfies $\Tmc$,
we can view $(\Tmc,\Sbf,q)$ as an OMQ $Q$ based on the full schema and
return $\#Q(D)$ as $\#q(D)$. It is easy to see that this is correct.

For the converse reduction, we are given a $Q=(\Omc,\Sbf,q)$ that is a
CQS from $\class{C}$ viewed as an OMQ and an \Sbf-database~$D$. It
seems a natural idea to simply view $Q$ as a CQS, which it originally
was, and replace $D$ with $\mn{ch}_\Omc(D)$ so that the promise is
satisfied, and to then return $\#q(\mn{ch}_\Omc(D))$ as
$\#Q(D)$. However, there are two obstacles. First, $\mn{ch}_\Omc(D)$
need not be finite; and second, chasing adds fresh constants which
changes the answer count.  We solve the first problem by replacing the
infinite chase with a (finite!)  database $D^\star$ that extends $D$
and satisfies \Omc. This is based on the following result from
\cite{BDFLP-PODS20} which is essentially a consequence of $\class{G}$ being
finitely controllable \cite{DBLP:conf/lics/BaranyGO10}.
\begin{thmC}[\cite{BDFLP-PODS20}]
\label{thm:finmod}
Given an ontology $\Omc \subseteq \class{G}$, an \Sbf-database $D$,
and an $n \geq 1$, one can effectively construct a finite database
$D^*$ % \supseteq D$
that satisfies the following conditions:
  \begin{enumerate}

  \item $D^* \models \Omc$ and $D \subseteq D^*$;

  \item $\bar a \in q(D^*)$ iff $\bar a \in Q(D)$ for all OMQs
    $(\Omc,\Sbf,q)$ where $q$ has
    at most $n$ variables and for all tuples $\bar
    a$ that use only constants in $\mn{adom}(D)$.

  \end{enumerate}
  The construction of $D^*$ takes time
  $f(||\Omc||+n) \cdot ||D||^{O(1)}$ with $f$ a computable function.
\end{thmC}
To address the second problem, we correct the count. Note that this
cannot be done by introducing fresh unary relation symbols as markers
to distinguish the original constants from those introduced by the
chase as this would require us to change the query, potentially
leaving the class of queries that we are working with. We instead use
an approach inspired by \cite{countingTrichotomy}.
The idea is to compute $\#q(D')$ on a set of databases $D'$ obtained
from $D^\star$ by cloning % (that is, duplicating)
constants in
$\mn{adom}(D) \subseteq \mn{adom}(D^\star)$.
%\setminus \mn{adom}(D^*)$ in a systematic way, arranging
The results can be arranged in a system of equations whose
coefficients form a Vandermonde matrix.  Finally, the system can be
solved to obtain
$\#q(D)$.  This is formalized by the following lemma where we use
$\mn{clones}(D)$ to denote the class of~all
$\Sbf$\=/databases that can be obtained from \Sbf-database
$D$ by cloning constants. %A proof is in the appendix of the long version.
\begin{lem}
%\begin{restatable}{lem}{lempartdomans}
  \label{lemma:partial-domain-answers}
%  Fix a constant $r$.
  There is an algorithm that, given a UCQ $q(\bar{x})$ over schema
  \Sbf, an~$\Sbf$\=/database~$D$, and a set $F \subseteq \dom(D)$,
  computes $\#(q(D) \cap F^{|\bar{x}|})$ in time $||q||^{O(\mn{ar}(S))}
  \cdot ||D||$ % , where $f$ is a computatble function and $p$ is a
  % polynomial,
  using an oracle for $\mn{AnswerCount}(\{q\},
  \mn{clones}(D))$.
    %The class $\textit{clones}(D)$
    %is the class of~all $\Sbf$\=/databases obtained from $D$ by cloning constants.
    %The oracle is invoked only on $\Sbf$\=/databases $D'$ such that $D' \models \Omc$.
%\end{restatable}
\end{lem}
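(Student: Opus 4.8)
The plan is to isolate the answers landing entirely in $F$ by a cloning-and-interpolation argument in the spirit of~\cite{countingTrichotomy}. Write $k = |\bar x|$ and $\bar F = \dom(D) \setminus F$. For each integer $s \geq 1$, let $D_s \in \mn{clones}(D)$ be obtained by cloning every constant of $\bar F$ so that it is present in $s$ mutually interchangeable versions (the original together with $s-1$ fresh clones), while leaving the constants of $F$ untouched; in particular $D_1 = D$. The first observation is that all versions of a constant $c \in \bar F$ are \emph{twins}: replacing, in any fact of $D_s$, an occurrence of $c$ by an arbitrary version again yields a fact of $D_s$, and this holds independently for each occurrence within a single fact.

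Let $\pi \colon \dom(D_s) \to \dom(D)$ map every version back to its original constant. The heart of the argument is to show that $\#q(D_s)$ is a polynomial in $s$ of degree at most $k$ whose constant term equals $\#(q(D) \cap F^{|\bar x|})$. To this end I would analyse the fibres of $\pi$. Since $\pi$ sends $D_s$-facts to $D$-facts, any witnessing homomorphism composed with $\pi$ shows $\bar b \in q(D_s) \Rightarrow \pi(\bar b) \in q(D)$, so $q(D_s)$ partitions into fibres indexed by the answers $\bar a \in q(D)$. Using the twin property one checks that a tuple $\bar b$ with $\pi(\bar b) = \bar a$ lies in $q_i(D_s)$ exactly when $\bar a \in q_i(D)$ and $\bar b$ respects the equalities that the disjunct $q_i$ imposes on the answer positions: a homomorphism witnessing $\bar a \in q_i(D)$ extends to one witnessing $\bar b \in q_i(D_s)$ by sending each answer position to its prescribed version and each quantified variable to a fixed version. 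Counting the admissible version assignments per fibre (grouping positions that some disjunct forces equal, and combining the disjuncts by inclusion-exclusion) shows that the fibre over $\bar a$ has size $1$ when $\bar a \in F^{|\bar x|}$ and otherwise contributes a polynomial in $s$ with vanishing constant term, because any answer position holding an $\bar F$-constant forces at least one factor of $s$. Summing over all $\bar a \in q(D)$ yields the claimed polynomial, whose constant term counts exactly the answers with all entries in $F$.

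Given this, the algorithm evaluates $\#q(D_s)$ for $s = 1, \dots, k+1$ by calling the oracle for $\mn{AnswerCount}(\{q\}, \mn{clones}(D))$ on the databases $D_s$, assembles the $k+1$ results into a linear system whose coefficient matrix is the Vandermonde matrix of the nodes $1, \dots, k+1$, solves it (the matrix being invertible), and returns the recovered constant term. Since $k \leq ||q||$, only $O(||q||)$ oracle calls are made, each with an argument in $\mn{clones}(D)$, and solving the system costs time polynomial in $||q||$. For the time bound I would note that each fact of $D$ has at most $\mn{ar}(\Sbf)$ positions in $\bar F$, so it spawns at most $s^{\mn{ar}(\Sbf)} \leq (k+1)^{\mn{ar}(\Sbf)} = ||q||^{O(\mn{ar}(\Sbf))}$ facts in $D_s$; hence each $D_s$ has size $||D|| \cdot ||q||^{O(\mn{ar}(\Sbf))}$ and is built within this time, giving the overall bound $||q||^{O(\mn{ar}(\Sbf))} \cdot ||D||$ for everything outside the oracle calls.

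The step requiring the most care, and the main obstacle, is the fibre analysis: making precise that the clones act as \emph{independent} twins even within a single fact, and that the UCQ disjunction together with the equality atoms on answer variables affects only the higher-degree coefficients, so that the constant term is left equal to the desired restricted count. Everything else is routine polynomial interpolation and bookkeeping of the cloned database's size.
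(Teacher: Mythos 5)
Your proof is correct and uses essentially the same technique as the paper: cloning constants of $D$ and recovering the restricted count by solving a Vandermonde system obtained from the oracle values on the cloned databases, exactly as in the approach the paper borrows from~\cite{countingTrichotomy}. The only (immaterial) difference is that you clone the complement $\dom(D)\setminus F$ and read off the constant term of the resulting polynomial in $s$, whereas the paper clones $F$ itself and reads off the unknown $\#q^{|\bar x|}(D)$ corresponding to answers with all positions in $F$; both fibre analyses rest on the same scaling property of answers under cloning and on inclusion-exclusion over the disjuncts of the UCQ.
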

\begin{proof} %[Proof of Lemma~\ref{lemma:partial-domain-answers}]
    We first give a brief  overview of the algorithm.  Assume that the input
    is a UCQ $q(\bar x)$, a database $D$, and a set $F$.  The algorithm first constructs
    databases $D_1,\dots,D_{|\bar x|+1}$ by starting with $D$
    and cloning constants from $F$. Then, it computes $\#q(D_j)$ for $1 \leq j \leq
    |\bar x|+1$ and, finally, constructs and solves a system of linear
    equations for which one of the unknowns is the desired value $\#\big( q(D) \cap F^{|\bar{x}|} \big)$.
    We now make this precise.

    For $1 \leq j < |\bar x|+1$, database $D_{j}$ is constructed from
    ${D}$ by cloning each element from ${F}$ exactly $j-1$ times.  In
    particular, $D_{1} = {D}$.  Observe that
    $$
    |\dom(D_{j})| \leq %|\dom({D})| + j|{F}| \leq
    j|\mn{adom}(D)|,\ \
%     \ \text{
%        and } \
    |D_{j}| \leq
    j^{\mn{ar}(\Sbf)} {\cdot} |D| \leq ||q||^{\mn{ar}(\Sbf)} {\cdot} |D|,
%|\Sbf|(j|\mn{adom}(D)|)^r.
    $$
    and each $D_j$ can be constructed in time $||q||^{O(\mn{ar}(S))}
  \cdot ||D||$.

 %   Thus, each database $D_j$ is of size polynomial in that of $D$.

    Now, for $0 \leq i \leq |\bar{x}|$ and  $1 \leq j < |\bar x|+1$, let
    $q^i(D_{j})$ denote the subset of answers $\bar{a} \in q(D_{j})$ such
    that exactly $i$ positions in $\bar a$ have constants that are
    in~${F}$ or have been obtained from such constants by cloning. We
    claim that $\#q^i(D_{j}) = j^i \cdot \#q^i(D)$, that is, having
    $j$ such clones multiplies each answer $\bar a \in q(D)$ having $i$
    positions of the described kind exactly $j^i$ times.
    By the semantics, this is immediate if $q$ is a CQ. So assume that
    $q(\bar x)=p_1(\bar x) \vee \cdots \vee p_k(\bar x)$ where each
    $p_i$ is a  CQ.  By
    the inclusion exclusion principle,
    $$\#q^i(D_{j}) = \sum_{S \subseteq \{1,\dots,k\}}
    (-1)^{|S|-1}\#(\bigwedge_{\ell \in S} p_\ell)^i(D_j)$$
    and likewise for $D$ in place of $D_j$. Since
    $\#p^i(D_{j}) = j^i \cdot \#p^i(D)$ for each CQ
    $p = \bigwedge_{\ell \in S} p_\ell$ that occurs in the sum,
    we obtain $\#q^i(D_{j}) = j^i \cdot \#q^i(D)$, as claimed.

    Let $1 \leq j \leq |\bar x|+1$. Since the sets $q^i(D_{j})$ partition
    the set $q(D_{j})$, we have that $\#q(D_{j}) =
    \sum_{i=0}^{|\bar{x}|} \#q^i(D_{j})$. Moreover, since  we have shown that $\#q^i(D_{j}) = j^i \cdot \#q^i(D)$ we can infer that
    $$
    \#q(D_{j}) =  \sum_{i=0}^{|\bar{x}|} {\color{\highlightColourTwo} j^i}\cdot \#q^i(D).
    $$
    In the above equation, there are $|\bar{x}|+1$ unknown values
    $\#q^0(D),\dots, \#q^{|\bar x|}(D)$ and one value,
    i.e.~$\#q(D_{j})$, that can be computed by the oracle.

    Taking this equation for $j=0, \dots, |\bar{x}| + 1$ generates a
    system of $|\bar{x}|+1$ linear equations with $|\bar{x}|+1$
    variables.  The coefficients of the system form a~Vandermonde
    matrix, which implies that the equations are independent and that the
    system has a~unique solution. Thus, we can solve the system in
    polynomial time, e.g.\ by Gaussian elimination, to compute the
    values $\#q^0(D),\dots, \#q^{|\bar x|}(D)$.

    Clearly $q^{|\bar{x}|}(D)=q({D}) \cap {F}^{|\bar{x}|}$, so
    returning $\#q^{|\bar{x}|}(D_{1})$ yields $\#({q}({D}) \cap
    F^{|\bar{x}|})$,
    as desired.

    It can be verified that, overall, the algorithm runs in the
    time stated in Lemma~\ref{lemma:partial-domain-answers}.
    %
    % The algorithm works as follows.
    % \begin{enumerate} %\itemsep=0pt
    %     \item Construct databases $D^{j}$.
    %     \item Invoke the oracle $|\bar{x}|+1$ times to compute the values $|{q}(D^{j})|$.
    %     \item Solve the system of equations $[|{q}(D^{j})|] = [i^j] [|\Hmc_{i}^{1}|]$.
    %     \item Return $|\Hmc^{1}_{|\bar{x}|}|$.
    %
    % \end{enumerate}
\end{proof}
Now for the reduction from $\mn{AnswerCount}(\class{C}')$ to
$\mn{AnswerCount}(\class{C})$ claimed in
Theorem~\ref{thm:reduction-open-world-closed-world}.  Let $Q(\bar{x})
= (\Omc, S, q)$ be a CQS from $\class{C}$ viewed as an OMQ, and let
$D$ be an $\Sbf$\=/database. We first construct the database $D^*$ as
per Theorem~\ref{thm:finmod} with $n$ being the number of variables in $q$. We then apply the algorithm asserted by
Lemma~\ref{lemma:partial-domain-answers} with $D^*$ in place of $D$
and with $F \eqdef \dom(D)$.  Cloning preserves guarded TGDs and thus
we can use the oracle (which can compute $\#q(D')$ for any
\Sbf-database $D'$ that satisfies \Omc) for computing
$\mn{AnswerCount}(\{q\}, \mn{clones}(D))$ as required by
Lemma~\ref{lemma:partial-domain-answers}.

\section{Counting Equivalence}
\label{sect:countequiv}

For the proofs of both the upper and lower bounds stated in
Theorem~\ref{thm:main-results-OMQs}, we need a good grasp of counting
equivalence. For the lower bounds, the same is true for the related
notion of semi-counting equivalence. In this section, we make some
fundamental observations regarding these notions.

In the lower bound proofs, we shall often be concerned with classes of
databases
$\class{D}^\Sbf_{\Omc} = \{ \mn{ch}_{\Omc}(D) \mid D \text{ an
  \Sbf-database}\}$ for some ontology \Omc from
$\class{G} \cap \class{FULL}$. Note that since \Omc is from
$\class{FULL}$, each $\mn{ch}_{\Omc}(D)$ is finite and thus indeed a
database. We observe some important properties of the class
$\class{D}^\Sbf_{\Omc}$ that are folklore and easy to see.  For a
schema \Sbf, we define the \Sbf-database $D^\top_\Sbf$ by fixing a
constant $c$ and setting
$D^\top_\Sbf = \{ R(c,\dots,c) \mid R \in \Sbf \}$.
\begin{lem}
  \label{lem:closure}
  For every ontology $\Omc \subseteq \class{TGD}$ and schema \Sbf that
  contains all symbols in~\Omc, the class of instances
  $\class{D}^\Sbf_{\Omc}$
  %$\{ \mn{ch}_{\Omc}(D) \mid D \text{ an \Sbf-database}\}$
  is closed
  under direct product and contains~$D^\top_\Sbf$. If
  $\Omc \subseteq \class{G}$, then it~is closed under disjoint union
  and cloning of elements.  If
   $\Omc \subseteq \class{G} \cap \class{FULL}$, then it is closed
   under induced subdatabases.
\end{lem}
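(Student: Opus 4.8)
The plan is to prove all five closure properties through one mechanism: to witness $J\in\class{D}^\Sbf_\Omc$ I exhibit an $\Sbf$-database $E$ with $\mn{ch}_\Omc(E)=J$ up to isomorphism. Two properties of the chase do the work. First, if $\Omc\subseteq\class{FULL}$ and $J\models\Omc$, then every oblivious chase step on $J$ produces only facts over constants already present, and these lie in $J$ because $J\models\Omc$; hence $\mn{ch}_\Omc(J)=J$ and $J$ witnesses itself. Second, guardedness localizes each chase step: when a TGD fires, its guard atom is matched to a single fact, so all body variables are instantiated inside one maximal guarded set. Accordingly, for each operation I would show that it (i) preserves models of $\Omc$ and then (ii) either self-witnesses (the full case) or is reproduced by chasing an explicit seed database.

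For the disjoint union I take $E=D_1\sqcup D_2$, where $\mn{ch}_\Omc(D_i)=I_i$, and argue $\mn{ch}_\Omc(D_1\sqcup D_2)=I_1\sqcup I_2$: by guardedness no chase step can use facts from both components, since the guard fact lies in one component and fixes all body variables there, so the two chases run independently and their fresh nulls stay separated. For induced subdatabases, given $I=\mn{ch}_\Omc(D)$ and $\Delta\subseteq\mn{adom}(I)=\mn{adom}(D)$ (fullness adds no constants), I take $E=I_{|\Delta}$ and check $E\models\Omc$: if a body maps into $I_{|\Delta}$, its guard fact lies in $I_{|\Delta}$, hence all body variables map into $\Delta$; as the head of a full TGD uses only body variables, the derived head fact already lies in $I_{|\Delta}$ because it lies in $I\models\Omc$. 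Fullness then gives $\mn{ch}_\Omc(E)=E$. For cloning I would show that cloning preserves models: the retraction sending each clone back to its original is a homomorphism onto $I$, and the operation adds, for every fact of $I$ and every assignment of clones to its occurrences, the corresponding fact, so the head demanded by any fired rule is present; closure then follows by self-witnessing.

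For $D^\top_\Sbf$, I observe that it is a model of $\Omc$, since every body and every head atom is satisfied by sending all variables to the single constant $c$; thus it witnesses itself in the full case. For the direct product of two members $I_1,I_2$ of the class I would invoke the standard fact that TGDs, being $\forall\exists$-Horn sentences, are preserved under direct products: projecting a trigger in $I_1\times I_2$ to each factor, choosing a witness in each, and pairing the witnesses componentwise shows $I_1\times I_2\models\Omc$, so the product again witnesses itself.

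The step I expect to be the main obstacle is reconciling these model-theoretic arguments with the oblivious chase when $\Omc$ is not full. Then $J\models\Omc$ no longer yields $\mn{ch}_\Omc(J)=J$, because the oblivious chase keeps firing and manufactures fresh nulls even when heads are already satisfied, so the self-witness step must be replaced by matching $J$ against $\mn{ch}_\Omc$ of an explicit seed up to isomorphism. This is harmless for disjoint union, where the chase provably commutes with $\sqcup$ under guardedness, but genuinely delicate for the product and for $D^\top_\Sbf$, where one must track how the fresh nulls of the two factors pair up and must also discount the isolated elements that the Cartesian product domain introduces but that occur in no fact. I would therefore carry the product and $D^\top_\Sbf$ arguments out either in the regime $\Omc\subseteq\class{G}\cap\class{FULL}$ that the subsequent lower-bound proofs actually use—where everything reduces to model preservation plus the full-chase fixpoint property—or phrase those two closures up to isomorphism and modulo such isolated elements.
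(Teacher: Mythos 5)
Your proposal is correct and follows essentially the same route as the paper, which likewise argues via preservation of TGDs under the respective operations (the classical projection/pairing argument for direct products, with a pointer to Fagin) combined with the fact that a finite model of a set of full TGDs is its own oblivious chase, and dismisses the remaining cases as ``similar, but simpler.'' Your write-up is in fact more careful than the paper's one-paragraph sketch: the reservation you raise about the non-full case --- where the oblivious chase keeps manufacturing nulls, so neither $D^\top_\Sbf$ nor a direct product of chases literally self-witnesses --- is a genuine looseness in the lemma as stated, and retreating to $\class{G}\cap\class{FULL}$, the only regime in which the paper ever applies $\class{D}^\Sbf_{\Omc}$, is the right resolution.
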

{\color{\highlightColourTwo}
For closure under direct products, it suffices to observe that there
is a homomorphism from the direct product $I$ of instances $I_1$ and
$I_2$ to each of the components $I_1$ and $I_2$. Thus, applicability
of a TGD in the product implies applicability in both components.
Moreover, the result of the applications in the components is then
clearly also found in the product, see e.g.~\cite{faginSTOC80} for
more details. The arguments for the other closure properties are
similar, but simpler.
}

We now make a central observation regarding the relationship between
(semi-)counting equivalence over classes of databases
$\class{D}^\Sbf_{\Omc}$ and (semi-)counting equivalence over the class
of all databases. But let us first introduce the notion of
semi-counting equivalence. Two CQs $q_1(\bar x_1)$ and $q_2(\bar x_2)$
over the same schema \Sbf are \emph{semi-counting equivalent} if they
are counting equivalent over all \Sbf-databases~$D$ such that $\#q_1(D) >0$
and $\#q_2(D) >0$. %The observation is as follows.
For a CQ~$q$, we use $\hat q$ to denote the CQ obtained from $q$ by
dropping all maximal connected subqueries that contain no answer
variable.\footnote{Note that if $q$ is Boolean, then $\hat q$ is the
  empty CQ. It evaluates to true on every database.}
\newcommand{\lemcountingEquivClassD}{
    Let $q_1(\bar x_1)$ and $q_2(\bar x_2)$ be
    {\color{\highlightColour}equality-free} CQs over schema \Sbf and let
    $\class{D}$ be a class of \Sbf-databases that contains
    $D_{q_i}$ and $D_{\hat q_i}$ for
    $i \in \{1,2\}$ and is closed under cloning. Then
    \begin{enumerate}
        \item $q_1$ and $q_2$  are \emph{counting equivalent} over
        $\class{D}$ iff $q_1$ and $q_2$  are \emph{counting
            equivalent} over the class of all \Sbf-databases;
        \item if $\class{D}$ is closed under disjoint union %, induced
        %                   subdatabases,
        and
        contains $D^{\top}_{\Sbf}$, then $q_1$ and $q_2$
        are  \emph{semi-counting equivalent} over $\class{D}$ iff
        $q_1$ and $q_2$  are  \emph{semi-counting equivalent} over the class of all \Sbf-databases.
    \end{enumerate}
}
\begin{lem}
    \label{lem:countingEquivClassD}
    \lemcountingEquivClassD
\end{lem}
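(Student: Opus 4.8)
The forward implication in each of the two biconditionals is immediate: since $\class{D}$ is a subclass of the class of all $\Sbf$-databases, agreement of $\#q_1$ and $\#q_2$ on every $\Sbf$-database trivially entails their agreement on $\class{D}$. Hence the entire content lies in the converse directions, and the plan is to show that the canonical databases $D_{q_i}, D_{\hat q_i}$ together with closure under cloning already force any counting discrepancy between $q_1$ and $q_2$ to surface inside $\class{D}$.

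For Point~1, I would argue contrapositively and reduce answer counting to homomorphism counting. First recall that $\#q(D) = \#\mn{core}(q)(D)$ since $q \equiv \mn{core}(q)$, so we may work with cores. Next, writing $q(\bar x) = \exists \bar y\, \varphi$, the value $\#q(D)$ can be expressed, via inclusion--exclusion over the ways the entries of an answer tuple coincide, as an integer combination of homomorphism counts $\hom(p, D)$ where $p$ ranges over the quotients of $q$ obtained by identifying variables. The key move is then interpolation: cloning the constants of a fixed database $D_0$ turns $\#q(\cdot)$ into a polynomial in the clone multiplicities whose coefficients are exactly these homomorphism counts, and because $\class{D}$ is closed under cloning we may recover all of them by a Vandermonde argument as in Lemma~\ref{lemma:partial-domain-answers}. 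Applying this with $D_0 \in \{D_{q_1}, D_{q_2}, D_{\hat q_1}, D_{\hat q_2}\} \subseteq \class{D}$, agreement of $\#q_1$ and $\#q_2$ on $\class{D}$ yields equality of the corresponding homomorphism-count profiles on the queries' own canonical databases. By the standard principle that homomorphisms between two CQs are detected by evaluating one on the canonical database of the other, this forces homomorphisms between $\mn{core}(q_1)$ and $\mn{core}(q_2)$ in both directions carrying the answer variables of the one bijectively onto those of the other; for cores this means the two are isomorphic, and isomorphic cores are counting equivalent over all databases.

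For Point~2 the same scheme applies to the answer-carrying parts $\hat q_i$, with the two extra hypotheses handling the positivity side condition built into semi-counting equivalence. The point is that a difference $\#q_1(D) \neq \#q_2(D)$ need only be realized on a database where both queries have a positive answer count; to manufacture such a witness inside $\class{D}$ I would take the disjoint union of the relevant canonical database with $D^\top_\Sbf$, which satisfies every equality-free CQ and contributes the all-to-$c$ answer, thereby forcing $\#q_1, \#q_2 > 0$. Since counting over connected components is additive under disjoint union, padding in this way does not disturb the contribution of $\hat q_i$ that carries the discrepancy, so the semi-counting question reduces to the plain counting argument of Point~1 applied to $\hat q_1, \hat q_2$ on $D_{\hat q_1}, D_{\hat q_2}$ and their clones.

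The main obstacle is making precise the claim that the canonical databases together with cloning form a \emph{complete} test set, i.e.\ that no counting discrepancy can hide on databases lying outside $\class{D}$: this requires carefully disentangling the homomorphism counts from the raw answer counts via the interpolation step and verifying that the recovered profiles genuinely pin down the cores up to an answer-preserving isomorphism, rather than merely establishing the existence of mutual homomorphisms. A secondary delicate point, specific to Point~2, is controlling the interaction between the answer-carrying core $\hat q$ and the answer-variable-free (Boolean) remainder of $q$, so that the disjoint-union padding enforces positivity for both queries simultaneously without altering the counts on the part that witnesses inequivalence.
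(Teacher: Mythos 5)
Your overall strategy is the paper's: the forward directions are trivial, and the converse directions are obtained by showing that the canonical databases together with cloning (and, for Point~2, disjoint unions with $D^\top_\Sbf$) suffice to witness any discrepancy, via a cloning/Vandermonde interpolation that extracts finer-grained counts from the raw answer counts and then forces mutual homomorphisms that are bijective on the answer variables (the paper phrases this as \emph{renaming equivalence} and isolates it in two auxiliary lemmas). However, two of your concrete steps are off. First, $\#q(D)$ is \emph{not} an integer combination of full homomorphism counts $\mn{hom}(p,D)$ over quotients of $q$: answer counting records only the distinct restrictions of homomorphisms to $\bar x$, so the multiplicities coming from the existential variables never enter, and the quantities that the cloning/Vandermonde step actually recovers are the numbers of extendable assignments of $\bar x$ whose image meets a chosen set $T$ of constants in exactly $i$ positions. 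The paper then applies inclusion--exclusion over subsets $T \subseteq \bar x_2$ of the \emph{target} $D_{q_2}$ to recover the number of extendable assignments that are surjective onto $\bar x_2$; equality of this number for $q_1$ and $q_2$, together with the fact that the identity witnesses positivity for $q_2$ on $D_{q_2}$ (this is where equality-freeness is used), yields the required bijective homomorphism. Your source-side inclusion--exclusion over coincidences of answer-tuple entries does not by itself produce these surjection counts, and you never say why the profile for $q_2$ on its own canonical database is nonzero, which is the step that actually creates the homomorphism.

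Second, the semi-counting argument as stated has a genuine gap: answer counts are \emph{not} additive under disjoint union (for a query with several connected components they are products of per-component sums), and a single copy of $D^\top_\Sbf$ can perfectly well equalize $\#q_1$ and $\#q_2$ even when $\#\hat q_1(D') \neq \#\hat q_2(D')$. The correct argument considers $D' \uplus k\,D^\top_\Sbf$ for all $k \geq 1$, observes that $k \mapsto \#q_i(D' \uplus k\,D^\top_\Sbf)$ is a polynomial whose constant term is $\#\hat q_i(D')$, and concludes that two such polynomials with different constant terms must already disagree at some $k' \geq 1$, where both counts are moreover positive. You flag this interaction as a ``delicate point'' but the fix requires this additional polynomial-identity idea, not just careful bookkeeping. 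Finally, note that the case $|\bar x_1| \neq |\bar x_2|$ needs separate (easy) treatment by cloning all of $D_{q_1}$, and that the conclusion of Point~1 is equivalence \emph{up to a bijective renaming of the answer variables}, not isomorphism of cores with the identity on $\bar x$.
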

The `if' directions of Points~(1) and~(2) of
Lemma~\ref{lem:countingEquivClassD} are trivial.  The `only if'
directions are a consequence of results
on % the decision procedures for
counting equivalence and semi-counting equivalence obtained
in~\cite{countingPositiveQueries}.  We give more
details in the appendix.

We next observe that counting equivalence and semi-counting
equivalence are decidable over classes of databases
$\class{D}^\Sbf_{\Omc}$. For the class of all databases, this has been
shown in \cite{countingPositiveQueries}. In fact, it is shown there
that CQs $q_1$ and $q_2$
are
counting equivalent iff there is a way to rename their answer
variables to make them equivalent in the standard sense, and that they
are semi-counting equivalent iff $\hat q_1$ and $\hat q_2$
are counting
equivalent.  Consequently, both problems are in {\sc NP}.
{\color{\highlightColour}For a CQ $q$, let $\tilde q$ denote the CQ obtained from $q$
by removing all equality atoms and identifying any two variables
$x_1,x_2$ with $x_1 = x_2 \in q$.
}

\begin{prop}
  \label{prop:counequivdecclass}
  Let $\Omc \subseteq \class{G} \cap \class{FULL}$ and let \Sbf be a
  schema that contains all symbols from~\Omc. Given CQs $q_1(x_1)$ and
  $q_2(x_2)$ over \Sbf, it is decidable whether $q_1$ and $q_2$ are counting
  equivalent over $\class{D}^\Sbf_{\Omc}$.  The same holds for
  semi-counting equivalence.
\end{prop}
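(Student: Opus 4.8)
The plan is to reduce counting equivalence over $\class{D}^\Sbf_\Omc$ to counting equivalence over the class of \emph{all} \Sbf-databases, for which decidability is already established in~\cite{countingPositiveQueries}. The vehicle for this reduction is Lemma~\ref{lem:countingEquivClassD}, which transfers (semi-)counting equivalence between an arbitrary class $\class{D}$ and the class of all databases, provided that $\class{D}$ is closed under cloning and contains the canonical databases $D_{q_i}$ and $D_{\hat q_i}$ of the two CQs in question. The one obstacle is that this lemma is stated only for \emph{equality-free} CQs and requires the mentioned canonical databases to lie in $\class{D}$, whereas $\class{D}^\Sbf_\Omc$ contains only \emph{chased} databases; the canonical database $D_{q_i}$ of a raw query need not satisfy \Omc and thus need not belong to $\class{D}^\Sbf_\Omc$.

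To overcome this, I would first normalise the queries. Since equality atoms may only equate answer variables, passing from $q_i$ to $\tilde q_i$ merely identifies answer variables and copies the corresponding coordinates, so $\#q_i(D)=\#\tilde q_i(D)$ for every \Sbf-database $D$; hence $q_1,q_2$ are (semi-)counting equivalent over any class exactly when $\tilde q_1,\tilde q_2$ are, and $\tilde q_1,\tilde q_2$ are equality-free. Next I would replace each $\tilde q_i$ by its chase $p_i \eqdef \mn{ch}_\Omc(\tilde q_i)$, which is finite because $\Omc \subseteq \class{FULL}$ and again equality-free because $\tilde q_i$ is. By Lemma~\ref{lem:chaseoklem}, $\tilde q_i$ and $p_i$ have the same answer set on every database in $\class{D}^\Sbf_\Omc$, so they are counting equivalent over $\class{D}^\Sbf_\Omc$; consequently $\tilde q_1,\tilde q_2$ are (semi-)counting equivalent over $\class{D}^\Sbf_\Omc$ iff $p_1,p_2$ are. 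Crucially, $D_{p_i}=\mn{ch}_\Omc(D_{\tilde q_i})$ is by definition a member of $\class{D}^\Sbf_\Omc$, which is exactly what the raw $D_{\tilde q_i}$ failed to guarantee.

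It then remains to verify the hypotheses of Lemma~\ref{lem:countingEquivClassD} for $\class{D}=\class{D}^\Sbf_\Omc$ and the equality-free CQs $p_1,p_2$. Closure under cloning holds by Lemma~\ref{lem:closure} (as $\Omc \subseteq \class{G}$), and $D_{p_i} \in \class{D}^\Sbf_\Omc$ as just noted. For $D_{\hat p_i}$, observe that $\hat p_i$ is obtained from $p_i$ by deleting the connected components of its Gaifman graph that contain no answer variable, so $D_{\hat p_i}$ is an induced subdatabase of $D_{p_i}$; since $\class{D}^\Sbf_\Omc$ is closed under induced subdatabases by Lemma~\ref{lem:closure} (using $\Omc \subseteq \class{G} \cap \class{FULL}$), we get $D_{\hat p_i} \in \class{D}^\Sbf_\Omc$. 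Lemma~\ref{lem:countingEquivClassD}(1) now equates counting equivalence of $p_1,p_2$ over $\class{D}^\Sbf_\Omc$ with counting equivalence over all \Sbf-databases, which is decidable by~\cite{countingPositiveQueries}; tracing the two normalisation steps back yields decidability for $q_1,q_2$. For semi-counting equivalence I would invoke Lemma~\ref{lem:countingEquivClassD}(2) instead, whose extra hypotheses---closure under disjoint union and membership of $D^\top_\Sbf$---are again supplied by Lemma~\ref{lem:closure}. The main work is the chase normalisation: it is what makes the relevant canonical databases chased (hence members of $\class{D}^\Sbf_\Omc$) and thereby unlocks Lemma~\ref{lem:countingEquivClassD}, while the induced-subdatabase closure, which relies essentially on fullness of \Omc, is what places $D_{\hat p_i}$ in the class.
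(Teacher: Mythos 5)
Your proposal is correct and follows essentially the same route as the paper's proof: normalise to equality-free queries via $\tilde q_i$, chase with \Omc to place the canonical databases inside $\class{D}^\Sbf_\Omc$, and then invoke Lemma~\ref{lem:chaseoklem} together with Lemma~\ref{lem:countingEquivClassD} to transfer to the class of all databases where decidability is known from~\cite{countingPositiveQueries}. You are in fact slightly more explicit than the paper in justifying that $D_{\hat p_i} \in \class{D}^\Sbf_\Omc$ via closure under induced subdatabases, which is a correct and welcome elaboration of a step the paper leaves implicit.
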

\begin{proof}
  {\color{\highlightColour} Let $q_1(\bar x_1)$ and $q_2(\bar x_2)$ be given as the
    input. Let $i \in \{1,2\}$. It is easy to see that $q_i(\bar x_i)$ is
    (semi-)counting equivalent to $\tilde q_i(\bar y_i)$ over the class of
    all databases, and consequently also over
    $\class{D}^\Sbf_{\Omc}$. We may thus assume that $q_1$ and
    $q_2$ are equality-free as otherwise we can replace them with
    $\tilde q_1$ and $\tilde q_2$.  We then construct
    $\mn{ch}_\Omc(q_1)$ and $\mn{ch}_\Omc(q_2)$, check whether they
    are (semi-)counting equivalent over the class of all databases
    using the decision procedure from~\cite{countingPositiveQueries},
    and return the result.

  We have to argue that this is correct. By
  Lemma~\ref{lem:chaseoklem}, % We first observe that $q_1$
  % and $q_2$ are \mbox{(semi-)counting} equivalent over
  % $\class{D}^\Sbf_{\Omc}$ if $\mn{ch}_\Omc(q_1)$ and
  % $\mn{ch}_\Omc(q_2)$ are (semi-)counting equivalent over
  % $\class{D}^\Sbf_{\Omc}$. abcde We may thus
  it suffices to
  decide whether
  $\mn{ch}_\Omc(q_1)$ and $\mn{ch}_\Omc(q_2)$ are (semi-)counting
  equivalent over $\class{D}^\Sbf_{\Omc}$, which by
  Lemma~\ref{lem:countingEquivClassD} is identical to their
  (semi-)counting equivalence over the class of all databases.  Note
  that the preconditions of Lemma~\ref{lem:countingEquivClassD} are
  satisfied. In particular, $q'_i = \mn{ch}_\Omc(q_i)$ is equality-free and
  both $D_{q'_i}$ and $D_{\hat q'_i}$ are in $\class{D}^\Sbf_{\Omc}$.
  } % colour bracket
\end{proof}
In the upper bound, it shall be necessary to compute the Chen-Mengel
closure of an OMQ $Q \in
(\class{G}\cap\class{FULL},\class{UCQ})$. This is possible by simply
following the definition of $\mn{cl}_{\mn{CM}}(Q)$, but requires us to
decide counting equivalence of OMQs.  We show that this is possible.
\begin{cor}
  \label{prop:OMQCEdec}
  Given OMQs
  $Q_1(\bar x_1),Q_2(\bar x_2) \in
  (\class{G}\cap\class{FULL},\class{CQ})$ over the full schema, where
  $Q_i=(\Omc,\Sbf,q_i)$ for $i \in \{1,2\}$, it is decidable whether
  $Q_1$ and $Q_2$ are counting equivalent.
\end{cor}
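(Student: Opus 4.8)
The plan is to reduce counting equivalence of the two OMQs to counting equivalence of the underlying CQs over the class $\class{D}^\Sbf_{\Omc}$, and then to invoke Proposition~\ref{prop:counequivdecclass}.

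First I would apply Lemma~\ref{pro:chase}(2), which yields $Q_i(D) = q_i(\mn{ch}_\Omc(D))$ for every $\Sbf$-database $D$, and hence $\#Q_i(D) = \#q_i(\mn{ch}_\Omc(D))$. Since $\Omc \subseteq \class{G} \cap \class{FULL}$, each $\mn{ch}_\Omc(D)$ is finite and is therefore a genuine $\Sbf$-database; moreover, as $D$ ranges over all $\Sbf$-databases, $\mn{ch}_\Omc(D)$ ranges exactly over $\class{D}^\Sbf_{\Omc} = \{ \mn{ch}_\Omc(D) \mid D \text{ an } \Sbf\text{-database}\}$ by definition of that class. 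Consequently, $Q_1$ and $Q_2$ are counting equivalent, that is, $\#Q_1(D) = \#Q_2(D)$ for all $\Sbf$-databases $D$, if and only if $q_1$ and $q_2$ are counting equivalent over $\class{D}^\Sbf_{\Omc}$.

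It then remains to check that the hypotheses of Proposition~\ref{prop:counequivdecclass} are met: we need $\Omc \subseteq \class{G} \cap \class{FULL}$, which holds by assumption, and that $\Sbf$ contains all symbols occurring in~$\Omc$, which holds because the OMQs are based on the full schema. The proposition then gives that counting equivalence of $q_1$ and $q_2$ over $\class{D}^\Sbf_{\Omc}$ is decidable, and so counting equivalence of $Q_1$ and $Q_2$ is decidable as well.

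The content of the argument is essentially the observation that, because $\Omc$ is full, the chase turns OMQ evaluation into ordinary CQ evaluation over the restricted database class $\class{D}^\Sbf_{\Omc}$, so I expect no real obstacle beyond verifying these preconditions. The one point worth double-checking is that no matching of the arities of $\bar x_1$ and $\bar x_2$ is required, since counting equivalence only compares the cardinalities $\#Q_1(D)$ and $\#Q_2(D)$, exactly as in the CQ case treated in Proposition~\ref{prop:counequivdecclass}.
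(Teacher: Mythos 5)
Your argument is correct and is essentially the paper's own proof: both reduce counting equivalence of $Q_1,Q_2$ to counting equivalence of $q_1,q_2$ over $\class{D}^\Sbf_{\Omc}$ via Point~2 of Lemma~\ref{pro:chase} and the definition of $\class{D}^\Sbf_{\Omc}$, and then invoke Proposition~\ref{prop:counequivdecclass}. Your extra checks on the preconditions (fullness of $\Omc$ and of the schema) are exactly the ones needed.
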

Corollary~\ref{prop:OMQCEdec} is a direct consequence of
Proposition~\ref{prop:counequivdecclass}. In fact, it follows from
Point~2 of Lemma~\ref{pro:chase} and the definition of
$\class{D}^\Sbf_{\Omc}$ yields that $Q_1$ and $Q_2$ are
(semi-)counting equivalent if and only if $q_1$ and $q_2$ are
(semi-)counting equivalent over~$\class{D}^\Sbf_{\Omc}$. The latter
can be decided using Proposition~\ref{prop:counequivdecclass}.

% \bigskip

% To prove Proposition~\ref{prop:OMQCEdec}, we observe that $Q_1$ and
% $Q_2$ are counting equivalent if and only if $\mn{ch}_\Omc(q_1)$ and
% $\mn{ch}_\Omc(q_2)$ are, and thus we can use the decision
% procedure for counting equivalence of CQs presented in
% \cite{countingPositiveQueries}.  We next make this precise, in
% parallel also considering the weaker notion of semi-counting
% equivalence. The observation is as follows.

% \smallskip

% In the proof of Proposition~\ref{prop:OMQCEdec}, we will work with
% databases of the form
% $\class{D}^\Sbf_{\Omc} = \{ \mn{ch}_{\Omc}(D) \mid D \text{ an
%   \Sbf-database}\}$ where \Omc is an ontology from
% $\class{G}\cap\class{FULL}$ and \Sbf a schema.  It is folklore and
% easy to see that such classes satisfy all relevant properties, see
% e.g.~\cite{faginSTOC80} for direct products.
% %
% \begin{lem}
%     \label{lem:closure}
%     For every ontology $\Omc \subseteq \class{TGD}$ and schema \Sbf such
%     that all symbols in \Omc are from \Sbf, $\class{D}^\Sbf_\Omc$ is
%     closed under direct product and contains
%     $D^\top_\Sbf$. If $\Omc \subseteq \class{G}$, then
%     $\class{D}^\Sbf_\Omc$ is additionally closed under disjoint
%     union and cloning of elements.

\section{Proof of Theorem~\ref{thm:main-results-OMQs}}

We prove the upper bounds in Theorem~\ref{thm:main-results-OMQs} by
Turing fpt-reductions to the corresponding upper bounds in
Theorem~\ref{thm:chenmengeldell}, and the lower bounds by Turing
fpt-reduction from the corresponding lower bounds in
Theorem~\ref{thm:chenmengeldell}. In both cases, the assumption that
the arity of relation symbols is bounded is only required for
Theorem~\ref{thm:chenmengeldell}, but not for the Turing
fpt-reductions that we give. Consequently, any future classifications
of $\mn{AnswerCount}(\class{C})$ for classes of CQs $\class{C}$ that
does not rely on this assumption also lift to classes of OMQs through
our reductions.

\subsection{Upper Bounds}

We first establish the upper bounds presented in
Theorem~\ref{thm:main-results-OMQs}. All these bounds are proved in
a uniform way, by providing a Turing FPT reduction from
$\mn{AnswerCount}(\class{Q})$, for any class
$\class{Q} \subseteq (\class{G},\class{UCQ})$ of OMQs, to
$\mn{AnswerCount}(\class{Q}^\star)$. It then remains to use
the corresponding upper bounds for classes of CQs from
Theorem~\ref{thm:chenmengeldell}.  For the reduction,
it is not necessary to assume that the
arity of relation symbols is bounded by a constant.

Let $\class{Q} \subseteq (\class{G},\class{UCQ})$ be a class of OMQs
with the full schema.  We need to exhibit an fpt algorithm for
$\mn{AnswerCount}(\class{Q})$ that has access to an oracle for
$\mn{AnswerCount}(\class{Q}^\star)$.
% such that the treewidths and the contract treewidths of CQs in
% $\class{Q}^\star$ % \mn{cl}^\exists_{\mn{CM}}(\class{C})$
% are bounded by a constant
% $k$.
%Given an OMQ $Q \in \class{C}$ and an
%\Sbf-database $D$, we first replace $Q$ by its $\exists$-rewriting
%$Q^\exists=(\Omc^\exists,\Sbf,q^\exists)$. Since $Q$ is equivalent to
%$Q^\exists$ and by the universality of the
%chase, %and by Lemma~\ref{pro:chase},
%$\#Q(D)=\#q^\exists(\mn{ch}_{\Omc^\exists}(D))$, thus it suffices
%to compute the latter count. Since $\Omc^\exists$ is from $\class{G}
%\cap \class{FULL}$, $\mn{ch}_{\Omc^\exists}(D)$ is finite and can be
%computed within the time requirements of FPT. By Lemma~\ref{lem:cm},
%we can compute $\#q^\exists(\mn{ch}_{\Omc^\exists}(D))$ within the
%time requirements of FPT once we have computed
%$\#p(\mn{ch}_{\Omc^\exists}(D))$ for all $p$ such that $(\Omc^\exists,\Sbf,p) \in \mn{cl}_{\mn{CM}}(q^\exists)$
%or, equivalently,
%$\#\mn{core}(\mn{ch}_{\Omc^\exists}(p))(\mn{ch}_{\Omc^\exists}(D))$ for
%all $p$ such that $(\Omc^\exists,\Sbf,p) \in \mn{cl}_{\mn{CM}}(q^\exists)$.
%
Let an OMQ $Q \in \class{Q}$ and an \Sbf-database
$D$ be given. The algorithm first replaces $Q$ by its
$\exists$-rewriting
$Q^\exists=(\Omc^\exists,\Sbf,q^\exists)$ as per
Theorem~\ref{thm:removeEx}. Equivalence of $Q$ and
$Q^\exists$ implies
$\#Q(D)=\#Q^\exists(D)$, and thus it suffices to compute the latter
count.

To compute
$\#Q^\exists(D)$ within the time requirements of FPT, we first
compute the set $\mn{cl}_{\mn{CM}}(Q^\exists)$, then for every
$Q' \in \mn{cl}_{\mn{CM}}(Q^\exists)$ we determine
$\#Q'(D)$ within the time requirements of FPT, and finally we combine the results
to
$\#Q(D)$ as per {\color{\highlightColourTwo}the following lemma, which is
  an immediate consequence of the definition of $\mn{cl}_{\mn{CM}}(Q)$.
\begin{lem}
  \label{lem:cmOMQ}
  For each $Q=(\Omc,\Sbf,q) \in (\class{G},\class{UCQ})$ and
  \Sbf-database $D$, $\#Q(D)$ can be computed in polynomial
  time from the counts $\#Q'(D)$, $Q' \in \mn{cl}_{\mn{CM}}(Q)$.
\end{lem}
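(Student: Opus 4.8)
The plan is to mirror, at the level of OMQs, the inclusion-exclusion identity that underlies the definition of the Chen-Mengel closure for UCQs. Write $Q = (\Omc,\Sbf, p_1 \vee \cdots \vee p_n)$ and, for $\emptyset \neq I \subseteq [n]$, set $p_I = \bigwedge_{i \in I} p_i$ and $Q_I = (\Omc,\Sbf,p_I)$, which is an OMQ from $(\class{G},\class{CQ})$. The first step is to establish, for every \Sbf-database $D$, the identity
\[
\#Q(D) = \sum_{\emptyset \neq I \subseteq [n]} (-1)^{|I|+1}\, \#Q_I(D).
\]
This follows by invoking Point~2 of Lemma~\ref{pro:chase} to rewrite $\#Q(D) = \#q(\mn{ch}_\Omc(D))$ and $\#Q_I(D) = \#p_I(\mn{ch}_\Omc(D))$, noting that $p_I(\mn{ch}_\Omc(D)) = \bigcap_{i \in I} p_i(\mn{ch}_\Omc(D))$, and then applying ordinary inclusion-exclusion to the union $q(\mn{ch}_\Omc(D)) = \bigcup_{i} p_i(\mn{ch}_\Omc(D))$. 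This is exactly the identity used to define $\mn{cl}_{\mn{CM}}(q)$, now read with the OMQs $Q_I$ in place of the CQs $p_I$, and it justifies the OMQ-level definition of $\mn{cl}_{\mn{CM}}(Q)$.

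Next I would carry out the same manipulation used in the definition of $\mn{cl}_{\mn{CM}}(Q)$: repeatedly merge two summands whose OMQs are counting equivalent into a single summand whose coefficient is the sum of the two, and then discard every summand whose coefficient has become zero. Because counting equivalence of OMQs means equality of counts on \emph{all} databases, each merge leaves the value of the sum unchanged on the fixed database $D$, and each discarded class contributes zero (its members share a common count on $D$ while their coefficients sum to zero). Hence after the manipulation one obtains
\[
\#Q(D) = \sum_{Q' \in R} c_{Q'} \cdot \#Q'(D),
\]
where $R$ is a set of surviving representatives and each $c_{Q'}$ is an integer obtained by summing the coefficients $(-1)^{|I|+1}$ over the class of $Q'$. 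By construction every surviving representative lies in $\mn{cl}_{\mn{CM}}(Q)$.

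Finally, since the representatives in $R$ and the coefficients $c_{Q'}$ depend only on $Q$ and not on $D$, the count $\#Q(D)$ is a fixed integer linear combination of the supplied counts $\#Q'(D)$ for $Q' \in \mn{cl}_{\mn{CM}}(Q)$; evaluating it is a weighted sum over at most $|\mn{cl}_{\mn{CM}}(Q)|$ terms and thus runs in polynomial time in the sizes of the supplied counts. The one point requiring care is that counting equivalence is defined over \emph{all} databases, which is precisely what licenses the merging and cancellation steps on the particular input $D$; everything else is bookkeeping that transfers verbatim from the UCQ case, so I expect no genuine obstacle here.
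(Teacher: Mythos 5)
Your proof is correct and is exactly the argument the paper has in mind: the paper gives no explicit proof, presenting Lemma~\ref{lem:cmOMQ} as an immediate consequence of the definition of $\mn{cl}_{\mn{CM}}(Q)$, and your write-up simply makes that explicit via the OMQ-level inclusion-exclusion identity (obtained through Point~2 of Lemma~\ref{pro:chase}), the merging of counting-equivalent summands, and the cancellation of zero-coefficient terms. The resulting fixed integer linear combination over $\mn{cl}_{\mn{CM}}(Q)$ is precisely the intended combination step, so there is no gap.
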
}
Note that we need to effectively compute
$\mn{cl}_{\mn{CM}}(Q^\exists)$, % can be effectively computed, %  and is
% independent of $D$,
which is possible by Corollary~\ref{prop:OMQCEdec} in the case that the schema is full.

Let $Q'=(\Omc^\exists,\Sbf,p) \in \mn{cl}_{\mn{CM}}(Q^\exists)$.
Since $\Omc^\exists$ is from $\class{G} \cap
 \class{FULL}$,
 $\mn{ch}_{\Omc^\exists}(D)$ can be computed within the
 time requirements of FPT by Lemma~\ref{lem:chasefpt}.
 % Moreover,
% $\#(\Omc^\exists,\Sbf,p)(D) =
% \#p(\mn{ch}_{\Omc^\exists}(D))$.
 To compute $\#Q'(D)$, we may thus
construct $\mn{ch}_{\Omc^\exists}(D)$ and then compute
$\#p(\mn{ch}_{\Omc^\exists}(D))$. Equivalently, we can compute
and use $\mn{core}(\mn{ch}_{\Omc^\exists}(p))$ in place of $p$.

It remains to note that the CQs $\mn{core}(\mn{ch}_{\Omc^\exists}(p))$, for
$(\Omc^\exists,\Sbf,p) \in \mn{cl}_{\mn{CM}}(Q^\exists)$, are exactly
the CQs from $\class{Q}^\star$.

% and thus their treewidths and contract
% treewidths are bounded by $k$. Consequently, we can apply the fpt
% algorithm from Point~1 of Theorem~\ref{thm:chenmengeldell} as a black
% box and overall attain FPT running time.

% The remaining upper bounds from Theorem~\ref{thm:main-results-OMQs}
% can be proved analogously, exploiting that easiness for \wOne and
% \wOneC is defined in terms of Turing fpt-reductions.

%\section{Lower Bounds: From UCQs to CQs}

\subsection{Lower Bounds: Getting Started}

We next turn towards lower bounds in
Theorem~\ref{thm:main-results-OMQs}, which we all consider in
parallel. Let $\class{Q} \subseteq (\class{G},\class{UCQ})$ be a class
of OMQs. We provide a Turing fpt-reduction from
$\mn{AnswerCount}(\class{C})$ to $\mn{AnswerCount}(\class{Q})$ for a
class of CQs $\class{C}$ such that if $\class{Q}$ satisfies the
preconditions in one of the four lower bounds stated in
Theorem~\ref{thm:main-results-OMQs} (in Points~(2) to~(5), respectively),
then $\class{C}$ satisfies the preconditions from the corresponding
point of Theorem~\ref{thm:chenmengeldell}. While the constructed class
of CQs $\class{C}$ is closely related to $\class{Q}^\star$, it is not
identical.

We in fact obtain the desired Turing fpt-reduction by composing three
Turing fpt-reductions.  The first reduction consists in
transitioning to the $\exists$-rewritings of the OMQs in the original
class. The second reduction enables us to consider OMQs that use CQs
rather than UCQs.\footnote{It is interesting to note in this context
  that the construction of $Q^\exists$ may produce a UCQ even if the
  original OMQ $Q$ uses a~CQ.} And in the third reduction, we remove
ontologies altogether, that is, we reduce classes of CQs to classes of
OMQs. % All these Turing fpt-reductions essentially preserve the semantic
% measures, so ultimately we get the desired lower bounds in
% Theorem~\ref{thm:main-results-OMQs} from the corresponding lower
% bounds in Theorem~\ref{thm:chenmengeldell}.
%
We start with the first reduction, which is essentially an immediate
consequence of
Theorem~\ref{thm:removeEx}. %This part does not rely on using the full data schema.
\begin{thm}
  \label{thm:existsfptred}
  Let $\class{Q} \subseteq (\class{G},\class{UCQ})$ be recursively
  enumerable and let
  $\class{Q}' \subseteq (\class{G} \cap \class{FULL},\class{UCQ})$ be
  the class of $\exists$-rewritings of OMQs from $\class{Q}$.
  There is a parsimonious fpt-reduction from
  $\mn{AnswerCount}(\class{Q}')$ to $\mn{AnswerCount}(\class{Q})$.
\end{thm}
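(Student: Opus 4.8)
The plan is to exploit that, by construction, every OMQ $Q' \in \class{Q}'$ is the $\exists$-rewriting $Q^\exists$ of some $Q \in \class{Q}$, and that Theorem~\ref{thm:removeEx} guarantees $Q \equiv Q^\exists$. Consequently $\#Q(D) = \#Q^\exists(D) = \#Q'(D)$ for every \Sbf-database $D$, so an oracle value $\#Q(D)$ can be returned verbatim as the answer to $\mn{AnswerCount}(\class{Q}')$ on input $(Q',D)$. The only real work is to recover, from the given query $Q'$, a suitable preimage $Q$ under the $\exists$-rewriting operation.

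First I would, given an input $(Q',D)$ to $\mn{AnswerCount}(\class{Q}')$, enumerate $\class{Q}$ using its recursive enumerability, producing $Q_1, Q_2, \dots$. For each $Q_i$ I would compute its $\exists$-rewriting $Q_i^\exists$, which is possible by the effectiveness part of Theorem~\ref{thm:removeEx}, and test whether $Q_i^\exists = Q'$ as syntactic objects. Since $Q' \in \class{Q}'$, it is by definition equal to $Q^\exists$ for some $Q \in \class{Q}$, so this search terminates; let $Q$ be the first such $Q_i$ found. I would then make a single oracle call to $\mn{AnswerCount}(\class{Q})$ with argument $(Q,D)$ and return the resulting count $\#Q(D)$ unmodified.

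It remains to verify that this is a parsimonious fpt-reduction. The search for $Q$ inspects only the query $Q'$ and is wholly independent of the database $D$; hence it runs in time $g(||Q'||)$ for some computable $g$, which is absorbed into the $f(\kappa)$ factor permitted by fixed-parameter tractability, the parameter being the query size. Moreover, since $Q$ is the first preimage located by this $Q'$-driven search, $||Q||$ is a computable function of $||Q'||$, so the parameter of the single oracle call is bounded by a computable function of the input parameter, as required. Exactly one oracle call is made and its output is returned without further processing, so the reduction is parsimonious.

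The main obstacle is really only a matter of bookkeeping rather than mathematical difficulty: one must check that locating the preimage $Q$ does not violate the fpt time bound or the requirement that oracle-call parameters be bounded in terms of the input parameter. Both are ensured by the fact that the search is a preprocessing step depending solely on $Q'$; correctness then follows immediately from the equivalence $Q \equiv Q^\exists = Q'$ supplied by Theorem~\ref{thm:removeEx}.
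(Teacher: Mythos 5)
Your proposal is correct and follows essentially the same route as the paper: enumerate $\class{Q}$, compute $\exists$-rewritings via Theorem~\ref{thm:removeEx} until a preimage of the input query is found, then make a single oracle call and return its value, with correctness given by $Q \equiv Q^\exists$. The only (immaterial) difference is that you test syntactic equality of $Q_i^\exists$ with the input, whereas the paper additionally invokes decidability of OMQ equivalence in $(\class{G},\class{UCQ})$ to recognize a suitable preimage; both checks suffice.
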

\begin{proof}
%The proof of Theorem~\ref{thm:existsfptred} is straightforward:
  Given a $Q=(\Omc,\Sbf,q) \in \class{Q}'$ and an \Sbf-database $D$,
  find some $Q'=(\Omc',\Sbf,q') \in \class{Q}$ such that $Q$ is an
  $\exists$-rewriting of $Q'$ by recursively enumerating $\class{Q}$
  and exploiting that $\exists$-rewritings can be effectively computed
  as per Theorem~\ref{thm:removeEx} and OMQ equivalence is decidable
  in $(\class{G},\class{UCQ})$ \cite{BaBP18}. Then compute and return
  $\#Q'(D)$.
\end{proof}

\smallskip

% We remark that the semantic measures of $\class{C}$ in
% Theorem~\ref{thm:existsfptred} are exactly those of $\class{C'}$ as
% every $\exists$-rewriting is equivalent to the original OMQ it is
% obtained from.

% \smallskip

\subsection{Lower Bounds: From UCQs to CQs}
\label{sect:fromucqtocq}

The second reduction is given by the following theorem. Recall that
for any $\class{Q} \subseteq (\class{G},\class{UCQ})$,
the class $\mn{cl}_{\mn{CM}}(\class{Q})$ consists of OMQs from
$(\class{G},\class{CQ})$, that is, it only uses CQs
but no UCQs.
%
% It does not rely on
% assuming the full schema.
%
% An OMQ
% $Q=(\Omc,\Sbf,q) \in (\class{G} \cap \class{FULL},\class{UCQ})$ is
% \emph{reduced} if there are no distinct CQs $p_1,p_2$ in~$q$ such that
% the OMQs $(\Omc,\Sbf,p_1)$ and $(\Omc,\Sbf,p_2)$ are equivalent.
% We can generally assume that we work with classes of reduced OMQs
% only since every OMQ can effectively be converted into an equivalent
% reduced one given that OMQ containment in $(\class{G},\class{UCQ})$
% is decidable \cite{Andreas}.
%
\begin{thm}
%\begin{restatable}{thm}{thmUCQtoCQ}
\label{thm:from-UCQ-to-CQ-open-world}
Let $\class{Q} \subseteq (\class{G} \cap \class{FULL},\class{UCQ})$ be
a recursively enumerable class of %reduced
OMQs with full schema. % and relation symbols of bounded arity.
% , and
% let $\class{C}' \subseteq (\class{G} \cap \class{FULL},\class{CQ})$ be
% the class of OMQs
% $\{Q' \mid \exists Q\in \class{C} : Q' \in
% \mn{cl}_{\mn{CM}}(Q)\}$.
Then there is
a Turing fpt-reduction from $\mn{AnswerCount}(\mn{cl}_{\mn{CM}}(\class{Q}))$ to
$\mn{AnswerCount}(\class{Q})$.
%\end{restatable}
\end{thm}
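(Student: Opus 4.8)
The plan is to fix a target OMQ $Q'=(\Omc,\Sbf,p_J)\in\mn{cl}_{\mn{CM}}(\class{Q})$ and an $\Sbf$-database $D$, and to compute $\#Q'(D)$ using an oracle for $\mn{AnswerCount}(\class{Q})$. First I would locate the UCQ that $Q'$ comes from: since $\class{Q}$ is recursively enumerable, I enumerate it until I find some $Q=(\Omc,\Sbf,q)$ with $q=p_1\vee\cdots\vee p_n$ such that $Q'\in\mn{cl}_{\mn{CM}}(Q)$. This membership test is effective because counting equivalence of the relevant OMQs is decidable by Corollary~\ref{prop:OMQCEdec}, so the cancellation procedure defining $\mn{cl}_{\mn{CM}}$ can actually be carried out. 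Because $\Omc\in\class{G}\cap\class{FULL}$ and the schema is full, Point~2 of Lemma~\ref{pro:chase} gives $\#Q'(D)=\#p_J(\mn{ch}_\Omc(D))$ and $\#Q(D')=\#q(\mn{ch}_\Omc(D'))$ for every $\Sbf$-database $D'$, and $\mn{ch}_\Omc$ is computable in fpt time by Lemma~\ref{lem:chasefpt}. Since full guarded TGDs introduce no fresh constants, $\mn{ch}_\Omc(E)=E$ for every $E\in\class{D}^\Sbf_\Omc$; hence, by calling the oracle on $E$ itself, I obtain the value $\#q(E)$ for any $E\in\class{D}^\Sbf_\Omc$ of my choosing. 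The task thus reduces to the purely database-theoretic one: given oracle access to $E\mapsto\#q(E)$ for $E$ ranging over $\class{D}^\Sbf_\Omc$, compute $\#p_J(E_0)$ where $E_0:=\mn{ch}_\Omc(D)$.

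This is exactly the Chen--Mengel reduction from the CQs of the closure to the UCQ, but with the ambient class of databases restricted to $\class{D}^\Sbf_\Omc$ rather than all databases. The combinatorial core starts from the inclusion--exclusion identity $\#q(E)=\sum_{\emptyset\neq I\subseteq[n]}(-1)^{|I|+1}\,\#p_I(E)$ and groups the conjunctions $p_I$ into counting-equivalence classes. By Lemma~\ref{lem:countingEquivClassD}, counting equivalence over $\class{D}^\Sbf_\Omc$ coincides with counting equivalence over the class of all databases, so the grouping, the cancellation of coefficients, and hence the very definition of $\mn{cl}_{\mn{CM}}(Q)$ are the same whether read over all databases or over our restricted class; in particular $Q'$ corresponds to one surviving class, whose common count on $E_0$ is the number $\#p_J(E_0)$ we seek. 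After grouping, $\#q(E)$ becomes a fixed integer-linear combination of the class-counts in which the surviving classes carry nonzero coefficients.

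To extract the single class-count of interest I would, following Chen and Mengel~\cite{countingPositiveQueries}, build a family of derived databases $E_0,E_1,\dots,E_m$ from $E_0$ using operations under which the class-counts scale multiplicatively by class-dependent factors --- concretely direct products with fixed gadget databases and cloning of constants --- chosen so that distinct counting-equivalence classes receive distinct factors. Evaluating $\#q$ on $E_1,\dots,E_m$ through the oracle then yields a system of linear equations in the unknown class-counts whose coefficient matrix is Vandermonde-like and therefore invertible; solving it in polynomial time returns $\#p_J(E_0)=\#Q'(D)$. The crucial point that makes this a legal reduction is that, by Lemma~\ref{lem:closure}, the class $\class{D}^\Sbf_\Omc$ is closed under direct product, disjoint union, and cloning and contains $D^\top_\Sbf$, so every derived database $E_\ell$ again lies in $\class{D}^\Sbf_\Omc$ and each oracle query is a genuine instance of $\mn{AnswerCount}(\class{Q})$ using the same OMQ $Q$. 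The number of oracle calls is bounded by a function of $n\le ||q||$, and every call uses a query whose size is bounded in terms of the parameter, so the whole procedure is a Turing fpt-reduction.

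The step I expect to be the main obstacle is the separability of counting-equivalence classes, i.e.\ exhibiting in-class database operations that assign pairwise distinct multiplicative factors to the distinct surviving classes, so that the Vandermonde system is truly invertible. This is the technical heart of the argument and is where the refinement of Chen and Mengel's structural characterization of counting equivalence (two CQs are counting equivalent iff their cores agree up to renaming of answer variables) must be invoked, together with the verification that their separating gadgets can be realized inside $\class{D}^\Sbf_\Omc$ rather than over arbitrary databases. Lemmas~\ref{lem:closure} and~\ref{lem:countingEquivClassD} are precisely the tools that let the classical argument survive the restriction to chased databases.
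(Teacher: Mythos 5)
Your overall architecture is the one the paper uses: pass to the database class $\class{D}^\Sbf_\Omc$, observe via Lemma~\ref{lem:closure} that it has the closure properties needed for the Chen--Mengel machinery, implement the oracle for $\mn{AnswerCount}(\{q\},\class{D}^\Sbf_\Omc)$ by the oracle for $\mn{AnswerCount}(\class{Q})$ using $q(E)=Q(E)$ for $E\in\class{D}^\Sbf_\Omc$, and use Proposition~\ref{prop:counequivdecclass} (resp.\ Corollary~\ref{prop:OMQCEdec}) to make the enumeration and the construction of the closure effective. This matches the paper's proof, which packages the combinatorial core as Theorem~\ref{thm:chenMengelStronger}.

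There is, however, one genuine error in your second paragraph. You claim that, by Lemma~\ref{lem:countingEquivClassD}, counting equivalence of the conjunctions $p_I=\bigwedge_{i\in I}p_i$ over $\class{D}^\Sbf_\Omc$ coincides with counting equivalence over all databases, so that ``the very definition of $\mn{cl}_{\mn{CM}}$'' is the same over both classes. This misapplies the lemma: its hypotheses require $D_{p_I}$ and $D_{\hat p_I}$ to lie in the class $\class{D}$, and for $\class{D}=\class{D}^\Sbf_\Omc$ the canonical database of an \emph{unchased} conjunction is generally not closed under $\Omc$, hence not in $\class{D}^\Sbf_\Omc$. The conclusion is in fact false: for $\Omc=\{R(x,y)\rightarrow S(x,y)\}$, the CQs $R(x,y)$ and $R(x,y)\wedge S(x,y)$ are counting equivalent over $\class{D}^\Sbf_\Omc$ but not over all databases, so the ontology can merge classes and change which summands cancel. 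If you group by all-databases counting equivalence, two surviving ``classes'' may have identical counts on \emph{every} database in $\class{D}^\Sbf_\Omc$, and then no in-class gadget can assign them distinct factors --- your linear system degenerates and the extraction fails. The correct statement is the paper's Lemma~\ref{lem:cmcloscoin}: the closure computed with counting equivalence \emph{over $\class{D}^\Sbf_\Omc$}, i.e.\ $\mn{cl}^{\class{D}}_{\mn{CM}}(q)$, coincides with the OMQ closure $\mn{cl}_{\mn{CM}}(Q)$ (whose definition uses OMQ counting equivalence), and it is this relativized grouping that must drive the cancellation; Lemma~\ref{lem:countingEquivClassD} is only invoked \emph{after} replacing each $p_I$ by the counting-equivalent, equality-free, chased CQ $\mn{ch}_\Omc(\tilde p_I)$, which is exactly the promise in Point~(1) of Theorem~\ref{thm:chenMengelStronger}. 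A secondary, more cosmetic imprecision: the extraction is not a single Vandermonde system over counting-equivalence classes but a two-stage procedure (Vandermonde over semi-counting-equivalence classes, then a ``kill all but the minimal query'' product argument within each class); since you explicitly defer this to Chen and Mengel, it is not a gap, but the one-stage description you give is not literally realizable.
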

%
% Note that contract treewidths and starsizes % and linked
% % matching numbers
% of $\class{C}$ are bounded if and only if they are bounded in
% $\class{C}'$, with the same bounds. Moreover, if the semantic
% treewidths of $\class{C}$ are unbounded, then are those of
% $\class{C'}$, as stated by the following lemma. We remark that
% semantic treewidth of OMQs in $\class{C}$ and $\class{C}'$ is still
% defined w.r.t.\ equivalent OMQs from $(\class{G},\class{UCQ})$, but by
% Proposition~\ref{prop:uniform} we could as well define it w.r.t.\
% equivalent OMQs from $(\class{G} \cap \class{FULL},\class{UCQ})$.
% %
% \begin{lem}
%   Let $Q=(\Omc,\Sbf,q) \in (\class{G} \cap \class{FULL},\class{UCQ})$
%   have semantic treewidth at least $k$. Then there is a CQ $p$ in $q$
%   such that the OMQ $(\Omc,\Sbf,p)$ has semantic treewidth at least
%   $k$. {\color{blue} if we go for CM-closure, then this has to be
%     replaced by something more sophisticated.}
% \end{lem}
% %
% \begin{proof}
%   Let $q=p_1 \vee \cdots \vee p_n$.
%   Assume to the contrary that every OMQ $Q_i := (\Omc,\Sbf,p_i)$, $ 1
%   \leq i \leq n$, has
%   semantic treewidth at most $\ell<k$. Then every $Q_i$ is equivalent
%   to $Q'_i := (\Omc,\Sbf,(p_i)^a_k)$, where $(p_i)^a_\ell$ is the
%   TW$_\ell$-approximation of $Q_i$. Moreover, the treewidth of each
%   $Q'_i$ is at most $\ell$. It is easy to see that $Q$ is equivalent
%   to $(\Omc,\Sbf, (p_1)^a_k \vee \cdots \vee (p_n)^a_k)$ and thus
%   $Q$ has semantic treewidth at most $\ell$, a contradiction.
% \end{proof}
% %
In \cite{countingPositiveQueries}, Chen and Mengel establish
Theorem~\ref{thm:from-UCQ-to-CQ-open-world} in the special case where
ontologies are empty. A careful analysis of their proof reveals that
it actually establishes something stronger, namely a Turing
fpt-reduction from
$\mn{AnswerCount}(\mn{cl}_{\mn{CM}}(\class{Q}),\class{D})$ to
$\mn{AnswerCount}(\class{Q},\class{D})$ for all classes of UCQs
$\class{Q}$ and all classes of databases $\class{D}$ that satisfy
certain natural properties.
% ; recall that $\mn{cl}_{\mn{CM}}(\class{Q})$
% is a class of CQs rather than UCQs.
This is important for us because
it turns out that the class of databases obtained by chasing
with an ontology from $\class{G} \cap \class{FULL}$ satisfies all
the relevant properties, and thus
Theorem~\ref{thm:from-UCQ-to-CQ-open-world} is a consequence
of Chen and Mengel's constructions. We now make this more precise.

%
% The \emph{well of positivity} for a schema \Sbf, denoted
% $D^\top_\Sbf$, is the \Sbf-database with a single
% constant $c$ defined as $D^\top_\Sbf = \{ R(c,\dots,c)
% \mid R \in \Sbf \}$.
%
% A UCQ $q$ over schema \Sbf is \emph{reduced} if all CQs in it are
% cores and there are no containments among the CQs in $q$, that is, $q$
% does not contain distinct CQs $p_1,p_2$ such that
% $p_1 \subseteq_\Sbf p_2$.
For a class of databases $\class{D}$ and a
CQ $q$, we use $\mn{cl}^{\class{D}}_{\mn{CM}}(q)$ to denote the
version of the Chen-Mengel closure that is defined exactly as
$\mn{cl}_{\mn{CM}}(q)$, except that all tests of counting equivalence
are over the class of databases $\class{D}$ rather than over the
class of all databases.
\newcommand{\thmchenMengelStronger}{
    Let $\class{D}$ be a class of databases over some schema \Sbf such
    that $\class{D}$ is closed under disjoint union, direct product,
    and contains $D^\top_\Sbf$.
    Then there is an algorithm that
    \begin{enumerate}

        \item takes as input a %reduced
        UCQ $q$, a CQ
        $p \in \mn{cl}^{\class{D}}_{\mn{CM}}(q)$, and a database
        $D \in \class{D}$, \\
        {\color{\highlightColour}subject to the promise that for all
            $p' \in \mn{cl}^{\class{D}}_{\mn{CM}}(q)$,
            there is an
            equality-free CQ $p''$ such that $D_{p''} \in \class{D}$,
            $D_{\hat p''} \in \class{D}$,
            and
            $p'$ and $p''$ are counting equivalent over $\class{D}$,}

        \item has access to an oracle for
            $\mn{AnswerCount}(\{q\},\class{D})$, to a procedure for
            enumerating~$\class{D}$, and to procedures for deciding
             counting equivalence and semi\=/counting equivalence
             between CQs over~$\class{D}$,

        \item  runs in time $f(||q||) \cdot p(||D||)$ with $f$ a
            computable function and $p$ a polynomial,

        \item outputs $\#p(D)$.
    \end{enumerate}
}
%\end{restatable}
\begin{thmC}[{\cite{countingPositiveQueries}}]
    \label{thm:chenMengelStronger}
    \thmchenMengelStronger
\end{thmC}
The difference between access to an oracle and access to procedures in
Point~(2) of Theorem~\ref{thm:chenMengelStronger} is that the running
time of the oracle does not contribute to~the running time of the
overall algorithm while the running time of the procedures~does.  When
used with the class $\class{D}$ of all databases,
Lemma~\ref{thm:chenMengelStronger} is simply the special case of
Theorem~\ref{thm:from-UCQ-to-CQ-open-world} where ontologies are
empty.  % We refrain from repeating the details of the proof of
% Theorem~\ref{thm:chenMengelStronger} given in
% \cite{countingPositiveQueries}. However,
In the appendix, we summarize the proof of
Theorem~\ref{thm:chenMengelStronger} given in
\cite{countingPositiveQueries}, showing that it works not only for the
class of all databases as considered in
\cite{countingPositiveQueries}, but also for all stated classes of
databases $\class{D}$.

Before we prove that Theorem~\ref{thm:chenMengelStronger} implies
Theorem~\ref{thm:from-UCQ-to-CQ-open-world}, we make the
following observation on Chen-Mengel closures.
\begin{lem}
  \label{lem:cmcloscoin}
  Let $Q=(\Omc,\Sbf,q) \in (\class{G} \cap \class{FULL},\class{UCQ})$
  and $\class{D}=\class{D}^\Sbf_{\Omc}$. Then
  $\mn{cl}^{\class{D}}_{\mn{CM}}(q)=\{ q' \mid (\Omc,\Sbf,q') \in
  \mn{cl}_{\mn{CM}}(Q)\}$.
\end{lem}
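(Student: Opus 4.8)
The plan is to prove Lemma~\ref{lem:cmcloscoin} by showing that the two Chen--Mengel closures are built from exactly the same ingredients. Both $\mn{cl}^{\class{D}}_{\mn{CM}}(q)$ and $\{q' \mid (\Omc,\Sbf,q') \in \mn{cl}_{\mn{CM}}(Q)\}$ start from the same inclusion--exclusion expansion of $q = p_1 \vee \cdots \vee p_n$, namely the CQs $\bigwedge_{i \in I} p_i$ for $I \subseteq [n]$. The construction then repeatedly merges summands whose associated CQs are \emph{counting equivalent} and finally discards summands with coefficient zero. The only difference between the two closures is the notion of counting equivalence used: $\mn{cl}^{\class{D}}_{\mn{CM}}(q)$ tests counting equivalence of CQs over $\class{D} = \class{D}^\Sbf_{\Omc}$, whereas $\mn{cl}_{\mn{CM}}(Q)$ tests counting equivalence of the \emph{OMQs} $(\Omc,\Sbf,\bigwedge_{i\in I} p_i)$ over the class of all databases.

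Hence the whole lemma reduces to one equivalence of notions: for any two subsets $I_1, I_2 \subseteq [n]$, the CQs $\bigwedge_{i \in I_1} p_i$ and $\bigwedge_{i \in I_2} p_i$ are counting equivalent over $\class{D}^\Sbf_\Omc$ if and only if the OMQs $(\Omc,\Sbf,\bigwedge_{i \in I_1} p_i)$ and $(\Omc,\Sbf,\bigwedge_{i \in I_2} p_i)$ are counting equivalent over the class of all \Sbf-databases. First I would invoke Point~2 of Lemma~\ref{pro:chase}, which gives $Q'(D) = p(\mn{ch}_\Omc(D))$ for any OMQ $Q' = (\Omc,\Sbf,p)$ and \Sbf-database $D$; so $\#(\Omc,\Sbf,\bigwedge_{i\in I_1}p_i)(D) = \#(\bigwedge_{i\in I_1} p_i)(\mn{ch}_\Omc(D))$ for every $D$. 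By the definition of $\class{D}^\Sbf_\Omc = \{\mn{ch}_\Omc(D) \mid D \text{ an \Sbf-database}\}$, as $D$ ranges over all \Sbf-databases the chase $\mn{ch}_\Omc(D)$ ranges over exactly $\class{D}^\Sbf_\Omc$. Therefore equality of the OMQ answer counts on all $D$ is the same condition as equality of the CQ answer counts on all databases in $\class{D}^\Sbf_\Omc$, which is precisely counting equivalence over $\class{D}^\Sbf_\Omc$.

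Once this equivalence of the two counting-equivalence tests is established, I would argue that the two closure constructions proceed through identical merge-and-cancel steps: at each stage, whichever pairs of summands get merged in one construction get merged in the other, the coefficients evolve identically, and the same summands survive with nonzero coefficient. Consequently the surviving CQs (taken modulo the bookkeeping that identifies a CQ $\bigwedge_{i\in I}p_i$ with the OMQ $(\Omc,\Sbf,\bigwedge_{i\in I}p_i)$) coincide. The elements of $\mn{cl}^{\class{D}}_{\mn{CM}}(q)$ are then exactly the CQs $p'$ for which $(\Omc,\Sbf,p') \in \mn{cl}_{\mn{CM}}(Q)$.

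I do not expect a serious obstacle here, as the result is stated to follow from the definitions; the one point requiring genuine care is the passage between the OMQ semantics and counting equivalence over $\class{D}^\Sbf_\Omc$, i.e.\ making sure that $\mn{ch}_\Omc(D)$ ranges over \emph{all} of $\class{D}^\Sbf_\Omc$ and only over $\class{D}^\Sbf_\Omc$ as $D$ ranges over \Sbf-databases, so that the quantifiers line up correctly. This relies on $\Omc \subseteq \class{G} \cap \class{FULL}$ guaranteeing that each $\mn{ch}_\Omc(D)$ is finite, hence a genuine database, which is exactly the condition under which $\class{D}^\Sbf_\Omc$ is defined as a class of databases.
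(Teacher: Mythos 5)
Your proposal is correct and follows essentially the same route as the paper: both reduce the lemma to the observation that the two closure constructions are identical except for the counting-equivalence test used, and both show the tests coincide via $Q'(D)=p(\mn{ch}_\Omc(D))$ together with the fact that $\mn{ch}_\Omc(D)$ ranges over exactly $\class{D}^\Sbf_\Omc$. Your extra care about the quantifier alignment is a fair elaboration of the same point the paper makes implicitly.
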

\begin{proof}
  The definitions of $\mn{cl}^{\class{D}}_{\mn{CM}}(q)$ and
  $\mn{cl}_{\mn{CM}}(Q)$ exactly parallel each other. In both cases,
  we build an equation based on the inclusion-exclusion principle,
  then manipulate it based on certain
  % please do NOT replace with emptiness
  counting equivalence tests, and then read off
  $\mn{cl}^{\class{D}}_{\mn{CM}}(q)$ resp.\ $\mn{cl}_{\mn{CM}}(Q)$
  from the result. The only difference is that the construction of
  $\mn{cl}_{\mn{CM}}(Q)$ uses OMQ $(\Omc,\Sbf,p)$ whenever the
  construction of $\mn{cl}^{\class{D}}_{\mn{CM}}(q)$ uses CQ $p$.  In
  particular, a counting equivalence test between two OMQs
  $Q_1=(\Omc,\Sbf,q_1)$ and $Q_2=(\Omc,\Sbf,q_2)$ % (on the class of all
  % databases)
  in the former case correspond to a counting equivalence test between
  $q_1$ and $q_2$ over the class of databases
  $\class{D}=\class{D}^\Sbf_{\Omc}$ in the latter case.  To prove
  Lemma~\ref{lem:cmcloscoin}, it thus suffices to show that these
  tests yield the same result. But this follows from the fact that
  $Q_i(D)=q_i(\mn{ch}_\Omc(D))$ for all \Sbf-databases $D$.
\end{proof}
We now argue that Theorem~\ref{thm:chenMengelStronger} implies
Theorem~\ref{thm:from-UCQ-to-CQ-open-world}. Thus let
$\class{Q} \subseteq (\class{G} \cap \class{FULL},\class{UCQ})$ be a
recursively enumerable class of %reduced
OMQs with full schema.  % and relation symbols of bounded arity.
We need
to give an fpt algorithm with an oracle for
$\mn{AnswerCount}(\class{Q})$ that, given an OMQ
$Q'=(\Omc,\Sbf,q') \in \mn{cl}_{\mn{CM}}(\class{Q})$ and an
\Sbf-database $D$, computes $\#Q'(D)$.  By~enumeration, we can find an
OMQ $Q=(\Omc,\Sbf,q) \in \class{Q}$ such that
$Q' \in \mn{cl}_{\mn{CM}}(Q)$. %   We may assume w.l.o.g.\ that $q$ is a
% reduced UCQ.
Lemma~\ref{lem:cmcloscoin} yields
$q' \in \mn{cl}^{\class{D}}_{\mn{CM}}(q)$ for
$\class{D}=\class{D}^\Sbf_{\Omc}$. By Lemma~\ref{lem:closure}, we may
thus invoke the algorithm from Theorem~\ref{thm:chenMengelStronger}
with $\class{D}=\class{D}^\Sbf_{\Omc}$, the UCQ $q$, CQ
$q' \in \mn{cl}^{\class{D}}_{\mn{CM}}(q)$, and the database
$\mn{ch}_{\Omc}(D)$. The algorithm returns
$\#q(\mn{ch}_{\Omc}(D))=\#Q'(D)$, as desired.
Note that $\mn{ch}_{\Omc}(D)$ is finite because
$\Omc \subseteq \class{FULL}$ and can be produced within the time
requirements of fixed-parameter tractability by
Lemma~\ref{lem:chasefpt}. {\color{\highlightColour}Also note that for every
  $p' \in \mn{cl}^{\class{D}}_{\mn{CM}}(q)$, we may use
  $\mn{ch}_\Omc(\tilde p')$ as the equality-free CQ $p''$ required
  by Point~(1) of Theorem~\ref{thm:chenMengelStronger}.
  In fact, $\tilde p'$ is counting equivalent to $p'$, even over the
  class of all databases, and $\mn{ch}_\Omc(\tilde p')$ is
  equivalent to $\tilde p'$ over $\class{D}=\class{D}^\Sbf_{\Omc}$
  by Lemma~\ref{lem:chaseoklem}.
}

We still need to argue that the oracle and procedures from Point~(2) of
Theorem~\ref{thm:chenMengelStronger} are indeed available.  As the
oracle for $\mn{AnswerCount}(\{q\},\class{D}^\Sbf_\Omc)$, we can use
an oracle for $\mn{AnswerCount}(\{Q\})$: by definition of
$\class{D}^\Sbf_\Omc$, any $D \in \class{D}^\Sbf_\Omc$ satisfies
$q(D)=Q(D)$. And as an oracle for $\mn{AnswerCount}(\{Q\})$, in turn,
we can clearly use the strictly more general oracle for
$\mn{AnswerCount}(\class{Q})$ that we have at our disposal in the
Turing fpt-reduction that we are building.  The procedure for
enumerating $\class{D}^\Sbf_\Omc$ required by Point~(2) is also easy to
provide. % The existence of the procedure for enumerating
% $\mn{cl}^{\class{D}}_{\mn{CM}}(q)$ is implied by
% Lemma~\ref{lem:cmcloscoin}
% and Proposition~\ref{prop:OMQCEdec} (and thus depends on the
% schema being full).
We can just enumerate all \Sbf-databases, chase with \Omc, and filter
out duplicates.  Finally, the procedures for deciding counting
equivalence and semi-counting equivalence of CQs over
$\class{D}^\Sbf_\Omc$ are provided by
Proposition~\ref{prop:counequivdecclass}.

\subsection{Lower Bounds: Removing Ontologies}
\label{subsect:removeonto}
% \label{sec:full-OMQ-to-CQ}
%
We next give the reduction that
% The aim of this section is to complete the proof of the lower bounds
% in Theorem~\ref{thm:main-results-OMQs} by providing the Turing
% fpt-reduction that
removes ontologies. % The lower bounds in
% Theorem~\ref{thm:main-results-OMQs} are a consequence of those
% in Theorem~\ref{thm:chenmengeldell} and the reductions in
% Theorems~\ref{thm:existsfptred},~\ref{thm:from-UCQ-to-CQ-open-world},
% and~\ref{thm:OMQ-to-CQ}.
%
% , as made precise by
% the following theorem.
%
\begin{thm}
    \label{thm:OMQ-to-CQ}
    Let $\class{Q} \subseteq (\class{G} \cap \class{FULL},\class{CQ})$ be
    a recursively enumerable class of %reduced
    OMQs with full schema. %  and
    % relation symbols of bounded arity.
    There is~a~class $\class{C} \subseteq \class{CQ}$ that only
    contains  cores and such
    that: % the following conditions are satisfied:
    \begin{enumerate} %\itemsep=0pt

    % \item the semantic treewidths, semantic contract treewidths, and
    %   semantic starsizes of CQs in $\class{C}'$ coincide with that of
    %   the
    %   OMQs in $\class{C}$; {\color{blue}I find this formulation quite
    %     vague; do we mean in the sense of sets here? I think we should
    %     make precise or just speak about boundedness}

    \item there is a Turing fpt-reduction from $\mn{AnswerCount}(\class{C})$ to $\mn{AnswerCount}(\class{Q})$;

    % \item for every OMQ $Q \in \class{C}$, we find a CQ
    %   $q \in \class{C}'$ that has the same semantic treewidth,
    %   semantic contract treewidth,and semantic starsize, and vice versa.

    \item
      % $\{ G_q \mid q \in \class{C}' \} = \{ G_{\mn{core}(\mn{ch}_\Omc(q))} \mid
      % (\Omc,\Sbf,q) \in \class{C} \}$.
      for every OMQ $Q= (\Omc, \Sbf, q) \in \class{Q}$, we find a CQ
      $p \in \class{C}$ such that $p$ and $\core(\chase_{\Omc}(q))$
      have the same Gaifman graph. % \footnote{They even have the same
        % hypergraph.}

%
      % the treewidth of $p$ %$\core(p)$
      % is~equal to~that of~$\core(\chase_{\Omc}(q))$, and likewise for
      % contract treewidth, starsize, and linked matching number.

    \end{enumerate}
\end{thm}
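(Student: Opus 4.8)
The plan is to set
$\class{C} = \{\,\core(\chase_\Omc(q)) \mid (\Omc,\Sbf,q)\in\class{Q}\,\}$.
Every element is a core by construction, and Point~(2) is then immediate: for $Q=(\Omc,\Sbf,q)$ the witnessing CQ is $p=\core(\chase_\Omc(q))$ itself, which trivially has the same Gaifman graph as $\core(\chase_\Omc(q))$. Since $\Omc\subseteq\class{G}\cap\class{FULL}$, the chase $\chase_\Omc(q)$ is finite and computable (Lemma~\ref{lem:chasefpt}), so $\class{C}$ is recursively enumerable and, given $p\in\class{C}$, I can recover a witnessing $Q\in\class{Q}$ by enumeration; this is where the reduction of Point~(1) starts.

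The driving observation for Point~(1) is that the $\class{Q}$-oracle already counts $p$ on every chased database for free. First I would prove the structural lemma that $D_p\models\Omc$ for $p=\core(\chase_\Omc(q))$: the database $D_{\chase_\Omc(q)}$ satisfies $\Omc$ by definition of the chase, and if $r$ retracts $D_{\chase_\Omc(q)}$ onto its core, then for any full guarded TGD $\phi\to\psi$ and any body match $\phi(\bar a)\subseteq D_p$ we have $\psi(\bar a)\subseteq D_{\chase_\Omc(q)}$, with all its constants drawn from $\bar a\subseteq\dom(D_p)$ by fullness; applying $r$, which is the identity on $\dom(D_p)$, yields $\psi(\bar a)\subseteq D_p$. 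Hence $p$ is chase-closed and equivalent to $\chase_\Omc(q)$, so for every $E\in\class{D}^\Sbf_\Omc$ we have $\chase_\Omc(E)=E$ and therefore $\#Q(E)=\#q(\chase_\Omc(E))=\#q(E)=\#p(E)$ by Point~2 of Lemma~\ref{pro:chase}, Lemma~\ref{lem:chaseoklem}, and core equivalence. Thus a single oracle call on any chase-closed database delivers the $p$-count on it.

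What remains — and what I expect to be the main obstacle — is to compute $\#p(D)$ for an \emph{arbitrary} $\Sbf$-database $D$ from $p$-counts over the chase-closed databases $\class{D}^\Sbf_\Omc$. A naive one-shot call does not suffice in either obvious form: querying on $\chase_\Omc(D)$ overcounts as soon as $p$ mentions a relation that $\Omc$ can derive (the chase fabricates head facts that new homomorphisms exploit), while deleting the facts that trigger such derivations can destroy genuine answers. Concrete small examples show these two corrections pull in opposite directions, so no uniform single construction works. Moreover, since the full-TGD chase adds no fresh constants and keeps the domain fixed, and since for full guarded $\Omc$ the induced subinstance of a chase-closed database on any subdomain is again chase-closed, restricting the answer domain through Lemma~\ref{lemma:partial-domain-answers} does not by itself separate $D$ from its chase.

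I would resolve this exactly as the rest of the paper handles chased databases: exploit that $\class{D}^\Sbf_\Omc$ is closed under disjoint union, direct product, cloning, and induced subinstances and contains $D^\top_\Sbf$ (Lemma~\ref{lem:closure}), so that for counting purposes it mirrors the class of all databases (the same phenomenon underlying Lemma~\ref{lem:countingEquivClassD}), and then run the database-class version of the Chen--Mengel interpolation behind Theorem~\ref{thm:chenMengelStronger} to recover $\#p(D)$ from polynomially many oracle calls on chase-closed databases built from $D$ by chasing, cloning, and products. All database manipulations are polynomial in $||D||$ by Lemma~\ref{lem:chasefpt}, while $p$, and hence the number and parameter size of the oracle calls, depends only on the fixed query, giving the required Turing fpt-reduction. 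The measure matching needed to transfer the hardness of Theorem~\ref{thm:chenmengeldell} from $\class{C}$ to $\class{Q}$ is then supplied by Point~(2): as $p$ equals $\core(\chase_\Omc(q))$, all four structural measures agree.
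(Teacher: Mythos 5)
There is a genuine gap, and it sits exactly where you flagged it: the claim that ``the database-class version of the Chen--Mengel interpolation'' lets you recover $\#p(D)$ for an \emph{arbitrary} $\Sbf$-database $D$ from oracle calls on chase-closed databases. No combination of chasing, cloning, products, disjoint unions and induced subinstances can do this, because every database you hand to the $\class{Q}$-oracle is chased before it is counted: the oracle's answer depends only on $\chase_\Omc(E)$, never on $E$ itself. Concretely, take $\Omc=\{R(x,y)\rightarrow S(x,y)\}$, $q(x)=\exists y\,S(x,y)$, so $p=\core(\chase_\Omc(q))=q$; then $D_1=\{R(a,b)\}$ and $D_2=\{R(a,b),S(a,b)\}$ have the same chase but $\#p(D_1)=0\neq 1=\#p(D_2)$. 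The reduction algorithm can of course tell $D_1$ and $D_2$ apart syntactically, but for a \emph{hard} $p$ it cannot compute the needed correction (``how many homomorphisms use only original facts'') in fpt time, and the oracle cannot supply it. The interpolation lemmas behind Theorem~\ref{thm:chenMengelStronger} and Lemma~\ref{lemma:CQ-to-marked-CQ-byChenMengel} solve orthogonal problems (extracting CQ counts from UCQ counts, removing markings, splitting answers by domain membership), all \emph{within} a class closed under the relevant operations; none of them separates a database from its chase. Your Point~(2) and the observation that $D_p\models\Omc$ are fine, but with $\class{C}=\{\core(\chase_\Omc(q))\}$ Point~(1) cannot be established this way.

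The paper avoids the problem by \emph{not} taking $\class{C}=\{\core(\chase_\Omc(q))\}$: Point~(2) only requires equality of Gaifman graphs, and this freedom is exploited. After the marking step (Lemma~\ref{lemma:CQ-to-marked-CQ}), each OMQ is replaced by a self-join-free CQ $q^\newschema$ over a \emph{fresh} schema whose relation symbols $R_S$ encode the maximal guarded sets of $\core(\chase_\Omc(q))$; these CQs are automatically cores and have the required Gaifman graphs. Given an arbitrary database over the fresh schema, the reduction forms the product with $D_{q^\newschema}$ and then ``floods'' every guarded set with all possible $\Sbf$-facts, producing a database that is already chase-closed ($D^\marked=\chase_\Omc(D^\marked)$) so that the ontology becomes inert, while the marking relations $R_x$ pin each variable to its own column of the product and make the count exact. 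That construction is the essential content of Lemma~\ref{lemma:marked-CQ-to-plainCQ-redux}, and it is precisely the ingredient your proposal is missing.
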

Before we prove Theorem~\ref{thm:OMQ-to-CQ}, we first show how we can
make use of the three Turing fpt-reductions stated as
Theorems~\ref{thm:existsfptred},~\ref{thm:from-UCQ-to-CQ-open-world},
and \ref{thm:OMQ-to-CQ}, to obtain the lower bounds in
Theorem~\ref{thm:main-results-OMQs} from those in
Theorem~\ref{thm:chenmengeldell}.  Let us consider, for example, the
\wOne lower bound from Point~(2) of Theorem~\ref{thm:main-results-OMQs}.
Take a class $\class{Q}_0 \subseteq (\class{G},\class{UCQ})$ of OMQs
such that the treewidths of CQs in
$$
\class{Q}^\star_0 = \{ \mn{core}(\mn{ch}_{\Omc^\exists}(p))\mid \exists Q \in \class{Q}_0: % Q^\exists
% = (\Omc^\exists,\Sbf,q^\exists) \text{ and }
(\Omc^\exists,\Sbf,p) \in
\mn{cl}_{\mn{CM}}(Q^\exists)\}
$$
are unbounded. % and the semantic
% contract treewidths of OMQs in $\mn{cl}^\exists_{\mn{CM}}(\class{C})$
% are bounded
Theorems~\ref{thm:existsfptred}
and~\ref{thm:from-UCQ-to-CQ-open-world} give a Turing fpt-reduction from $\mn{AnswerCount}(\class{Q})$
to $\mn{AnswerCount}(\class{Q}_0)$ where
$$
\class{Q} = \{ Q' \mid \exists Q \in \class{C}_0: Q' \in
\mn{cl}_{\mn{CM}}(Q^\exists)\}.
$$
By assumption, the treewidths of the CQs $\mn{core}(\mn{ch}_\Omc(q))$,
$(\Omc,\Sbf,q) \in \class{Q}$, are unbounded. Let $\class{C}$ be the
class of CQs whose existence is asserted by
Theorem~\ref{thm:OMQ-to-CQ}.  By Point~(2) of that theorem, the
treewidths of the CQs in $\class{C}$ are unbounded and thus
$\mn{AnswerCount}(\class{C})$ is \wOne-hard by Point~(2) of
Theorem~\ref{thm:chenmengeldell}. Composing the Turing fpt-reduction
from $\mn{AnswerCount}(\class{C})$ to $\mn{AnswerCount}(\class{Q})$
given by Point~(1) of Theorem~\ref{thm:OMQ-to-CQ} with the reduction
from $\mn{AnswerCount}(\class{Q})$ to $\mn{AnswerCount}(\class{Q}_0)$,
we obtain a Turing fpt-reduction from $\mn{AnswerCount}(\class{C})$ to
$\mn{AnswerCount}(\class{Q}_0)$ and thus the latter is \wOne-hard. The
other lower bounds can be proved analogously.

\medskip We now turn to the proof of Theorem~\ref{thm:OMQ-to-CQ} which
in turn uses three consecutive
fpt-reductions. % that are both inspired by
% constructions in~\cite{countingTrichotomy}.
The first reduction is easy and ensures that all involved CQs (inside
OMQs) are equality-free.  The second reduction allows us, informally
spoken, to mark every variable in a CQ (inside an OMQ) by a unary
relation symbol that uniquely identifies it. In the third reduction,
we make use of these markings to remove the ontology.
% We start with a
% preliminary. We shall sometimes replace a CQ $q$ in a UCQ
% $(\Omc,\Sbf,q) \in (\class{G} \cap \class{FULL},\class{CQ})$ with
% $\mn{core}(\mn{ch}_\Omc(q))$. Note that $\mn{ch}_\Omc(q)$ is finite
% since $\Omc \in \class{FULL}$. The following lemma explains our
% interest in ths replacement. Note the connection to
% Lemmas~\ref{lem:coresaregreat}
% and~\ref{not_yet_clear_approximations_in_upper_bound}.
% %
% \begin{lem}
% \label{lem:corechasecanonical}
%   Let $Q=(\Omc,\Sbf,q) \in (\class{G} \cap \class{FULL},\class{CQ})$
%   have full data schema.  Then the semantic treewidth of $Q$ agrees with
%   the syntactic treewidth of $\mn{core}(\mn{ch}_\Omc(q))$, and
%   likewise for contract treewidth and starsize.
% \end{lem}
% %
% \begin{proof}
%   We only consider starsize, the arguments for the other two measures
%   are identical.  Let $Q=(\Omc,\Sbf,q) \in (\class{G} \cap
%   \class{FULL},\class{CQ})$ have full data schema.  Since $Q$ is equivalent
%   to $(\Omc,\Sbf,\mn{core}(\mn{ch}_\Omc(q))$, the semantic starsize of
%   $Q$ cannot be larger than the starsize of
%   $\mn{core}(\mn{ch}_\Omc(q)$. It thus remains to argue that it also
%   cannot be larger. Assume to the contrary that $Q \equiv Q'$ for
%   some $Q'=(\Omc',\Sbf,q') \in (\class{G},\class{UCQ})$.
%   {\color{blue}Needs approximation stuff to derive $\Omc'=\Omc$; clean
%   up that part first.}
% \end{proof}
% %
% Now for the first step of the proof of Theorem~\ref{thm:OMQ-to-CQ}.
{\color{\highlightColour}
For the first reduction, recall that CQ $\tilde q$ is obtained from CQ $q$
by removing all equality atoms and identifying any two variables
$x_1,x_2$ with $x_1 = x_2 \in q$.
}
\begin{lem}
  \label{lem:removeequality}
  Let $\class{Q} \subseteq (\class{G} \cap \class{FULL},\class{CQ})$
  be a recursively enumerable class of %reduced
  \OMQs with full schema and let $\class{Q}^\sim=\{ (\Omc,\Sbf,\tilde q)
  \mid (\Omc,\Sbf,q) \in \class{Q}$. % and relation symbols of bounded arity.
    % , and let
    % $\class{C}^\marked \subseteq (\class{G} \cap
    % \class{FULL},\class{CQ})$ be the class of OMQs
    % $\{Q^\marked \mid
    % Q \in \class{C}\}$.
    Then there is a Turing
    fpt-reduction from $\mn{AnswerCount}(\class{Q}^\sim)$ to
    $\mn{AnswerCount}(\class{Q})$.
%\end{restatable}
\end{lem}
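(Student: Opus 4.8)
The plan is to show that collapsing the equality atoms of a CQ never changes its number of answers, so that computing $\#\tilde Q(D)$ reduces, after recovering a suitable preimage by enumeration, to a single oracle call for $\mn{AnswerCount}(\class{Q})$.

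First I would prove the purely query-theoretic fact that $\#q(I)=\#\tilde q(I)$ for every CQ $q$ and every instance $I$. Recall from the preliminaries that equality atoms may only involve answer variables; hence the equivalence relation ${\sim}$ on $\bar x$ generated by the equality atoms of $q$ identifies answer variables only, and $\tilde q$ is obtained by keeping one representative per ${\sim}$-class and renaming the (otherwise unchanged) relational atoms accordingly. Any answer $\bar c$ to $q$ on $I$ is forced to take equal values on each ${\sim}$-class, so projecting $\bar c$ to the chosen representatives yields an answer to $\tilde q$; conversely, copying the value of each representative to the rest of its class sends an answer of $\tilde q$ to an answer of $q$. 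These two maps are mutually inverse, giving a bijection between $q(I)$ and $\tilde q(I)$ and hence $\#q(I)=\#\tilde q(I)$.

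Instantiating this with $I=\mn{ch}_\Omc(D)$ and using Point~2 of Lemma~\ref{pro:chase}, I obtain, for $Q=(\Omc,\Sbf,q)\in\class{Q}$ and $\tilde Q=(\Omc,\Sbf,\tilde q)$,
$$\#\tilde Q(D)=\#\tilde q(\mn{ch}_\Omc(D))=\#q(\mn{ch}_\Omc(D))=\#Q(D)$$
for every \Sbf-database $D$. Since $\tilde q$ uses only relation symbols already occurring in $q$, the schema of $\tilde Q$ is full whenever that of $Q$ is, so $\class{Q}^\sim$ is a legitimate class of OMQs with full schema.

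The reduction itself is then immediate. Given an input OMQ $\tilde Q\in\class{Q}^\sim$ and an \Sbf-database $D$, I would recursively enumerate $\class{Q}$ and, for each enumerated $Q=(\Omc,\Sbf,q)$, compute $\tilde q$ and test whether $(\Omc,\Sbf,\tilde q)$ equals $\tilde Q$; by the definition of $\class{Q}^\sim$ at least one preimage exists, so the search halts, and its running time depends only on $\tilde Q$, that is, on the parameter. I then query the oracle for $\mn{AnswerCount}(\class{Q})$ on $(Q,D)$ and return its output, which equals $\#\tilde Q(D)$ by the displayed identity. As the single oracle call uses a query of parameter-bounded size, this is a (parsimonious) Turing fpt-reduction. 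The only mild subtlety, and the one point I would flag, is that $q\mapsto\tilde q$ is not injective, so the original $q$ cannot be read off $\tilde Q$ directly; this is harmless because every preimage $Q$ satisfies $\#Q(D)=\#\tilde Q(D)$, so recovering an arbitrary preimage by enumeration suffices.
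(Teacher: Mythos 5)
Your proposal is correct and follows essentially the same route as the paper: recursively enumerate $\class{Q}$ to find a preimage $Q'=(\Omc,\Sbf,p)$ with $\tilde p$ equal to the given query, make a single oracle call for $\#Q'(D)$, and justify the answer via the chain $\#Q'(D)=\#p(\mn{ch}_\Omc(D))=\#\tilde p(\mn{ch}_\Omc(D))=\#\tilde Q(D)$. The only difference is that you spell out the bijection establishing counting equivalence of $q$ and $\tilde q$ (and flag the harmless non-injectivity of $q\mapsto\tilde q$), whereas the paper simply asserts this as clear by construction.
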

\begin{proof}
  Given a $Q=(\Omc,\Sbf,q) \in \class{Q}^\sim$ and an \Sbf-database $D$,
  find a $Q'=(\Omc,\Sbf,p) \in \class{Q}$ such that $q=\tilde p$ by
  recursively enumerating $\class{Q}$. Then compute and return
  $\#Q'(D)$. By construction of $q=\tilde p$, it is clear that $q$ and
  $p$ are counting equivalent. Consequently,
  $\#Q'(D) = \# p(\mn{ch}_\Omc(D)) = \#q(\mn{ch}_\Omc(D)) = \#Q(D)$.
\end{proof}
We next give the second reduction.  The \emph{marking} of a CQ $q$
over schema \Sbf is the CQ $q^\marked$ obtained from $q$ by adding an
atom $R_x(x)$ for each $x \in \mn{var}(q)$ where $R_x$ is a fresh unary
relation symbol. Note that $q^\marked$ is over schema $\Sbf^\marked$
obtained from \Sbf by adding all the fresh unary
symbols. %, and that it is a core.
The \emph{core-chased marking} of an OMQ
$Q=(\Omc,\Sbf,q) \in (\class{G} \cap \class{FULL}, \class{CQ})$ is the
OMQ
$Q^\marked =
(\Omc,\Sbf^{\marked},\mn{core}(\mn{ch}_\Omc(q))^{\marked}) \in
(\class{G}\cap \class{FULL}, \class{CQ})$. This can be lifted to
classes of OMQs $\class{Q}$ as expected, that is,
$\class{Q}^{\marked}=\{Q^\marked \mid Q \in \class{Q}\}$.
% We say that an~\OMQ $Q^\marked = (O,\Sbf^{\marked},q^{\marked})$ from
% $(\class{G}, \class{CQ})$ is~\emph{marked} if~for every variable~$x$
% in~$q$, there is~an~atom $R_{x}(x)$ and the symbol~$R_{x}$ is not used
% in \Omc or otherwise in $q$. We refer to the added atoms as
% \emph{markings}.  We add the marked symbol $^\marked$ to emphasise
% that a query is marked or a schema is enhanced by markings.  In
% particular, if the query is clear from the context by $\Sbf^\marked$
% we understand the schema $\Sbf$ with the appropriate markings added.
%
\begin{lem}
%\begin{restatable}{lem}{lemCQtomarked}
    \label{lemma:CQ-to-marked-CQ}
    Let $\class{Q} \subseteq (\class{G} \cap \class{FULL},\class{CQ})$
    be a recursively enumerable
    class of {\color{\highlightColour}equality-free} %reduced
    \OMQs with full schema. % and relation symbols of bounded arity.
    % , and let
    % $\class{C}^\marked \subseteq (\class{G} \cap
    % \class{FULL},\class{CQ})$ be the class of OMQs
    % $\{Q^\marked \mid
    % Q \in \class{C}\}$.
    Then there is a Turing
    fpt-reduction from $\mn{AnswerCount}(\class{Q}^\marked)$ to
    $\mn{AnswerCount}(\class{Q})$.
%\end{restatable}
\end{lem}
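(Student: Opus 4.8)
The plan is to build an fpt algorithm that, given the core-chased marking $Q^\marked=(\Omc,\Sbf^\marked,p^\marked)$ with $p=\core(\chase_\Omc(q))$ and an $\Sbf^\marked$-database $D$, computes $\#Q^\marked(D)$ using an oracle for $\mn{AnswerCount}(\class{Q})$. First I would recover a source OMQ $Q=(\Omc,\Sbf,q)\in\class{Q}$ whose core-chased marking is $Q^\marked$, by recursively enumerating $\class{Q}$ and computing core-chased markings until a match is found; this is effective because $\chase_\Omc$, $\core$, and marking are all computable. It then suffices to compute $\#p^\marked(\chase_\Omc(D))$. Two structural facts drive the reduction. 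Since $\Omc\subseteq\class{G}\cap\class{FULL}$, the chase creates no fresh constants, so $\mn{adom}(\chase_\Omc(D))=\mn{adom}(D)$; and since the marker symbols $R_x$ do not occur in $\Omc$, the marker facts of $D$ survive the chase unchanged. Hence any homomorphism witnessing a marked answer maps every variable $x$ of $p$ into its \emph{marker set} $A_x=\{c\mid R_x(c)\in D\}\subseteq\mn{adom}(D)$. Finally, by Lemma~\ref{lem:chaseoklem} together with $p=\core(\chase_\Omc(q))$, the oracle computes $\#q(\chase_\Omc(D'))=\#p(\chase_\Omc(D'))$ for every $\Sbf$-database~$D'$.

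The core difficulty is that the unmarked query $p$ cannot, on its own, tell which constant plays the role of which variable, so the per-variable constraints $h(x)\in A_x$ must be forced through the \emph{database} handed to the oracle. To break this symmetry I would hand the oracle (variants of) the direct product of $\chase_\Omc(D)$ with the chased canonical database $\chase_\Omc(D_p)$, restricted to the set $\Delta=\{(c,z)\mid z\in\mn{var}(p),\ c\in A_z\}$. Both factors lie in $\class{D}^\Sbf_\Omc$, and by Lemma~\ref{lem:closure} this class is closed under direct products and under induced subdatabases, so the restricted product again lies in $\class{D}^\Sbf_\Omc$ and is therefore already a model of $\Omc$; the oracle's re-chase thus adds nothing and it returns $\#p$ computed directly on this product. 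By the product property of homomorphisms, an answer on the restricted product corresponds to a pair consisting of a homomorphism $h\colon p\to\chase_\Omc(D)$ and a self-map $\phi$ of $p$ (a homomorphism $p\to\chase_\Omc(D_p)$), subject to $h(z)\in A_{\phi(z)}$. As $p$ is a core, the admissible $\phi$ form a finite set depending only on $p$, and the \emph{identity} self-map picks out exactly the marker-respecting maps, i.e.\ the witnesses of $\#p^\marked(\chase_\Omc(D))$.

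It remains to isolate the identity contribution and to pass from this homomorphism-style quantity to the genuine marked answer count. For the first, I would issue a family of oracle calls on variants of the restricted product obtained by toggling the marker restriction variable-by-variable (or by cloning marker classes with distinct multiplicities, in the spirit of Lemma~\ref{lemma:partial-domain-answers}); the resulting values form a linear system whose coefficients are determined by the self-maps of $p$ and hence depend only on $\|p\|$, so it can be solved inside the fpt budget to extract the identity term. For the second, I would apply the standard inclusion--exclusion over variable identifications to convert the count of marker-respecting homomorphisms into the number of distinct answer tuples, again at a cost governed only by $\|p\|$. Since every database handed to the oracle is an $\Sbf$-database of size polynomial in $\|D\|$ (with an fpt multiplicative blow-up in $\|p\|$) and the number of oracle calls is fpt-bounded, the whole procedure is a Turing fpt-reduction.

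The step I expect to be the main obstacle is isolating the identity self-map's contribution from the ``twisted'' ones: a nontrivial self-map $\phi$ of $p$ (permuting quantified variables while fixing the answer variables) contributes the twisted count $\#\{h\mid h(z)\in A_{\phi(z)}\}$, and a single product query cannot separate these terms. Cloning alone does not help, because the relevant self-maps act on the marker classes essentially bijectively and thus rescale every twisted term by the same factor. Making the linear system over the marker-restricted variants genuinely invertible — so that the $\phi=\mn{id}$ term can be solved for despite the symmetries of the core $p$ — is the delicate technical heart of the argument, while the model/chase bookkeeping is rendered routine by the closure properties in Lemma~\ref{lem:closure}.
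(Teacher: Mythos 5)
Your setup is the right one --- it is essentially the construction behind Lemma~\ref{lemma:CQ-to-marked-CQ-byChenMengel}, which is exactly what the paper invokes here: form the direct product of the chased input database with the canonical database of the (unmarked, core, chased) query, restrict to the marker-compatible pairs, use the closure of $\class{D}^{\Sbf}_{\Omc}$ under products, induced subdatabases and cloning (Lemma~\ref{lem:closure}) so the oracle's re-chase is a no-op, and simulate the CQ-oracle by the OMQ-oracle after dropping marker atoms. But the step you flag as ``the delicate technical heart'' is not a technicality you can defer --- it is the entire content of the lemma, and your proposal does not contain an argument for it. Worse, the goal you set yourself (isolating the contribution of the identity self-map $\phi=\mathrm{id}$ from all other endomorphisms of $p$) is not achievable by the linear-algebra tricks you gesture at, for precisely the reason you yourself give: endomorphisms that act bijectively rescale every twisted term identically, so no family of clonings or marker-togglings yields an invertible system separating individual $\phi$'s.

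The resolution, which is what Chen and Mengel's argument (reproduced in the appendix) actually does, is to \emph{not} separate the $\phi$'s individually but to exploit that $p$ is a core. One only controls $\phi$ on the set $V$ of answer variables: if the first-component projection $\phi$ of a homomorphism into the restricted product is the identity on $V$, the core property forces $\phi$ to be an automorphism of $p$, and composing with $\phi^{-1}$ un-twists the constraint $h(w)\in A_{\phi(w)}$ into $h(w)\in A_w$ without changing the answer tuple. Hence the answers whose first components are the identity on $V$ are in bijection with $p^\marked(\chase_\Omc(D))$ --- no isolation of $\phi=\mathrm{id}$ on quantified variables is needed. One then counts the answers whose first components on $V$ are \emph{surjective onto} $V$ by inclusion--exclusion over subsets $T\subseteq V$ (each term $|\Hmc_T|$ being extracted by cloning the elements with first component in $T$ with multiplicities $j=1,\dots,|V|+1$ and solving the resulting Vandermonde system), and finally divides by the number of restrictions to $V$ of automorphisms of $p$, since each such restriction accounts for exactly one copy of the identity-on-$V$ count. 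Without this core-based regrouping your reduction does not go through, so as written the proof has a genuine gap at its central step.
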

To prove Lemma~\ref{lemma:CQ-to-marked-CQ}, we again adapt a reduction
by Chen and Mengel that addresses the case of CQs without
ontologies, but that can be lifted to relevant
classes of databases similarly to
Theorem~\ref{thm:chenMengelStronger}.

\newcommand{\lemCQtoMarked}{
    Let $\class{D}$ be a class of databases over schema $\Sbf^\marked$ that is closed under
    direct products, cloning, and induced subdatabases. %disjoint union,
    Then there is~an~algorithm that
    \begin{itemize}
        \item takes as input an equality-free CQ $q$ such that
            $q^\marked$ is over schema $\Sbf^\marked$ and a database $D \in
            \class{D}$, %{\color{blue}why not give it $q^\marked$??}
            subject to the promise that $q$ is a core and $D_{q^\marked}
            \in\class{D}$, %{\color{cyan}no $\sim$ here since $q$ is equality-free}
        \item has access to an oracle for
            $\mn{AnswerCount}(\{q\},\class{D})$,\footnote{Note that since
            $\Sbf \subseteq \Sbf^\marked$, $q$ may be viewed as a CQ over
            schema $\Sbf^\marked$.}
        \item runs in time $f(||q||)
            \cdot p(||D||)$, $f$ a computable function and $p$ a
        polynomial, and
        \item outputs $\#q^\marked(D)$.
    \end{itemize}
}

\begin{thmC}[\cite{countingTrichotomy}]
    \label{lemma:CQ-to-marked-CQ-byChenMengel}
    \lemCQtoMarked
\end{thmC}
When used with the class $\class{D}$ of all databases,
Lemma~\ref{lemma:CQ-to-marked-CQ-byChenMengel} is simply the special
case of Lemma~\ref{lemma:CQ-to-marked-CQ} where ontologies are empty.
In the appendix, we give an overview of the proof of
  Lemma~\ref{lemma:CQ-to-marked-CQ-byChenMengel} in
  \cite{countingTrichotomy}, also showing that it extends to classes
  of databases $\class{D}$ that satisfy the stated properties.

  We now use Lemma~\ref{lemma:CQ-to-marked-CQ-byChenMengel} to prove
  Lemma~\ref{lemma:CQ-to-marked-CQ}.  Let
  $\class{Q} \subseteq (\class{G} \cap \class{FULL},\class{CQ})$ be a
  recursively enumerable class of equality-free %reduced
  \OMQs with full schema. % and relation symbols of bounded arity.
  We give an fpt algorithm that uses $\mn{AnswerCount}(\class{Q})$ as
  an oracle and, given an OMQ
  $Q^\marked(\bar
  x)=(\Omc,\Sbf^\marked,\mn{core}(\mn{ch}_\Omc(q))^\marked) \in
  \class{Q}^\marked$ and an~$\Sbf^\marked$\=/database~$D$, computes
  $\#Q^\marked(D)$.

First, the algorithm enumerates $\class{Q}$ to find an OMQ $Q(\bar x)$
such that $Q^\marked$ is the core\=/chased marking of $Q$, that is,
$Q=(\Omc,\Sbf,q)$. % with $q=\mn{core}(\mn{ch}_\Omc(p))^\marked$.
It then
starts the algorithm from
Lemma~\ref{lemma:CQ-to-marked-CQ-byChenMengel} for the class of
databases
$$
    \class{D}^{\Sbf^\marked}_\Omc = \{ \chase_\Omc(D') \mid \text{$D'$ an $\Sbf^\marked$-database} \},
$$
and with the CQ $\mn{core}(\mn{ch}_\Omc(q))$ and the database
$\chase_{\Omc}(D)$ as the input.  The algorithm
outputs
$\#\mn{core}(\mn{ch}_\Omc(q))^\marked(\chase_\Omc(D))=\#Q^\marked(D)$,
as required.

We should argue that the preconditions of
Lemma~\ref{lemma:CQ-to-marked-CQ-byChenMengel} are satisfied.  By
Lemma~\ref{lem:closure}, $\class{D}^{\Sbf^\marked}_\Omc$ is closed
under direct products % , disjoint union,
and cloning. Since the ontologies in $\class{Q}$ are from
$\class{FULL}$, $\class{D}^{\Sbf^\marked}_\Omc$~is also closed under
induced subdatabases. Moreover, class $\class{D}^{\Sbf^\marked}_\Omc$
contains $D_{\mn{core}(\mn{ch}_\Omc(q))^\marked}$ since
$\mn{core}(\mn{ch}_\Omc(q))^\marked=
\mn{ch}_\Omc(\mn{core}(\mn{ch}_\Omc(q))^\marked)$ and the schema is
full.  As the oracle for
$\mn{AnswerCount}(\{\mn{core}(\mn{ch}_\Omc(q))\},
\class{D}^{\Sbf^\marked}_\Omc)$ needed by the algorithm, we can use
the oracle for $\mn{AnswerCount}(\class{Q})$ that we have available,
as follows.

Given a database $D \in \class{D}^{\Sbf^\marked}_\Omc$, we first
construct database $D|_\Sbf$ by dropping all atoms that use a symbol
from $\Sbf^\marked \setminus \Sbf$ and then ask the oracle for
$\mn{AnswerCount}(\class{Q})$ to return $\#Q(D|_\Sbf)$. We argue that
this is the same as the required $\#\mn{core}(\mn{ch}_\Omc(q))(D)$. In
fact,
$$\#Q(D|_\Sbf)=\#q(\mn{ch}_\Omc(D|_\Sbf))=\#q(D|_\Sbf)=\#q(D)=\#\mn{core}(\mn{ch}_\Omc(q))(D).$$
The first equality is due to the universality of the chase. For the
second equality, recall that $D \in \class{D}^{\Sbf^\marked}_\Omc$ and
is thus of the form $D=\mn{ch}_\Omc(D')$ with $D'$ an
$\Sbf^\marked$-database. Since \Omc does not use the symbols from
$\Sbf^\marked \setminus \Sbf$, this implies the second equality.  The
third equality holds because $q$ does not use the symbols from
$\Sbf^\marked \setminus \Sbf$. And the final equality holds because
$D=\mn{ch}_\Omc(D')$ and thus any homomorphism from $q$ to $D$ is also
a homomorphism from $\mn{ch}_\Omc(q)$ to $D$.  Moreover, taking the
core produces an equivalent CQ.  %
    % to compute
    %   $\#p^{\marked}(\chase_{O}(D))$.
    %   This can be done because the class $\class{D} = \{ \chase_{O}(D) \mid D \text{ is an $\Sbf^\marked$\=/database}\}$,
    %   a class of chased $\Sbf^\marked$\=/databases, is closed under product, cloning, disjoint unions
    %   and we have that $\chase_{O}(D_{q}) \in \class{D}$.
    %   Finally, the algorithm returns the value $\#p^{\marked}(\chase_{O}(D))$.
    %
    %   The correctness of the algorithm stems from the fact that we can compute the chase in fpt and from the following equations.
    %   \[
    %   \#p^{\marked}(\chase_{O}(D)) = \#(\Omc,\Sbf^\marked,p^\marked)(\chase_{O}(D)) = \#(\Omc,\Sbf^\marked,p^\marked)(D)
    %   \]

\medskip

Now for the third fpt-reduction that we use in the proof of
Theorem~\ref{subsect:removeonto}. It facilitates that with the presence
of~markings it is possible to remove ontologies, in the following
sense.
\begin{lem}
%\begin{restatable}{lem}{lemrmont}
    \label{lemma:marked-CQ-to-plainCQ-redux}
    Let $\class{Q} \subseteq (\class{G} \cap \class{FULL},\class{CQ})$
    be a recursively enumerable class of {\color{\highlightColour}equality\=/free} %reduced
    OMQs with full schema and $\class{Q}^\marked$ their core-chased
    markings. %and relation symbols of bounded arity
    % that are core-chased markings.
    There exists a class
    $\class{C}\subseteq \class{CQ}$ of cores with the arities
    of
    relation symbols identical to those in $\class{Q}$ such that:
\begin{enumerate}

\item there is a Turing fpt\=/reduction from
  $\mn{AnswerCount}(\class{C})$ to
  $\mn{AnswerCount}(\class{Q}^\marked)$;

\item $\class{C}$ is based on the same Gaifman graphs as $\class{Q}^\marked$:
$\{ G_q \mid q \in \class{C} \} = \{ G_q \mid
(\Omc,\Sbf,q) \in \class{Q}^\marked \}$.

\end{enumerate}
%\end{restatable}
\end{lem}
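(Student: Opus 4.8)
\emph{Approach and choice of $\class{C}$.} The plan is to let $\class{C}$ consist exactly of the marked queries occurring in $\class{Q}^\marked$, read as plain CQs without their ontologies; that is, $\class{C} = \{\, p \mid (\Omc,\Sbf^\marked,p) \in \class{Q}^\marked \,\}$, where each such $p$ has the form $\core(\chase_\Omc(q))^\marked$. Two properties are then immediate. First, every $p \in \class{C}$ is a core: the marking atoms $R_x(x)$ use pairwise distinct fresh symbols that occur in no other atom, so any endomorphism of $p$ must fix each variable and hence be the identity. Second, the marking atoms are unary and therefore add no edges, so $G_p = G_{\core(\chase_\Omc(q))}$; this gives $\{G_p \mid p \in \class{C}\} = \{G_p \mid (\Omc,\Sbf,p)\in \class{Q}^\marked\}$, which is Point~(2), and since the fresh symbols are unary the maximal arity is not increased, as the statement requires.

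\emph{The reduction.} For Point~(1) I would, on input a plain CQ $p \in \class{C}$ and an $\Sbf^\marked$-database $D$, first recover by enumeration an OMQ $Q=(\Omc,\Sbf,q)\in\class{Q}$ whose core-chased marking is $(\Omc,\Sbf^\marked,p)$; this is effective since $\class{Q}$ is recursively enumerable and $\core(\chase_\Omc(q))^\marked$ is computable (the chase is finite by fullness). The available oracle for $\mn{AnswerCount}(\class{Q}^\marked)$ then computes, for any $\Sbf^\marked$-database $D'$, the value $\#p(\chase_\Omc(D'))$, so it suffices to construct $D'$ from $D$ with $\#p(\chase_\Omc(D')) = \#p(D)$ and return the oracle's answer. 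Two structural facts drive this: because $\Omc \subseteq \class{G}\cap\class{FULL}$ the chase introduces no new constants and only adds $\Sbf$-facts, and because the marking symbols $R_x$ do not occur in $\Omc$ they are neither produced nor destroyed by chasing. Hence every homomorphism from $p$ into any host maps each variable to a constant carrying the matching marking, so only constants that are marked in $D'$ (equivalently, in $D$) can occur in the image, and $\#p(\chase_\Omc(D'))$ depends only on the $\Sbf$-facts that $\chase_\Omc(D')$ places among those marked constants.

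\emph{Main obstacle: neutralising the chase.} The crux is to build $D'$ so that $\chase_\Omc(D')$ restricted to the marked constants reproduces exactly the $\Sbf$-structure $D$ carries on them, with no extra facts; I expect this to be the genuine difficulty. Taking $D' = D$ fails, because the chase may fire on facts of $D$ and manufacture spurious $\Sbf$-facts — and thus spurious answer tuples — among the marked constants (already visible for $\Omc = \{A(x)\to B(x)\}$ and $p = B(x)\wedge R_x(x)$, where the database $\{A(c),R_x(c)\}$ yields count $0$ but its chase yields count $1$). The plan is to obtain $D'$ by an adaptation of Chen and Mengel's marking construction, in the same spirit as the extensions to classes of databases underlying Theorem~\ref{thm:chenMengelStronger} and Lemma~\ref{lemma:CQ-to-marked-CQ-byChenMengel}: one works inside the class $\class{D}^{\Sbf^\marked}_{\Omc} = \{\chase_\Omc(D'') \mid D'' \text{ an } \Sbf^\marked\text{-database}\}$ and clones each marked constant so as to separate the part of $D$ that $p$ must read from the part that can trigger the ontology, with the intended effect that, by guardedness, every chase derivation stays inside a guarded set and lands on the cloned, unmarked region. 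Correctness would rely on the closure properties of $\class{D}^{\Sbf^\marked}_{\Omc}$ recorded in Lemma~\ref{lem:closure} — closure under cloning and direct product, and, since $\Omc$ is full, under induced subdatabases — together with the fact that $p=\core(\chase_\Omc(q))$ is itself chase-closed, so its atoms are already saturated under $\Omc$. Verifying that the construction indeed yields $\#p(\chase_\Omc(D'))=\#p(D)$ for arbitrary input $D$ is where the real work lies; once this identity is in place, the reduction runs within the fixed-parameter bounds, as $\chase_\Omc(D')$ is computable in fpt time by Lemma~\ref{lem:chasefpt} and the count is then obtained from a single oracle call.
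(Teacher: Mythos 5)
You have correctly located the crux --- the oracle only returns $\#p(\mn{ch}_\Omc(D'))$, so the chase must somehow be neutralised --- but that is exactly where your argument stops, and the fix you sketch does not work. Cloning, as defined in the paper, replicates \emph{every} fact involving a cloned constant onto each clone, so the chase fires on every copy exactly as on the original and reproduces every spurious fact there; in your own example with $\Omc=\{A(x)\to B(x)\}$ and $D=\{A(c),R_x(c)\}$, each clone $c^j$ carries both $A(c^j)$ and $R_x(c^j)$, the chase adds $B(c^j)$, and the spurious answer survives on every copy. Cloning plus a Vandermonde system is the right tool for a different problem --- separating answers by how many positions land in a distinguished set of constants, as in Lemma~\ref{lemma:partial-domain-answers} --- but chase-generated answers are not indexed by any such quantity (they can live entirely on original, marked constants), so no linear system of that kind isolates $\#p(D)$. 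The appeal to $\class{D}^{\Sbf^\marked}_\Omc$ and Lemma~\ref{lem:closure} is also beside the point here, since the OMQ oracle accepts arbitrary $\Sbf^\marked$-databases. With your choice of $\class{C}$ (the marked queries over $\Sbf^\marked$ itself), the reduction must compute $\#p(D)$ for an \emph{arbitrary} $\Sbf^\marked$-database while the ontology still speaks about that very schema, and no construction of $D'$ with $\#p(\mn{ch}_\Omc(D'))=\#p(D)$ is given or apparent.

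The paper escapes this by choosing a different $\class{C}$ altogether: for each OMQ it builds $q^\newschema$ over a \emph{fresh} schema containing one relation symbol $R_S$ of arity $|S|$ for each maximal guarded set $S$ of the (core-chased) query, so that $q^\newschema$ is self-join free, hence a core, and has the same Gaifman graph as $q^\marked$. The input to the reduction is then a database over this fresh schema, which the ontology does not mention; the reduction forms the direct product $P=D_{q^\newschema}\times D^\newschema$ and \emph{floods} it, adding every possible $\Sbf$-fact over every guarded set of $P$ together with the markings $R_x((x,a))$. Guardedness then gives $D^\marked=\mn{ch}_\Omc(D^\marked)$, so the oracle call degenerates to a plain CQ count, and the markings plus the product structure yield $\#q^\marked(D^\marked)=\#q^{\newschema,\marked}(P^\marked)=\#q^\newschema(D^\newschema)$. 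To complete your proof you would need to either adopt this schema-change-and-flooding device or supply a genuinely new mechanism for undoing the chase; as written, Point~(1) of the lemma is unproven for your choice of $\class{C}$.
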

We provide a proof of Lemma~\ref{lemma:marked-CQ-to-plainCQ-redux}
below. Before, however, we show how
Theorem~\ref{thm:OMQ-to-CQ} follows from
Lemmas~\ref{lemma:CQ-to-marked-CQ} and~\ref{lemma:marked-CQ-to-plainCQ-redux}.
\begin{proof}[Proof of Theorem~\ref{thm:OMQ-to-CQ}]
  Let $\class{Q} \subseteq (\class{G} \cap \class{FULL},\class{CQ})$
  be a recursively enumerable class of %reduced
  OMQs with full schema. % and relation symbols of bounded  arity.
  % , and let $\class{C}^\marked$ be the class of core-chased markings of
  % OMQs from $\class{C}$, that is, $\class{C}^\marked=\{ Q^\marked\mid Q \in
  % \class{C}\}.$% We observe the following.
%  %
%  \\[2mm]
%  {\bf Claim.} For every $Q=(\Omc,\Sbf,q) \in \class{C}$, the
%  syntactic treewidth of the CQ $\mn{core}(\mn{ch}_\Omc(q))^\marked$ in
%  $Q^\marked$ is no smaller than the semantic treewidth of $Q$, and likewise
%  for contract treewidth and starsize.
%  \\[2mm]
%  %
%  In fact, assume that the syntactic treewidth of
%  $\mn{core}(\mn{ch}_\Omc(q))^\marked$ is $k$.%  Then its syntactic treewidth
%  % is $k$ since $\mn{core}(\mn{ch}_\Omc(q))^\marked$ is a core.
%  Marking does
%  not affect the Gaifman graph and thus does not change treewidth.
%  Consequently, the treewidth of $\mn{core}(\mn{ch}_\Omc(q))$ is $k$.
%  Since $Q$ is equivalent to $(\Omc,\Sbf,\mn{core}(\mn{ch}_\Omc(q)))$,
%  the semantic treewidth of $Q$ is at most $k$. We can argue in the
%  same way for contract treewidth and starsize.
%% , see
%%   Lemma~\ref{lem:coresaregreat}.
%
  From Lemma~\ref{lemma:marked-CQ-to-plainCQ-redux}, we obtain a class
  $\class{C}$ of CQs that are cores and are based on the same Gaifman
  graphs as $\class{Q}^\marked$. This is the class whose existence is
  postulated by Theorem~\ref{thm:OMQ-to-CQ}. We argue that Points~(1)
  and~(2) of that theorem are satisfied.  The Turing fpt-reduction
  required by Point~(1) is the composition of~the reductions asserted
  by~Lemmas~\ref{lemma:marked-CQ-to-plainCQ-redux},
  \ref{lemma:CQ-to-marked-CQ}, and~\ref{lem:removeequality}.
  Point~(2) is a consequence of the facts that $\class{C}$ is based on
  the same Gaifman graphs as $\class{Q}^\marked$ and neither does
  marking a CQ affect its Gaifman graph nor does the transition from a
  CQ $q$ to $\tilde q$. To see the latter, recall that the same
  variable identifications that take place when constructing
  $\tilde q$ from $q$ are also part of the definition of the Gaifman
  graph $D_q$ of $q$.
\end{proof}

%\section{Proof of Lemma~\ref{lemma:marked-CQ-to-plainCQ-redux}}

Now for the announced proof of
Lemma~\ref{lemma:marked-CQ-to-plainCQ-redux}, a key
ingredient to the proof of Theorem~\ref{thm:main-results-OMQs}.
\begin{proof}[Proof of Lemma~\ref{lemma:marked-CQ-to-plainCQ-redux}]
  To prove the lemma, we define the required class of CQs $\class{C}$ and describe
an~fpt algorithm that
%    \begin{itemize} %\itemsep=0pt
%    \item
takes as an input a~query $q \in \class{C}$
          over schema $\Sbf$ and an~$\Sbf$\=/database~$D$,
          % \item
          has access to an oracle for
          $\mn{AnswerCount}(\class{Q}^\marked)$, and
          % \item
          outputs $\#q(D)$.
    %\end{itemize}
    %
    Every $Q=(\Omc,\Sbf,q) \in \class{Q}$ gives rise to a CQ
    $q^\newschema$ in $\class{C}$ that is formulated in a schema different
    from \Sbf (whence the superscript `$\newschema$').  To define
    $q^\newschema$, fix a total order on $\mn{var}(q)$. For
    every guarded set $S$ in $D_q$, let $\overline{S}$ be the tuple
    that contains the variables in $S$ in the fixed order.  Now
    $q^\newschema$ contains, for every maximal guarded set $S$ in
    $D_q$, the atom $R_S(\overline{S})$ where $R_S$ is a fresh
    relation symbol of arity $|S|$.
    Note that $q^\newschema$ is self-join free, that is, it contains no
    two distinct atoms that use the same relation symbol. It is thus
    a core. Moreover, the Gaifman graph of $q^\newschema$ is
    identical to that of $q^\marked$ since the maximal guarded sets of
    $D_{q^\marked}$ are exactly those of
    $D_{q^\newschema}$.
    An example of a transformation from~$q$ to  $q^\newschema$ can be found in Figure~\ref{fig:transf-q-to-q-sharp}.% Let $\Sbf^\sharp$ denote the
    % schema if $q^\sharp$.
    % The schema of
    % $q^\sharp$ is \Sbf, that is, $\Sbf^\marked$ without the unary symbols
    % that have been introduced when producing the core-chased marking
    % $Q$ from an original OMQ.
    %
    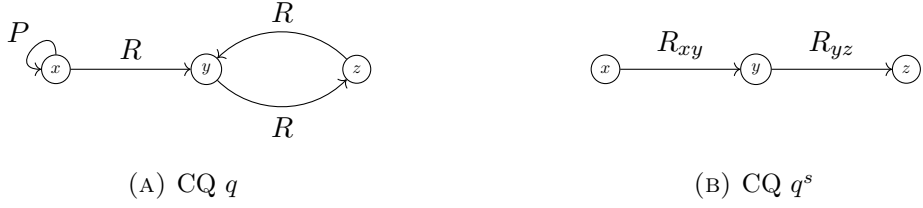
\begin{figure}
    \tikzstyle{quantified}=[circle, draw, scale=.6]
    \tikzstyle{answer} = [circle, fill, scale=.6]
    \centering
    \begin{subfigure}[b]{0.49\textwidth}
        \centering
        %        \resizebox{\linewidth}{!}{
        \begin{tikzpicture}%[scale=1.]
        
        \node (rt) at (2,1) {};
        \node (lb) at (-2,-1) {};
        
        \node[quantified] (x) at (-2,0) {$x$};
        \node[quantified] (y) at (0,0) {$y$};
        \node[quantified] (z) at (2,0) {$z$};
        \node (t) at (-2.5,0.5) {$P$};
        
        \draw [->] (x) to [out=90,in=180,loop,looseness=4.8] (x);
        
        \draw[->] (x) -- (y) node[midway,above] {$R$};
        \draw[->] (y) to [bend right=45] node[midway,below] {$R$} (z) ;
        \draw[->] (z) to [bend right=45]  node[midway,above] {$R$} (y);
        
        \end{tikzpicture}
        \caption{CQ $q$}
        %    }
    \end{subfigure}
    %%%%%%%%%%%%%%%%%%%%%%%%%%%%%%%%%%%%%%%%%%%%%%%%%%%%%%%%%%%%%%%%%%%%
    \begin{subfigure}[b]{0.49\textwidth}
        \centering
        %        \resizebox{\linewidth}{!}{
        \begin{tikzpicture}%[scale=1.]
        
        \node (rt) at (2,1) {};
        \node (lb) at (-2,-1) {};
        
        \node[quantified] (x) at (-2,0) {$x$};
        \node[quantified] (y) at (0,0) {$y$};
        \node[quantified] (z) at (2,0) {$z$};

        \draw[->] (x) -- (y) node[midway,above] {$R_{xy}$};
        \draw[->] (y) -- (z) node[midway,above] {$R_{yz}$};

        \end{tikzpicture}
        \caption{CQ $q^{\newschema}$}
        %    }
    \end{subfigure}
    \caption{A CQ $q$ and its self-join free counterpart
        $q^{\newschema}$.}
    \label{fig:transf-q-to-q-sharp}
    %    \begin{subfigure}[b]{0.24\textwidth}
    %        \centering
    %%        \resizebox{\linewidth}{!}{
    %        \begin{tikzpicture}%[scale=1.]
    %        
    %        
    %        \node[quantified] (z1) at (-1,0);
    %        \end{tikzpicture}
    %        \caption{CQ $p_2$}
    %%    }
    %    \end{subfigure}
\end{figure}

    This defines the class of CQs $\class{C}$.

    We now describe the algorithm. Let a CQ
    $q^{\newschema} \in \class{C}$ over schema $\Sbf^\newschema$ and
    an $\Sbf^{\newschema}$\=/database $D^\newschema$ be given as
    input.  To compute $\#q^{\newschema}(D^\newschema)$, we first
    enumerate $\class{Q}$ to find an OMQ $Q=(\Omc,\Sbf,q)$ such that
    $q^\newschema$ can be obtained from $q$ as described above.

    Construct the $\Sbf^\newschema$-database
    $P=D_{q^\newschema} \times D^\newschema$ and then from $P$ the
    $\Sbf^\marked$-database
    $$
    \begin{array}{r@{\;}c@{\;}l}
    D^\marked &=& \{ R(\bar a) \mid
                  \bar a \text{ tuple over some guarded set $S$ in } P
                  \text{ and }
                   R \in \Sbf \text{ of arity } |\bar a|
    \}\ \cup \\[1mm]
    && \{ R_x((x,a)) \mid x \in \mn{var}(q^\marked) \text{ and } (x,a) \in
    \mn{adom}(P) \}
    \end{array}
    $$
    where a tuple is \emph{over set $S$} if it contains only constants
    from $S$, in any order and possibly with repetitions. Intuitively,
    the first line `floods' the database with facts without creating
    fresh guarded sets, by adding all possible facts that use a
    relation symbol from \Sbf and only constants from some guarded set
    in $P$. % We can thus not add further \Sbf-facts
    % without introducing additional guarded sets.
%    Also note that the maximal guarded sets of $D^\marked$ are
 %   exactly those of $P$.
    As a consequence and since \Omc is a set of guarded TGDs,
    $D^\marked= \mn{ch}_\Omc(D^\marked)$.  The relations $R_x$ used in
    the second line are the marking relations from $\Sbf^\marked$.

    Clearly, the databases $P$ and $D^\marked$ can be constructed within
    the time requirements of FPT and we
    can use the oracle to compute $\#Q(D^\marked)$. Let
    $q^{\newschema,\marked}$ be obtained from $q^\newschema$ by adding
    $R_x(x)$ for every $x \in \mn{var}(q^\newschema)$ and let $P^\marked$
    be obtained from $P$ by adding $R_x(x,a)$ for every
    $a \in \mn{adom}(D^\newschema)$.  To end the proof, it suffices to
    show that
    \[
      \#Q(D^\marked) = \#q^\marked(D^\marked)=\#q^{\newschema,\marked}(P^\marked) =
      \#q^\newschema(D^\newschema).
    \]
    The above equalities, as well as the construction of the involved
    databases and queries, are illustrated in
    Figure~\ref{fig:marked-to-queries}. The figure also shows some
    homomorphisms used in the remaining proof.

    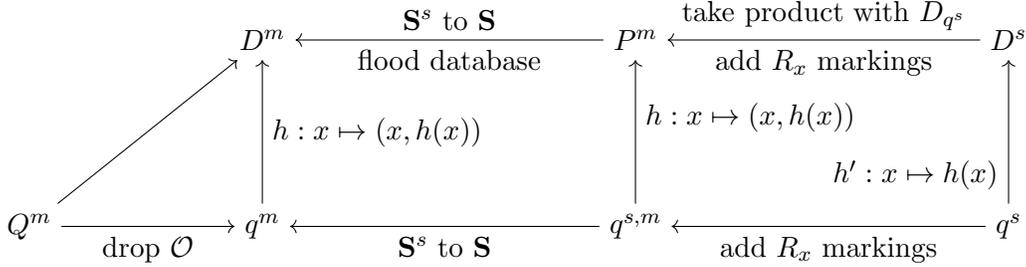
\begin{figure}
    %\begin{wrapfigure}{r}{0.45\textwidth}
    \centering
    \newcommand{\spod}{-6.5}
    \begin{tikzpicture}[scale=0.62]
    
    \node (mD) at (-8,0) {$D^\marked$};
    \node (qD) at (0,0) {$P^\marked$};
    \node (D) at (8,0) {$D^{\newschema}$};
    
    \node (omq) at (-13,-4) {$Q^{\marked}$};
    \node (mQ) at (-8,-4) {$q^{\marked}$};
    \node (q) at (0,-4) {$q^{\newschema,\marked}$};
    \node (sjfQ) at (8, -4) {$q^{\newschema}$};

    \draw[<-] (mD) -- (qD) node[midway,below] {flood database}
    node[midway,above] {$\Sbf^\newschema$ to \Sbf};
    \draw[<-] (qD)-- (D) node[midway,above]{take product with $D_{q^{\newschema}}$} node[midway,below] {add $R_x$ markings};
    \draw[->] (omq) -- (mQ) node[midway,below] {drop \Omc %$D^\marked=
      % \mn{ch}_\Omc(D^\marked)$
    };
    \draw[<-] (mQ) -- (q) node[midway, below]{$\Sbf^\newschema$ to $\Sbf$}%drop non-maximal
    % guarded sets}
    node[midway, above]{};
    \draw[<-] (q)-- (sjfQ) node[midway, below]{add $R_x$ markings};

    \draw[->] (omq) -- (mD);
    \draw[->] (mQ) -- (mD) node[midway,right] {$h:x \mapsto  (x,h(x))$};
    \draw[->] (q) -- (qD) node[right, pos=0.6] {$h:x \mapsto (x,h(x))$};
    \draw[->] (sjfQ) -- (D) node[left, pos=0.2] {$h':x \mapsto h(x)$};

    \end{tikzpicture}
    \caption{Proof strategy of Lemma~\ref{lemma:marked-CQ-to-plainCQ-redux}.}
    \label{fig:marked-to-queries}
    %\end{wrapfigure}
\end{figure}

    % \medskip
    %
    % To prove the crucial equality $\#q^{\sharp}(D) =
    % \#(\Omc,\Sbf,q^\marked)(D^\marked)$, first observe that
    % $D^{\marked}$ is chased.  Indeed, for every rule of form
    % $\psi(\bar{x},\bar{y}) \to R(\bar{x})$, we have a~guard
    % $G(\bar{x},\bar{y})$.  Therefore for every tuple $\bar{a}$ such
    % that there is a tuple $\bar{b}$ such that $\psi(\bar{a},\bar{b})$
    % holds, we have that $G(\bar{a} r\cdot \bar{b}) \in D^{\marked}$
    % and, by Equation~\eqref{eq:marked-to-sjf-interpretation}, we get
    % that $R(\bar{a}) \in D^{\marked}$.

    The first equality is immediate since
    $D^\marked= \mn{ch}_\Omc(D^\marked)$. For the third equality, let
    $\bar x = x_1 \cdots x_n$ be the answer variables in
    $q^\newschema$ and for any
    $\bar a =a_1 \cdots a_n \in \mn{adom}(D^\newschema)^n$, let
    $\bar x \times \bar a$ denote the tuple
    $(x_1,a_1)\cdots(x_n,a_n) \in \mn{adom}(P)^n$. Then
    $q^{\newschema,\marked}(P^\marked)=\{\bar x\times \bar a \mid \bar
    a \in q(D^\newschema)\}$.
%    $q^\marked(P)=\{\bar x \times \bar a \mid \bar a \in
%    q(D^\newschema)\}$
%    where $\bar x$ are the answer variables of $q^\newschema$.
%    This
    In fact, if $h$ is a homomorphism from $q^{\newschema,\marked}$ to
    $P^\marked$ and $x \in \mn{var}(q^{\newschema,\marked})$, then
    $h(x) \in \{ x \} \times \mn{adom}(P^\marked)$ due to the use of
    the marking relation $R_{x}$ in $q^{\newschema,\marked}$ and in
    $P^\marked$. Moreover, every such homomorphism $h$ gives rise to a
    homomorphism $h'$ from $q^\newschema$ to $D^\newschema$ by setting
    $h'(x)=c$ if $h(x)=(x,c)$, for all $x \in \mn{var}(q^\newschema)$.
    Conversely, every homomorphism $h$ from $q^\newschema$ to
    $D^\newschema$ gives rise to a homomorphism $h'$ from
    $q^{\newschema,\marked}$ to $P^\marked$ by setting
    $h'(x)=(x,h(x))$ for all $x \in \mn{var}(q^{\newschema,\marked})$.
%
%     this follows from $P$ being the product of $D_{q^\newschema}$ and
% $D^\newschema$ and from how the relation symbols $R_x$ are used in
% $q^{\newschema,\marked}$ and $P^\marked$.

    % using standard properties of products and the fact that any
    % homomorphism $h$ from $q^{\sharp,\marked}$ to $P$ must satisfy
    % $h(x) \in \{ x \} \times \mn{adom}(D^\sharp)$ due to the atom
    % $R_x(x) \in q^{\sharp,\marked}$, for all
    % $x \in \mn{var}(q^\sharp)$.

It thus remains to deal with the second equality by showing that
$q^\marked(D^\marked)=q^{\newschema,\marked}(P^\marked)$.  It is enough to observe
that any function \fun{h}{\mn{var}(q^\marked)}{\dom({D^{\marked}})} is
a~homomorphism from $q^\marked$ to $D^{\marked}$ if and only if it is
a homomorphism from $q^{\newschema,\marked}$ to~$P^\marked$.

For the ``if'' direction, let $h$ be a~homomorphism from
$q^{\newschema,\marked}$ to $P^\marked$. First let $R(\bar y)$ be an atom
in $q^\marked$ with $R \in \Sbf$. There is a maximal guarded set $S$
of $D_{q^\marked}$ that contains all variables in $\bar y$. Then
$R_S(\overline{S})$ is an atom in $q^\newschema$ and thus
$R_S(h(\overline{S}))\in P$. By construction of $D^\marked$ and since
$\bar y$ is a tuple over $S$, this yields
$R(h(\bar y)) \in D^\marked$, as required. Now let $R_x(x)$ be an atom
in $q^\marked$. Then $R_x(x)$ is also an atom in $q^{\newschema,\marked}$ and
thus $h(x) \in \{ x \} \times \mn{adom}(D^\newschema)$ due to the
definition of $P^\marked$. But then $R_x(h(x)) \in D^\marked$ by definition
of $D^\marked$.

For the ``only if'' direction, let $h$ be a~homomorphism from
$q^\marked$ to $D^{\marked}$. First consider atoms $R_S(\overline{S})$
in $q^{\newschema,\marked}$.  Then $q^\marked$ contains an atom
$R(\bar y)$ where $\bar y$ contains exactly the variables in $S$ and
thus $R(h(\bar y)) \in D^\marked$.  By construction of $D^\marked$,
$h(\bar y)$ is thus a tuple over some guarded set in $P$, that is, $P$
contains an atom $Q(\bar a)$ where $\bar a$ contains all constants
from $h(\bar y)$. In the following, we show that $Q(\bar a)$ must in
fact be $R_S(h(\overline{S}))$, as required.

Let $\bar a = (z_1,c_1),\dots,(z_n,c_n)$ and $\bar z=z_1, \dots, z_n$.
By construction of $P$ as $D_{q^\newschema} \times D^\newschema$,
$Q(\bar a) \in P$ implies that $q^\newschema$ contains an atom
$Q(\bar z)$. It suffices to show that $\bar z$ contains all variables
from $S$: since the construction of $q^\newschema$ uses as $S$ only
\emph{maximal} guarded sets, the only such atom in $q^\newschema$ is
$R_S(\overline{S})$. By construction of $P$, we must thus have
$Q(\bar a)=R_S(h(\overline{S}))$.

Let $V$ be the variables in $\bar z$. Since
$Q(\bar z) \in q^\newschema$, $V$ is a guarded set in $q^\newschema$.
Now note that we must have
$h(y) \in \{ y \} \times \mn{adom}(D^\newschema)$ for every variable
$y$ in $\bar y$ due to the use of the relation symbol $R_y$ in
$q^\marked$ and $D^\marked$. Since $\bar a$ contains all constants
from $h(\bar y)$, every variable from $\bar y$ occurs in
$V$. Moreover, these are exactly the variables in $S$ and thus
$S \subseteq V$.
%
% By definition of $V$, it follows that
% $\bar a$ uses all constants from $h(\bar y)$ in the second components
% of constants/pairs.  Also, the only atom that uses all variables from
% $S$ in $q^\newschema$ is $R_S(\overline{S})$ and consequently
% $Q(\bar a)$ must be $R_S(h(\overline{S}))$. It remains to deal with
% atoms $R_x(x)$ in $q^{\newschema,\marked}$, which is straightforward
% as in the ``if'' direction.
\end{proof}

\section{Approximation and FPTRASes}
\label{sect:approx-fpras}

In many applications of answer counting, it suffices to produce a good
approximation of the exact count. For CQs without ontologies,
significant progress on approximate answer counting has recently been
made by Arenas et al.\ \cite{counting-cq-approx}, see also
\cite{DBLP:journals/corr/abs-2103-12468} for follow-up work.  We
observe some important consequences for approximately counting the
number of answers to ontology-mediated queries.

A \emph{randomized approximation scheme} for a counting problem
$P: \Lambda^* \rightarrow \mathbb{N}$ is a randomized
algorithm that takes as input a word $w \in \Lambda^*$ and an
approximation factor $\epsilon \in (0,1)$ and outputs a value
$v \in \mathbb{N}$ such that
$$
  \mn{Pr}(|P(w)-v| \leq \epsilon \cdot P(w)) \geq \frac{3}{4}.
$$
A \emph{fixed-parameter tractable randomized approximation scheme
  (FPTRAS)} for a parameterized counting problem $(P,\kappa)$ over
alphabet $\Lambda$ is a randomized approximation scheme for the
counting problem $P$ with running time at most
$f(\kappa(w)) \cdot p(|w|,\frac{1}{\epsilon})$ for some computable
function $f$ and polynomial $p$. The results proved in
\cite{counting-cq-approx} imply the following.
%
% For a class $\class{Q}$ of OMQs from $(\class{G},\class{UCQ})$, define
% a class of CQs
% $\class{Q}^\circ = \{ \mn{core}(q )\mid \exists Q \in \class{Q}:
% (\Omc^\exists,\Sbf,q) \in Q^\exists\}$.
An OMQ $Q \in (\class{G},\class{UCQ})$ has \emph{semantic treewidth at
  most} $k \geq 1$ if there is an OMQ $Q' \in (\class{G},\class{UCQ})$
such that $Q \equiv Q'$ and $Q'$ has treewidth at most $k$.
\begin{thm}
  \label{thm:approx}
  Let $\class{Q} \subseteq (\class{G},\class{UCQ})$. If $\class{Q}$
  has bounded semantic treewidth, then there is an FPTRAS for
  $\mn{AnswerCount}(\class{Q})$.
\end{thm}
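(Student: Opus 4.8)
The plan is to reduce, in fixed-parameter tractable time, the approximate evaluation of $\#Q(D)$ to ordinary approximate answer counting for a plain (U)CQ of bounded treewidth over a polynomially larger database, and then to invoke the \fpras of Arenas et al. Fix the constant $k$ bounding the semantic treewidth of $\class{Q}$. Given $Q=(\Omc,\Sbf,q)\in\class{Q}$, I would first enumerate all OMQs $Q'=(\Omc',\Sbf,q')\in(\class{G},\class{UCQ})$ whose UCQ $q'$ has treewidth at most $k$ and test $Q\equiv Q'$ for each; since equivalence in $(\class{G},\class{UCQ})$ is decidable \cite{BaBP18} and a witness of semantic treewidth at most $k$ exists, the search halts after time bounded in terms of $\|Q\|$ and $k$ only, and returns $Q'$ with $\#Q(D)=\#Q'(D)$. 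Note that no full-schema assumption and no recursive enumerability of $\class{Q}$ is needed here: decidability of equivalence together with knowledge of the class constant $k$ suffices. Applying Theorem~\ref{thm:finmod} with $n$ the number of variables of $q'$, I would then build in time $f(\|\Omc'\|+n)\cdot\|D\|^{O(1)}$ a finite $D^*\supseteq D$ with $D^*\models\Omc'$ and $\bar a\in q'(D^*)\Leftrightarrow\bar a\in Q'(D)$ for all tuples $\bar a$ over $\mn{adom}(D)$. Because $D^*$ is already a model of $\Omc'$, evaluating $q'$ on $D^*$ under plain (U)CQ semantics computes $Q'(D)$ up to the restriction to $\mn{adom}(D)$, so the ontology has effectively been removed.

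Next I would make the restriction to $\mn{adom}(D)$ explicit by marking. Introduce a fresh unary symbol $M$, put $D^*_M=D^*\cup\{M(c)\mid c\in\mn{adom}(D)\}$, and let $q'_M$ arise from $q'$ by conjoining $M(x)$ to each disjunct for every answer variable $x$. Marking only adds unary atoms on existing variables, hence $G_{q'_M}=G_{q'}$ and $q'_M$ retains treewidth at most $k$, while the $M$-atoms force every answer into $\mn{adom}(D)$; thus $\#q'_M(D^*_M)=\#Q(D)$. Any equality atoms on answer variables that $q'$ may carry are eliminated disjunct-wise by passing to $\tilde p_i$ (identifying forced-equal answer variables); this leaves the Gaifman graph, hence the treewidth bound, unchanged, and the coordinate-duplication bijection it induces preserves both answer counts and uniform distributions, so I may assume each disjunct is equality-free.

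Finally I would apply the \emph{upper-bound} direction of the Arenas et al.\ result, which, for each fixed $k$, furnishes an \fpras for counting the answers of any CQ of treewidth at most $k$. This direction is purely algorithmic and uses none of the lower-bound side conditions of the full characterisation (neither $\wOne\neq\mn{FPT}$ nor the self-join-free/same-hypergraph requirement), so it applies to $q'_M$ even though the repeated symbol $M$ creates self-joins. If $q'$ is a CQ, running the \fpras on $(q'_M,D^*_M)$ at error $\epsilon$ directly approximates $\#Q(D)$. If $q'$ is a proper UCQ $p_1\vee\cdots\vee p_m$, I would estimate $\big|\bigcup_i p_{i,M}(D^*_M)\big|$ by a coverage estimator in the style of Karp and Luby: for each disjunct I use the \fpras to approximate $\#p_{i,M}(D^*_M)$, an almost-uniform sampler for $p_{i,M}(D^*_M)$, and polynomial-time CQ evaluation to test membership in $p_{j,M}(D^*_M)$. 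The sampler is extracted from the \fpras by the standard self-reducibility argument, instantiating answer variables one at a time; the key point is that fixing an answer variable to a constant $c$ can be realised by adjoining a fresh unary atom true only at $c$ and reclassifying that variable as quantified, which keeps the treewidth at most $k$ so that the \fpras stays applicable throughout the recursion. Since $|D^*_M|\le f(\|Q\|)\cdot\|D\|^{O(1)}$ and $\|q'_M\|$ depends only on $Q$, every step runs in time $f(\|Q\|)\cdot\mn{poly}(\|D\|,\tfrac{1}{\epsilon})$, matching the \fptras bound.

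I expect the passage from CQs to UCQs to be the main obstacle. Inclusion--exclusion is unavailable for approximate counting, since both the signed cancellations and the unbounded treewidth of the conjunctions $\bigwedge_{i\in I}p_i$ are exactly what force exact counting through the intricate class $\class{Q}^\star$; the union must therefore be handled by coverage and sampling. Consequently the two points requiring care are (i) upgrading the \fpras to an almost-uniform sampler while keeping the treewidth bound stable under variable fixing, and (ii) confirming that the upper-bound direction of Arenas et al.\ genuinely dispenses with the self-join-free and complexity-theoretic hypotheses of their dichotomy. The remaining ingredients---the equivalence search, the finite model of Theorem~\ref{thm:finmod}, and the marking---are routine.
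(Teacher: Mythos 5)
Your proof follows the paper's argument almost exactly: enumerate $(\class{G},\class{UCQ})$ until an equivalent bounded\-/treewidth OMQ $Q'$ is found, apply Theorem~\ref{thm:finmod} to obtain a finite model $D^*$, force answers into $\mn{adom}(D)$ by marking answer variables with a fresh unary relation (which leaves the Gaifman graph, hence the treewidth, unchanged), and then hand the resulting pair to the approximation result of Arenas et al. The only divergence is your final step. You treat the passage from CQs to UCQs as the main obstacle and rebuild it yourself via a Karp--Luby coverage estimator, extracting an almost-uniform sampler from the \fpras by self-reducibility. The paper avoids this entirely by citing Proposition~3.5 of \cite{counting-cq-approx}, which already provides an \fpras for $\mn{AnswerCount}(\class{UCQ}_k)$, i.e.\ for \emph{unions} of CQs of treewidth at most $k$, not just for single CQs; so the union is handled at the source and no sampler construction is needed. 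Your detour is sound in principle --- it is essentially how the UCQ case is obtained from the CQ case in the literature, inclusion--exclusion being rightly rejected for approximate counting, and Arenas et al.\ do supply the almost-uniform samplers your argument needs --- but it is avoidable, and your sampler sketch (fixing answer variables one at a time via fresh unary atoms) is the one place where the details are left at the level of a plausibility argument rather than a proof. You are also right on the two side points you flag: the upper-bound direction of Arenas et al.\ does not need the self-join-freeness or complexity-theoretic hypotheses, and neither recursive enumerability of $\class{Q}$ nor the full-schema assumption is used, exactly as in the paper.
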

%
% We sketch the proof of Theorem~\ref{thm:approx}.
\begin{proof}
  Let $\class{Q} \subseteq (\class{G},\class{UCQ})$ and let $k \geq 1$
  be an upper bound on the semantic treewidth of OMQs from
  $\class{Q}$.  The FPTRAS for $\class{Q}$ works as follows.  Assume
  that an OMQ $Q(\bar x) =(\Omc,\Sbf,q) \in \class{Q}$, an
  \Sbf-database $D$, and an $\epsilon \in (0,1)$ are given as
  input. We enumerate $(\class{G},\class{UCQ})$ until we find an OMQ
  $Q'(\bar x)=(\Omc',\Sbf,q')$ such that $Q' \equiv Q$ and $Q'$ is of
  treewidth at most~$k$.  By Theorem~\ref{thm:finmod}, we can compute
  in time $f(||Q'||) \cdot p(||D||)$ a database $D^* \supseteq D$ such
  that $Q'(D)=q'(D^*) \cap \mn{adom}(D)^{|\bar{x}|}$. To get rid of the
  intersection with $\mn{adom}(D)$, let
  $\widehat Q'(\bar x)=(\Omc',\widehat\Sbf,\widehat q')$ be obtained
  from $Q'(\bar x)$ by setting $\widehat \Sbf=\Sbf \cup \{A\}$ where
  $A$ is a fresh unary relation and constructing $\widehat q'$ from
  $q'$ by adding $A(x)$ for every variable $x$ in $\bar x$. Note that
  the treewidth of $\widehat Q'$ is still at most $k$. Let
  $\widehat D^*$ be obtained from $D^*$ by adding $A(c)$ for all
  $c \in \mn{adom}(D)$, that is, $A$ marks the constants that are
  already in $D$, but not those that have been freshly introduced when
  constructing $D^*$. It is clear that
  $Q'(D)=\widehat q'(\widehat D^*)$.  Now, let $\class{UCQ}_k$ be the
  class of UCQs that have treewidth at most $k$.
  By Propoposition~3.5 of \cite{counting-cq-approx}, there is an FPRAS
  for $\mn{AnswerCount}(\class{UCQ}_k)$, where an FPRAS is defined
  like an FPTRAS except that the running time may be at most
  $p(|w|,\frac{1}{\epsilon})$.  We use this FPRAS to compute an
  approximation of $\#\widehat q'(\widehat D^*)$ and return the
  result. Overall, this yields the desired FPTRAS for~$\class{Q}$.
\end{proof}
It is interesting to note the contrast between
Theorem~\ref{thm:approx} and Point~1 of
Theorem~\ref{thm:main-results-OMQs}: the latter refers to the
treewidth and contract treewidth of the class of CQs $\class{Q}^\ast$,
which is defined in a non-trivial way, while Theorem~\ref{thm:approx}
simply speaks about the semantic treewidth of the OMQs in $\class{Q}$
and is thus in line with the characterizations of efficient OMQ
\emph{evaluation} given in \cite{BFLP19,BDFLP-PODS20}. In fact, the
classes of OMQs covered by Theorem~\ref{thm:approx} are precisely
those subclasses of $(\class{G},\class{UCQ})$ for which evaluation is
in FPT \cite{BDFLP-PODS20}, paralleling the situation for CQs without
ontologies. Informally, exact counting and approximate counting
differ in how the CQs inside a UCQ interact (and we cannot avoid
UCQs when we eliminate existential quantifiers from ontologies).
In exact counting, the Chen-Mengel closure captures this interaction,
demonstrating that answer counting for a UCQ may enable answer
counting for CQs whose structural measures are higher than that
of any CQ in the UCQ. In approximate counting, such effects do
not seem to play a role. Also note that Theorem~\ref{thm:approx}
does not rely on the data schema to be full, unlike the upper bounds
in Theorem~\ref{thm:main-results-OMQs}.

It may well be the case that a matching lower bound can be proved for
Theorem~\ref{thm:approx} under the assumptions that
$\wOne \neq \text{FPT}$ and $\text{B}=\text{BPP}$, that is, if
$\class{Q} \subseteq (\class{G},\class{UCQ})$ does not have bounded
semantic treewidth, then there is no FPTRAS for
$\mn{AnswerCount}(\class{Q})$ unless one of the mentioned assumptions
fails. This was proved in \cite{counting-cq-approx} for classes
$\class{Q}$ of CQs (without ontologies) under the additional
assumption that for every $q \in \class{Q}$, there is a self-join free
$q' \in \class{Q}$ that has the same hypergraph as $q$. It is
currently only known that this assumption can be dropped when all OMQs
in $\class{Q}$ are Boolean and when none of the OMQs in $\class{Q}$
contains quantified variables.  We conjecture that it is possible to
lift these restricted cases from pure CQs to OMQs from
$(\class{G},\class{UCQ})$, building on results from
\cite{BDFLP-PODS20}. The general case, however, remains open.

\section{The Meta Problems---Equivalent Queries with Small Measures}
\label{sect:approx}

Theorems~\ref{thm:main-results-OMQs} and~\ref{thm:closedmain} show
that low values for the structural measures of treewidth, contract
treewidth, starsize, and linked matching number are central to
efficient answer counting. This suggests the importance of the meta
problem to decide whether a given query is equivalent to one in
which some selected structural measures are small, and to construct
the latter query if it exists. We present some
%first
 results on this topic both for ontology-mediated
 querying and for querying under constraints. These results and their
 proofs also
%
% consider several kinds
% of approximations of OMQs, as well as of UCQs in the presence of
% integrity constraints. The approximations are from below, that is,
% approximating queries are contained in the approximated query, but not
% necessarily equivalent to it. In addition, they have low values for
% selected structural measures. The
shed
some more light on the interplay between the ontology and the structural
measures.

% and give rise to decidability results for certain meta
% problems.
%
% the complexity of answer counting in the
% presence of guarded TGDs, both in the open and in the closed world
% case. As further Example~\ref{ex:lowering-measures} shows that the ontology/set
% of integrity constraints interacts with these measures. The aim of
% this section is to analyze this further. We define the treewidth of an
% OMQ $(\Omc,\Sbf,q)$ to be that of $q$, and likewise for the other
% structural measures. We are then interested in the following
% questions. How to decide whether a given OMQ $Q'$ is equivalent to an
% OMQ $Q'$ in which a selected structural measure takes a small value,
% and how to compute such a $Q'$? Is it ever necessary to modify the
% ontology when constructing $Q'$ from $Q$? And in the context of a CQS
% $(\Tmc,\Sbf,q)$, how can we decide whether there is an OMQ $q'$ that
% is equivalent to $q$ on databases that satisfy \Tmc and how to compute
% it? Note that it does not make sense to change the set of constraints
% \Tmc used in the CQS as this changes the promise.  Questions of this
% kind have also been studied, for example, in
% \cite{PabloEarly,LICS,PODS,GeorgEtAl}, see also the references
% therein. We remark that the notion of treewidth used in
% \cite{LICS,PODS} is not the same one as in this paper as it disregards
% the answer variables.

\subsection{Querying Under Constraints}
\label{meta:cqs}

We start with querying under constraints, considering all measures in
parallel. In fact, we even consider sets of measures since some of the
statements in Theorems~\ref{thm:main-results-OMQs}
and~\ref{thm:closedmain} refer to multiple measures and it is not a
priori clear whether the fact that each measure from a certain set of
measures can be made small in an equivalent query implies that the
same is true for all measures from the set simultaneously.

Our approach is as follows.  For a~given CQS $(\Tmc,\Sbf,q)$, we
construct a certain CQ $q'$ that approximates $q$ from below under~the
constraints in \Tmc and that has small measures. Similar
approximations have been considered for instance in
\cite{DBLP:journals/siamcomp/BarceloL014}, without constraints. We
then show that if there is any CQ $q''$ that has small measures and is
equivalent to $q$ under the constraints in~\Tmc, then $q'$ is
equivalent to~$q$.  In this way, we are able to simultaneously solve
the decision and computation version of the meta problem at hand. With
`approximation from below', we mean that the answers to $q'$ are
contained in those to $q$ on all \Sbf-databases. This should not be
confused with computing an approximation of the \emph{number} of
answers to a given query as considered in Section~\ref{sect:approx-fpras}.
%
%, which is a very interesting but entirely
%different topic.

A \emph{set of measures} is a subset $M \subseteq \{ \text{TW},
\text{CTW}, \text{SS}, \text{LMN} \}$ with the obvious meaning. For
a set of measures $M$ and $k \geq 1$, we say that a UCQ $q$ is an
\emph{$M_k$-query} if for every CQ in~$q$, every measure from $M$ is
at most $k$. If \Tmc is a finite set of TGDs from
$(\class{G},\class{UCQ})$ over schema \Sbf and $q_1(\bar x), q_2(\bar
x)$ are UCQs over \Sbf, then we say that $q_1$ is \emph{contained in
  $q_2$ under~\Tmc}, written $q_1 \subseteq_\Tmc q_2$, if $q_1(D)
\subseteq q_2(D)$ for every \Sbf-database $D$ that is a model of \Tmc,
and likewise for equivalence and $q_1 \equiv_\Tmc q_2$.
%
% abbreviate treewidth with TW, contract treewidth with
% CTW, starsize with SS, and linked matching number with LMN.
%
%\cMP{Can we use M(q), SS(q), or TW(q) for the respective measure's value?}
%
\begin{defi}
\label{def:closedapprox}
  Let $(\Tmc,\Sbf,q) \in (\class{G},\class{UCQ})$ be a CQS, $M$ a~set
  of measures, and
%  \subseteq \{ \text{TW}, \text{CTW}, \text{SS}, \text{LMN} \}$, and
  $k \geq 1$. An \emph{$M_k$-approximation of $q$ under \Tmc} is a UCQ
  $q'$ such that
  \begin{enumerate}
  \item $q' \subseteq_\Tmc q$,
  \item $q'$ is an $M_k$-query, and
  \item for each UCQ $q''$
    that satisfies Conditions~1 and~2, $q'' \subseteq_{\Tmc} q'$.
  \end{enumerate}
\end{defi}
It might be useful for the reader to reconsider
Example~\ref{ex:lowering-measures}, which for every $n \geq 0$ gives
an OMQ $(\Omc,\Sbf,q_n)$ with full schema such that has high measures,
but is equivalent to an OMQ $(\Omc,\Sbf,p_n)$ with low measures. The
equivalence also holds true if the OMQs are viewed as CQSs, that is,
$q_n \equiv_\Omc p_n$. If we choose for example $M=\{TW,CTW\}$ and
$k=1$, then it can be seen that every $M_k$-approximation of $q_n$
must contain a CQ that is equivalent to $p_n$.

\smallskip

We next identify a simple way to construct $M_k$-approximations.  Let
$(\Tmc,\Sbf,q) \in (\class{G},\class{UCQ})$ be a CQS, $M$ a set of
measures, and $k \geq 1$. Moreover, let $\ell$ be the maximum number
of variables in any CQ in $q$ and fix a set \Vmc of exactly
$\ell \cdot \mn{ar}(\Sbf)$ variables. Assuming that \Tmc is understood
from the context, we define $q^M_k$ to be the UCQ that contains as a
disjunct any CQ $p$ such that $p \subseteq_\Tmc q$, $p$ is an
$M_k$-query, and $p$ uses only variables from \Vmc. As containment
between UCQs under constraints from $\class{G}$ is decidable
\cite{BaBP18},
%
%can be computed in double exponential time \cite{where?}
%{\color{blue}also speak about complexity of checking measures},
given $(\Tmc,\Sbf,q)$ we can effectively compute $q^M_k$.  We show
next that $q^M_k$ is an $M_k$-approximation of $q$ under~\Tmc.
\begin{lem}
%\begin{restatable}{lem}{firstapprox}
\label{prop:firstapprox}
  Let $(\Tmc,\Sbf,q) \in (\class{G},\class{UCQ})$ be a CQS, $M$
  a set of measures,
%  \subseteq \{ \text{TW}, \text{CTW}, \text{SS}, \text{LMN} \}$,
and
  $k \geq 1$. Then $q^M_k$ is an $M_k$-approximation of $q$ under
  \Tmc.
%\end{restatable}
\end{lem}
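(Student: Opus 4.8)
We need to prove that $q^M_k$ is an $M_k$-approximation of $q$ under $\Tmc$. The definition of $q^M_k$ is that it collects, as disjuncts, *all* CQs $p$ such that $p \subseteq_\Tmc q$, $p$ is an $M_k$-query, and $p$ uses only variables from a fixed set $\Vmc$ of $\ell \cdot \mn{ar}(\Sbf)$ variables (where $\ell$ bounds the number of variables in any CQ of $q$). The three conditions of being an $M_k$-approximation are: (1) $q^M_k \subseteq_\Tmc q$, (2) $q^M_k$ is an $M_k$-query, and (3) for any UCQ $q''$ satisfying (1) and (2), we have $q'' \subseteq_\Tmc q^M_k$.

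Let me think about how to prove each part.

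**Conditions (1) and (2).** These should be immediate from the construction. Each disjunct $p$ of $q^M_k$ satisfies $p \subseteq_\Tmc q$ by its defining membership condition; since containment of a UCQ in $q$ under $\Tmc$ reduces to each disjunct being contained in $q$ under $\Tmc$, we get $q^M_k \subseteq_\Tmc q$. And $q^M_k$ is an $M_k$-query because each disjunct $p$ is an $M_k$-query by construction, and being an $M_k$-query is a per-disjunct property. So (1) and (2) are routine.

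**Condition (3) — the main obstacle.** This is the real content. We are given an arbitrary UCQ $q''$ with $q'' \subseteq_\Tmc q$ and $q''$ an $M_k$-query, and we must show $q'' \subseteq_\Tmc q^M_k$. Writing $q'' = p''_1 \vee \cdots \vee p''_m$, it suffices (since $q'' \subseteq_\Tmc q^M_k$ reduces to $p''_j \subseteq_\Tmc q^M_k$ for each disjunct) to show each disjunct $p''_j$ is contained under $\Tmc$ in $q^M_k$. The key point is that $q^M_k$ already *contains every qualifying CQ up to renaming*, so I want to show that each $p''_j$ is equivalent (as a CQ, after renaming variables into $\Vmc$) to a disjunct of $q^M_k$. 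The obstacle is purely bookkeeping about variables: $p''_j$ might use variables outside $\Vmc$, but the number of variables in $p''_j$ is bounded.

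**The variable-counting argument.** The crucial constraint is the bound $|\Vmc| = \ell \cdot \mn{ar}(\Sbf)$. I claim any disjunct $p''_j$ of $q''$ that satisfies $p''_j \subseteq_\Tmc q$ and is an $M_k$-query can, without loss of generality, be assumed to use at most $\ell \cdot \mn{ar}(\Sbf)$ variables — and hence can be renamed to use only variables from $\Vmc$. The plan is: since $p''_j \subseteq_\Tmc q$, there is some disjunct $p_i$ of $q$ and a homomorphism witnessing the containment under $\Tmc$ (using the theory that containment under guarded TGDs is characterized via the chase / finite controllability, per the cited decidability results). I would argue that one may restrict $p''_j$ to the variables in the image of such a homomorphism, or more carefully, that there is a CQ $p \subseteq_\Tmc q$ equivalent to $p''_j$ under $\Tmc$ whose variable set injects into the atoms of some disjunct $p_i$ of $q$ after chasing, giving a bound of at most $\ell$ atoms each of arity at most $\mn{ar}(\Sbf)$, hence at most $\ell \cdot \mn{ar}(\Sbf)$ relevant variables. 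Then renaming those variables into $\Vmc$ yields a disjunct of $q^M_k$ that is $\Tmc$-equivalent to $p''_j$, giving $p''_j \subseteq_\Tmc q^M_k$. The delicate step — and the one I expect to require the most care — is making precise why the variables of $p''_j$ that ``matter'' for being contained in $q$ under $\Tmc$ can be bounded by $\ell \cdot \mn{ar}(\Sbf)$ and confined (up to isomorphism) to $\Vmc$; this rests on analyzing the homomorphism from (the chase of) $p''_j$ into (the chase of) some $p_i$ and showing the structurally relevant part of $p''_j$ has bounded size, while preserving the $M_k$-query property under the restriction/renaming.
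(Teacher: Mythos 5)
Your handling of Conditions~(1) and~(2) matches the paper's (both are immediate from the construction), and you correctly locate the difficulty in Condition~(3) and in the role of the variable bound $\ell\cdot\mn{ar}(\Sbf)$. However, the argument you sketch for Condition~(3) rests on a claim that is both stronger than what is needed and false: you assert that each disjunct $p''_j$ of $q''$ is \emph{equivalent} under $\Tmc$ (after renaming) to a CQ over $\Vmc$, i.e.\ to a disjunct of $q^M_k$. This fails already with $\Tmc=\emptyset$: take $q(x)=\exists y\,R(x,y)$ and let $p''_j$ be a long directed $R$-path starting at $x$. Then $p''_j\subseteq q$, $p''_j$ has treewidth~$1$, but $p''_j$ is a core with arbitrarily many variables and hence is not equivalent to any CQ over a fixed variable set. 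What the definition of $M_k$-approximation actually requires, and what the paper proves, is only that $q^M_k$ contains some CQ $p'$ with $p''_j\subseteq_\Tmc p'$; the constructed $p'$ is in general strictly weaker than $p''_j$, not equivalent to it. So the target of your argument is miscalibrated.

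The step you flag as ``delicate'' is in fact the entire content of the proof, and your sketch does not supply it. The paper's route is: use Theorem~\ref{thm:finmod} to extract a homomorphism $h$ from some disjunct $q_i$ of $q$ into $\chase_\Tmc(D_{p''_j})$ that is the identity on $\bar x$; then build $p'$ not by restricting $p''_j$ to variables ``in the image of $h$'' (which would break containment in $q_i$, since $h$ may hit nulls generated by atoms you would drop), but by taking, for each atom of $q_i$ whose $h$-image contains a null, the source guard atom in $D_{p''_j}$ and including \emph{all} of $\chase_\Tmc(p''_j)|_X$ for its guarded set $X$. Verifying that this $p'$ works requires the locality properties (A) and (B) of the guarded chase (each tree-like part of the chase regenerates from its source guarded set, and distinct tree-like parts share no nulls) to rebuild a homomorphism from $q_i$ into $\chase_\Tmc(D_{p'})$, plus the observation that the Gaifman graph of $p'$ is a subgraph of that of $p''_j$ (so all four measures do not increase), plus the count that each variable of $q_i$ contributes at most one source atom and hence at most $\mn{ar}(\Sbf)$ variables. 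None of this is present in your proposal, and without it the variable bound and the preservation of both $\subseteq_\Tmc q$ and the $M_k$ property are unsupported.
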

\begin{proof}
  % We only give the most important parts of the proof and defer some
  % additional details in the appendix.
  By construction, $q_k^M$
  satisfies Points~1 and~2 from Definition~\ref{def:closedapprox}. We
  show that it satisfies also Point~3.
	%
  % We start with some preparations. For an $\Sbf$\-/database $D$, a
  % finite set of TGDs $\Tmc$ from $\class{G}$, and a constant $a \in
  % \dom(\chase_{\Tmc}(D)) \setminus \dom(D)$, a guarded set $X$ over
  % $D$ is a \emph{generator} of $a$ in~$\Tmc$ if $a \in
  % \dom(\chase_{\Tmc}(\chase_{\Tmc}(D)|_{X}))$. Informally, the shape
  % of $\chase_{\Tmc}(D)$ is that of $D$ with a tree-like
  % structure\footnote{More precisely, a structure of treewidth $\ell$,
  %   where $\ell$ is the maximum number of variables in the head of a
  %   TGD from \Tmc.} attached to every guarded set $X$ in $D$, and $X$
  % being a generator of $a$ means that $a$ is located in the tree
  % attached to $X$.  There is at least one generator for every $a \in
  % \dom(\chase_{\Tmc}(D)) \setminus \dom(D)$ and furthermore,
  % generators are \emph{complete} in the sense that if $X$ is a
  % generator for $a$ and $R(\bar a) \in \chase_{\Tmc}(D)$ with $a \in
  % \bar a$, then $R(\bar a) \in
  % \chase_{\Tmc}(\chase_{\Tmc}(D)|_{X})$~\cite{CaGK13}. We also remark
  % that generators need not be unique.
%
Let $q''(\bar x)$ be a UCQ such that $q'' \subseteq_{\Tmc} q$ and
$q''$ is an $M_k$-query. Further, let $p$ be a CQ in~$q''$.
We have to show that $q^M_k$ contains a
CQ $p'$ with $p \subseteq_\Tmc p'$.

We apply Theorem~\ref{thm:finmod} to the ontology $\Omc=\Tmc$, the
database $D=D_p$, and the integer~$n$, defined to be the maximum
number of variables of CQs in~$q$. This yields a database $D^\star_{p}$
which has the properties that $D^\star_{p} \models \Tmc$, $D_{p} \subseteq
D^\star_{p}$, and thus $\bar x \in p(D_{p}^\star)$. From $q'' \subseteq_{\Tmc}
q$, it follows that $\bar x \in q(D_{p}^\star)$, and thus there must be CQ
$q_i$ in $q$ such that $\bar x \in q_i(D_{p}^\star)$. From Point~2 of
Theorem~\ref{thm:finmod} and $|\mn{var}(q_i)| \leq \ell$, it follows
that $\bar x \in Q(D_p)$ for the OMQ $Q=(\Tmc, \Sbf,
q_i)$. Consequently, $q_i$ maps into $\chase_{\Tmc}(D_p)$ via some
homomorphism $h$ that is the identity on $\bar x$. We intend to use
$h$ for identifying the desired CQ $p'$ in $q_k^M$ such that $p \subseteq_{\Tmc}
p'$. We need some preliminaries that we keep on an intuitive level
here and flesh out in the appendix.

Since \Tmc is a set of guarded TGDs, $\chase_{\Tmc}(D_p)$ is of a
certain regular shape. Informally, it looks like $D_p$ with a
tree-like structure attached to every guarded set $X$ in
$D_p$.\footnote{More precisely, a structure of treewidth $\ell$, where
  $\ell$ is the maximum number of variables in the head of a TGD from
  \Tmc.} Note that the constants in $D_p$ are exactly the variables in
$p$. We refer to all other constants in $\mn{ch}_\Tmc(D_p)$ as
\emph{nulls}.  Formally we first identify with every fact
$R(\bar c) \in \mn{ch}_\Tmc(D_p)$ such that $\bar c$ contains at least
one null a unique `source' fact $\mn{src}(R(\bar b)) \in D_p$ that
played the role of the guard when the tree-like structure that
$R(\bar c)$ is in was generated by the chase and then use $\mn{src}$
to identify the tree-like structures in $\chase_{\Tmc}(D_p)$.

Start with setting $\mn{src}(R(\bar c))=R(\bar c)$ for all
$R(\bar c) \in D_p$. Next assume that $R(\bar c) \in \mn{ch}_\Tmc(D_p)$
was introduced by a chase step that applies a TGD $T \in \Tmc$ at a
tuple $(\bar d, \bar d')$, and let $R'$ be the relation symbol in the
guard atom in $\mn{body}(T)$. Then we set
$\mn{src}(R(\bar c))=R'(\bar d, \bar d')$ if
$\bar d \cup \bar d' \subseteq \mn{adom}(D_p)$ and
$\mn{src}(R(\bar c))=\mn{src}(R'(\bar d, \bar d'))$ otherwise. For any
guarded set $X$ of $D_p$, define $\mn{ch}_\Tmc(D_p)|^\downarrow_{X}$ to
contain those facts $R(\bar c) \in \mn{ch}_\Tmc(D_p)$ such that the
constants in $\mn{src}(R(\bar c))$ are exactly those in $X$.

%
%This tree-like structure is then
% the one attached to the guarded set $X$ that contains exactly
% the constants in $\bar b$.
% Recall
% that $\mn{ch}_S(I)$ is constructed by a sequence of chase steps.
% For all facts \mbox{$R(\bar c) \in I$}, $\mn{guard}(R(\bar c)) = R(\bar
% c)$.
% The exact definition of \mn{src} is given in the appendix.  Here,
% we use \mn{src} to define what we mean with the tree-like structure
% below a guarded set $X$ of $D_p$: the subinstance
% $\mn{ch}_\Tmc(D_p)|^\downarrow_{X}$ of $\mn{ch}_\Tmc(D_p)$ contains
% those facts $R(\bar c) \in \mn{ch}_\Tmc(D_p)$ such that
% %
% \begin{itemize}

% \item $\bar c \subseteq X$ or

% \item $\bar c$ contains at least one null and
%   the constants in $\mn{src}(R(\bar c))$ are exactly those in $X$.

% \end{itemize}
% %
In the appendix, we show the following:
\begin{enumerate}

\item[(A)] for every guarded set $X$ in $D_p$, there is a homomorphism from
  $\mn{ch}_\Tmc(D_p)|^\downarrow_{X}$ to\linebreak[4]
  $\mn{ch}_\Tmc(\mn{ch}_\Tmc(D_p)|_{X})$ that is the identity on all
  constants in $X$;

\item[(B)] if $c \in \mn{adom}(\mn{ch}_\Tmc(D_p))$ is a null and
  $R_1(\bar c_1),R_2(\bar c_2) \in \mn{ch}_\Tmc(D_p)$ such that $c$
  occurs in both $c_1$ and $c_2$, then $\mn{src}(R_1(\bar
  c_1))=\mn{src}(R_2(\bar c_2))$.

  % Let $I$ be an instance, \Smc a set of TGDs, $I_0,I_1,\dots$ a chase
  % sequence of $I$ with \Smc, and $S$ a guarded set in $I$. Then for
  % every $i \geq 0$ there is a homomorphism from $I_i|^\downarrow_{S}$
  % to $\mn{ch}_S(I_i|_{S})$ that is the identity on all constants in
  % $S$. Moreover, $h_i(c)=h_{i+1}(c)$ for all $c$ with $h_i(c)$
  % defined.
\end{enumerate}
Informally, Condition~(A) may be viewed as a locality property of the
chase an Condition~(B) says that, as expected, attached tree-like
structures do not share any variables.

As announced, we now construct the CQ $p'$ in $q^M_k$.  All atoms in
$p'$ are facts from $\chase_{\Tmc}(p)|_{\mn{var}(p)}$, viewed as
atoms. To control the number of variables in $p'$, however, we do not
include all such atoms, but only a selection of them.  Consider each
atom $R(\bar y)$ in $q_i$ and distinguish the following cases:
\begin{itemize}
	\item if $h(\bar y)$ contains only variables, then add $R(h(\bar y))$ to $p'$;
	\item if $h(\bar y)$ contains a null, then consider
          the atom $S(\bar z)=\mn{src}(R(h(\bar y))) \in D_p$ and
          let $X$ be the set of variables in $\bar z$; add all facts in
          $\chase_{\Tmc}(p)|_{X}$ as atoms to $p'$.

\end{itemize}
%
 % Note that the atoms added in Point~2 also contain only variables, but
 % no nulls.
The answer variables of $p'$ are exactly those of $p$. All these
variables must be present since $h$ is the identity on $\bar x$. It
follows from the construction of $p'$ that the identity is a
homomorphism from $p'$ to $\mn{ch}_\Tmc(p)$. Thus
$p \subseteq_{\Tmc} p'$ and it remains to show that, up to renaming
the variables so that they are from the set \Vmc fixed for the
construction of $q_k^M$, $p'$ is a CQ in $q_k^M$. This is a
consequence of the following properties:

\smallskip
\noindent
 (1) $p'$ is an $M_k$-query.

 \smallskip
         \noindent
          By definition of $p'$, all guarded sets in $p'$ are also
          guarded sets in $\mn{ch}_\Tmc(D_p)$. Moreover, those guarded
          sets contain no nulls and are thus also guarded sets in $p$.
          Consequently, the Gaifman graph of $p'$ is a subgraph of the
          Gaifman graph of $p$, and all measures are
          monotone regarding subgraphs.
	% \item $p \subseteq_{\Tmc} p'$. Follows from the fact that by
        %   construction of $p'$, the
        %   identity is a homomorphism from $p'$ to $\mn{ch}_\Tmc(p)$.
%
          % For every $\Sbf$-database $D$ such that $D \models \Tmc$ and $\bar x \in p(D)$, $\chase_{\Tmc}(p)$ maps into $D$ via some homomorphism that is the identity on $\bar x$. As $p' \subseteq \chase_{\Tmc}(p)$, it follows that $\bar x \in p'(D)$. Thus, $p \subseteq_{\Tmc} p'$.

\smallskip
\noindent
(2)~$p' \subseteq_{\Tmc} q_i$. % We first observe that w.l.o.g. one can assume that $\chase_{\Tmc}(p') \subseteq \chase_{\Tmc}(p)$.

          \noindent
          It suffices to construct a homomorphism $h'$ from $q_i$ to
          $\chase_{\Tmc}(D_{p'})$. This can be done as follows. For
          each $x \in \mn{var}(q_i)$ with $h(x)$ a variable, put
          $h'(x)=h(x)$. It remains to deal with all $x \in
          \mn{var}(q_i)$ with $h(x)$ a null.

          With any such $x$, we associate a unique atom
          $\Gamma(x)=\mn{src}(R_1(h(\bar x_1))) \in D_p$,
          identifying the tree-like structure in
          $\chase_{\Tmc}(D_{p})$ that $h(x)$ is in. Take any atom
          $R(\bar y) \in q_i$ such that $\bar y$ contains $x$. It
          follows from (B) above that $\mn{src}(R(h(\bar y)))$ is
          the same, no matter which such atom $R(\bar y) \in q_i$ we
          take. We may thus associate with $x$ the unique atom
          $\Gamma(x)=\mn{src}(R(h(\bar y))) \in D_p$.

          Now consider any maximal set $X$ of variables $x
          \in\mn{var}(q_i)$ such that $h(x)$ is a null and
          $x_1,x_2 \in X$ implies $\Gamma(x_1)=\Gamma(x_2)$. Then $h$
          is a homomorphism from $q_1|_X$ to
          $\mn{ch}_\Tmc(D_p)|^\downarrow_X$. By (A) above, there is a
          homomorphism $h_X$ from $\mn{ch}_\Tmc(D_p)|^\downarrow_X$
          to $\mn{ch}_\Tmc(\mn{ch}_\Tmc(D_p)|_{X})$ that is the
          identity on all constants in $X$. Moreover, the construction
          of $p'$ yields $\mn{ch}_\Tmc(D_p)|_{X} \subseteq p'$. We may
          thus view $h_X$ as a homomorphism from
          $\mn{ch}_\Tmc(D_p)|^\downarrow_X$ to $\mn{ch}_\Tmc(p')$.
          Set $h'(x)=h_X \circ h(x)$ for all $x \in X$.

          It can be verified that the constructed $h'$ is indeed a
          homomorphism from $q_i$ to $\chase_{\Tmc}(D_{p'})$.

          \smallskip
	\noindent (3)~$|\dom(p')| \leq \ell \cdot  \mn{ar}(\Sbf) $.

          \noindent
          A straightforward analysis of the construction of $p'$ shows
          that it introduces into $p'$ the following variables,
          implying the statement:
          \begin{itemize}

          \item for every $x \in q_i$ with $h(x)$ a variable, the
            variable $h(x)$,

          \item for every $x \in q_i$ with $h(x)$ a null,
            the variables that occur in $\Gamma(x)$, where $\Gamma$
            is as above.

          \end{itemize}
          In fact, assume that an atom $R(\bar y)$ is treated in the
          construction of $p'$ and let $\bar y = y_1,\dots, y_k$. If
          Case~1 of the construction applies, then the variables
          $h(y_1),\dots,h(y_k)$ are introduced. For Case~2 of the
          construction, we reuse the function $\Gamma$ defined in the
          proof of the previous property. If this case applies,
          then by definition % there is a variable $x$ in
          % $\bar y$ with $h(x)$ a null,
          $\Gamma(x)$ is identical for every variable $x$ in $\bar y$
          with $h(x)$ a null, subsequently just referred to as
          $\Gamma$. This $\Gamma$ contains $h(x)$ for all variables
          $x$ in $\bar y$ with $h(x)$ a variable, and $\Gamma$ is
          precisely the set of variables introduced in this step.
\end{proof}
%
% While $q^M_k$ is a UCQ that approximates a UCQ, we can also
% approximate CQs in terms of CQs, essentially by using a suitable
% disjunct of $q^M_k$. This is an interesting case as it admits an
% approximate version of answer counting in FPT, see
% Corollary~\ref{cor:approxclosed} below.
% %
% \begin{thm}
% \label{thm:CQapprox}
%   Let $(\Tmc,\Sbf,q) \in (\class{G},\class{CQ})$ be a CQS, $M$
% a set of measures, and
% %  \subseteq \{ \text{TW}, \text{CTW}, \text{SS}, \text{LMN} \}$, and
%   $k \geq 1$. Then there is a CQ that is an $M_k$-approximation of
%   $q$ under \Tmc, has at most $\ell \cdot \mn{ar}(\Sbf)$ variables, and
%   can be effectively computed.
% \end{thm}
%
Let $(\Tmc,\Sbf,q) \in (\class{G},\class{UCQ})$ be a CQS.  By
definition of $M_k$-approximations, it is clear that if there exists a
UCQ $q'$ such that $q' \equiv_\Tmc q$ and $q'$ is an $M_k$-query, then
any $M_k$-approximation $q^\star$ of $q$ under \Tmc also satisfies
$q^\star \equiv_\Tmc q$.  The following is thus an immediate
consequence of Lemma~\ref{prop:firstapprox} and the fact that
containment between UCQs under constraints from $\class{G}$ is
decidable.
\begin{thm}
  \label{cor:approxclosed}
  Let $M$ be a set of measures.  Given a CQS $(\Tmc,\Sbf,q) \in
  (\class{G},\class{UCQ})$ and $k \geq
  1$, it is decidable whether
  $q$ is equivalent under \Tmc to a UCQ $q'$ that is an $M_k$-query.
  Moreover, if this is the case, then such a $q'$ can be effectively
  computed.
% Let $M=\{ \text{TW},\text{CTW}\}$ and $k \geq 1$.  Given a CQS
%   $(\Tmc,\Sbf,q) \in (\class{G},\class{CQ})$ and an \Sbf-database $D$,
% $\#q'(D)$ can be computed in FPT where $q'$  is an $M_k$-approximation
%   of $q$ under \Tmc.
\end{thm}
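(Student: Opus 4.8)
The plan is to turn the meta problem into a single containment check against the canonical $M_k$-approximation $q^M_k$ supplied by Lemma~\ref{prop:firstapprox}. First I would argue that $q^M_k$ is effectively computable. Since $\Vmc$ is a fixed finite set of variables and $\Sbf$ is finite, there are only finitely many CQs $p$ over $\Sbf$ that use only variables from $\Vmc$; for each such $p$ I can compute its Gaifman graph and the four structural measures (all of which are computable), hence decide whether $p$ is an $M_k$-query, and I can decide $p \subseteq_\Tmc q$ because containment between UCQs under guarded TGDs is decidable~\cite{BaBP18}. Gathering exactly those $p$ that pass both tests produces $q^M_k$.

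Next I would decide the instance by testing whether $q^M_k \equiv_\Tmc q$, which is again decidable by~\cite{BaBP18}. Since $q^M_k$ is an $M_k$-approximation, Condition~1 of Definition~\ref{def:closedapprox} already gives $q^M_k \subseteq_\Tmc q$, so only the converse containment $q \subseteq_\Tmc q^M_k$ needs to be checked. For correctness, note that if $q^M_k \equiv_\Tmc q$ then $q^M_k$, being an $M_k$-query by Condition~2, is the required witness, and the algorithm answers \emph{yes} and outputs it. Conversely, suppose some $M_k$-query $q'$ satisfies $q' \equiv_\Tmc q$. Then $q'$ meets Conditions~1 and~2 of Definition~\ref{def:closedapprox} (the first because $q' \equiv_\Tmc q$ entails $q' \subseteq_\Tmc q$), so the maximality Condition~3 of $q^M_k$ yields $q' \subseteq_\Tmc q^M_k$. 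Chaining with $q \subseteq_\Tmc q'$ gives $q \subseteq_\Tmc q^M_k$, and together with $q^M_k \subseteq_\Tmc q$ this forces $q^M_k \equiv_\Tmc q$. Thus the test succeeds precisely when a suitable $q'$ exists, and in that case $q^M_k$ is the computed query.

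The conceptual work has already been carried out in Lemma~\ref{prop:firstapprox}, whose statement guarantees that restricting attention to CQs over the fixed variable set $\Vmc$ loses nothing: every $M_k$-query below $q$ under $\Tmc$ is dominated (under $\Tmc$) by one using only $\Vmc$. This finiteness of the candidate set, which is exactly why $\Vmc$ was fixed to have size $\ell \cdot \mn{ar}(\Sbf)$, is what makes $q^M_k$ computable, and it is the only subtle point; the remaining argument is a routine assembly of that lemma with the decidability of UCQ containment under guarded constraints. I therefore expect no genuine obstacle beyond correctly invoking Lemma~\ref{prop:firstapprox} together with~\cite{BaBP18}.
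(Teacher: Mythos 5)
Your proposal is correct and follows essentially the same route as the paper: compute the canonical $M_k$-approximation $q^M_k$ (effectively constructible because the candidate CQs over the fixed finite variable set $\Vmc$ can be enumerated, the structural measures are computable, and containment under guarded TGDs is decidable by~\cite{BaBP18}), and then decide the instance by testing $q^M_k \equiv_\Tmc q$, with correctness following from Conditions~1--3 of Definition~\ref{def:closedapprox} exactly as you argue. Nothing further is needed.
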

%
% It is not clear whether Corollary~\ref{cor:approxclosed} extends to
% UCQs, essentially because for a UCQ $q$, answer counting in FPT is
% tied to the measures of the CQs in $\mn{cl}_{\mn{CM}}(q)$ rather than
% those in $q$.
%
A particularly relevant case is
$M=\{\text{TW},\text{CTW}\}$, as it is linked to fixed-parameter
tractability. From the above results, we obtain that answer counting
in FPT is possible for CQSs that are \emph{semantically} of bounded treewidth
and contract treewidth, provided that only CQs are admitted as the actual
query. Let us make this more precise. Fix $k \geq
1$ and let $\class{C}_k$ be the class of CQSs $(\Tmc,\Sbf,q) \in
(\class{G},\class{CQ})$ such that $q \equiv_\Tmc q'$ for some UCQ
$q'$ of treewidth and contract treewidth at most
$k$. %  Note that we admit only CQs, but not UCQs as the actual
% query.
Then
$\mn{AnswerCount}(\class{C}_k)$ is in FPT: given a CQS $(\Tmc,\Sbf,q)
\in \class{C}_k$ and an \Sbf-database
$D$, we may compute, as per Theorem~\ref{cor:approxclosed}, a
UCQ~$q'$ that is an $M_k$-query and satisfies $q \equiv_\Tmc
q'$. Since
$q$ is a CQ, it is easy to see that there must be a single disjunct
$q^\star$ of $q$ such that $q \equiv_\Tmc
q^\star$.  We can effectively identify
$q^\star$ and use Point~1 of Theorem~\ref{thm:chenmengeldell} as a
blackbox to count answers to $q^\star$ on
$D$. The same is probably not true when we define
$\class{C}_k$ as a subclass of
$(\class{G},\class{UCQ})$ rather than
$(\class{G},\class{CQ})$. Then, we have to count the answers to
$q'$ on $D$ rather than to a single CQ $q^\star$ in
$q'$, but for UCQs of bounded treewidth and contract treewidth,
Point~1 of Theorem~\ref{thm:chenmengeldell} does not always guarantee
answer counting in FPT because of the use of the Chen-Mengel closure
in that theorem.

% the original query is a UCQ and, related to this, whether
% $M_k$-approximations always admit answer counting in FPT, that is,
% even when the original query is not equivalent under \Tmc to an
% $M_k$-query.

  % As another relevant application of
  % Proposition~\ref{prop:firstapprox} and Theorem~\ref{thm:CQapprox},
  % we note the following.  Let $M$ be a set of measures and $k \geq 1$.
  % Given a CQS $(\Tmc,\Sbf,q) \in (\class{G},\class{UCQ})$, it is
  % decidable whether $q$ is equivalent under \Tmc to a UCQ that is an
  % $M_k$-query, and the same holds true for CQs. This is in fact an
  % easy consequence of the mentioned results, decidability of UCQ
  % containment under constraints from $\class{G}$, and the fact that if
  % $q$ is equivalent under \Tmc to any UCQ $q'$ that is an $M_k$-query,
  % then it is equivalent to its $M_k$-approximation.

\subsection{Ontology-Mediated Queries}

We now turn to ontology-mediated queries, starting with the definition
of their approximations. We say that OMQ $Q_1(\bar x)=(\Omc_1,\Sbf,q_1)$ is
\emph{contained} in OMQ $Q_2(\bar x)=(\Omc_2,\Sbf,q_2)$, written
$Q_1 \subseteq Q_2$, if $Q_1(D) \subseteq Q_2(D)$ for every
\Sbf-database $D$. $Q_1$ and $Q_2$ are \emph{equivalent}, written
$Q_1 \equiv Q_2$, if $Q_1 \subseteq Q_2$ and $Q_2 \subseteq Q_1$. We
say that an OMQ $Q=(\Omc,\Sbf,q)$ is an \emph{$M_k$-query} if $q$ is.
\begin{defi}
  \label{def:M-approximation}
  Let $Q=(\Omc,\Sbf,q) \in (\class{G},\class{UCQ})$ be an OMQ, $M$
  a set of measures, and
%\subseteq \{
 % \text{TW}, \text{CTW}, \text{SS}, \text{LMN} \}$, and
$k \geq 1$.
An
  \emph{$M_k$-approximation of} $Q$ is an OMQ $Q'=(\Omc',\Sbf,q') \in
  (\class{G},\class{UCQ})$ such that
  \begin{enumerate}
  \item $Q' \subseteq Q$,
  \item $Q'$ is an $M_k$-query, and
  \item for each $Q''=(\Omc'',\Sbf,q'') \in (\class{G},\class{UCQ})$
    that satisfies Conditions~1 and~2, $Q'' \subseteq Q'$.
  \end{enumerate}
  We say that $Q'$ is an \emph{$M_k$-approximation of $Q$ while
    preserving the ontology} if it is an $M_k$-approximation and
  $\Omc'=\Omc$.
\end{defi}
We next observe that $M_k$-approximations of OMQs $(\Omc,\Sbf,q) \in
(\class{G},\class{UCQ})$ based on the full schema and while preserving
the ontology are closely related to the approximations studied in the
previous section in the context of CQSs.
\begin{lem}
    \label{lem:relateapprox}
    Let $Q(\bar x) = (\Omc,\Sbf,q) \in (\class{G},\class{UCQ})$ be an OMQ based on
    the full schema, $M$ a set of measures, %  \subseteq \{ \text{TW}, \text{CTW}, \text{SS},
    % \text{LMN} \}$,
    and $k \geq 1$. Then an OMQ $Q'(\bar x)=(\Omc,\Sbf,q') \in
    (\class{G},\class{UCQ})$ is an $M_k$-approximation of $Q$ while
    preserving the ontology iff $q'$ is an $M_k$-approximation of $q$
    under~\Omc.
\end{lem}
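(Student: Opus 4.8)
The plan is to route both definitions through a single bridging observation: for UCQs $p_1(\bar x),p_2(\bar x)$ over the full schema \Sbf, the OMQs $(\Omc,\Sbf,p_1)$ and $(\Omc,\Sbf,p_2)$ satisfy $(\Omc,\Sbf,p_1) \subseteq (\Omc,\Sbf,p_2)$ if and only if $p_1 \subseteq_\Omc p_2$. I would prove this from two facts. First, if an \Sbf-database $D'$ is a model of \Omc, then $(\Omc,\Sbf,p_i)(D')=p_i(D')$: the inclusion ``$\subseteq$'' is witnessed by taking $I=D'$ in the definition of certain answers, and ``$\supseteq$'' holds because any homomorphism witnessing $\bar c\in p_i(D')$ composes with $D'\subseteq I$ for every model $I\supseteq D'$, so $\bar c$ is a certain answer. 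Second, for an arbitrary \Sbf-database $D$, Theorem~\ref{thm:finmod} applied with $n$ larger than the number of variables of $p_1$ and $p_2$ yields a finite model $D^\ast\models\Omc$ with $(\Omc,\Sbf,p_i)(D)=p_i(D^\ast)\cap\mn{adom}(D)^{|\bar x|}$. Combining the two: $p_1\subseteq_\Omc p_2$ gives $p_1(D^\ast)\subseteq p_2(D^\ast)$ for the model $D^\ast$ attached to each $D$, hence OMQ containment; conversely, OMQ containment evaluated on an arbitrary model $D'$ of \Omc recovers $p_1(D')\subseteq p_2(D')$ by the first fact, hence $p_1\subseteq_\Omc p_2$.

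With the bridge in place, Conditions~1 and~2 of Definitions~\ref{def:M-approximation} and~\ref{def:closedapprox} match verbatim: $Q'\subseteq Q$ is exactly $q'\subseteq_\Omc q$ by the bridge, and ``$q'$ is an $M_k$-query'' is literally shared. So the content lies entirely in reconciling the maximality Condition~3. For the direction from an ontology-preserving OMQ-approximation to a CQS-approximation I would argue directly: given any $M_k$-query UCQ $q''$ with $q''\subseteq_\Omc q$, form the OMQ $Q''=(\Omc,\Sbf,q'')$ \emph{using the same ontology}; the bridge gives $Q''\subseteq Q$ and $Q''$ is an $M_k$-query, so Condition~3 of Definition~\ref{def:M-approximation} yields $Q''\subseteq Q'$, which the bridge turns back into $q''\subseteq_\Omc q'$, i.e.\ Condition~3 of Definition~\ref{def:closedapprox}.

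The converse direction is the substantial one. Assuming $q'$ is a CQS-style $M_k$-approximation of $q$ under \Omc, I must show that $Q'=(\Omc,\Sbf,q')$ dominates \emph{every} competitor $Q''=(\Omc'',\Sbf,q'')$ that is an $M_k$-query with $Q''\subseteq Q$ --- crucially including competitors whose ontology $\Omc''$ differs from \Omc and whose query uses symbols outside \Sbf. My plan is to sandwich each such competitor by a plain query: I would construct an $M_k$-query UCQ $\bar p$ over \Sbf with $Q''\subseteq(\Omc,\Sbf,\bar p)$ and $\bar p\subseteq_\Omc q$. Given $\bar p$, Condition~3 of Definition~\ref{def:closedapprox} gives $\bar p\subseteq_\Omc q'$, the bridge upgrades this to $(\Omc,\Sbf,\bar p)\subseteq Q'$, and transitivity yields $Q''\subseteq Q'$, as required. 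To build $\bar p$ I would apply finite controllability (Theorem~\ref{thm:finmod}) to $Q''$ itself, so that each answer $\bar c\in Q''(D)$ is witnessed by a homomorphism from a disjunct of $q''$ into a finite model of $\Omc''$; using that $\bar c\in Q(D)$ (as $Q''\subseteq Q$) together with the bounded, low-measure shape of the disjunct $q''$, I would read off an \Sbf-pattern that realizes $\bar c$ over $\mn{ch}_\Omc(D)$, has all measures in $M$ at most $k$, and is contained in $q$ under \Omc, collecting these patterns into $\bar p$.

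I expect the construction of $\bar p$ to be the main obstacle, and it is exactly the point at which ontology-modifying competitors must be shown harmless for approximation \emph{from below} --- in contrast to the situation for equivalence, where modifying the ontology can genuinely lower treewidth. Two difficulties compound here: the witness for $Q''$ lives in the chase of the foreign ontology $\Omc''$, over an enriched vocabulary and with nulls, whereas the target pattern must live over \Sbf and be reproducible under \Omc; and, since homomorphic images need not preserve treewidth, one must argue carefully that the extracted \Sbf-patterns inherit bounded measures from $q''$ rather than merely being small in size. Controlling this transfer relies on the locality and tree-likeness of the guarded chase --- the same $\mn{src}$-decomposition of $\mn{ch}_\Omc(D_p)$ and the locality and disjointness properties already exploited in the proof of Lemma~\ref{prop:firstapprox} --- and is where I would concentrate the effort, deferring the routine verifications to the appendix.
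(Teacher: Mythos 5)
Your ``bridge'' ($(\Omc,\Sbf,p_1) \subseteq (\Omc,\Sbf,p_2)$ iff $p_1 \subseteq_\Omc p_2$ for UCQs over the full schema, via Theorem~\ref{thm:finmod} and the observation that OMQs and plain queries agree on models of $\Omc$) is correct and is exactly the mechanism the paper uses inline; the direction from an ontology-preserving OMQ-approximation to an $M_k$-approximation under $\Omc$ goes through as you describe. The problem is your converse direction. You read Condition~3 of Definition~\ref{def:M-approximation} as requiring $Q'$ to dominate \emph{every} competitor $Q''=(\Omc'',\Sbf,q'')$, including those with a different ontology, and you propose to handle such competitors by sandwiching them with an $\Sbf$-UCQ $\bar p$ of small measure satisfying $Q'' \subseteq (\Omc,\Sbf,\bar p)$ and $\bar p \subseteq_\Omc q$. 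Under that reading the lemma is simply false for $M \ni \text{TW}$, so no such $\bar p$ can exist in general: in Example~\ref{ex:altering-tw-full-schema}, the competitor $Q'_n=(\Omc_n,\Sbf_n,p_n)$ has treewidth~$1$ and is \emph{equivalent} to $Q_n=(\emptyset,\Sbf_n,q_n\vee p_n)$, whereas every UCQ of treewidth~$1$ contained in $q_n\vee p_n$ under the empty ontology fails on the database $\{W(a_1,\dots,a_n)\}$ (a treewidth-$1$ CQ cannot contain a $W$-atom with $n$ distinct variables, and nothing else maps into that database). So the $\{\text{TW}\}_1$-approximation of $q_n\vee p_n$ under $\emptyset$, which Lemma~\ref{prop:firstapprox} guarantees to exist, does \emph{not} dominate $Q'_n$. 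Your sketched extraction of a low-measure $\Sbf$-pattern from the chase of the foreign ontology is therefore not a technical obstacle to be overcome but an impossibility; and in any case you leave it as a sketch, so the proof is incomplete at its self-declared crux.

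The intended (and the only true) reading of ``$M_k$-approximation while preserving the ontology'' restricts the competitors in Condition~3 to OMQs $(\Omc,\Sbf,q'')$ built over the \emph{same} ontology; this is precisely what the paper's proof verifies in its Point~(3), and it is all that Theorem~\ref{thm:omqDecFirst} needs. Under that reading the converse direction is a mirror image of the forward one: given a same-ontology competitor $P=(\Omc,\Sbf,p)$ with $P\subseteq Q$ and $p$ an $M_k$-query, your bridge gives $p\subseteq_\Omc q$, Condition~3 of Definition~\ref{def:closedapprox} gives $p\subseteq_\Omc q'$, and the bridge gives $P\subseteq Q'$. No construction of $\bar p$ is required. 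You should also record that the ``while preserving the ontology'' qualifier is what licenses this restriction; dropping it is exactly the situation of Theorem~\ref{thm:CTWSSjointapprox}, which the paper proves only for contract treewidth and starsize and shows to fail for treewidth.
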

\begin{proof}
    % To prove the lemma, recall that if an~$\Sbf$\=/database satisfies $D \models \Omc$, then for every OMQ $Q = (\Omc,\Sbf,q) \in (\class{G},\class{UCQ})$
    % we have that $q(D) = Q(D)$.
    %
  ``if''. Assume that $q'$ is an $M_k$\=/approximation of $q$ under
  \Omc. To show that $Q'(\bar x)=(\Omc,\Sbf,q)$ is an
  $M_k$-approximation of $Q$, we have to show that Points~(1) to~(3) from
  Definition~\ref{def:M-approximation} hold. Point~(2) is obvious.

  \smallskip
  \noindent
  (1)~$Q' \subseteq Q$.

  \smallskip
      \noindent
      Let
    $D$ be an \Sbf-database. Further, let
    $D^\star$ be the database from Theorem~\ref{thm:finmod} invoked with
    \Omc, $D$, and $n=\max\{|\mn{var}(q)|,|\mn{var}(q')|\}$.
    Then
    $$
    Q'(D) = q'(D^\star) \cap \dom(D)^{|\bar x|}\subseteq
    q(D^\star) \cap \dom(D)^{|\bar x|} =
    Q(D).
    $$
    The containment in the center holds since $q'$ is an $M_k$\=/approximation
    of $q$ under \Omc.

  \smallskip
      \noindent
(3)~$P \subseteq Q'$ for all
    $P(\bar x) = (\Omc, \Sbf, p) \in (\class{G},\class{UCQ})$
    such that $P \subseteq Q$ and $p$ is an $M_k$-query.

    \smallskip
    \noindent
    We first observe that $p \subseteq_\Omc q$. Thus let $ D$ be an
    \Sbf-database that satisfies all TGDs from~\Omc. Then $P(D)=p(D)$
    and $Q(D)=q(D)$, thus $P \subseteq Q$ implies
    $p(D) \subseteq q(D)$.

    We now show that $P \subseteq Q'$, as
    required.  Let $D^\star$ be the database from
    Theorem~\ref{thm:finmod} invoked with \Omc, $D$, and
    $n=\max\{|\mn{var}(q')|,|\mn{var}(p)|\}$.  Then
    $$P(D) = p(D^\star) \cap \dom(D)^{|\bar x|} \subseteq q'(D^\star) \cap \dom(D)^{|\bar x|} = Q'(D).
    $$
    The containment in the center holds since $p \subseteq_\Omc q$
    and $q'$ is an $M_k$\=/approximation of $q$ under~\Omc.

    \smallskip ``only if''.  Assume that $Q'(\bar x)=(\Omc,\Sbf,q')$
    is an~$M_k$\=/approximation of $Q$ while preserving the ontology.
    To show that $q'$ is an $M_k$-approximation of $q$ under \Omc, we
    have to show that Points~(1) to~(3) from
    Definition~\ref{def:closedapprox} are satisfied. Again,
    Point~2 is obvious.

    \smallskip
    \noindent
    (1)~$q' \subseteq_\Omc q$.

    \smallskip
      \noindent
      Follows from the fact that $q'(D)=Q'(D) \subseteq Q(D) =q(D)$
      for all \Sbf-databases $D$ that satisfy~\Omc.  The containment
      holds since $Q'$ is an $M_k$-approximation of $Q$.

    \smallskip
    \noindent
    (2)~$p \subseteq_\Omc q'$ for all UCQs $p$ such that
    $p \subseteq_\Omc q$ and $p$ is an $M_k$-query.

\smallskip    \noindent
    We first observe that $P \subseteq Q'$ where $P=(\Omc,\Sbf,p)$.
    In fact, let $D$ be an \Sbf-database. Now take the database
    $D^\star$ from Theorem~\ref{thm:finmod} invoked with \Omc, $D$,
    and $n=\max\{|\mn{var}(q')|,|\mn{var}(p)|\}$.  Then
    $$P(D) = p(D^\star) \cap \dom(D)^{|\bar x|} \subseteq q'(D^\star) \cap \dom(D)^{|\bar x|} = Q'(D).
    $$
    Now, $p \subseteq_\Omc q'$ is a consequene of $P \subseteq Q'$
    and the fact that $P(D)=p(D)$ and $Q(D)=q(D)$ for all $\Sbf$-databases
    $D$ that satisfy \Omc.
\end{proof}
Lemma~\ref{lem:relateapprox} allows us to compute approximations of
OMQs using the construction given in Section~\ref{meta:cqs}.  As in
the CQS case, it is easy to see that a given OMQ $Q=(\Omc,\Sbf,q)$ is
equivalent to an OMQ $Q'=(\Omc,\Sbf,q')$ that is an $M_k$-query
if and only if the $M_k$-approximation of $Q$ is an $M_k$-query.
We thus obtain the following.
\begin{thm}
%\begin{restatable}{thm}{thmomqfpt}
  \label{thm:omqDecFirst}
  Let $M$ be a set of measures.  Given an OMQ $Q=(\Omc,\Sbf,q) \in
  (\class{G},\class{UCQ})$ based on the full schema and $k \geq
  1$, it is decidable
  whether $Q$ is equivalent to an OMQ $Q'=(\Omc,\Sbf,q') \in
  (\class{G},\class{UCQ})$ that is an
  $M_k$-query.  Moreover, if this is the case, then such a
  $Q'$ can be effectively computed.
%\end{restatable}
\end{thm}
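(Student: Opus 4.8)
The plan is to reduce this ontology-preserving meta problem to the CQS meta problem already settled in Theorem~\ref{cor:approxclosed}, exploiting the correspondence between the two notions of approximation provided by Lemma~\ref{lem:relateapprox}. Concretely, I would first view the pair $(\Omc,q)$ as specifying a CQS and invoke Lemma~\ref{prop:firstapprox} with $\Tmc=\Omc$ to obtain the $M_k$-approximation $q^M_k$ of $q$ under~\Omc. This UCQ is effectively computable, since containment between UCQs under guarded constraints is decidable \cite{BaBP18}, and by Condition~2 of Definition~\ref{def:closedapprox} it is itself an $M_k$-query. Setting $Q^\star=(\Omc,\Sbf,q^M_k)$ and using that $Q$ is based on the full schema, the ``if'' direction of Lemma~\ref{lem:relateapprox} guarantees that $Q^\star$ is an $M_k$-approximation of $Q$ while preserving the ontology; in particular $Q^\star$ is an $M_k$-query.

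The central observation I would then establish is that $Q$ is equivalent to some ontology-preserving $M_k$-query OMQ if and only if $Q^\star \equiv Q$. The ``if'' direction is immediate because $Q^\star$ is itself such a query. For the ``only if'' direction, suppose $Q \equiv Q'$ for some $M_k$-query $Q'=(\Omc,\Sbf,q')$. Then $Q'$ satisfies Conditions~1 and~2 of Definition~\ref{def:M-approximation}: it is an $M_k$-query, and $Q' \subseteq Q$ follows from $Q' \equiv Q$. By the maximality property (Condition~3) enjoyed by the approximation $Q^\star$, this yields $Q' \subseteq Q^\star$, and combining with $Q^\star \subseteq Q$ (Condition~1 for $Q^\star$) and $Q \equiv Q'$ gives $Q \subseteq Q' \subseteq Q^\star \subseteq Q$, hence $Q^\star \equiv Q$.

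The decision procedure therefore computes $Q^\star$ and tests whether $Q^\star \equiv Q$, returning $Q^\star$ as the required witness exactly when the test succeeds. This last test is effective because equivalence of OMQs in $(\class{G},\class{UCQ})$---equivalently, mutual containment---is decidable \cite{BaBP18}. Since the substantive work, namely the construction of $q^M_k$ through the chase-locality arguments behind Lemma~\ref{prop:firstapprox} and the approximation correspondence of Lemma~\ref{lem:relateapprox}, is already available, I do not expect a genuine obstacle here; the only point demanding care is to apply Lemma~\ref{lem:relateapprox} with its full-schema hypothesis in force, so that the ontology-preserving $M_k$-approximation of $Q$ really coincides with the $M_k$-approximation of $q$ under~\Omc.
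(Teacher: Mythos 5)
Your proposal is correct and follows essentially the same route as the paper: compute the $M_k$-approximation $q^M_k$ of $q$ under $\Omc$ via Lemma~\ref{prop:firstapprox}, transfer it to an ontology-preserving $M_k$-approximation of $Q$ via Lemma~\ref{lem:relateapprox} (using the full-schema hypothesis), and then decide equivalence of $Q$ with this approximation using decidability of containment in $(\class{G},\class{UCQ})$. The maximality argument you give for the ``only if'' direction is exactly the observation the paper invokes, so there is nothing to add.
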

%
% \begin{proof}
%     Let $Q(\bar x)=(\Omc,\Sbf,q) \in (\class{G},\class{CQ})$ based on the full
%     schema and an \Sbf-database $D$ be given. We first construct an
%     $M_k$-approximation $Q^M_k$ of $Q$ as follows: first construct the
%     $M_k$-approximation $q^M_k$ of $q$ under \Omc as per
%     Lemma~\ref{prop:firstapprox}. By Lemma~\ref{lem:relateapprox},
%     the OMQ $Q^M_k=(\Omc,\Sbf,q^M_k)$ is an $M_k$-approximation of $Q$.
%     Next construct from \Omc and \Dmc the database $D^\star$ as per
%     Theorem~\ref{thm:finmod}, setting $n$ to be the number of variables
%     in $q^M_k$. We seek to compute $\#(q^M_k(D^\star)
%     \cap \dom(D)^{\bar x})$.
%
%     To do this, we apply the algorithm asserted by
%     Lemma~\ref{lemma:partial-domain-answers} with $D^\star$ in place of $D$
%     and with $F \eqdef \dom(D)$. The algorithm needs an oracle for
%     $\mn{AnswerCount}(\{q^M_k\}, \mn{clones}(D^\star))$. Since $q^M_k$
%     is an $M_k$-query, $\mn{AnswerCount}(\{q^M_k\})$ (on the class of
%     all databases) is in FPT. When we use an fpt algorithm as the
%     oracle, overall we stay within the time requirements for fixed
%     parameter tractability.
% \end{proof}
%
While Theorem~\ref{thm:omqDecFirst} requires the schema to be full and
the ontology to be preserved, we now turn to approximations of OMQs
that need neither preserve the ontology nor assume the full schema. We
focus on contract treewidth and starsize and leave treewidth and
dominating starsize as open problems. To simplify notation, instead of
$\{ \text{CTW} \}_k$-approximations we speak of
CTW$_k$-approximations, and likewise for SS$_k$-approximations.

% If $M_k$-approximations are effectively computable, then we can decide
% whether a given OMQ is equivalent to one for which all measures in $M$
% are bounded by $k$, see also the comment after
% Definition~\ref{def:closedapprox}.

% We next consider CTW$_k$-approximations and SS$_k$-approximations,
% starting with the former. In both cases, it turns out that modifying
% the ontology is not necessary.

A \emph{collapsing} of a CQ $q(\bar x)$ is a CQ $p(\bar x)$ that can
be obtained from $q$ by identifying variables and adding equality
atoms (on answer variables). When an answer variable $x$ is identified
with a non-answer variable $y$, the resulting variable is $x$; the
identification of two answer variables is not allowed.  The
\emph{CTW$_k$-approx\-imation} of an OMQ
$Q=(\Omc,\Sbf,q) \in (\class{G},\class{UCQ})$, for $k \geq 1$, is the
OMQ $Q_{k}^{\text{CTW}}=(\Omc,\Sbf,q^\text{CTW}_k)$ where
$q^\text{CTW}_k$ is the UCQ that contains as CQs all collapsings of
$q$ that have contract treewidth at most $k$. The
\emph{SS$_k$-approx\-imation} of $Q$ is defined accordingly, and denoted
with $Q^{\text{SS}}_k$.

% {\color{blue}speak about
  % complexity of computing approximation?}
%
\begin{thm}
\label{thm:CTWSSjointapprox}
Let $(\Omc,\Sbf,q) \in (\class{G},\class{UCQ})$ be an OMQ and $k \geq
1$. Then $Q^{\text{CTW}}_k$ is a CTW$_k$-approximation of $Q$.
Moreover, if $k \geq \mn{ar}(\Sbf)$, then $Q^{\text{SS}}_k$ is an
SS$_k$-approximation of~$Q$.
%    and can be computed in double exponential time.
\end{thm}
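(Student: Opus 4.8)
The plan is to verify the three conditions of Definition~\ref{def:M-approximation} for $Q^{\text{CTW}}_k=(\Omc,\Sbf,q^{\text{CTW}}_k)$, and analogously for $Q^{\text{SS}}_k$. Since a UCQ has only finitely many collapsings, $q^{\text{CTW}}_k$ is a well-defined finite UCQ and $Q^{\text{CTW}}_k\in(\class{G},\class{UCQ})$, so Condition~(2) holds by construction. For Condition~(1), I would note that every collapsing $p$ of a disjunct $q_i$ of $q$ comes with a folding homomorphism $\pi\colon q_i\to p$ that is the identity on the answer variables (identifying $u,v$ whenever they are merged, and sending any answer/quantified merge to the answer variable), which witnesses $p\subseteq q_i\subseteq q$. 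As $Q^{\text{CTW}}_k$ and $Q$ share the ontology $\Omc$, Point~2 of Lemma~\ref{pro:chase} turns these CQ-containments into $Q^{\text{CTW}}_k\subseteq Q$.

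The real work is Condition~(3): every $Q''=(\Omc'',\Sbf,q'')\in(\class{G},\class{UCQ})$ with $Q''\subseteq Q$ and $q''$ of contract treewidth at most $k$ satisfies $Q''\subseteq Q^{\text{CTW}}_k$. Since $Q''(D)$ is the union of the answer sets of its disjunct-OMQs, it suffices to treat a single CQ $p''$ of contract treewidth $\le k$ with $(\Omc'',\Sbf,p'')\subseteq Q$; if this OMQ returns no answers the claim is trivial, so assume it is non-trivial. The goal is to produce one \emph{fixed} collapsing $p^\sharp$ of $q$ with contract treewidth $\le k$ and $(\Omc'',\Sbf,p'')\subseteq(\Omc,\Sbf,p^\sharp)$; then $p^\sharp$ is a disjunct of $q^{\text{CTW}}_k$ and we are done. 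To obtain $p^\sharp$, I would fix a canonical $\Sbf$-database $D^\star$ for $p''$ under $\Omc''$, freezing the answer variables to constants $\bar a\subseteq\dom(D^\star)$ so that $\bar a$ is a universal answer of $(\Omc'',\Sbf,p'')$ on $D^\star$. The containment then yields $\bar a\in q(\mn{ch}_\Omc(D^\star))$, hence a disjunct $q_i$ of $q$ and a homomorphism $h\colon q_i\to\mn{ch}_\Omc(D^\star)$ with $h(\bar x)=\bar a$. I take $p^\sharp$ to be the image $h(q_i)$, i.e. the collapsing of $q_i$ that identifies $u,v$ exactly when $h(u)=h(v)$ (adding equality atoms for the answer variables that $\bar a$ forces equal); the inclusion $p^\sharp\hookrightarrow\mn{ch}_\Omc(D^\star)$ gives $\bar a\in p^\sharp(\mn{ch}_\Omc(D^\star))$.

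Two facts remain. First, $(\Omc'',\Sbf,p'')\subseteq(\Omc,\Sbf,p^\sharp)$: by universality of $D^\star$, any answer $\bar c\in(\Omc'',\Sbf,p'')(D)$ induces a homomorphism $D^\star\to\mn{ch}_\Omc(D)$ with $\bar a\mapsto\bar c$, which extends, via the universal property of the chase (Lemma~\ref{pro:chase}), to $\mn{ch}_\Omc(D^\star)\to\mn{ch}_\Omc(D)$; composing with $p^\sharp\hookrightarrow\mn{ch}_\Omc(D^\star)$ gives $\bar c\in p^\sharp(\mn{ch}_\Omc(D))=(\Omc,\Sbf,p^\sharp)(D)$. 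Second---and this is the main obstacle---I must bound the contract treewidth of $p^\sharp$ by that of $p''$. Here I would exploit the locality of the guarded chase, in the same spirit as Conditions~(A) and~(B) established in the proof of Lemma~\ref{prop:firstapprox}: in $\mn{ch}_\Omc(D^\star)$ two \emph{data} constants can be linked through a path of chase-generated nulls only if they already co-occur in a guarded set of $D^\star$. Consequently an edge of the contract of $p^\sharp$ between two answer constants forces those constants to be linked already inside $D^\star$; since $D^\star$ is built from $p''$, the graph of such data-linkages is governed by the contract of $p''$, so that the contract of $p^\sharp$ embeds into a graph whose treewidth is at most that of the contract of $p''$. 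Subgraph-monotonicity of contract treewidth---the monotonicity already used in the proof of Lemma~\ref{prop:firstapprox}---then yields contract treewidth of $p^\sharp$ at most $k$. Making the canonical database $D^\star$ precise for guarded OMQ containment (which is not UCQ-rewritable) and carrying out this transfer of the measure across two different chases is the technically delicate part, and I expect most of the effort to go there.

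For the starsize statement the argument has the same shape, replacing contract treewidth by starsize throughout and using that starsize is likewise subgraph-monotone. The one new ingredient is the hypothesis $k\ge\mn{ar}(\Sbf)$, and it enters exactly at the measure-transfer step: a single $\Sbf$-fact of $D^\star$ is a guarded set of size at most $\mn{ar}(\Sbf)$, and an $\bar x$-component of $p^\sharp$ whose quantified part routes through the null structure hanging off such a set can touch at most $\mn{ar}(\Sbf)$ answer constants. Hence $p^\sharp$ always has starsize at most $\mn{ar}(\Sbf)$, which is $\le k$ precisely under the stated hypothesis; for $k<\mn{ar}(\Sbf)$, collapsing alone cannot split such a component, which is why the statement is restricted to $k\ge\mn{ar}(\Sbf)$.
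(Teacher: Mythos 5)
Your verification of Conditions~(1) and~(2) is fine, but the reduction of Condition~(3) to a single ``canonical'' database $D^\star$ and a single collapsing $p^\sharp$ is where the argument breaks, and I do not see how to repair it along the lines you sketch. A guarded OMQ $P=(\Omc'',\Sbf,p'')$ has no finite canonical database in the sense you need: the databases on which $P$ returns an answer are characterized by an infinite UCQ rewriting, not by homomorphisms from one fixed $D^\star$. Already for $p''=A(x)$ and $\Omc''=\{B(x)\rightarrow A(x)\}$, the database $\{B(a)\}$ yields the answer $a$ but admits no homomorphism from $D_{p''}$ (nor from any chase of it). Concretely, an answer $\bar c\in P(D)$ gives you a homomorphism of $p''$ into $\mn{ch}_{\Omc''}(D)$ --- the chase under the \emph{other} ontology --- and this neither factors through $D^\star$ nor lands in $\mn{ch}_{\Omc}(D)$, so the claimed map $D^\star\to\mn{ch}_\Omc(D)$ with $\bar a\mapsto\bar c$ does not exist in general. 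Worse, the target of your reduction is too strong: because $P$ behaves like an infinite union of CQs, different witnessing databases can force \emph{different} collapsings of $q$, and $P$ need not be contained in any single disjunct-OMQ $(\Omc,\Sbf,p^\sharp)$ of $Q^{\text{CTW}}_k$ --- only in their union. (A secondary issue: if you take $D^\star$ from Theorem~\ref{thm:finmod} to keep it finite, it is not a chase, so the guarded-chase locality you invoke for the measure transfer does not apply to it.)

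The paper avoids all of this by arguing per database and per answer. Given $\bar c\in P(D)$, it performs surgery on $D$ itself: for starsize, $D'=\bigcup_{S}D_S$ where $S$ ranges over the $\le k$-element subsets of $\bar c$ and $D_S$ renames every constant outside $S$; so $(D',\bar c)$ has starsize $\le k$ by construction and maps homomorphically onto $D$ over $\bar c$. One then shows $\bar c\in P(D')$ by re-routing the $p''$-homomorphism one $\bar x$-component at a time --- this is where the hypothesis that $p''$ has small measure is actually used --- obtains a $Q$-witness $g$ on $\mn{ch}_\Omc(D')$, and reads off the collapsing $\widehat q$ \emph{induced by $g$}. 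A chase-locality argument (where $k\ge\mn{ar}(\Sbf)$ enters, roughly as you guessed) shows the measure of $\widehat q$ is bounded by that of $(D',\bar c)$, hence $\widehat q$ is some disjunct of $q^{\text{SS}}_k$, and composing with $D'\to D$ gives $\bar c\in Q^{\text{SS}}_k(D)$. The collapsing varies with $(D,\bar c)$, which is exactly why the approximation is defined as the union over \emph{all} small-measure collapsings. If you want to salvage your outline, you would have to replace the single $D^\star$ by this per-instance construction, at which point you are carrying out the paper's proof.
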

%
% The \emph{SS$_k$-approx\-imation} of an OMQ $Q=(\Omc,\Sbf,q) \in
% (\class{G},\class{UCQ})$, for $k \geq 1$, is defined in the same way
% as CTW$_k$-approximations, but includes only collapsings that have
% starsize at most~$k$. We denote it with $Q^{\text{SS}}_k$.
% %{\color{blue}speak about complexity of computing approximation?}
%
% \begin{restatable}{thm}{SSapprox}
% \label{thm:SSapprox}
%   Let $(\Omc,\Sbf,q) \in (\class{G},\class{UCQ})$ be an OMQ and $k
%   \geq \mn{ar}(\Sbf)$. Then $Q^{\text{SS}}_k$ is an
%   SS$_k$-approximation of~$Q$.
% %    and can be computed in double exponential time.
% \end{restatable}
%
The proof of Theorem~\ref{thm:CTWSSjointapprox} is non-trivial and
relies on careful manipulations of databases that are tailored towards
the structural measure under consideration. Details are given
below. The theorem gives rise to decidability results that, in contrast to
Theorem~\ref{thm:omqDecFirst}, neither require the ontology to be
preserved nor the schema to be full.
\begin{cor}
  \label{cor:newcor}
  Given an OMQ $Q=(\Omc,\Sbf,q) \in (\class{G},\class{UCQ})$ and
  $k \geq 1$, it is decidable whether $Q$ is equivalent to an OMQ
  $Q' \in (\class{G},\class{UCQ})$ of contract treewidth at most $k$.
  Moreover,
  if this is the case, then such a $Q'$ can be effectively computed.
  The same is true for starsize in place of contract treewidth.
\end{cor}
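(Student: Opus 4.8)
The plan is to reduce the decision problem to a single equivalence test between $Q$ and the canonical approximation supplied by Theorem~\ref{thm:CTWSSjointapprox}. Writing $Q^M_k$ for the relevant approximation, I would compute $Q^{\text{CTW}}_k$ for the contract treewidth case, and $Q^{\text{SS}}_k$ for the starsize case (assuming for the moment $k \geq \mn{ar}(\Sbf)$), and then simply test whether $Q^M_k \equiv Q$. The entire content of the corollary is that this test is both correct and effective; all the structural work is already carried out in Theorem~\ref{thm:CTWSSjointapprox}.

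Correctness rests on the maximality clause (Condition~3) of Definition~\ref{def:M-approximation}. By Theorem~\ref{thm:CTWSSjointapprox}, $Q^M_k$ is a CTW$_k$- (resp.\ SS$_k$-) approximation of $Q$, so $Q^M_k \subseteq Q$, $Q^M_k$ is an $M_k$-query, and every $Q'' \in (\class{G},\class{UCQ})$ with $Q'' \subseteq Q$ and $Q''$ an $M_k$-query satisfies $Q'' \subseteq Q^M_k$. Now suppose $Q$ is equivalent to some $Q' \in (\class{G},\class{UCQ})$ of contract treewidth (resp.\ starsize) at most $k$. Then $Q'$ is an $M_k$-query and $Q' \subseteq Q$, so by maximality $Q' \subseteq Q^M_k$; combining this with $Q^M_k \subseteq Q$ and $Q \equiv Q'$ yields $Q \subseteq Q' \subseteq Q^M_k \subseteq Q$, hence $Q^M_k \equiv Q$. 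The converse is immediate since $Q^M_k$ is itself an $M_k$-query. Thus $Q$ is equivalent to an OMQ of the prescribed bounded measure if and only if $Q^M_k \equiv Q$, and in the positive case $Q^M_k$ is the witness to output.

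For effectiveness, the approximation $Q^M_k$ is computable: there are only finitely many collapsings of $q$ (each given by one of finitely many variable identifications together with the induced equality atoms), the measures CTW and SS of a CQ are computable, so one can enumerate exactly those collapsings whose measure is at most $k$ and take their disjunction. Finally, equivalence of OMQs in $(\class{G},\class{UCQ})$ is decidable~\cite{BaBP18}, so the test $Q^M_k \equiv Q$ can be performed. Together with the previous paragraph, this gives the decision procedure and the effective construction of the witness.

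The delicate point, and where I expect the most care to be needed, is the starsize case when $k < \mn{ar}(\Sbf)$, since Theorem~\ref{thm:CTWSSjointapprox} only guarantees that $Q^{\text{SS}}_k$ is an SS$_k$-approximation for $k \geq \mn{ar}(\Sbf)$. Here I would argue separately, noting first that unconditionally $Q^{\text{SS}}_k \subseteq Q$ and $Q^{\text{SS}}_k$ is an SS$_k$-query (any collapsing maps homomorphically onto $q$ while preserving answer variables, so each disjunct is contained in $Q$). The hard part will be recovering maximality in this regime. My intended route is to exploit the already-established approximation $Q^{\text{SS}}_{\mn{ar}(\Sbf)}$ and reduce the claim to it, using that the only way to bring an atom of arity exceeding $k$ that links quantified and answer variables below starsize $k$ is to collapse its quantified variables onto answer variables, which is precisely an operation the collapsing construction already ranges over; since $\mn{ar}(\Sbf)$ is a fixed constant of the input, only finitely many such $k$ arise. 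If this argument goes through, the small-$k$ starsize case collapses to the same equivalence test and completes the corollary.
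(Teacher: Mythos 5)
Your proposal is correct and follows essentially the same route as the paper: Corollary~\ref{cor:newcor} is obtained by computing the canonical approximation $Q^{\text{CTW}}_k$ (resp.\ $Q^{\text{SS}}_k$), invoking the maximality clause (Condition~3) of Definition~\ref{def:M-approximation} to conclude that $Q$ is equivalent to a bounded-measure OMQ iff it is equivalent to its approximation, and then deciding that single equivalence via the decidability of OMQ containment in $(\class{G},\class{UCQ})$~\cite{BaBP18}, with the approximation itself computable because there are only finitely many collapsings of $q$. Your concern about the starsize case with $k < \mn{ar}(\Sbf)$ is legitimate---Theorem~\ref{thm:CTWSSjointapprox} certifies $Q^{\text{SS}}_k$ as an SS$_k$-approximation only for $k \geq \mn{ar}(\Sbf)$, while the corollary is stated for all $k \geq 1$ and the paper does not address the discrepancy---so your (admittedly unfinished) sketch for small $k$ is added diligence rather than a defect relative to the paper's own argument.
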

Note that, although we are concerned here with approximations that are
not required to preserve the ontology,
Theorem~\ref{thm:CTWSSjointapprox} implies that for
CTW$_k$-approximations and SS$_k$-approximations, it is never
necessary to use an ontology different from the one in the original
OMQ. Before proving Theorem~\ref{thm:CTWSSjointapprox}, we observe
that treewidth behaves differently in this respect, and thus a
counterpart of Theorem~\ref{thm:CTWSSjointapprox} for treewitdth
cannot be expected. This is even true when the schema is full.
\begin{exa}
    \label{ex:altering-tw-full-schema}
    For $n \geq 3$, let $Q_n()=(\emptyset,\Sbf_n,q_n \vee p_n)$ where
    %$\Omc=\emptyset$,
    $\Sbf_n = \{ W, R_1, \dots, R_n\}$ with $W$ of arity $n$ and each
    $R_i$ binary and where
    $$
    q_n = \exists x_1 \cdots \exists x_n \, W(x_1,\dots,x_n)
    \ \text{ and } \
    p_n = \exists x_1 \cdots \exists x_n \exists y \,
    R_1(x_1,y),\dots,R_n(x_n,y).
    $$
    Then $Q'_n()=(\Omc_n,\Sbf_n,p_n)$ with
    $\Omc_n=\{W(\bar{x}) \rightarrow p_n(\bar{x})\}$ is a
    TW$_1$-approximation of $Q_n$. In fact, it is equivalent to
    $Q_n$. However, $Q_n$ has no TW$_k$-approximation $Q^\star$ based
    on the same (empty) ontology for any $k < n {\color{\highlightColourTwo} -1}$ since
    $Q_n \not\subseteq Q^\star$ for any
    $Q^\star=(\emptyset,\Sbf_n,q^\star)$ such that
    %$Q^\star \subseteq Q$ and
    $q^\star$ is of treewidth $k<n{\color{\highlightColourTwo} -1}$. In fact, any
    $Q^\star$  of treewidth $k<n{\color{\highlightColourTwo} -1}$ does not return any answers on
    the database $\{ W(a_1,\dots,a_n) \}$.
\end{exa}
One might criticize that in Example~\ref{ex:altering-tw-full-schema},
the arity of relation symbols grows unboundedly. The next example
shows that this is not necessary. It does, however, use a data schema
that is not full.
\begin{exa}
    \label{ex:altering-tw-not-full-schema}
    Let  $\Sbf = \{ W, R\}$ with $W$ of arity $3$ and $R$ of arity 2.
    For $n \geq 0$, let $Q_n()=(\emptyset,\Sbf,q_n)$ where
    \[
    q_n = \exists z_1 \exists z_2 \exists z_3 \exists x_1 \cdots
    \exists x_{n}  \, \bigwedge_{1 \leq i,j < n; i \neq
        j}R(x_i,x_j) \land
    \bigwedge_{1 \leq i,j < n; j \in \{1,2,3\}}R(x_i,z_j)
    \land W(z_1,z_2,z_3).
    \]
    Then, $G_{q_n}$ is the $(n+3)$\=/clique and thus the treewidth
    of $q_n$ is $n+2$. Since $q_n$ is a core, there is no OMQ
    based on the empty ontology that is equivalent to $Q_n()$ and
    in which the actual query has treewidth less than $n+2$.

    For $n \geq 0$, let $P_n()=(\Omc,\Sbf, p_n)$ where
    \[
    p_n = \exists y \exists z_1 \exists z_2 \exists z_3 \exists
    x_1 \cdots \exists x_{n}  \, \bigwedge_{1 \leq i,j < n; i \neq
        j}R(x_i,x_j) \land
    \bigwedge_{1 \leq i,j < n; j \in \{1,2,3\}}R(x_i,z_j) \land
    \bigwedge_{1 \leq i \leq 3} S(z_i,y)
    \]
    and
    $$\Omc=W(z_1,z_2,z_3) \rightarrow \exists y \bigwedge_{1 \leq
        i \leq 3} S(z_i,y).
    $$
    Then $p_n$ has treewidth $n{+}1$ and $P_n$ is equivalent to
    $Q_n$. Consequently, $P_n$ is a TW$_{n+1}$-approximation of $Q_n$.
    For the case $n=2$, the involved CQs are displayed in
    Figure~\ref{fig:lastexample}.
\end{exa}
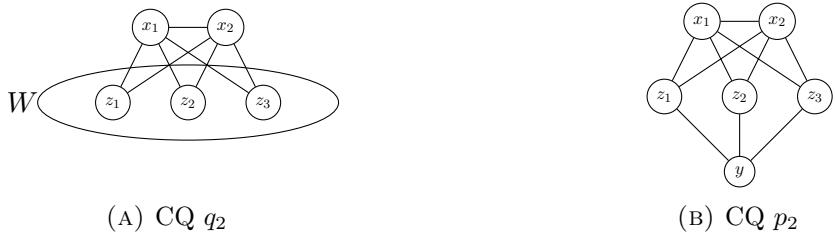
\begin{figure}[t]
    \tikzstyle{quantified}=[circle, draw, scale=.6]
    \tikzstyle{answer} = [circle, fill, scale=.6]
    \centering
    \begin{subfigure}[b]{0.49\textwidth}
        \centering
        %        \resizebox{\linewidth}{!}{
        \begin{tikzpicture}%[scale=1.]
            
            \node[quantified] (z1) at (-1,0) {$z_1$};
            \node[quantified] (z2) at (0,0) {$z_2$};
            \node[quantified] (z3) at (1,0) {$z_3$};
            
            \node[quantified] (x1) at (-.5,1) {$x_1$};
            \node[quantified] (x2) at (.5,1) {$x_2$};
            
            \node (y) at (0,-1)  {};
            
            \draw
            
            (x1) -- (x2)
            
            (x1) -- (z1)
            (x1) -- (z2)
            (x1) -- (z3)
            
            (x2) -- (z1)
            (x2) -- (z2)
            (x2) -- (z3);
            
            \draw (0,0) ellipse (2cm and .5cm);
            \node (x) at (-2.2,0) {$W$};
            
        \end{tikzpicture}
        \caption{CQ $q_2$}
        %    }
    \end{subfigure}
    %%%%%%%%%%%%%%%%%%%%%%%%%%%%%%%%%%%%%%%%%%%%%%%%%%%%%%%%%%%%%%%%%%%%
    \begin{subfigure}[b]{0.49\textwidth}
        \centering
        %        \resizebox{\linewidth}{!}{
        \begin{tikzpicture}%[scale=1.]
            
            \node[quantified] (z1) at (-1,0) {$z_1$};
            \node[quantified] (z2) at (0,0) {$z_2$};
            \node[quantified] (z3) at (1,0) {$z_3$};
            
            \node[quantified] (x1) at (-.5,1) {$x_1$};
            \node[quantified] (x2) at (.5,1) {$x_2$};
            
            \node[quantified] (y) at (0,-1)  {$y$};
            
            \draw
            
            (x1) -- (x2)
            
            (x1) -- (z1)
            (x1) -- (z2)
            (x1) -- (z3)
            
            (x2) -- (z1)
            (x2) -- (z2)
            (x2) -- (z3)
            
            (z1) -- (y)
            (z2) -- (y)
            (z3) -- (y);

        \end{tikzpicture}
        \caption{CQ $p_2$}
        %    }
    \end{subfigure}
    \caption{Queries $q_2$ and $p_2$ from
      Example~\ref{ex:altering-tw-not-full-schema}.}
    \label{fig:lastexample}
    %    \begin{subfigure}[b]{0.24\textwidth}
    %        \centering
    %%        \resizebox{\linewidth}{!}{
    %        \begin{tikzpicture}%[scale=1.]
    %        
    %        
    %        \node[quantified] (z1) at (-1,0);
    %        \end{tikzpicture}
    %        \caption{CQ $p_2$}
    %%    }
    %    \end{subfigure}
\end{figure}

We now turn to the proof of Theorem~\ref{thm:CTWSSjointapprox}.  Here,
we present only the statement about starsize made in
Theorem~\ref{thm:CTWSSjointapprox}, restated as
Lemma~\ref{thm:SSapprox} below. The statement about contract treewidth
is proved in the appendix. The proof follows a similar strategy as for
starsize, but is a bit more involved.

A \emph{pointed \Sbf-database} is a
pair $(D,\bar c)$ with $D$ an \Sbf-database and $\bar c$ a tuple of
constants from $\mn{adom}(D)$.  The \emph{contract treewidth} and
\emph{starsize} of $(D,\bar c)$ are that of $D$ viewed as a
conjunctive query with constants from $\bar c$ playing the role of
answer variables.

\begin{lem} %{theorem}{SSapprox}
    \label{thm:SSapprox}
    Let $(\Omc,\Sbf,q) \in (\class{G},\class{UCQ})$ be an OMQ and $k
    \geq \mn{ar}(\Sbf)$. Then $Q^{\text{SS}}_k$ is an
    SS$_k$-approximation of~$Q$.
    %    and can be computed in double exponential time.
\end{lem}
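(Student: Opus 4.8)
The plan is to verify the three conditions of Definition~\ref{def:M-approximation} for $Q^{\text{SS}}_k=(\Omc,\Sbf,q^{\text{SS}}_k)$. Condition~(2) is immediate: by construction every disjunct of $q^{\text{SS}}_k$ is a collapsing of $q$ of starsize at most $k$, so $Q^{\text{SS}}_k$ is an $M_k$-query. Condition~(1), that $Q^{\text{SS}}_k\subseteq Q$, I would obtain from the observation that for every collapsing $p$ of a disjunct $q_j$ of $q$ the collapsing map is a homomorphism $q_j\to p$, whence $p\subseteq q_j$ as CQs; since both OMQs share the ontology \Omc and answers are computed through $\mn{ch}_\Omc$, this lifts to $(\Omc,\Sbf,p)(D)=p(\mn{ch}_\Omc(D))\subseteq q_j(\mn{ch}_\Omc(D))\subseteq Q(D)$ for every \Sbf-database $D$.

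The real work is Condition~(3): for every competitor $Q''=(\Omc'',\Sbf,q'')$ with $q''$ of starsize at most $k$ and $Q''\subseteq Q$, show $Q''\subseteq Q^{\text{SS}}_k$. It suffices to prove that for every \Sbf-database $D$ and every $\bar a\in Q''(D)$ there is a single collapsing $p$ of some disjunct of $q$ of starsize at most $k$ with $\bar a\in(\Omc,\Sbf,p)(D)$. I would first extract the collapse \emph{pattern}. Fix the disjunct $q''_i$ witnessing $\bar a\in Q''(D)$ and let $C$ be a canonical \Sbf-database for $q''_i$ under $\Omc''$ (when $q''_i$ uses symbols outside \Sbf this is built from the finitely controllable witness of Theorem~\ref{thm:finmod} applied to $\Omc''$). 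Feeding $C$ into the containment $Q''\subseteq Q$ yields a disjunct $q_{j_0}$ of $q$ and a homomorphism $g_0\colon q_{j_0}\to\mn{ch}_\Omc(C)$ sending the answer variables of $q_{j_0}$ to the distinguished constants of $C$. I then set $p:=q_{j_0}/\ker g_0$, the collapsing of $q_{j_0}$ induced by $g_0$; by construction $p$ is isomorphic to the image $g_0(q_{j_0})$ inside $\mn{ch}_\Omc(C)$, with its answer variables among the non-null constants (those of $C$).

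The combinatorial heart is the bound $\mn{SS}(p)\le k$, and this is exactly where $k\ge\mn{ar}(\Sbf)$ enters. Here I would exploit the tree-like shape of the guarded chase: every null of $\mn{ch}_\Omc(C)$ hangs off a unique guarded set of $C$, and every atom containing a null has all of its non-null constants inside that guarded set, which has size at most $\mn{ar}(\Sbf)$. Consequently an $\bar x$-component of $p$ is of one of two kinds. If it meets a non-null quantified vertex, I would project it onto $q''_i$ —using that every edge of $p$ between original constants arises either from an atom of $q''_i$ or from a full chase step inside one guarded set, and that edges incident to nulls stay within the attaching guarded set— and conclude that all its answer variables lie in a single $\bar x$-component of $q''_i$, hence number at most the starsize of $q''_i$, i.e.\ at most $k$. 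If the component consists purely of nulls, then by locality it hangs off one guarded set and its adjacent answer variables all lie in that set, so there are at most $\mn{ar}(\Sbf)\le k$ of them. Taking the maximum over the two kinds gives $\mn{SS}(p)\le k$.

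The step I expect to be the main obstacle is transferring this pattern back to the concrete pair $(D,\bar a)$: I must produce $\bar a\in p(\mn{ch}_\Omc(D))$ for the original database and its own ontology \Omc, whereas the competitor certifies $\bar a$ only through $\mn{ch}_{\Omc''}(D)$, and $q''$ may use symbols absent from \Sbf. Naively composing $g_0$ with the competitor's homomorphism lands in $\mn{ch}_\Omc(\mn{ch}_{\Omc''}(D))$, a \emph{non-minimal} model of \Omc, and the universal homomorphisms of Lemma~\ref{pro:chase} all point the wrong way to return to $\mn{ch}_\Omc(D)$. To overcome this I would combine the genuine witness $\bar a\in Q(D)$ (available since $Q''\subseteq Q$) with starsize-preserving manipulations of $D$ —cloning and direct products, whose results stay inside the chase-closed classes of Lemma~\ref{lem:closure}— designed so that on the manipulated database every \Omc-chase witness of $\bar a$ is forced to realise exactly the collapse pattern $p$ computed above, after which finite controllability (Theorem~\ref{thm:finmod}) together with the locality of the guarded chase lets me read off $\bar a\in p(\mn{ch}_\Omc(D))$ and hence $\bar a\in Q^{\text{SS}}_k(D)$. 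Reconciling the mismatch between $\Omc''$ and \Omc while keeping the starsize of the extracted collapsing under control is the delicate part of the argument.
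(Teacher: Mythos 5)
Your handling of Conditions~(1) and~(2) is fine, but Condition~(3) is not established: the proposal stalls exactly at the step you yourself flag as ``the delicate part.'' Extracting a collapse pattern $p$ from a canonical/finitely-controllable database $C$ for the competitor's disjunct only certifies that $p$ embeds into $\mn{ch}_\Omc(C)$. To conclude $\bar a \in (\Omc,\Sbf,p)(D)$ for an arbitrary $D$ you would need a homomorphism from $\mn{ch}_\Omc(C)$ (or from $\mn{ch}_\Omc(\mn{ch}_{\Omc''}(D))$) back into $\mn{ch}_\Omc(D)$ over the image of the competitor's witness, and as you observe the universality of the chase gives homomorphisms only in the opposite direction. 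The vague plan of ``cloning and direct products'' does not fill this hole: those operations do not force every $\Omc$-chase witness on the manipulated database to realise your fixed pattern $p$, and in general no single collapsing extracted from $C$ works uniformly for all $(D,\bar a)$ — the collapsing genuinely has to depend on the concrete database and answer.

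The paper resolves this by manipulating the \emph{database} rather than pre-computing a pattern. Given $\bar c \in P(D)$, it builds $D' = \bigcup_{S} D_S$ where $S$ ranges over all subsets of the constants of $\bar c$ of size at most $k$ and $D_S$ renames every constant outside $S$ to a fresh copy $c^S$; by construction $(D',\bar c)$ itself has starsize at most $k$ and maps homomorphically to $D$ over $\bar c$. The competitor still accepts $\bar c$ on $D'$ — here its starsize bound $k$ is used to re-route each quantified variable into the copy $D_S$ indexed by the answer variables of its $\bar x$-component, and $k \geq \mn{ar}(\Sbf)$ handles the purely-answer-variable atoms. Then $P \subseteq Q$ gives a witness $g$ of $\bar c \in Q(D')$, and the collapsing $\widehat q$ induced by $g$ is shown to have starsize at most $k$ by a path argument in $G_{\mn{ch}_\Omc(D')}$ exploiting guardedness (a putative large $\bar x$-component of $\widehat q$ would project to a large $\bar c$-component of $(D',\bar c)$, a contradiction). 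Composing $g$ with the homomorphism $D' \to D$ then yields $\bar c \in Q^{\text{SS}}_k(D)$. This per-instance unravelling is the missing idea in your argument; without it, the transfer step does not go through.
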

%\SSapprox*
%
% The proof has a similar structure to the one above. In the proof below, we may confuse a tuples with a set of it elements.
% It should follow from the context, which we mean.
%
\begin{proof}
  Let $Q(\bar{x})=(\Omc,\Sbf,q) \in (\class{G},\class{UCQ})$.  By
  construction of $Q_{k}^{SS} = (\Omc, \Sbf, q^{\text{SS}}_k)$, it is
  clear that Points~1 and~2 of the definition of SS$_k$-approximations
  are satisfied.  It remains to establish Point~3.

  Let $P(\bar x) = (\Omc',\Sbf,p) \in (\class{G},\class{UCQ})$ such
  that $P \subseteq Q$ with $p$ of starsize at most $k$.  We have to
  show that $P \subseteq Q_{k}^{\text{SS}}$, i.e., $\bar c \in P(D)$
  implies $\bar{c} \in Q_{k}^{\text{SS}}(D)$~for all
  $\Sbf$\=/databases $D$.  Thus let $D$ be an~\Sbf-database and let
  $\bar c \in P(D)$. Since $P \subseteq Q$, we have
  $\bar{c} \in Q(D)$.  We construct a pointed $\Sbf$\=/database
  $(D',\bar{c})$ such that
    \begin{enumerate}
        \item \label{item:ss-approximation-inclusion-property} $\bar{c} \in P(D')$,
        \item \label{item:ss-approximation-small-ss} the starsize of
          $(D',\bar{c})$ at most $k$, and
        \item \label{item:ss-approximation-h-D'-to-D} there is a
          homomorphism %\fun{h_D}{D'}{D}
          from $D'$ to $D$ that is the~identity on
        $\bar{c}$.

    \end{enumerate}
    In the following, we consider sets $S$ of constants that occur in
    $\bar c$, with $|S| \leq k$. Let \Smc denote the set of all such
    sets $S$. For every $S \in \Smc$, let $D_S$ denote the database
    obtained from $D$ by renaming every constant $c \notin S$ to
    $c^{S}$.
%
    % that
    % borrows subdomain $\bar b$ of $D$'', i.e.,
    % $\dom(D) \cap \dom(D_{\bar{b}}) = \bar{b}$. For distinct
    % $\bar{b},\bar{b}'$, we choose $D_{\bar{b}}$ and $D_{\bar{b}'}$
    % such that
    % $\dom(D_{\bar{b}}) \cap \dom(D_{\bar{b}'}) = \bar b \cap {\bar
    %   b}'$.
    % More precisely, $\dom(D_{\bar{b}}) = \{c_{\bar{b}}\}_{c \in \dom(D)}$, where
    % the constants $c_{\bar{b}}$ for $c \notin \bar{b}$ are fresh and $c_{\bar{b}} = c$ for $c \in \bar{b}$.
    % The content of $D_{\bar{b}}$ we define as follows.
    % \[
    % D_{\bar{b}} = \{ R(c_{\bar{b}}^1, c_{\bar{b}}^2, \dots, c_{\bar{b}}^n) \mid R(c^1,c^2,\dots, c^n) \in D\}
    % \]
    We then define
    \[
    %    \begin{equation}
    \label{eq:ss-approximation-new-database}
    D' = \bigcup_{S \in \Smc} D_S.
    %    D' = \bigcup_{\bar{b} \subseteq \bar{a}, |\bar{b}| \leq k} D_{\bar{b}} \cup D_{|\bar{a}},
    %    \end{equation}
    \]
    % where $D_{\bar{b}}$ are chosen so that for every $\bar{b},\bar{c}$ we have that $\dom(D_{\bar{b}}) \cap \dom(D_{\bar{c}}) = \bar{b} \cap \bar{c}$
    % and $D_{|\bar{a}}$ is the restriction of $D$ to constants $\bar{a}$, i.e.~$D_{|\bar{a}} = \{ R(b_1, b_2, \dots, b_n) \in D \mid \forall i.\ b_i \in \bar{a} \}$.
    \noindent
    By definition, $(D', \bar{c})$ has no $\bar{c}$\=/component with
    more than $k$ constants from $\bar{c}$, and thus
    Point~\ref{item:ss-approximation-small-ss} is~satisfied.
    Point~\ref{item:ss-approximation-h-D'-to-D} is clear by
    construction of $(D', \bar{c})$.  We need to show that
    Point~\ref{item:ss-approximation-inclusion-property} also holds.

    Since $\bar{c} \in P(D)$, there is a homomorphism $h$ from some CQ
    $p'(\bar x)$ in $p$ to $\chase_{\Omc'}(D)$. We construct a
    homomorphism $h'$ from $p'$ to $\chase_{\Omc'}(D)$, which shows
    $\bar{c} \in P(D')$ as desired.

    For every $S \in \Smc$, there is a homomorphism (even isomorphism)
    $h_S$ from $D$ to $D_S$ that is the identity on $S$. This
    homomorphism can be extended to a homomorphism from
    $\mn{ch}_{\Omc'}(D)$ to $\mn{ch}_{\Omc'}(D_S)$ by following the
    chase steps used to construct $\mn{ch}_{\Omc'}(D)$. Moreover,
    $\mn{ch}_{\Omc'}(D_S) \subseteq \mn{ch}_{\Omc'}(D')$, and thus we
    can view $h_S$ as a homomorphism from $\mn{ch}_\Omc(D)$ to
    $\mn{ch}_\Omc(D')$ that is the identity on $S$.

    Now for the construction of $h'$. For all answer variables $x$ in
    $p'$, we set $h'(x)=h(x)$. Note that this yields
    $h'(\bar x)=\bar c$. For every quantified variable $y$, let $S$ be
    the set of answer variables that are part of the unique
    $\bar x$-component that contains $y$. Then set
    $h'(y)=h_S \circ h(y)$. This is well-defined since $p$ has
    starsize at most $k$, and thus $|S| \leq k$ implying $S \in \Smc$.

    We argue that $h'$ is indeed a homomorphism. For every atom
    $R(\bar{z}) \in p'$, we have
    $R(h(\bar{z})) \in \chase_{\Omc'}(D)$.
    %Since the starsize of $p_i$ is at most $k$,
    %either all variables in $\bar{z}$ are answer variables
    %or all variables in $\bar{z}$ belong to a single
    %$\bar{x}$\=/component that contains at most $k$ answer variables.
    First assume that the variables in $\bar{z}$ are all answer
    variables. Let $S$ be the set of all constants in $h(\bar z)$. We
    have $|S| \leq k$ since $k \geq \maxarity$.  Since
    $h'(\bar z)=h(\bar z)$ and $h_S$ is the identity on $S$,
    $R(h(\bar{z})) \in \chase_{\Omc'}(D)$ implies
    $R(h'(\bar{z})) \in \chase_{\Omc'}(D')$, as required.  Now assume
    that $\bar z$ contains at least one quantified variable. Then all
    variables in $\bar{z}$ belong to the same $\bar{x}$\=/component of
    $p'$.  Let $S$ be the set of constants $h(x)$ such that $x$ is an
    answer variable in this $\bar x$-component. Then
    $h'(\bar z)=h_S \circ h(\bar z)$ and we are done. We have thus
    established Point~\ref{item:ss-approximation-inclusion-property}
    above.

    \medskip

    From $P \subseteq Q$ and $\bar{c} \in P(D')$, we obtain
    $\bar{c} \in Q(D')$.  Thus, for some CQ $q'$ in $q$, there is a
    homomorphism $g$ from $q'$ to $\mn{ch}_{\Omc}(D')$ such that
    $g(\bar{x}) = \bar{c}$. Let $\widehat q$ denote the collapsing of $q'$
    that is obtained by identifying $y_1$ and $y_2$ whenever
    $g(y_1)=g(y_2)$ with at least one of $y_1,y_2$ a quantified
    variable and adding $x_1=x_2$ whenever $g(x_1)=g(x_2)$ and
    $x_1,x_2$ are both answer variables.
%
    % Let $\widehat q$ be obtained from $q'$ by
    % any identifying variables $x_1, x_2$ with $g(x_1)=g(x_2)$ and
    % at least one of $x_1,x_2$ quantified.
    %
    % i.e.~$q''$
    %     is~a~collapsing of $q'$ obtained by~adding atom $x_1 = x_2$ for all pairs $x_1,x_2$ of variables of $q'$ such that $g(x_1) = g(x_2)$.
    %
    Then $g$ is also a homomorphism from $\widehat q$ to
    $\mn{ch}_{\Omc}(D')$. By Point~3, there is a homomorphism $h_D$
    from $D'$ to $D$, which can be extended to a homomorphism from
    $\mn{ch}_\Omc(D')$ to $\mn{ch}_\Omc(D)$. The composition
    $h_D \circ g$ is a homomorphism from $\widehat q$ to
    $\mn{ch}_\Omc(D)$, and thus $\bar c \in Q^{SS}_k(D)$. To finish
    the proof, it thus remains to show that $\widehat q$ is a CQ in
    $q^{SS}_k$.

    \medskip

    Assume to the contrary of what is to be shown that the starsize of
    $\widehat q(\bar{x})$ is at least
    $\ell = \max \{k, \mn{ar}(\Sbf)\} + 1$. Then, there is an
    $\bar{x}$\=/component $S$ of $\widehat q$ with at least $\ell$
    distinct answer variables, say $x_1, x_2, \dots, x_\ell$ such that
    $\widehat q$ does not contain {\color{\highlightColourTwo} atoms} $x_i=x_j$ for
    $1 \leq i < j \leq \ell$. Let $y$ be a quantified variable in
    $S$. By definition of $\bar x$-components, $G_{\widehat q}$
    contains (simple) paths $P_i$ between $y$ and $x_i$, for
    $1 \leq i \leq \ell$. % Clearly, we can choose $y$ so that the paths
    % $P_1,\dots,P_\ell$ do not share any inner nodes.
    Together with
    the homomorphism $g$, each path $P_i$ gives rise to a path $P'_i$
    in $G_{\chase_{\Omc}(D')}$ between $a = h(y)$ and $c_i=h(x_i)$,
    for $1 \leq i \leq \ell$.  By definition of~$\widehat q$,
    $g(z_1)=g(z_2)$ implies that $z_1=z_2$ or $z_1,z_2$ are both
    answer variables and $\widehat q$ contains an equality atom
    $z_1=z_2$. %We thus know the following:
    {\color{\highlightColourTwo} It follows:}
    \begin{enumerate}

        \item[(a)] the constants
        $c_1, \dots, c_\ell$ and $a$ are all different;
        %$a_1, \dots, a_l, \dots, a_n$ and $c$ are all different;

         \item[(b)]
         $a$ is different
         from all constants in $\bar{c}$;

        \item[(c)] path $P'_i$ contains no constants from $\bar{c}$
          as inner nodes.

        % \item[(d)] the paths $P'_1,\dots,P'_\ell$ do not share any
        %   inner nodes.

    \end{enumerate}
    { \color{\highlightColourTwo}
    First assume that $a \in \dom(D')$.
    %Assume the contrary.
    An easy
    analysis of the chase shows that, due to the existence of the path
    $P'_i$ and since all TGDs in $\Omc$ are guarded, for every $1 \leq i \leq \ell$ there is
    a~path $P''_i$ in $G_{D'}$ between $c_i$ and $a$ such that $P''_i$ uses no
    constants introduced by the chase. In fact, we can obtain
    $P''_i$ from $P'_i$ by dropping all constants that have been
    introduced by the chase.  It then follows from (a) to (c) that the
    starsize of $(D',\bar{c})$ is at least $\ell$, a contradiction.}

    Now assume that $a \notin \dom(D')$.
    % Then $a$ is in the tree-like
    % structure that the chase has generated below some fact
    % $R(b_1,\dots,b_n)$.\footnote{This can be made precise in the same
    %   way as in the proof of Lemma~\ref{prop:firstapprox}. We prefer
    %   to remain on the intuitive level here to not distract from the
    %   main proof.}
    Let $b_i$ be the last constant on the path $P'_i$ that is in
    $\dom(D')$ when traveling the path from $c_i$ to $a$. Thus, the
    subpath of $P'_i$ that connects (the last occurrence of) $b_i$
    with $a$ uses only constants introduced by the chase as inner
    nodes.  Another easy analysis of the chase reveals that since all
    paths $P'_1,\dots,P'_\ell$ end at the same constant $a$, there
    must be a fact in $D'$ that contains all of $b_1,\dots,b_\ell$.
    Note that $ \{c_1,\dots,c_\ell\} \subseteq \{b_1,\dots,b_\ell\}$
    is impossible since $\mn{ar}(\Sbf) <\ell$.  It thus follows from
    (c) that some $b_i$ is not in $\bar c$. Consequently, there is a
    path $P''_i$ in $G_{D'}$ that connects $c_i$ and $b_i$ and uses no
    constants from $\bar{c}$ as inner nodes, for $1 \leq i \leq \ell$.
    We may again obtain $P''_i$ by dropping constants introduced by
    the chase.  This implies that the starsize of $(D',\bar{c})$ is at
    least $\ell$, a contradiction.
\end{proof}

%%% Local Variables:
%%% mode: latex
%%% TeX-master: "paper_main"
%%% End:

\section{Conclusions}
\label{sec:conclusions}

We have provided a complexity classification for counting the number
of answers to UCQs in the presence of TGDs that applies both to
ontology-mediated querying and to querying under constraints.
%
% ontology-mediated queries from $(\class{G},\class{UCQ})$
% that are based on the full schema. % All upper bounds except the \PTime
% % one for the description logics $\mathcal{ELH}^\mn{dr}$ also apply to
% % the more general case where the data schema is not required to be full.
% Our classification essentially follows the classification of UCQs
% provided in \cite{DBLP:conf/icalp/DellRW19} and uses the same %semantic
% measures. % We
% remark, however, that % `semantic' has a different meaning in our case,
% % which makes the technical development more difficult. In fact,
% in our case
% it is
% possible to exploit and even modify the ontology when transitioning to
% an equivalent OMQ with smaller measures.
%
The classification also applies to ontology-mediated querying with the
OMQ language $(\mathcal{ELIH},\text{UCQ})$ where $\mathcal{ELIH}$ is a
well-known description logic \cite{DBLP:books/daglib/0041477}. In
fact, this is immediate if the ontologies in OMQs are in a certain
well-known normal form that avoids nesting of concepts
\cite{DBLP:books/daglib/0041477}.  In the general case, it suffices to
observe that all our proofs extended from guarded TGDs to
frontier-guarded TGDs \cite{BLMS11} with bodies of bounded treewidth,
a strict generalization of $\mathcal{ELIH}$. In contrast, a complexity
classification for OMQs based on frontier-guarded TGDs with
unrestricted bodies is an interesting problem for future work.

There are several other interesting questions that remain open. In
querying under constraints that are guarded TGDs, does answer counting
in FPT coincide with answer counting in \PTime? Do our results extend
to ontology-mediated querying when the data schema is not required to
be full? What happens when we drop the restriction that relation
symbols are of bounded arity? What about OMQs and CQSs based on other
decidable classes of TGDs?  And how can we decide the meta problems
for the important structural measure of treewidth when the ontology
needs not be preserved, with full data schema or even with
unrestricted data schema?

%%% Local Variables:
%%% mode: latex
%%% TeX-master: "paper_main"
%%% End:

\bibliographystyle{alphaurl}
\bibliography{biblio}

\newcommand{\etalchar}[1]{$^{#1}$}
\begin{thebibliography}{BtCLW14}

\bibitem[ACJR21]{counting-cq-approx}
Marcelo Arenas, Luis~Alberto Croquevielle, Rajesh Jayaram, and Cristian
  Riveros.
\newblock When is approximate counting for conjunctive queries tractable?
\newblock In {\em Proc.\ of {STOC}}, pages 1015–--1027, 2021.
\newblock \href {https://doi.org/10.1145/3406325.3451014}
  {\path{doi:10.1145/3406325.3451014}}.

\bibitem[AHV95]{AbHV95}
Serge Abiteboul, Richard Hull, and Victor Vianu.
\newblock {\em Foundations of Databases}.
\newblock Addison-Wesley, 1995.
\newblock URL: \url{http://webdam.inria.fr/Alice/}.

\bibitem[BBP18]{BaBP18}
Pablo Barcel{\'{o}}, Gerald Berger, and Andreas Pieris.
\newblock Containment for rule-based ontology-mediated queries.
\newblock In {\em Proc.\ of {PODS}}, pages 267--279, 2018.
\newblock \href {https://doi.org/10.1145/3196959.3196963}
  {\path{doi:10.1145/3196959.3196963}}.

\bibitem[BDF{\etalchar{+}}20]{BDFLP-PODS20}
Pablo Barcel\'{o}, Victor Dalmau, Cristina Feier, Carsten Lutz, and Andreas
  Pieris.
\newblock The limits of efficiency for open- and closed-world query evaluation
  under guarded {{TGD}s}.
\newblock In {\em Proc.\ of {PODS}}, pages 259--270, 2020.
\newblock \href {https://doi.org/10.1145/3375395.3387653}
  {\path{doi:10.1145/3375395.3387653}}.

\bibitem[BFGP20]{BFGP20}
Pablo Barcel\'{o}, Diego Figueira, Georg Gottlob, and Andreas Pieris.
\newblock Semantic optimization of conjunctive queries.
\newblock {\em J. ACM}, 67(6), 2020.
\newblock \href {https://doi.org/10.1145/3424908} {\path{doi:10.1145/3424908}}.

\bibitem[BFLP19]{BFLP19}
Pablo Barcel{\'{o}}, Cristina Feier, Carsten Lutz, and Andreas Pieris.
\newblock When is ontology-mediated querying efficient?
\newblock In {\em Proc.\ of LICS}, pages 1--13, 2019.
\newblock \href {https://doi.org/10.1109/LICS.2019.8785823}
  {\path{doi:10.1109/LICS.2019.8785823}}.

\bibitem[BGO10]{DBLP:conf/lics/BaranyGO10}
Vince B{\'{a}}r{\'{a}}ny, Georg Gottlob, and Martin Otto.
\newblock Querying the guarded fragment.
\newblock In {\em Proc.\ of {LICS}}, pages 1--10, 2010.

\bibitem[BGP16]{BaGP16}
Pablo Barcel{\'{o}}, Georg Gottlob, and Andreas Pieris.
\newblock Semantic acyclicity under constraints.
\newblock In {\em Proc.\ of {PODS}}, pages 343--354, 2016.
\newblock \href {https://doi.org/10.1145/2902251.2902302}
  {\path{doi:10.1145/2902251.2902302}}.

\bibitem[BHLS17]{DBLP:books/daglib/0041477}
Franz Baader, Ian Horrocks, Carsten Lutz, and Ulrike Sattler.
\newblock {\em An Introduction to Description Logic}.
\newblock Cambridge University Press, 2017.
\newblock \href {https://doi.org/10.1017/9781139025355}
  {\path{doi:10.1017/9781139025355}}.

\bibitem[BLMS11]{BLMS11}
Jean-Fran\c{c}ois Baget, Michel Lecl{\`e}re, Marie-Laure Mugnier, and Eric
  Salvat.
\newblock On rules with existential variables: {W}alking the decidability line.
\newblock {\em Artif. Intell.}, 175(9-10):1620--1654, 2011.
\newblock \href {https://doi.org/10.1016/j.artint.2011.03.002}
  {\path{doi:10.1016/j.artint.2011.03.002}}.

\bibitem[BLR14]{DBLP:journals/siamcomp/BarceloL014}
Pablo Barcel{\'{o}}, Leonid Libkin, and Miguel Romero.
\newblock Efficient approximations of conjunctive queries.
\newblock {\em {SIAM} J. Comput.}, 43(3):1085--1130, 2014.

\bibitem[BMT20]{Meghyn-IJCAI20}
Meghyn Bienvenu, Quentin Mani\`ere, and Micha\"el Thomazo.
\newblock Answering counting queries over {DL}-{L}ite ontologies.
\newblock In {\em Proc.\ of IJCAI}, pages 1608--1614, 2020.
\newblock \href {https://doi.org/10.24963/ijcai.2020/223}
  {\path{doi:10.24963/ijcai.2020/223}}.

\bibitem[BMT21a]{DBLP:conf/ijcai/BienvenuMT21}
Meghyn Bienvenu, Quentin Mani{\`{e}}re, and Micha{\"{e}}l Thomazo.
\newblock Cardinality queries over {DL}-{L}ite ontologies.
\newblock In {\em Proc.\ of {IJCAI}}, pages 1801--1807, 2021.

\bibitem[BMT21b]{DBLP:conf/dlog/BienvenuMT21}
Meghyn Bienvenu, Quentin Mani{\`{e}}re, and Micha{\"{e}}l Thomazo.
\newblock Counting queries over {{ELHI}$_\bot$} ontologies.
\newblock In {\em Proc.\ of {DL}}, 2021.

\bibitem[BMT22]{DBLP:conf/kr/BienvenuMT22}
Meghyn Bienvenu, Quentin Mani{\`{e}}re, and Micha{\"{e}}l Thomazo.
\newblock Counting queries over {{ELHI}$_\bot$} ontologies.
\newblock In {\em Proc.\ of {KR}}, 2022.

\bibitem[BO15]{DBLP:conf/rweb/BienvenuO15}
Meghyn Bienvenu and Magdalena Ortiz.
\newblock Ontology-mediated query answering with data-tractable description
  logics.
\newblock In {\em Proc.\ of {Reasoning Web}}, pages 218--307, 2015.
\newblock \href {https://doi.org/10.1007/978-3-319-21768-0_9}
  {\path{doi:10.1007/978-3-319-21768-0_9}}.

\bibitem[BtCLW14]{DBLP:journals/tods/BienvenuCLW14}
Meghyn Bienvenu, Balder ten Cate, Carsten Lutz, and Frank Wolter.
\newblock Ontology-based data access: {A} study through disjunctive datalog,
  {CSP}, and {MMSNP}.
\newblock {\em {ACM} Trans. Database Syst.}, 39(4):33:1--33:44, 2014.
\newblock \href {https://doi.org/10.1145/2661643} {\path{doi:10.1145/2661643}}.

\bibitem[CCLR20]{Diego-IJCAI20}
Diego Calvanese, Julien Corman, Davide Lanti, and Simon Razniewski.
\newblock Counting query answers over {DL}-{L}ite knowledge base.
\newblock In {\em Proc.\ of IJCAI}, pages 1658--1666, 2020.
\newblock \href {https://doi.org/10.24963/ijcai.2020/230}
  {\path{doi:10.24963/ijcai.2020/230}}.

\bibitem[CGK13]{CaGK13}
Andrea Cal\`{\i}, Georg Gottlob, and Michael Kifer.
\newblock Taming the infinite chase: Query answering under expressive
  relational constraints.
\newblock {\em J. Artif. Intell. Res.}, 48:115--174, 2013.
\newblock \href {https://doi.org/10.1613/jair.3873}
  {\path{doi:10.1613/jair.3873}}.

\bibitem[CGL98]{DBLP:conf/pods/CalvaneseGL98}
Diego Calvanese, Giuseppe~De Giacomo, and Maurizio Lenzerini.
\newblock On the decidability of query containment under constraints.
\newblock In {\em Proc.\ of {PODS}}, pages 149--158, 1998.
\newblock \href {https://doi.org/10.1145/275487.275504}
  {\path{doi:10.1145/275487.275504}}.

\bibitem[CGP12]{CaGP12}
Andrea Cal\`{\i}, Georg Gottlob, and Andreas Pieris.
\newblock Towards more expressive ontology languages: {T}he query answering
  problem.
\newblock {\em Artif. Intell.}, 193:87--128, 2012.
\newblock \href {https://doi.org/10.1016/j.artint.2012.08.002}
  {\path{doi:10.1016/j.artint.2012.08.002}}.

\bibitem[CM15]{countingTrichotomy}
Hubie Chen and Stefan Mengel.
\newblock A trichotomy in the complexity of counting answers to conjunctive
  queries.
\newblock In {\em Proc. of {ICDT}}, pages 110--126, 2015.
\newblock \href {https://doi.org/10.4230/LIPIcs.ICDT.2015.110}
  {\path{doi:10.4230/LIPIcs.ICDT.2015.110}}.

\bibitem[CM16]{countingPositiveQueries}
Hubie Chen and Stefan Mengel.
\newblock Counting answers to existential positive queries: {A} complexity
  classification.
\newblock In {\em Proc. of {PODS}}, pages 315--326, 2016.
\newblock \href {https://doi.org/10.1145/2902251.2902279}
  {\path{doi:10.1145/2902251.2902279}}.

\bibitem[DJ04]{countingHomomorphisms}
V{\'{\i}}ctor Dalmau and Peter Jonsson.
\newblock The complexity of counting homomorphisms seen from the other side.
\newblock {\em J. Theor. Comput. Sci.}, 329(1-3):315--323, 2004.
\newblock \href {https://doi.org/10.1016/j.tcs.2004.08.008}
  {\path{doi:10.1016/j.tcs.2004.08.008}}.

\bibitem[DM14]{DBLP:journals/jcss/DurandM14}
Arnaud Durand and Stefan Mengel.
\newblock The complexity of weighted counting for acyclic conjunctive queries.
\newblock {\em J. Comput. Syst. Sci.}, 80(1):277--296, 2014.
\newblock \href {https://doi.org/10.1016/j.jcss.2013.08.001}
  {\path{doi:10.1016/j.jcss.2013.08.001}}.

\bibitem[DM15]{countingQueriesSructural}
Arnaud Durand and Stefan Mengel.
\newblock Structural tractability of counting of solutions to conjunctive
  queries.
\newblock {\em J. Theory Comput. Syst.}, 57(4):1202--1249, 2015.
\newblock \href {https://doi.org/10.1007/s00224-014-9543-y}
  {\path{doi:10.1007/s00224-014-9543-y}}.

\bibitem[DRW19]{DBLP:conf/icalp/DellRW19}
Holger Dell, Marc Roth, and Philip Wellnitz.
\newblock Counting answers to existential questions.
\newblock In {\em Proc.\ of {ICALP}}, pages 113:1--113:15, 2019.
\newblock \href {https://doi.org/10.4230/LIPIcs.ICALP.2019.113}
  {\path{doi:10.4230/LIPIcs.ICALP.2019.113}}.

\bibitem[Fag80]{faginSTOC80}
Ronald Fagin.
\newblock Horn clauses and database dependencies (extended abstract).
\newblock In {\em Proc. of\ {STOC}}, pages 123--134, 1980.
\newblock \href {https://doi.org/10.1145/800141.804660}
  {\path{doi:10.1145/800141.804660}}.

\bibitem[Fei22]{feier-icdt-2022}
Cristina Feier.
\newblock Characterising fixed parameter tractability for query evaluation over
  guarded {TGD}s.
\newblock In {\em Proc.\ of {ICDT}}, pages 12:1--12:20, 2022.
\newblock \href {https://doi.org/10.4230/LIPIcs.ICDT.2022.12}
  {\path{doi:10.4230/LIPIcs.ICDT.2022.12}}.

\bibitem[FG04]{DBLP:journals/siamcomp/FlumG04}
J{\"{o}}rg Flum and Martin Grohe.
\newblock The parameterized complexity of counting problems.
\newblock {\em {SIAM} J. Comput.}, 33(4):892--922, 2004.
\newblock \href {https://doi.org/10.1137/S0097539703427203}
  {\path{doi:10.1137/S0097539703427203}}.

\bibitem[FGRZ21]{DBLP:journals/corr/abs-2103-12468}
Jacob Focke, Leslie~Ann Goldberg, Marc Roth, and Stanislav Zivn{\'{y}}.
\newblock Approximately counting answers to conjunctive queries with
  disequalities and negations.
\newblock {\em CoRR}, abs/2103.12468, 2021.

\bibitem[Fig16]{DiegoFunctional16}
Diego Figueira.
\newblock Semantically acyclic conjunctive queries under functional
  dependencies.
\newblock In {\em Proc. of LICS}, page 847–856, 2016.
\newblock \href {https://doi.org/10.1145/2933575.2933580}
  {\path{doi:10.1145/2933575.2933580}}.

\bibitem[FKMP05]{FKMP05}
Ronald Fagin, Phokion~G. Kolaitis, Ren{\'{e}}e~J. Miller, and Lucian Popa.
\newblock Data exchange: semantics and query answering.
\newblock {\em J. Theor. Comput. Sci.}, 336(1):89--124, 2005.
\newblock \href {https://doi.org/10.1016/j.tcs.2004.10.033}
  {\path{doi:10.1016/j.tcs.2004.10.033}}.

\bibitem[FLP21]{DBLP:conf/icdt/FeierLP21}
Cristina Feier, Carsten Lutz, and Marcin Przybylko.
\newblock Answer counting under guarded {TGD}s.
\newblock In {\em Proc.\ of {ICDT}}, pages 11:1--11:22, 2021.

\bibitem[Gro07]{Grohe07}
Martin Grohe.
\newblock The complexity of homomorphism and constraint satisfaction problems
  seen from the other side.
\newblock {\em J. {ACM}}, 54(1):1:1--1:24, 2007.
\newblock \href {https://doi.org/10.1145/1206035.1206036}
  {\path{doi:10.1145/1206035.1206036}}.

\bibitem[JK84]{JoKl84}
David~S. Johnson and Anthony~C. Klug.
\newblock Testing containment of conjunctive queries under functional and
  inclusion dependencies.
\newblock {\em J. Comput. Syst. Sci.}, 28(1):167--189, 1984.
\newblock \href {https://doi.org/10.1016/0022-0000(84)90081-3}
  {\path{doi:10.1016/0022-0000(84)90081-3}}.

\bibitem[KK18]{DBLP:journals/ojsw/KostovK18}
Bogdan Kostov and Petr Kremen.
\newblock Count distinct semantic queries over multiple linked datasets.
\newblock {\em Open J. Semantic Web}, 5(1):1--11, 2018.

\bibitem[KR15]{DBLP:journals/ws/KostylevR15}
Egor~V. Kostylev and Juan~L. Reutter.
\newblock Complexity of answering counting aggregate queries over {DL}-{L}ite.
\newblock {\em J. Web Semant.}, 33:94--111, 2015.
\newblock \href {https://doi.org/10.1016/j.websem.2015.05.003}
  {\path{doi:10.1016/j.websem.2015.05.003}}.

\bibitem[LMTV19]{LMTV19}
Nicola Leone, Marco Manna, Giorgio Terracina, and Pierfrancesco Veltri.
\newblock Fast query answering over existential rules.
\newblock {\em {ACM} Trans. Comput. Log.}, 20(2):12:1--12:48, 2019.
\newblock \href {https://doi.org/10.1145/3308448} {\path{doi:10.1145/3308448}}.

\bibitem[MMS79]{MaMS79}
David Maier, Alberto~O. Mendelzon, and Yehoshua Sagiv.
\newblock Testing implications of data dependencies.
\newblock {\em ACM Trans. Database Syst.}, 4(4):455--469, 1979.
\newblock \href {https://doi.org/10.1145/320107.320115}
  {\path{doi:10.1145/320107.320115}}.

\bibitem[PLC{\etalchar{+}}08]{DBLP:journals/jods/PoggiLCGLR08}
Antonella Poggi, Domenico Lembo, Diego Calvanese, Giuseppe~De Giacomo, Maurizio
  Lenzerini, and Riccardo Rosati.
\newblock Linking data to ontologies.
\newblock {\em J. Data Semantics}, 4900:133--173, 2008.
\newblock \href {https://doi.org/10.1007/978-3-540-77688-8_5}
  {\path{doi:10.1007/978-3-540-77688-8_5}}.

\bibitem[PS13]{DBLP:journals/jcss/PichlerS13}
Reinhard Pichler and Sebastian Skritek.
\newblock Tractable counting of the answers to conjunctive queries.
\newblock {\em J. Comput. Syst. Sci.}, 79(6):984--1001, 2013.
\newblock \href {https://doi.org/10.1016/j.jcss.2013.01.012}
  {\path{doi:10.1016/j.jcss.2013.01.012}}.

\end{thebibliography}

\newpage

\appendix

\section{Preliminary Notes}

The main purpose of
Sections~\ref{app:firstblack},~\ref{app:secondblack},
and~\ref{app:thirdblack} of the appendix is to provide proof sketches
for the results that we take over from Chen and Mengel, that is,
Lemma~\ref{lem:countingEquivClassD},
Theorem~\ref{thm:chenMengelStronger}, and
Lemma~\ref{lemma:CQ-to-marked-CQ-byChenMengel}, respectively.  These
results are implicit in~\cite{countingTrichotomy,countingPositiveQueries}.
They are stated
there explicitly only for the class of all databases, while we need
them for classes of databases that satisfy certain properties, made
precise in the mentioned lemmas and theorem. In~Sections~\ref{app:firstblack},~\ref{app:secondblack},
and~\ref{app:thirdblack}, we summarize the proofs given in
\cite{countingTrichotomy,countingPositiveQueries} so that the reader
can convince themselves that all results indeed hold in the form stated
in the current paper. We~use our own terminology and language in the
proof sketches, so the presentation is somewhat different from the one
given in Lemma~\ref{lemma:CQ-to-marked-CQ-byChenMengel} where, for
example, relational structures are used in place of conjunctive queries.

\section{Additional Details for Section~\ref{sect:countequiv}}
\label{app:firstblack}

In this section we describe  constructions from~\cite{countingPositiveQueries}
that relate to the notions of~counting equivalence and semi\=/counting equivalence.
%In particular, we provide the arguments allowing us to
%decide both counting equivalence and semi\=/counting equivalence.
In particular, we provide the proof of Lemma~\ref{lem:countingEquivClassD}.

%\begin{lem*}[Lemma~\ref{lem:countingEquivClassD}]
%  \lemcountingEquivClassD
%\end{lem*}
%
% {\color{red}refer back to description of decision procedure in main
%    part, describe the countermodel constructions in detail!}

\bigskip
\noindent
{\bf Lemma~\ref{lem:countingEquivClassD}.}%
\textit{\lemcountingEquivClassD}

%\noindent
We start with two simple, yet crucial, observations about counting: one regarding products
and one regarding cloning. Those observation will often be used implicitly in the following three sections of the appendix.
\begin{lem}[product rule]
	\label{lem:counting-and-products}
	Let $q(\bar{x})$ be a CQ over schema $\Sbf$ and $D, D'$ be $\Sbf$\=/databases.
%	Let $p_1(\bar{y}_1), \dots, p_n(\bar{y}_n)$ be maximal connected components in $q$.%
%	\footnote{Of course, here $\bar{y}_i$ are disjoint and $\bigcup_{i} \bar{y}_i = \bar{x}$.
%		E.g.~$q(x,y) = A(x) \land B(y)$ has two maximal connected components $p_1(x) = A(x)$ and $p_2(y) = B(x)$.}
%	Then, we have the following.
%	\begin{enumerate}
%		\item $|q(D)| = \prod_{i=1}^{n} |p_i(D)|$
%		\item $|q(D \times D')| = |q(D)| \cdot |q(D')|$
%	\end{enumerate}
  Then,
  $\#q(D \times D') = \#q(D) \cdot \#q(D')$.
\end{lem}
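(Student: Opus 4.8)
The plan is to exploit the universal property of the direct product in the category of $\Sbf$-instances, namely that homomorphisms into a product correspond to pairs of homomorphisms into the factors. First I would record the two structural facts I will lean on. By definition of the direct product, $\mn{adom}(D \times D') = \mn{adom}(D) \times \mn{adom}(D')$, so the componentwise ``unzipping'' map $\Phi$ that sends a tuple $\bar c = \big((a_1,b_1),\dots,(a_k,b_k)\big) \in \mn{adom}(D\times D')^{k}$ to the pair $\big((a_1,\dots,a_k),(b_1,\dots,b_k)\big)$ is a bijection from $\mn{adom}(D\times D')^{|\bar x|}$ onto $\mn{adom}(D)^{|\bar x|}\times \mn{adom}(D')^{|\bar x|}$. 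The whole proof then reduces to showing that $\Phi$ restricts to a bijection between the answer set $q(D\times D')$ and $q(D)\times q(D')$.

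Second, I would observe that the two projections $\pi_1\colon (a,b)\mapsto a$ and $\pi_2\colon (a,b)\mapsto b$ are homomorphisms $D\times D'\to D$ and $D\times D'\to D'$: this is immediate from the definition of the direct product, since $R((a_1,b_1),\dots,(a_n,b_n))\in D\times D'$ forces $R(a_1,\dots,a_n)\in D$ and $R(b_1,\dots,b_n)\in D'$. For the forward inclusion I would take $\bar c \in q(D\times D')$, witnessed by a homomorphism $h$ from $q$ to $D\times D'$ with $h(\bar x)=\bar c$, and compose with the projections: $\pi_1\circ h$ is a homomorphism from $q$ to $D$ with $(\pi_1\circ h)(\bar x) = \bar a$, and symmetrically for $\pi_2\circ h$ and $\bar b$. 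Hence $\Phi(\bar c)=(\bar a,\bar b)\in q(D)\times q(D')$.

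For the converse I would assume $\bar a\in q(D)$ via $h_1$ and $\bar b\in q(D')$ via $h_2$, and define $h(v)=(h_1(v),h_2(v))$ for every variable $v$ of $q$. Checking that $h$ is a homomorphism from $q$ to $D\times D'$ is a direct verification atom by atom: a relational atom $R(v_1,\dots,v_n)\in q$ is mapped by $h_1$ into $D$ and by $h_2$ into $D'$, so $R(h(v_1),\dots,h(v_n))\in D\times D'$ by definition of the product. Since $h(\bar x)=\bar c$ with $\Phi(\bar c)=(\bar a,\bar b)$, this gives $\bar c\in q(D\times D')$. Combining the two inclusions, $\Phi$ is a bijection $q(D\times D')\to q(D)\times q(D')$, whence $\#q(D\times D')=\#q(D)\cdot\#q(D')$.

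The only genuine subtlety, and the point I would be most careful about, is that $\#q(D)$ counts \emph{distinct answer tuples} (existence of a witnessing homomorphism) rather than homomorphisms themselves; so I must argue at the level of the tuple bijection $\Phi$ rather than merely invoking the product formula for homomorphism counts. A second minor point is the treatment of equality atoms, should $q$ contain any: in the forward direction an equality $x_1=x_2\in q$ gives $h(x_1)=h(x_2)$ and hence $\pi_i(h(x_1))=\pi_i(h(x_2))$, while in the converse $h_1(x_1)=h_1(x_2)$ and $h_2(x_1)=h_2(x_2)$ yield $h(x_1)=h(x_2)$; so equality atoms are respected in both directions and cause no trouble.
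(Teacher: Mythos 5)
Your proof is correct and is precisely the standard folklore argument that the paper alludes to when it omits the proof with the remark ``The proof is folklore'': projections out of the product give the forward inclusion, pairing of homomorphisms gives the converse, and the componentwise bijection $\Phi$ on answer tuples (rather than on homomorphisms) yields the count. Your two cautionary remarks --- that answers are counted as tuples, not homomorphisms, and that equality atoms are preserved in both directions --- are exactly the right points to check, and both are handled correctly.
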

\noindent
The proof is folklore.

We need one more definition before we formulate the statement regarding cloning.
Let $D$ be an $\Sbf$\=/database and $q(\bar{x})$ be a CQ  over schema $\Sbf$.
For a number $i \geq 0$ and a set $T{\subseteq} \dom(D)$,  by $\text{hom}_{i,T}(q, D, \bar{x})$ we denote the set of all functions $\fun{h}{\bar{x}}{\dom({D})}$
that extend to a homomorphism from $q$ to $D$ such that $h$ maps exactly $i$ variables from $\bar{x}$ to~$T$.

\begin{lem}
    \label{lem:counting-and-cloning}
    Let $D$ be an $\Sbf$\=/database,
    let $i \geq 0, j>0$ be natural numbers, and $T \subseteq \dom(D)$ be a subset of the active domain of $D$.
    Let $q(\bar{x})$ be an equality\=/free CQ over schema $\Sbf$. % with no equality atoms and no repeated variables in $\bar{x}$.

    If $D_{j}$ is a database obtained from $D$ by cloning every element from $T$ exactly $j{-}1$ times
    and $T_{j} \subseteq \dom(D_{j})$ is the set of all those clones,
    then $|\text{hom}_{i,T_j}(q, D_j, \bar{x})| = j^i|\text{hom}_{i,T}(q, D, \bar{x})|$.
\end{lem}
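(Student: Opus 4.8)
The plan is to exhibit a \emph{folding} homomorphism and to count homomorphisms fibrewise over it. I would write $\pi \colon \dom(D_j) \to \dom(D)$ for the retraction that sends every clone of an element $c \in T$ back to $c$ and is the identity on $\dom(D) \setminus T$; by the construction of $D_j$ one has $\pi^{-1}(T) = T_j$, with $|\pi^{-1}(c)| = j$ for every $c \in T$ and $|\pi^{-1}(d)| = 1$ for every $d \in \dom(D) \setminus T$. The first thing I would record is the structural description of $D_j$ that follows directly from the definition of cloning: a relational atom $R(\bar d')$ belongs to $D_j$ if and only if $R(\pi(\bar d')) \in D$, so that $\pi$ is a homomorphism from $D_j$ to $D$ and, intuitively, $D_j$ is the full pullback $\pi^{-1}(D)$ in which each occurrence of a cloned constant may be replaced independently by any of its clones.

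Next I would set up the fibrewise correspondence. Define $\Pi(g) = \pi \circ g$ for $g \in \text{hom}_{i,T_j}(q,D_j,\bar x)$. If $g$ is witnessed by a homomorphism $\hat g \colon q \to D_j$, then $\pi \circ \hat g \colon q \to D$ is a homomorphism witnessing $\pi \circ g$, so $\Pi(g)$ extends to a homomorphism into $D$. Moreover, since $\pi^{-1}(T) = T_j$, an answer variable $x$ satisfies $g(x) \in T_j$ if and only if $\pi(g(x)) \in T$; hence $\Pi(g)$ maps exactly $i$ answer variables into $T$, and $\Pi$ is a well-defined map from $\text{hom}_{i,T_j}(q,D_j,\bar x)$ to $\text{hom}_{i,T}(q,D,\bar x)$.

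The core of the argument is to show that every fibre of $\Pi$ has size exactly $j^i$. Fix $h \in \text{hom}_{i,T}(q,D,\bar x)$ witnessed by $\hat h \colon q \to D$. A function $g \colon \bar x \to \dom(D_j)$ satisfies $\pi \circ g = h$ precisely when $g(x) \in \pi^{-1}(h(x))$ for every answer variable $x$; there are $j$ admissible values when $h(x) \in T$ and a single one otherwise, so, because exactly $i$ answer variables are mapped by $h$ into $T$, there are exactly $j^i$ such functions $g$, each of them automatically mapping exactly $i$ answer variables into $T_j$. What remains is to verify that each of these $j^i$ functions genuinely lies in $\text{hom}_{i,T_j}(q,D_j,\bar x)$, i.e.\ is extendable. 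For this I would lift $\hat h$ by setting $\hat g(x) = g(x)$ on answer variables and $\hat g(y) = \hat h(y)$ (the original copy) on quantified variables, so that $\pi \circ \hat g = \hat h$; for every atom $R(\bar z) \in q$ one then has $R(\pi(\hat g(\bar z))) = R(\hat h(\bar z)) \in D$, whence $R(\hat g(\bar z)) \in D_j$ by the structural description of $D_j$, so $\hat g$ is a homomorphism extending $g$. Summing over the fibres gives $|\text{hom}_{i,T_j}(q,D_j,\bar x)| = \sum_{h} |\Pi^{-1}(h)| = j^i \cdot |\text{hom}_{i,T}(q,D,\bar x)|$.

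The step I expect to be the main obstacle, and the only place where equality\=/freeness of $q$ is used, is the independence of the clone choices on the answer variables. Because $q$ has no equality atoms, distinct answer variables carry no forced coincidences, so the $j^i$ combinations of clones described above are all legitimate and pairwise distinct; were equality atoms present, two answer variables could be constrained to the same value and the count would collapse. The structural identity $R(\bar d') \in D_j \iff R(\pi(\bar d')) \in D$, which is exactly what legitimises replacing occurrences of a cloned constant independently, is the other point I would take care to justify carefully from the definition of cloning.
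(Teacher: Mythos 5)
Your proof is correct; the paper itself gives no argument for this lemma (it declares it straightforward and only remarks that the independence and non-repetition of the answer variables is crucial), and your fibrewise count over the retraction $\pi$ is exactly the standard argument being left implicit. You also correctly isolate the two points that actually need care — the structural identity $R(\bar d')\in D_j \iff R(\pi(\bar d'))\in D$ coming from the definition of cloning, and the use of equality-freeness (together with the paper's convention that $\bar x$ has no repeated variables) to guarantee that the $j^i$ clone choices on answer variables are independent — which is precisely the caveat the paper records after the lemma.
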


 \noindent
As before, the proof is straightforward.
Nevertheless, observe that in the above statement it is crucial that the answer variables are independent
and do not repeat in the tuple $\bar{x}$.
%This is the last place in Sections~\ref{app:firstblack}, \ref{app:secondblack}, \ref{app:thirdblack} where we
%explicitly state the assumption that every CQ $q(\bar{x})$ has no equality atoms and no repeated variables in $\bar{x}$.

For the notion of counting equivalence, we inspect the strongly related
notion~of being renaming equivalent.
Let $q_1(\bar{x}_1), q_2(\bar{x}_2)$ be CQs over some schema $\Sbf$.
We say that $q_1$ and $q_2$ are \emph{renaming equivalent} if there
are two surjections $\fun{h_1}{\bar{x}_1}{\bar{x}_2}$ and
$\fun{h_2}{\bar{x}_2}{\bar{x}_1}$ that can be extended
to homomorphisms $\fun{h_1}{q_1}{D_{q_2}}$ and
$\fun{h_2}{q_2}{D_{q_1}}$.
%We may refer to
%those homomorphism as \emph{renaming homomorphism}.

\begin{lem}[counting equivalence]
	\label{lem:counting-equality-cq}
	Let $q_1(\bar{x}_1),q_2({\bar{x}}_2)$ be equality\=/free CQs over schema $\Sbf$.
	Let $\class{D}$ be a class of databases such that
	\begin{itemize}
		\item $D_{q_1}, D_{q_2} \in \class{D}$ and
		\item $\class{D}$ is closed under cloning.
	\end{itemize}
	Then $q_1$ and $q_2$ are renaming equivalent or there is a $D \in \class{D}$
	such that \mbox{$\#q_1(D) \neq \#q_2(D)$}.
\end{lem}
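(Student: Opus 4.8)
The plan is to prove the contrapositive: assuming that $q_1$ and $q_2$ are counting equivalent over $\class{D}$, i.e.\ $\#q_1(D)=\#q_2(D)$ for every $D \in \class{D}$, I would derive that they are renaming equivalent. Throughout I would use that, since answer variables are not repeated, $\#q(D)$ equals the number of functions $f\colon \bar x \to \dom(D)$ that extend to a homomorphism from $q$ to $D$. The whole argument is driven by the canonical database $D_{q_2} \in \class{D}$ together with a multivariate version of the cloning operation. Writing $\bar x_2=(t_1,\dots,t_n)$ for the answer variables of $q_2$ viewed as constants of $D_{q_2}$, I would clone each $t_k$ independently, introducing a separate multiplicity parameter $j_k$, and call the result $D_{\bar\jmath}$; this is a single instance of cloning, so $D_{\bar\jmath} \in \class{D}$ by closure under cloning.

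The key technical step is a lifting principle refining Lemma~\ref{lem:counting-and-cloning}: the retraction $\rho\colon D_{\bar\jmath}\to D_{q_2}$ sending every clone to its original is a homomorphism, and a function $f\colon \bar x_i \to \dom(D_{\bar\jmath})$ extends to a homomorphism $q_i \to D_{\bar\jmath}$ if and only if $\rho\circ f$ extends to a homomorphism $q_i \to D_{q_2}$ (the nontrivial direction lifts quantified variables to originals and uses that $D_{\bar\jmath}$ contains all clone-combinations of atoms of $D_{q_2}$, which is where equality-freeness is needed). Counting the lifts of each extending restriction $\bar f\colon \bar x_i \to \dom(D_{q_2})$ then yields the polynomial identity
\[
\#q_i(D_{\bar\jmath}) \;=\; \sum_{\bar f}\ \prod_{k=1}^{n} j_k^{\,|\{\,x \in \bar x_i \,:\, \bar f(x)=t_k\,\}|},
\]
where the sum ranges over all $\bar f$ that extend to a homomorphism $q_i \to D_{q_2}$. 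Since $\#q_1(D_{\bar\jmath})=\#q_2(D_{\bar\jmath})$ for all $\bar\jmath$, these two polynomials in $j_1,\dots,j_n$ coincide, so all their coefficients agree.

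I would first read off the cardinalities. The total degree of the polynomial for $q_i$ is the maximum number of answer variables of $q_i$ that a single homomorphism can map into $\bar x_2$, which is at most $|\bar x_i|$; for $q_2$ the identity attains degree $n$, so both polynomials have total degree $n$, forcing $|\bar x_1|\ge n=|\bar x_2|$. Running the same construction with $D_{q_1}$ in place of $D_{q_2}$ gives $|\bar x_2|\ge|\bar x_1|$, hence $|\bar x_1|=|\bar x_2|=n$. With equal cardinalities, I would compare the coefficient of the multilinear monomial $j_1\cdots j_n$: for $q_i$ it counts exactly the homomorphisms $q_i\to D_{q_2}$ whose restriction to $\bar x_i$ is a bijection onto $\{t_1,\dots,t_n\}=\bar x_2$. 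For $q_2$ this coefficient is at least $1$ (the identity), so the matching coefficient for $q_1$ is positive, yielding a homomorphism $q_1\to D_{q_2}$ whose restriction $h_1\colon \bar x_1\to \bar x_2$ is a surjection. The symmetric construction on $D_{q_1}$ produces the required $h_2\colon \bar x_2\to \bar x_1$, so $q_1$ and $q_2$ are renaming equivalent, completing the contrapositive.

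I expect the main obstacle to be the bookkeeping in the lifting principle and the passage from homomorphism counts to the polynomial identity: one has to be careful that $\#q(D)$ counts distinct $\bar x$-projections rather than homomorphisms, to verify that every choice of clones really yields a homomorphism (this is where equality-freeness and the exact form of the cloning operation matter), and to confirm that the independent multivariate cloning is legitimately a single cloning step so that each $D_{\bar\jmath}$ lies in $\class{D}$. The remaining steps -- extracting the total degree and the multilinear coefficient -- are then routine.
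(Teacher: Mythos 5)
Your proof is correct, and it rests on the same underlying mechanism as the paper's: clone the answer constants of the canonical database $D_{q_2}$, observe that the resulting counts are polynomial in the clone multiplicities, and extract from the (shared) coefficients the number of homomorphisms $q_i \to D_{q_2}$ whose restriction to the answer variables is surjective onto $\bar x_2$ --- a number that is positive for $q_2$ because of the identity, hence positive for $q_1$. Where you differ is in how the extraction is organized. The paper clones uniformly (one multiplicity $j$ for all constants of a chosen set $T \subseteq \bar x$), recovers the stratified counts $|\mathrm{hom}_{i,T}|$ by solving a univariate Vandermonde system for each $T$, and then assembles the surjection count by inclusion--exclusion over the sets $T$; it also dispatches the case $|\bar x_1| \neq |\bar x_2|$ separately up front, via a single all-elements cloning and the factors $2^{|\bar x_i|}$. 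You instead clone each answer constant with an independent multiplicity $j_k$, obtain one multivariate polynomial identity, and read off both facts at once: equality of total degrees yields $|\bar x_1| = |\bar x_2|$, and the coefficient of the multilinear monomial $j_1\cdots j_n$ directly counts the homomorphisms that are bijective on answer variables. This avoids both the inclusion--exclusion step and the explicit linear-algebra solve, at the price of a slightly heavier lifting lemma; the two points you flag as needing care do go through, since the paper's cloning operation admits distinct multiplicities $m_1,\dots,m_n$ in a single step (so each $D_{\bar\jmath}$ is in $\class{D}$) and adds every clone-combination of every atom (so lifts of extendable assignments are again extendable, and the count of lifts is exactly $\prod_k j_k^{|\bar f^{-1}(t_k)\cap\bar x_i|}$). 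One small attribution quibble: equality-freeness is what makes $\#q_i(D)$ equal to the number of extendable assignments on $\bar x_i$ and guarantees that the identity is a homomorphism from $q_2$ into $D_{q_2}$; the availability of mixed clone-combinations is a property of the cloning operation itself rather than of equality-freeness.
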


\begin{proof}
	%    We will show that if $q_1$, $q_2$ are not renaming equivalent
	%    then we can find the required database $D$.

    We start by observing that if $|\bar{x}_1| \neq |\bar{x}_2|$
    then there is a database $D \in \class{D}$ such that $\#q_1(D) \neq q_2(D)$.%
%    \footnote{Recall that we assume that we have no equality atoms nor repeated answer variables.}

    Assume that $|\bar{x}_1| \neq |\bar{x}_2|$.
    If $\#q_1(D_{q_1}) \neq \#q_2(D_{q_1})$ then we take $D = D_{q_1}$ and we are done.
    Hence, assume otherwise, i.e.~$\#q_1(D_{q_1}) = \#q_2(D_{q_1})$. Let $D$ be $D_{q_1}$ with every element cloned once,
    in particular $|\dom(D)| = 2|\dom(D_{q_1})|$.
    Clearly, $D \in \class{D}$. Therefore, if we show that $\#q_1(D) \neq \#q_2(D)$ then we will prove the observation.

    Since  $q_1$ is equality free we have $\#q_1(D_{q_1}) > 0$.
    Moreover, the following holds:
    \[
        \#q_1(D) = 2^{|\bar{x}_1|} \#q_1(D_{q_1}) = 2^{|\bar{x}_1|} \#q_2(D_{q_1}) \neq  2^{|\bar{x}_2|} \#q_2(D_{q_1}) = \#q_2(D).
    \]
    The first equality is a consequence of Lemma~\ref{lem:counting-and-cloning}, so is the last one.
    The middle equality follows from the fact that $|x_1| \neq |x_2|$
    and the assumption that $\#q_1(D_{q_1}) = \#q_2(D_{q_1})$.
    Indeed, since $\#q_1(D_{q_1}) > 0$, we have $\#q_2(D_{q_1})  > 0$.
    Thus, $2^{|\bar{x}_1|} \#q_2(D_{q_1}) \neq  2^{|\bar{x}_2|} \#q_2(D_{q_1})$ is equivalent to $|x_1| \neq |x_2|$.
    Hence, we infer that $\#q_1(D) \neq \#q_2(D)$, which ends the proof of the observation.

	Now, we show that if $\#q_1(D) =  \#q_2(D)$ for all databases $D \in \class{D}$
	then $q_1$ and $q_2$ are renaming equivalent.
    Since $\#q_1(D) =  \#q_2(D)$ for all databases $D \in \class{D}$, the
above observation yields $|\bar{x}_1| = |\bar{x}_2|$. Hence, possibly after some renaming,
    we can assume that $\bar{x}_1 = \bar{x}_2$, drop the subscript, and simply write $\bar{x}$.

	Let $q(\bar{x})$ be a CQ over schema $\Sbf$, let $D$ be an $\Sbf$\=/database such that $\bar{x} \subseteq \dom({D})$.
	By $\mn{hom}(q,D, \bar{x})$ we denote all mappings from $\bar{x}$ to $\dom({D})$ that can
	be extended to homomorphisms from $q$ to $D$.
	Similarly, by $\mn{surj}(q,D, \bar{x})$ we denote all surjections from $\bar{x}$ to $\bar{x}$ that lie in $\mn{hom}(q,D, \bar{x})$.
	Notice that if $|\mn{surj}(q_1,D_{q_2}, \bar{x})| > 0$ and $|\mn{surj}(q_2,D_{q_1}, \bar{x})| > 0$ then, by definition,
	$q_1$ and $q_2$ are renaming equivalent.

	For $T \subseteq \bar{x}$ let $\mn{hom}_{T}(q,D, \bar{x})$
	denote the set of mappings $h \in \mn{hom}(q,D, \bar{x})$ such that
	$h(\bar{x}) \subseteq T$. By an inclusion\=/exclusion argument we get
	\[
	|\mn{surj}(q,D, \bar{x})| = \sum_{T \subseteq \bar{x}} (-1)^{|\bar{x}| - |T|} |\mn{hom}_{T}(q,D, \bar{x})|.
	\]
	We now show how to compute $\mn{hom}_{T}(q,D, \bar{x})$ for all $T \subseteq \bar{x}$.
	For $i \geq 0$, let $\mn{hom}_{i,T}(q,D, \bar{x})$ be~the set of mappings
	$h \in \mn{hom}(q,D, \bar{x})$ such that $h$ maps exactly $i$ variables from $\bar{x}$ into $T$.
	In~particular, $\mn{hom}_{T}(q,D, \bar{x}) = \mn{hom}_{|\bar{x}|,T}(q,D, \bar{x})$.
	For $j \geq 1$ and $T \subseteq \dom(D)$, let $D_{j,T}$ be a database obtained from $D$ by cloning all elements from $T$ exactly
	${j{-}1}$ times, i.e.~for every $a \in T$ the database $D_{j,T}$ has exactly $j$ clones of $a$. In particular, $D_{1,T} = D$.

	%Let $T_j$ be as in Lemma~\ref{lem:counting-and-cloning}.
	By Lemma~\ref{lem:counting-and-cloning}, for every $T \subseteq \bar{x}$ and every $j>0$, we have

	\[
	\#q(D_{j,T}) = |\mn{hom}(q,D_{j,T}, \bar{x})| = \sum_{i=0}^{|\bar{x}|} i^j |\mn{hom}_{i,T}(q,D, \bar{x})|.
	\]

	Since the above equation holds for every $j \geq 1$, by taking first $|\bar{x}|{+}1$ equations we construct a system of linear equations
	where $|\mn{hom}_{i,T}(q,D, \bar{x})|$ are the unknowns, the coefficients $i^j$ form a Vandermonde matrix, and
	$\#q(D_{j,T})$ are the constant terms. Notice that the matrix does not depend on $q$ nor $D$.
	Since the matrix has full rank, the values $|\mn{hom}_{i,T}(q,D, \bar{x})|$ are uniquely determined by, and can be effectively computed from,
	the constant terms $\#q(D_{j,T})$. In particular, the value $|\mn{hom}_{T}(q,D, \bar{x})| = |\mn{hom}_{|\bar{x}|,T}(q,D, \bar{x})|$ is uniquely determined by the constant terms, and,
	in consequence, so is the value $|\mn{surj}(q, D, \bar{x})|$.

	If we apply the above system of equations to CQ $q_1$ and database $D_{q_2}$ we can conclude that
	$|\mn{surj}(q_1, D_{q_2}, \bar{x})|$ is uniquely determined by the values $\#q_1(D)$ for databases $D$ from a certain set $S\subseteq  \class{D}$.
	Similarly, $|\mn{surj}(q_2, D_{q_2}, \bar{x})|$ is uniquely determined the same equations with the values $\#q_1(D)$ replaced by $\#q_2(D)$.
    % for the databases from the same set $S \subseteq \class{D}$.

	Now, since the equations' coefficients do not depend on the query nor on the database and since we have $\#q_1(D) = \#q_2(D)$
	for all $D \in \class{D}$, we can infer that $|\mn{surj}(q_1,D_{q_2}, \bar{x})| = |\mn{surj}(q_2,D_{q_2}, \bar{x})|$.
	Hence, $|\mn{surj}(q_1,D_{q_2}, \bar{x})| > 0$
	%    \[
	%        |\mn{surj}(q_1,D_{q_2}, \bar{x})| = |\mn{surj}(q_2,D_{q_2}, \bar{x})| \geq 1,
	%    \]
	as the identity function clearly belongs to $\mn{surj}(q_2,D_{q_2}, \bar{x})$.

	A similar reasoning shows that $|\mn{surj}(q_2,D_{q_1}, \bar{x})|>0$ and ends the proof.
\end{proof}

\bigskip

The notion of renaming equivalence is also connected to semi\=/counting equivalence.
%Let $q(\bar{x})$ be a CQ over schema $\Sbf$.
%
% NOW DEFINED IN MAIN PART
%
% Let $\hat{q}(\bar{x})$ be the CQ obtained from $q$
% by dropping all maximal connected components not containing variables from $\bar{x}$.

%\cMP{If necessary, in the lemma we can replace $\hat{q}$ with induced subdatabases}
\begin{lem}[semi-counting equivalence]
	\label{lem:semi-counting-equality-cq}
	Let $q_1(\bar{x}_1),q_2({\bar{x}_2})$ be equality\=/free CQs over schema $\Sbf$.
	Let $\class{D}$ be a class of databases such that
	\begin{itemize}
		\item $D^{\top}_{\Sbf}, D_{\hat{q}_1}, D_{\hat{q}_2} \in \class{D}$,
		\item and $\class{D}$ is closed under cloning and disjoint union.
	\end{itemize}
	Either $\hat{q}_1$, $\hat{q}_2$ are renaming equivalent or there is, and can be computed, a database $D \in \class{D}$
	such that %$|q_1(D)|>0$,  $|q_2(D)| > 0$, and $|q_1(D)| \neq |q_2(D)|$.
	$|q_1(D)| \neq |q_2(D)|$ and for every CQ $q$ over schema $\Sbf$ we have that $|q(D)| > 0$.
\end{lem}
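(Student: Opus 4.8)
The plan is to reduce the dichotomy to the counting-equivalence criterion already available as Lemma~\ref{lem:counting-equality-cq}, and then to engineer a single database that is simultaneously \emph{saturating} (every CQ over $\Sbf$ maps into it) and \emph{separating} (it distinguishes $q_1$ from $q_2$). First I would apply Lemma~\ref{lem:counting-equality-cq} to the equality-free CQs $\hat q_1,\hat q_2$; its hypotheses hold since $D_{\hat q_1},D_{\hat q_2}\in\class{D}$ and $\class{D}$ is closed under cloning. This yields either the first branch of the claim (so $\hat q_1,\hat q_2$ are renaming equivalent and we are done) or an effectively computable $D'\in\class{D}$ with $\#\hat q_1(D')\neq\#\hat q_2(D')$. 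From here on I work in the second branch.

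The key observation is that if every CQ over $\Sbf$ has a homomorphism into a database $D$, then in particular every connected subquery of $q_i$ without answer variables is satisfied in $D$, so that $\#q_i(D)=\#\hat q_i(D)$, and moreover $\#q(D)>0$ for every CQ $q$. It therefore suffices to produce a saturating $D\in\class{D}$ with $\#\hat q_1(D)\neq\#\hat q_2(D)$. I would use the databases $D_{n,t}=n\,D'\sqcup t\,D^\top_\Sbf$, the disjoint union of $n$ copies of $D'$ and $t$ copies of $D^\top_\Sbf$, which lie in $\class{D}$ by closure under disjoint union. Whenever $t\geq 1$, every CQ maps into one of the copies of $D^\top_\Sbf$, so $D_{n,t}$ is saturating. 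Using that the image of a connected query lies in a single connected component of a disjoint union (whence $\#C(\bigsqcup_k E_k)=\sum_k \#C(E_k)$ for connected $C$), together with $\#C(D^\top_\Sbf)=1$ and the fact that the answers of $\hat q_i$ factor over its connected components, I obtain
\[
\#\hat q_i(D_{n,t})=\prod_{C}\bigl(\#C(D')\cdot n+t\bigr)=:P_i(n,t),
\]
where $C$ ranges over the connected components of $\hat q_i$.

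The separation then follows from a polynomial argument. Evaluating at $(n,t)=(1,0)$ gives $P_i(1,0)=\prod_{C}\#C(D')=\#\hat q_i(D')$, and since $\#\hat q_1(D')\neq\#\hat q_2(D')$ the two polynomials $P_1,P_2$ differ. Hence $P_1-P_2$ is a nonzero polynomial in $n,t$, and so it cannot vanish on the entire grid $\{(n,t)\mid n,t\geq 1\}$; by evaluating $P_1$ and $P_2$ on this grid one effectively finds $n,t\geq 1$ with $\#\hat q_1(D_{n,t})\neq\#\hat q_2(D_{n,t})$. Setting $D=D_{n,t}$ and combining with saturation yields $\#q_1(D)=\#\hat q_1(D)\neq\#\hat q_2(D)=\#q_2(D)$ while $\#q(D)>0$ for all CQs $q$, as required; all steps, including the computation of $D'$, of the component counts $\#C(D')$, and of the pair $(n,t)$, are effective.

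The hard part is precisely the tension between saturation and separation. The naive fix of adding a single copy of $D^\top_\Sbf$ to $D'$ can \emph{destroy} the count difference: a component of one query that fails to map into $D'$ is compensated by the single extra homomorphism into $D^\top_\Sbf$, so that e.g.\ $(6n+1)$ and $(6n+1)(0\cdot n+1)$ coincide identically. Introducing the second parameter $t$ (several disjoint copies of $D^\top_\Sbf$) and certifying that the two-variable polynomials $P_1,P_2$ differ at the boundary point $(1,0)$ is the device that circumvents this. The remaining checks—the factorisation formula for $\#\hat q_i(D_{n,t})$, the edge case where some $q_i$ is Boolean (then $\hat q_i$ is the empty, trivially satisfied query and renaming equivalence is immediate), and the termination of the search for $(n,t)$—are routine.
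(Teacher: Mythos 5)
Your proof is correct and follows essentially the same route as the paper: obtain a separating $D'$ for $\hat q_1,\hat q_2$ from Lemma~\ref{lem:counting-equality-cq}, then adjoin copies of $D^\top_\Sbf$ by disjoint union and observe that the resulting count is a polynomial in the number of copies whose value at $0$ is $\#\hat q_i(D')$, so the polynomials differ and some positive evaluation point yields a saturating, separating database. Your second parameter $n$ (multiple copies of $D'$) is harmless but redundant --- fixing $n=1$ recovers exactly the paper's one-variable argument.
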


\begin{proof}We will show that if $\hat{q}_1$, $\hat{q}_2$ are not renaming equivalent then there is a database $D \in \class{D}$
	such that %$|q_1(D)|>0$,  $|q_2(D)| > 0$, and $|q_1(D)| \neq |q_2(D)|$.
	$\#q_1(D) {\neq} \#q_2(D)$ and such that for every CQ $q$ over schema $\Sbf$ we have that $\#q(D) > 0$.

	Since $\hat{q}_1$, $\hat{q}_2$ are not renaming equivalent then, by the previous lemma, we can find
	a database $D' \in \class{D}$ such that $\#\hat{q}_1(D') \neq \#\hat{q}_2(D')$.

	Consider the function $f_1 : k \mapsto \#q_1(D' + kD_{\Sbf}^{\top})$ defined for $k \geq 1$,
    where $D' + kD_{\Sbf}^{\top}$ is the disjoint union of $D'$ and $k$ copies of $D_{\Sbf}^{\top}$.
	After some elementary transformations on $\#q_1(D' + kD_{\Sbf}^{\top})$ we can infer
	that $f_1$ is a polynomial in $k$ whose constant term, i.e. term of degree $0$, is $\#\hat{q}_1(D')$.
	Similarly, the function $f_2 : k \mapsto \#q_2(D' + kD_{\Sbf}^{\top})$ is a polynomial whose constant term is $\#\hat{q}_2(D')$.
	For the details please refer to~the proof of Theorem~5.9 in~\cite{countingPositiveQueries}.

	If the counts of $q_1$ and $q_2$ would agree on all databases from $\class{D}$ then for all $k{\geq}1$ we would have that $f_1(k) {=} f_2(k)$.
	Moreover, since $f_1,f_2$ are polynomials, this would imply that $f_1$ and $f_2$ are equal,
    i.e.~they have the same degree and their corresponding coefficients coincide.
	In particular, this would imply that $\#\hat{q}_1(D') {=} \#\hat{q}_2(D')$.
	But this is impossible, as $D'$ was chosen so that $\#\hat{q}_1(D') \neq \#\hat{q}_2(D')$.
	Therefore, there is $k' \geq 1$ such that $f_1(k') \neq f_2(k')$.

	What remains is to check that $D = D' + k'D_{\Sbf}^{\top}$ is as required.
	First, by definition $D \in \class{D}$. Moreover, since $k'>0$, for every CQ $q$ over schema $\Sbf$ we have that
	$\#q(D) > 0$. %$|q_1(D)|>0$ and  $|q_2(D)| > 0$.
	Finally, we observe that $\#q_1(D)= f_1(k') \neq f_2(k') = \#q_2(D)$.
\end{proof}

From Lemma~\ref{lem:counting-equality-cq} and Lemma~\ref{lem:semi-counting-equality-cq}
we obtain the following.

\begin{lem}
	\label{lem:counting-and-renaming-equivalence}
	Let $q_1,q_2$ be CQs over some schema $\Sbf$.
	The following holds:
	\begin{itemize}
		\item $q_1$ and $q_2$ are counting equivalent
		if and only if they are renaming equivalent;
		\item $q_1$ and $q_2$ are semi\=/counting equivalent
		if and only if $\hat{q}_1$ and $\hat{q}_2$  are renaming equivalent.
	\end{itemize}
\end{lem}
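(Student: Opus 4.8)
The plan is to read off both biconditionals directly from Lemmas~\ref{lem:counting-equality-cq} and~\ref{lem:semi-counting-equality-cq}, instantiated with $\class{D}$ the class of \emph{all} $\Sbf$-databases. This class trivially contains $D_{q_1}, D_{q_2}, D_{\hat q_1}, D_{\hat q_2}$ and $D^\top_\Sbf$, and is closed under cloning and disjoint union, so the preconditions of both lemmas are met. Since those lemmas are stated only for equality-free CQs whereas Lemma~\ref{lem:counting-and-renaming-equivalence} concerns arbitrary CQs, the first step is to dispose of equality atoms: as observed in the main text, each $q_i$ is both counting equivalent and semi-counting equivalent to its equality-free contraction $\tilde q_i$, and passing from $q_i$ to $\tilde q_i$ preserves renaming equivalence (it only performs variable identifications that the definition of renaming equivalence already tolerates). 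I may thus assume throughout that $q_1$ and $q_2$ are equality-free.

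For the first biconditional, I would prove the direction ``renaming equivalent $\Rightarrow$ counting equivalent'' directly, without invoking the earlier lemmas. The two witnessing surjections $h_1\colon \bar x_1 \to \bar x_2$ and $h_2\colon \bar x_2 \to \bar x_1$ force $|\bar x_1| = |\bar x_2|$ and are therefore mutually inverse bijections on the answer variables. For any database $D$, composing a homomorphism $g\colon q_1 \to D$ with the extension $h_2\colon q_2 \to D_{q_1}$ yields a homomorphism $g \circ h_2\colon q_2 \to D$, and restricting to the answer variables maps $q_1(D)$ into $q_2(D)$; the symmetric construction supplies the inverse map, so $\#q_1(D) = \#q_2(D)$ for every $D$. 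For the converse ``counting equivalent $\Rightarrow$ renaming equivalent'', I would invoke Lemma~\ref{lem:counting-equality-cq}: it guarantees that either $q_1$ and $q_2$ are renaming equivalent or some $D \in \class{D}$ separates their counts, and counting equivalence excludes the latter.

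For the second biconditional I would argue analogously, but routed through $\widehat{q_1}$ and $\widehat{q_2}$. The direction ``semi-counting equivalent $\Rightarrow$ $\widehat{q_1}, \widehat{q_2}$ renaming equivalent'' follows from Lemma~\ref{lem:semi-counting-equality-cq}: unless $\widehat{q_1}$ and $\widehat{q_2}$ are renaming equivalent, that lemma produces a database $D$ on which the counts differ and on which \emph{every} CQ over $\Sbf$ has a positive count; in particular $\#q_1(D)>0$ and $\#q_2(D)>0$, so $D$ witnesses a failure of semi-counting equivalence, a contradiction. For the reverse direction I would use the bookkeeping fact that on any $D$ with $\#q_i(D) > 0$ one has $\#q_i(D) = \#\widehat{q_i}(D)$: the maximal answer-variable-free subqueries dropped in forming $\widehat{q_i}$ are then satisfiable and, sharing no variables with the answer-carrying part, contribute no additional answer tuples. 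Combining this with the first biconditional (renaming equivalence of $\widehat{q_1},\widehat{q_2}$ gives $\#\widehat{q_1}(D) = \#\widehat{q_2}(D)$) yields $\#q_1(D) = \#q_2(D)$ on all databases where both counts are positive, which is exactly semi-counting equivalence.

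I expect the main obstacle to be not any single hard lemma—the genuine work is already packaged in Lemmas~\ref{lem:counting-equality-cq} and~\ref{lem:semi-counting-equality-cq}—but rather the careful justification of the two auxiliary reductions that make those lemmas applicable. Concretely, these are the equality-elimination step (verifying that the passage $q_i \mapsto \tilde q_i$ preserves counting equivalence, semi-counting equivalence, and renaming equivalence simultaneously) and the identity $\#q_i(D) = \#\widehat{q_i}(D)$ on positive-count databases, which must be checked against the precise definition of $\widehat{q_i}$, including the degenerate Boolean case where $\widehat{q_i}$ is the empty query that holds on every database.
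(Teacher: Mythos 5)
Your proof follows essentially the same route as the paper's: both easy directions are argued directly (the renaming surjections induce bijections between answer sets for the first bullet, and $\#q_i(D)=\#\hat q_i(D)$ on databases where both counts are positive for the second), and both hard directions are read off from Lemmas~\ref{lem:counting-equality-cq} and~\ref{lem:semi-counting-equality-cq} instantiated with the class of all $\Sbf$-databases. The only difference is your equality-elimination preamble, which the paper omits entirely (it implicitly treats $q_1,q_2$ as equality-free, as in all the supporting lemmas); be aware that your claim that $q_i\mapsto\tilde q_i$ preserves renaming equivalence is delicate when an equality atom relates two answer variables (the number of answer variables can drop), though this is harmless here since the lemma is only ever invoked for equality-free CQs.
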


\begin{proof}
  For the first bullet we argue as follows.  If $q_1$ and $q_2$ are
  not renaming equivalent then they are not counting equivalent.
  Indeed, by Lemma~\ref{lem:counting-equality-cq} there is a database
  $D$ such that $\#q_1(D) \neq \#q_2(D)$.  On the other hand, if $q_1$
  and $q_2$ are renaming equivalent then the surjections promised by
  the definition of renaming equivalence provide for every
  $\Sbf$\=/database~$D$ surjections $\fun{g_1^D}{q_1(D)}{q_2(D)}$ and
  $\fun{g_2^D}{q_2(D)}{q_1(D)}$. {\color{\highlightColourTwo} Clearly, these are then even bijections.}  This shows that
  $\#q_1(D) = \#q_2(D)$ for every $\Sbf$\=/database $D$ and, thus,
  shows that $q_1$ and $q_2$ are counting equivalent.

	For the second bullet we observe that if $\hat{q}_1$ and $\hat{q}_2$ are not renaming equivalent then by Lemma~\ref{lem:semi-counting-equality-cq}
	there is a database $D$ such that $\#q_1(D) > 0$, $\#q_2(D) > 0$, and $\#q_1(D) \neq \#q_2(D)$. Thus, $q_1$ and $q_2$ are not semi\=/counting equivalent.
	On the other hand, if $\hat{q}_1$ and $\hat{q}_2$ are renaming equivalent then they are counting equivalent and for every database
	such that $\#q_1(D) > 0$ and $\#q_2(D) > 0$ we have that $\#{q}_1(D) = \#\hat{q}_1(D) = \#\hat{q}_1(D)  = \#q_2(D)$.
	Thus, $q_1$ and $q_2$ are semi\=/counting equivalent.
	The middle equality follows from counting equivalence, the other follow from the fact that maximal Boolean connected components
	either force the answer set to be empty or do not change the size of the answer set.
\end{proof}

As a consequence, we get the following.

\begin{lem}[equivalence relations]
	\label{lem:counting-equivalence-is-equivalence-relation}
	Counting equivalence and semi\=/counting equivalence are equivalence relations.
\end{lem}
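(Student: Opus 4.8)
The plan is to route both statements through the notion of renaming equivalence and then simply check that renaming equivalence is an equivalence relation. By Lemma~\ref{lem:counting-and-renaming-equivalence}, two CQs $q_1,q_2$ are counting equivalent iff they are renaming equivalent, and they are semi-counting equivalent iff $\hat q_1$ and $\hat q_2$ are renaming equivalent. Consequently, once renaming equivalence is shown to be reflexive, symmetric, and transitive, counting equivalence inherits this directly and semi-counting equivalence inherits it via the map $q \mapsto \hat q$. Reflexivity and symmetry of renaming equivalence are immediate from the definition: the identity map witnesses renaming equivalence of a CQ with itself, and the definition is already stated symmetrically, demanding surjective homomorphisms $\fun{h_1}{\bar x_1}{\bar x_2}$ and $\fun{h_2}{\bar x_2}{\bar x_1}$ in both directions. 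The only substantial point is transitivity.

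For transitivity, I would assume $q_1,q_2$ renaming equivalent via surjections $\fun{h_1}{\bar x_1}{\bar x_2}$ and $\fun{g_1}{\bar x_2}{\bar x_1}$ that extend to homomorphisms $\fun{h_1}{q_1}{D_{q_2}}$ and $\fun{g_1}{q_2}{D_{q_1}}$, and $q_2,q_3$ renaming equivalent via analogous $h_2,g_2$. The key observation is that a homomorphism from $q_i$ to $D_{q_j}$ sends every variable of $q_i$ to a variable of $q_j$ (i.e.\ to a constant of $D_{q_j}$), so two such homomorphisms compose into a homomorphism again. Thus the extended $h_2 \circ h_1$ is a homomorphism from $q_1$ to $D_{q_3}$, and its restriction to $\bar x_1$ is the composition of the surjections $\bar x_1 \to \bar x_2 \to \bar x_3$, hence a surjection onto $\bar x_3$. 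Symmetrically, $g_1 \circ g_2$ is a surjection $\bar x_3 \to \bar x_1$ extending to a homomorphism from $q_3$ to $D_{q_1}$. Together these witness that $q_1$ and $q_3$ are renaming equivalent.

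With transitivity established, both conclusions follow: counting equivalence is an equivalence relation, and applying the same to $\hat q_1,\hat q_2,\hat q_3$ shows semi-counting equivalence is one as well. The main subtlety to flag is precisely why transitivity cannot be argued naively for semi-counting equivalence directly from its count-based definition: given a database $D$ with $\#q_1(D)>0$ and $\#q_3(D)>0$, there is no guarantee that $\#q_2(D)>0$, so the intermediate query $q_2$ cannot be used to chain the equalities of counts. Passing through renaming equivalence of the hatted queries is exactly what circumvents this obstacle, and it is this step that I expect to be the crux of the argument.
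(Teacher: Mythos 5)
Your proposal is correct and follows essentially the same route as the paper, which derives this lemma directly as a consequence of Lemma~\ref{lem:counting-and-renaming-equivalence} (the paper gives no further proof, leaving implicit exactly the details you supply: that renaming equivalence is itself an equivalence relation, with transitivity via composition of the surjective homomorphisms). Your closing remark correctly identifies the one genuine subtlety — that transitivity of semi-counting equivalence cannot be read off its count-based definition and is precisely what the detour through $\hat q$ and renaming equivalence buys.
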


\noindent
We can finally prove the first missing lemma.%Lemma~\ref{lem:countingEquivClassD}.
\begin{proof}[Proof of Lemma~\ref{lem:countingEquivClassD}]

For Point 1, if $q_1$ and $q_2$ are counting equivalent, then they are counting equivalent over class $\class{D}$.
On the other hand, if they are not counting equivalent then, by Lemma~\ref{lem:counting-and-renaming-equivalence}, they are not renaming equivalent. Thus,
by Lemma~\ref{lem:counting-equality-cq} there is a database $D \in \class{D}$
such that $\#q_1(D) \neq \#q_2(D)$. This implies that $q_1$ and $q_2$ are not counting equivalent over $\class{D}$.
Therefore, $q_1$ and $q_2$ are counting equivalent if and only if they are counting equivalent over $\class{D}$.

For Point 2, if $q_1$ and $q_2$ are semi\=/counting equivalent, then they are clearly semi\=/counting equivalent over class $\class{D}$.
On the other hand, if they are not semi\=/counting equivalent, then by Lemma~\ref{lem:counting-and-renaming-equivalence} $\hat{q}_1$ and $\hat{q}_2$
are not renaming equivalent. Thus, by Lemma~\ref{lem:semi-counting-equality-cq} there is a database $D \in \class{D}$
such that $\#q_1(D)>0$, $\#q_2(D)>0$, and $\#q_1(D) \neq \#q_2(D)$. Hence, $q_1$ and $q_2$ are not semi\=/counting equivalent over $\class{D}$.
Therefore, $q_1$ and $q_2$ are semi\=/counting equivalent if and only if they are semi\=/counting equivalent over $\class{D}$.
\end{proof}

\section{Additional Details for Section~\ref{sect:fromucqtocq}}
\label{app:secondblack}

%\section{Proofs regarding counting equivalence and the blackbox reduction from UCQs to CQs as in \cite{countingPositiveQueries}}

This section is dedicated to the results from \cite{countingPositiveQueries}
that culminate in the algorithm promised in the statement below.

%\begin{thm}[Theorem~\ref{thm:chenMengelStronger}]
%    \thmchenMengelStronger
%\end{thm}

\bigskip
\noindent
{\bf Theorem~\ref{thm:chenMengelStronger} \cite{countingPositiveQueries}.}%
\textit{\thmchenMengelStronger}

As mentioned before, the statements are provided in our notation and are enough for our purposes.
For the original statements please refer to~\cite{countingPositiveQueries}.

We start by proving a stronger version of Lemma~\ref{lem:semi-counting-equality-cq}.
We show that given a set of pairwise not semi\=/counting equivalent CQs,
we can always find a database that distinguishes those queries.

\begin{lem}[inequivalence witness; Lemma~5.12 from \cite{countingPositiveQueries}]
	\label{lem:semi-counting-equality-witness-database-cq}
	Let $q_1(\bar{x}_1),q_2(\bar{x}_2), \dots,$ $q_n(\bar{x}_n)$, $n>0$, be equality\=/free CQs over schema $\Sbf$ such that
    for all $1 \leq i \leq n$ we have that $|\bar{x}_i|>0$.
	Let $\class{D}$ be a class of databases, such that
	\begin{itemize}
		\item $D^{\top}_{\Sbf} \in \class{D}$,
		\item $D_{q_i}, D_{\hat{q}_i} \in \class{D}$, for $1 \leq i \leq n$,
%		\item $D_{\hat{q}_i} \in \class{D}$, for $1 \leq i \leq n$,
		\item and $\class{D}$ is closed under direct product, disjoint union, and cloning.
	\end{itemize}
	Then there is, and can be computed, a database $D \in \class{D}$ such that
	\begin{itemize}
		\item for all $1 \leq i \leq n$ $\#q_i(D) > 0$,
		\item and for all $1 \leq i,j \leq n$ if $q_i$, $q_j$ are not semi\=/counting equivalent then $\#q_i(D) \neq \#q_j(D)$.
	\end{itemize}
\end{lem}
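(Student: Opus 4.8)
The plan is to exhibit a single database of the shape $D = D' + k\,D^{\top}_{\Sbf}$ (a disjoint union of a fixed base $D'$ with $k \ge 1$ copies of $D^{\top}_{\Sbf}$), where $D'$ and $k$ are chosen so that all required distinctions hold at once. First I would reduce the semi-counting condition to the queries $\hat q_i$: by Lemma~\ref{lem:counting-and-renaming-equivalence}, $q_i$ and $q_j$ are semi-counting equivalent iff $\hat q_i$ and $\hat q_j$ are renaming equivalent, which for equality-free CQs coincides with counting equivalence. Positivity is then automatic from the summand $k\,D^{\top}_{\Sbf}$: since every CQ maps homomorphically into $D^{\top}_{\Sbf}$ and $k \ge 1$, every Boolean connected component of each $q_i$ is satisfied and every answer-bearing component admits at least one homomorphism, so $\#q_i(D) > 0$ for all $i$, giving the first required property for any choice of $D'$ and $k \ge 1$.

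The second step reduces the distinction requirement to a statement about connected components. Decompose each $q_i$ into connected components; for $k \ge 1$ the Boolean components are satisfied, so $\#q_i(D) = \#\hat q_i(D)$, and as the answer-bearing components $c_{i,1},\dots,c_{i,t_i}$ have pairwise disjoint answer variables the count factorises. Each $c_{i,s}$ is connected, so its count is additive over disjoint unions (and $\#c_{i,s}(D^{\top}_{\Sbf}) = 1$); writing $a_{i,s} := \#c_{i,s}(D')$ this yields $\#q_i(D) = \prod_{s=1}^{t_i}(k + a_{i,s})$, a monic integer polynomial $f_i(k)$. Two such polynomials agree iff they have the same multiset of roots, i.e.\ iff the multisets $\{a_{i,s}\}_s$ agree. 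Hence it suffices to choose $D'$ so that $c \mapsto \#c(D')$ is injective on the finitely many renaming-equivalence classes of the connected answer-bearing components occurring in $q_1,\dots,q_n$: then $f_a = f_b$ iff $\hat q_a$ and $\hat q_b$ carry the same multiset of component classes iff they are renaming equivalent iff $q_a,q_b$ are semi-counting equivalent. For every non-semi-equivalent pair the polynomial $f_a - f_b$ is thus nonzero, so a single $k \ge 1$ avoiding the finitely many roots of all these differences can be found by bounded search.

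For the base $D'$ I would pick one representative $r_1,\dots,r_p$ per component class; distinct representatives are not counting equivalent, so Lemma~\ref{lem:counting-equality-cq} supplies, for each pair $\{u,v\}$, a database $D_{uv} \in \class{D}$ with $\#r_u(D_{uv}) \ne \#r_v(D_{uv})$. Set $D' = \sum_{\{u,v\}} \lambda_{uv}\, D_{uv}$ (disjoint union with positive integer multiplicities $\lambda_{uv}$). Since each $r_w$ is connected, $\#r_w(D') = \sum_{\{u,v\}} \lambda_{uv}\,\#r_w(D_{uv})$ is linear in the $\lambda$'s, and for a fixed pair $\{a,b\}$ the difference $\#r_a(D') - \#r_b(D')$ is a linear form whose coefficient of $\lambda_{ab}$ is $\#r_a(D_{ab}) - \#r_b(D_{ab}) \ne 0$, hence a nonzero form. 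A positive integer point avoiding the finite union of the associated hyperplanes exists and is computable, and any such point makes $\#r_1(D'),\dots,\#r_p(D')$ pairwise distinct, i.e.\ $\#c(D')$ injective on classes. All ingredients are effective (counting/renaming equivalence is decidable, the $D_{uv}$ are computable, and $\lambda$ and $k$ are found by bounded search), which yields the computability claim.

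The main obstacle is precisely the passage from the pairwise lemmas to a single witness separating all pairs simultaneously while keeping every count positive; the feature that makes this work is the additivity of connected-query counts under disjoint union, which turns the separation into a linear, genericity-based argument. A naive product of pairwise distinguishers fails here, since counts would then multiply and distinct factors can yield equal products. Two further points need care. First, the distinguishers for components must live in $\class{D}$: the canonical database of a component is a connected component of some $D_{\hat q_i} \in \class{D}$, and the construction in Lemma~\ref{lem:counting-equality-cq} proceeds by cloning, so this is where closure under cloning and the availability of the $D_{\hat q_i}$ are used (and in the applications, where $\class{D}$ is additionally closed under induced subdatabases, the component databases are directly available). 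Second, one must verify that renaming equivalence of the possibly disconnected $\hat q_a,\hat q_b$ is equivalent to equality of their multisets of component renaming-classes; this holds because components have disjoint answer variables and can be matched independently, so that component-wise renamings assemble into the global surjections required by the definition.
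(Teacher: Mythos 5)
Your construction is genuinely different from the paper's: instead of the paper's inductive argument, which orders the counts $0<\#q_1(D_n)<\cdots<\#q_n(D_n)$ and inserts $q_{n+1}$ by passing to $D'\times (D_n)^{l}$ for a large enough exponent $l$ (so that the product rule amplifies the existing separations while the pairwise distinguisher $D'$ from Lemma~\ref{lem:semi-counting-equality-cq} breaks the one collision), you build a one-shot witness $D'+k\,D^{\top}_{\Sbf}$ and reduce the separation to an identity of monic polynomials in $k$. Your remark that ``a naive product of pairwise distinguishers fails'' is true of taking one big product, but the paper shows that products \emph{do} work when applied inductively with a tuned exponent; this is why the paper never has to descend below the level of the whole queries $q_i$. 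The polynomial/component route is elegant and its algebra is correct (the factorisation $\#q_i(D)=\prod_s(k+a_{i,s})$, the reduction of $f_a=f_b$ to equality of multisets of component counts, and the fact that you only need the \emph{easy} direction --- same multiset of component classes implies renaming equivalence --- all check out).

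The genuine gap is in the construction of $D'$. To make $c\mapsto\#c(D')$ injective on component classes you invoke Lemma~\ref{lem:counting-equality-cq} for a pair of connected components $r_u,r_v$, and that lemma requires $D_{r_u},D_{r_v}\in\class{D}$. The hypotheses of the present lemma only put $D_{q_i}$, $D_{\hat q_i}$ and $D^{\top}_{\Sbf}$ into $\class{D}$ and only close it under direct product, disjoint union and cloning; a connected component of $D_{\hat q_i}$ is an induced subdatabase, and closure under induced subdatabases is \emph{not} assumed (nor is it assumed in Theorem~\ref{thm:chenMengelStronger}, which consumes this lemma). Without $D_{r_u}\in\class{D}$ there is no visible way to obtain the component-level distinguishers: the Vandermonde argument inside Lemma~\ref{lem:counting-equality-cq} clones precisely the canonical database of one of the two queries being compared, and the available databases $D_{\hat q_i}$ only give you \emph{sums} of component counts, from which the individual components cannot be isolated. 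You acknowledge this and defer to ``the applications'' (where $\class{D}=\class{D}^{\Sbf}_{\Omc}$ with $\Omc\subseteq\class{G}\cap\class{FULL}$ is indeed closed under induced subdatabases by Lemma~\ref{lem:closure}), but as written your argument proves the lemma only under that additional closure hypothesis, i.e.\ a strictly weaker statement than the one claimed and used. Either add that hypothesis and propagate it through Theorem~\ref{thm:chenMengelStronger}, or replace the component-level step by the paper's product-amplification argument, which needs only the pairwise distinguishers for the full queries.
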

\begin{proof}
	We construct the database inductively.
	By requirement, the constructed database~$D$ will satisfy that $\#q(D)> 0$ for every CQ $q(\bar{x})$ over schema $\Sbf$.
	Therefore, for every pair of semi\=/counting equivalent CQs $q,q'$ we will necessarily have that $\#q(D) = \#q'(D)$.
	This allows us to assume without loss of generality that the CQs $q_i$ used in the construction are pairwise not semi\=/counting equivalent.

	For the base case, i.e.~$n {=} 1$, take database $D_{\Sbf}^{\top}$. Now, let us assume that $n > 0$ and we have
	already created the database $D_n$ for the queries $q_1, \dots, q_n$. For the inductive step, we show how
	to construct database $D_{n+1}$ for the queries $q_1, \dots, q_{n}, q_{n+1}$.

	Without loss of generality, we can assume that $0 < \#q_1(D_n) < \#q_2(D_n) < \dots < \#q_n(D_n)$.
	Now, if $\#q_{n+1}(D_n) \neq \#q_{i}(D_n)$ for all $1 \leq i \leq n$ then we are done and we take $D_{n+1} = D_n$.
	Otherwise, let us assume that there is $1 {\leq} i {\leq} n$ such that $\#q_{n+1}(D_n) {=} \#q_{i}(D_n)$.

	By Lemma~\ref{lem:semi-counting-equality-cq} there is a database $D' \in \class{D}$ such that $\#q_{n+1}(D') \neq \#q_{i}(D')$
	and for every CQ $q$ over schema $\Sbf$ we have that $\#q(D') > 0$. We can assume that $\#q_{n+1}(D') > \#q_{i}(D')$. If the equality is
	reversed we simply swap $q_{n+1}$ with $q_i$ before proceeding.

	Now, we can show that there is $l>0$ such that for the database $D = D' \times (D_{n})^{l}$, where $(D_{n})^{l}$ is the direct product of $l$ copies of $D_n$,
	we have
	\[0 < \#q_1(D) < \dots < \#q_{i}(D) < \#q_{n+1}(D) < \#q_{i+1}(D) < \dots < \#q_n(D).\]
	By the product rule, see Lemma~\ref{lem:counting-and-products}, the inequality $\#q_{i}(D) < \#q_{n+1}(D)$ holds trivially for all $l>0$.
    For the remaining inequalities, observe that for any two CQs $p,p'$ over schema $\Sbf$
    and ant two $\Sbf$\=/databases $B,C$ such that $0 < \#p(B) < \#p'(B)$ and $\#p(C), \#p'(C) >0$
    we have that $1 < \frac{\#p'(B)}{\#p(B)}$. Thus, there is $l>0$
    such that $\frac{\#p(C)}{\#p'(C)} < (\frac{\#p'(B)}{\#p(B)})^l$.
    For such $l$ we have that $\#p(C)\#p(B)^l < \#p'(C)\#p'(B)^l$ and, finally, $\#p(C \times B^l) < \#p'(C\times B^l)$.

    Clearly, $\#q(D)> 0$ for every CQ $q(\bar{x})$ over schema $\Sbf$.
	Moreover, since $D$ is a product of databases from $\class{D}$ we have that $D \in \class{D}$.
	Thus, taking $D_{n+1} = D$ ends the inductive step and the whole construction.
\end{proof}

The witness produced in the above statement will not distinguish two CQs in $\mn{cl}^{\class{D}}_{\mn{CM}}(q)$ that are semi\=/counting equivalent,
but not counting equivalent. The below lemma shows that this is not necessarily a problem.

\begin{lem}[Lemma~5.18 in \cite{countingPositiveQueries}: extracting counts from semi-counting equivalence classes]
	\label{lem:recovering-cq-from-semi-counting-equivalent-cq}
	Let $p_1, \dots, p_n$ be a set of semi\=/counting equivalent equality\=/free CQs that are pairwise not counting equivalent and let $c_1, \dots, c_n$ be a set of non-zero integers.
	Let $\class{D}$ be a class of databases such that\begin{itemize}
		%        \item $D^{\top}_{\Sbf} \in \class{D}$,
		\item $D_{p_i} \in \class{D}$, for $1 \leq i \leq n$,
		\item and $\class{D}$ is closed under direct product. %disjoint union, direct product, and cloning.
	\end{itemize}

	There is an fpt algorithm that performs the following: given a database $D \in \class{D}$
	and a CQ $q \in \{p_1, \dots, p_n\}$ the algorithm computes $\#q(D)$; the algorithm may make
	calls to an oracle $\Amc$ that provides $\sum_{i} c_i \cdot \#p_i(D')$ upon being
	given a database $D' \in \class{D}$. %Here, the $p_i$s with the $c_i$s constitute the
%	parameter.

	%    There is an fpt algorithm using and oracle $\Amc$ that performs the following: given a database $D \in \class{D}$,
	%    the algorithm computes $|p_i(D)|$ for every $1 \leq i \leq n$; it may make
	%    calls to an oracle that provides $\sum_{i} c_i |p_i(D')|$ upon being
	%    given a structure $D' \in \class{D}$. Here, the $p_i$ with the $c_i$ constitute the
	%    parameter.
\end{lem}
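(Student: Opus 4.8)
The plan is to exploit the structure of semi-counting equivalent CQs. Write each $p_i$ as $\hat p_i \wedge b_i$, where $b_i$ is the conjunction of the maximal connected subqueries of $p_i$ that contain no answer variable. Since any homomorphism witnessing an answer must treat $\hat p_i$ and $b_i$ independently (they share no variables), for every database $D'$ we have $\#p_i(D') = \#\hat p_i(D') \cdot s_i(D')$, where $s_i(D') \in \{0,1\}$ records whether $b_i$ maps homomorphically into $D'$. Moreover, semi-counting equivalence of the $p_i$ means, by Lemma~\ref{lem:counting-and-renaming-equivalence}, that the $\hat p_i$ are renaming equivalent, hence counting equivalent, so the value $H(D') := \#\hat p_i(D')$ is independent of $i$. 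Thus $\#p_i(D') = H(D') s_i(D')$ and the oracle returns $\Amc(D') = H(D') \sum_i c_i s_i(D')$.

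I would then probe the oracle on products $D \times E$ with $E$ built from the canonical databases $D_{p_k}$. By the product rule (Lemma~\ref{lem:counting-and-products}) we have $H(D\times E)=H(D)H(E)$, and since a structure maps homomorphically into a product iff it maps into both factors, $s_i(D\times E)=s_i(D)s_i(E)$. Writing $v_i := \#p_i(D) = H(D)s_i(D)$ for the unknowns we wish to compute (the goal being $v_j$, since $q=p_j$), each probe yields the linear equation $\Amc(D\times E)/H(E) = \sum_i (c_i s_i(E))\,v_i$. Everything attached to $E$ is of parameter size and hence computable in fpt time; in particular, for $E = E_K := \prod_{k\in K} D_{p_k}$ the coefficient $s_i(E_K)$ equals $1$ exactly when $b_i$ maps into every $D_{p_k}$ with $k \in K$. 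It then remains to choose subsets $K_1,\dots,K_n$ so that the resulting $n\times n$ matrix is invertible; solving the system by Gaussian elimination returns all $v_i$, and each $H(E_K)=\prod_{k\in K}H(D_{p_k})\neq 0$ because the identity is a homomorphism from $\hat p_k$ into $D_{p_k}$, so $\#\hat p_k(D_{p_k})\geq 1$.

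The main obstacle is the invertibility of this matrix, which is exactly where the hypothesis that the $p_i$ are pairwise \emph{not} counting equivalent is essential. Set $R_i := \{k : b_i \to D_{p_k}\}$; a product of canonical databases only exposes $R_i$ (since $s_i(E_K)=\mathbf 1[K\subseteq R_i]$), so the reduction can succeed at all only if the $R_i$ are pairwise distinct, and this is the statement I would prove. The key observation is that a connected component of $b_i$ which maps into the answer part of some $D_{p_k}$ is redundant for counting, as it is satisfied whenever $\hat p_k$ is; after discarding such components one may assume that every component of $b_i$ reaches a $D_{p_k}$ only through its Boolean part, whence $b_i \to D_{p_k}$ holds iff $b_i \to b_k$. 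Hence $R_i$ is precisely the principal down-set of $b_i$ in the homomorphism preorder on the Boolean parts. As the $p_i$ are pairwise not counting equivalent while being semi-counting equivalent, the $b_i$ are pairwise homomorphically inequivalent (a homomorphic equivalence would, together with $\hat p_i \equiv \hat p_k$, force renaming equivalence of $p_i$ and $p_k$), so this preorder is a partial order and the $R_i$ are pairwise distinct. Choosing $K_t := R_t$ makes the matrix $\big[\,\mathbf 1[R_t \subseteq R_i]\,\big]_{t,i}$ the zeta matrix of this partial order, which is triangular with unit diagonal under any linear extension and hence invertible; scaling columns by the nonzero $c_i$ preserves invertibility. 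The delicate part of the full write-up is the homomorphism bookkeeping behind the equivalence $b_i \to D_{p_k} \iff b_i \to b_k$, i.e.\ that answer-mappable components can be ignored without affecting either the counts or the profile $R_i$.
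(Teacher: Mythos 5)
Your argument is correct, but it takes a genuinely different route from the paper's proof (which follows Chen and Mengel). The paper never decomposes the $p_i$ into an answer part $\hat p_i$ and a Boolean part. Instead it observes that for any two of the queries, since they are semi-counting equivalent but not counting equivalent, at least one of the homomorphisms $p_i \to D_{p_j}$ and $p_j \to D_{p_i}$ cannot exist (otherwise the renaming equivalence of the $\hat p_i$ would extend to a renaming equivalence of the full queries). Hence the homomorphism preorder on $\{p_1,\dots,p_n\}$ given by the canonical databases is a partial order, and an extremal element $p_i$ satisfies $\#p_j(D_{p_i})=0$ for every $j\neq i$. A single oracle call on $D\times D_{p_i}$ then isolates the term $c_i\cdot\#p_i(D)\cdot\#p_i(D_{p_i})$, from which $\#p_i(D)$ is recovered by one division; the algorithm subtracts $c_i\cdot\#p_i(\cdot)$ from the oracle and recurses on the remaining $n-1$ queries. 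You replace this recursive peeling by a single linear system: you factor $\#p_i(D')=H(D')\cdot s_i(D')$ using counting equivalence of the $\hat p_i$, probe on products $D\times E_{R_t}$, and obtain invertibility from the zeta matrix of the homomorphism order on the reduced Boolean parts. Both proofs rest on the same combinatorial fact --- the homomorphism order separates pairwise non-counting-equivalent queries --- but the paper exploits it pointwise through separating databases, whereas you exploit it globally through triangularity. The paper's route avoids the redundant-component bookkeeping that you rightly flag as the delicate step (it does go through: a component of $b_i$ that maps into $D_{\hat p_k}$ also maps into $D_{\hat p_i}$ because the answer parts are homomorphically equivalent, so it can be discarded without changing either the counts or the sets $R_i$); in exchange, your version makes the number and shape of the oracle calls explicit up front and needs no nested oracles.
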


\begin{proof}
	Let $S = \{p_1, \dots, p_n\}$. We start with the observation that
	for all non-empty subsets $S' \subseteq S$ there are, and can be computed, a CQ $q_{S'} \in S'$ and a database $D_{S'}$
	such that

	\begin{itemize}
		\item[] $(\diamond)$ $\#q_{S'}(D') > 0$ and for all $q \in S' \setminus \{q_{S'}\}$ we have that $\#q(D_{S'}) = 0$.
	\end{itemize}

	Indeed, let $q_1(\bar{x}_1),q_2(\bar{x}_2) \in S'$ be two different CQs. Since $q_1,q_2$ are semi\=/counting equivalent,
	the CQs $\hat{q}_1,\hat{q}_2$ are renaming equivalent by Lemma~\ref{lem:counting-and-renaming-equivalence}.
	Hence, there are two surjections $\fun{h_1}{\bar{x}_1}{\bar{x}_2}$ and
	$\fun{h_2}{\bar{x}_2}{\bar{x}_1}$ that can be extended
	to homomorphisms $\fun{h_1}{\hat{q}_1}{D_{\hat{q}_2}}$ and
	$\fun{h_2}{\hat{q}_2}{D_{\hat{q}_1}}$.
%	Indeed, are homomorphisms $\fun{h_1}{\hat{q}}{D_{\hat{q}'}}$ and $\fun{h_2}{\hat{q}'}{D_{\hat{q}}}$.
	Therefore, if there would be homomorphisms $\fun{g_1}{{q_1}}{D_{{q_2}}}$ and $\fun{g_2}{{q_2}}{D_{{q_1}}}$ then
	we could extend the surjections $h_1$ and $h_2$ to homomorphisms $\fun{h_1}{{q_1}}{D_{{q_2}}}$ and $\fun{h_2}{{q_2}}{D_{{q_1}}}$, respectively.
	This would imply that $q_1$ and $q_2$ are renaming equivalent and, by Lemma~\ref{lem:counting-and-renaming-equivalence}, counting equivalent.
    Since they are not counting equivalent,	one of the homomorphisms $g_1$ or $g_2$ does not exist.

	Let $q_{S'}$ be a minimal element in $S'$ with respect to the partial order defined as $q \leq q'$ if there is a homomorphism from $q'$ to $D_q$.
	Let $D_{S'} = D_{q_{S'}}$. It is easy to check that the pair $(q_{S'}, D_{{S'}})$ satisfies the requirements in~$(\diamond)$.
	Let $\textit{get-min}(S')$ be the algorithm that given a~set $S' \subseteq \{p_1, \dots, p_n\}$
	returns the pair $(q_{S'}, D_{{S'}})$.

	Finally we can describe the desired algorithm.
	For $T \subseteq \{p_1, \dots, p_n\}$,  let $\Amc_T$ be an oracle that takes a database $D \in \class{D}$
	and returns the value $\sum_{p_i \in T} c_i \cdot \#p_i(D)$.
	Then, the algorithm promised by the lemma, let us call it $\textit{compute-count}(T, q, \Amc_T, D)$,
	takes a set of CQs $T$, a CQ $q \in T$,	an oracle $\Amc_T(\cdot)$, a database $D \in \class{D}$,
	and outputs the value $\#q(D)$. The algorithm works as follows.

	First, the algorithm finds the pair $(p_i, D_i) = \textit{get-min}(T)$.
	If $p_i = q$ then it returns
    %$\frac{\Amc_T(D \times D_i)}{\#q(D_i)}$. % = \sum_{p_j \in T} c_j \cdot \frac{\#p_j(D \times D_i)}{\#q(D_i)} = \#p_i(D)$.
    $\frac{\Amc_T(D \times D_i)}{{\color{\highlightColourTwo}c_i{\cdot}\#p_i}(D_i)}$.
	Otherwise, it returns the result of the recursive call $\textit{compute-count}(T', q, \Amc_{T'}, D)$
	where $T' = T  \setminus \{p_i\}$ and $\Amc_{T'}$ is an fpt algorithm that given database $D' \in \class{D}$
	returns $\Amc_{T'}(D') = \Amc_T(D')  - \frac{\Amc_T(D \times D_i)}{\#q(D_i)} = \sum_{p_j \in T'} c_j \cdot \#p_j(D')$.
	\noindent
	The algorithm clearly belongs to \fpt. To infer that it returns the desired value,
    we observe that
    \[
    \begin{array}{r c l}
        \frac{\Amc_T(D \times D_i)}{\#p_i(D_i)} &=& \frac{1}{\#p_i(D_i)}\cdot \sum_{p_j \in T} c_j \cdot \#p_j(D \times D_i) \\
        & = &  \frac{1}{\#p_i(D_i)}\cdot \sum_{p_j \in T} c_j \cdot \#p_j(D) \cdot \#p_j(D_i) \\
        &=& \frac{1}{\#p_i(D_i)} \cdot {\color{\highlightColourTwo}c_i {\cdot}}\#p_i(D)\#p_i(D_i) = {\color{\highlightColourTwo}c_i{\cdot}}\#p_i(D).
    \end{array}
    \]
    The first equality follows from definition of $\Amc_T$, the second is the product rule, the third
    from the fact that by construction of $D_i$, for all $p_j \in T$ we have that $\#p_j(D_i) = 0$
    if and only if $p_j \neq p_i$. The last equality is trivial.
\end{proof}
\begin{lem}[reduction from CQs to UCQs, the all free case in~\cite{countingPositiveQueries}]
	\label{lem:recovering-cqs-from-cm-closure-cq}
	Let $q(\bar{x})$ be an UCQ over schema $\Sbf$ and $|\bar{x}| > 0$;
	let $p_1(\bar{x}_1), \dots, p_n(\bar{x}_n)$ be a set of equality\=/free CQs such that $|\bar{x}_i| > 0$ for $0 < i \leq n$;
    and let $c_1, \dots, c_n$ be a sequence of non\=/zero integers.
	Let $\class{D}$ be a class of databases such that\begin{itemize}
		\item $D^{\top}_{\Sbf} \in \class{D}$,
		\item $D_{p_i},D_{\hat{p}_i} \in \class{D}$, for $1 \leq i \leq n$,
%        \item $D_{\hat{p}_i} \in \class{D}$, for $1 \leq i \leq n$,
		\item and $\class{D}$ is closed under disjoint union, direct product, and cloning.
	\end{itemize}
	If $p_1, \dots, p_n$ are pairwise not counting equivalent and for every database $D \in \class{D}$
	we have that $(\dagger)\ \#q(D) = \sum_{i} c_i \cdot \#p_i(D)$, then there is an algorithm
	that
	\begin{itemize}
		\item takes as an input a database $D \in \class{D}$ and a CQ $p \in \{p_1, \dots, p_n\}$;
		\item has an access to an~oracle for $\mn{AnswerCount}(\{q\}, \class{D})$;
		\item works in time $f(K){\cdot} \textit{poly}(||D||)$, where $f$ is some computable function and $K$ is the combined size of $p_1, \dots, p_n, c_1, \dots, c_n$, and $q$;
		\item and outputs $\#p(D)$.
	\end{itemize}
\end{lem}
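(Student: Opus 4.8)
The plan is to recover each individual count $\#p_i(D)$ out of the aggregate $\sum_i c_i \cdot \#p_i(D')$, which by $(\dagger)$ is exactly what the oracle for $\mn{AnswerCount}(\{q\},\class{D})$ returns on any $D' \in \class{D}$. I would do this in two stages: first separate the contributions of distinct \emph{semi-counting} equivalence classes, and then extract the individual counts inside a single class.

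First I would partition $\{p_1,\dots,p_n\}$ into semi-counting equivalence classes $C_1,\dots,C_m$; this is legitimate because semi-counting equivalence is an equivalence relation (Lemma~\ref{lem:counting-equivalence-is-equivalence-relation}) and, by Lemma~\ref{lem:counting-and-renaming-equivalence}, decidable. Within each class $C_k$ the CQs are semi-counting equivalent yet pairwise not counting equivalent, which is precisely the situation of Lemma~\ref{lem:recovering-cq-from-semi-counting-equivalent-cq}. Invoking Lemma~\ref{lem:semi-counting-equality-witness-database-cq} (whose preconditions hold by assumption), I would compute a witness database $B \in \class{D}$ with $\#p_i(B)>0$ for all $i$ and $\#p_i(B)\neq \#p_j(B)$ whenever $p_i,p_j$ are not semi-counting equivalent. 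Hence all CQs in one class $C_k$ share a common positive value $v_k := \#p_i(B)$ (for $p_i \in C_k$), and $v_1,\dots,v_m$ are pairwise distinct positive integers.

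Next I would separate the classes by polynomial interpolation. For $D'\in\class{D}$ set $S_k(D') := \sum_{i:\,p_i\in C_k} c_i \cdot \#p_i(D')$, write $B^\ell$ for the $\ell$-fold direct product of $B$, and let $k(i)$ be the class index of $p_i$. The product rule (Lemma~\ref{lem:counting-and-products}) gives, for every $\ell \geq 1$,
$$\#q(D' \times B^\ell) = \sum_{i=1}^n c_i \cdot \#p_i(D') \cdot v_{k(i)}^\ell = \sum_{k=1}^m v_k^\ell \cdot S_k(D').$$
Querying the oracle on $D'\times B^\ell$ for $\ell = 1,\dots,m$ yields a linear system whose coefficient matrix $(v_k^\ell)_{\ell,k}$ is a column-scaled Vandermonde matrix; since the $v_k$ are distinct and positive it is invertible, so solving it computes every $S_k(D')$. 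As $\ell \leq m \leq n$ and $||B||$ depends only on the queries, each database $D'\times B^\ell$ handed to the oracle has size $f(K)\cdot ||D'||$, so this is an fpt procedure that, given any $D'\in\class{D}$, returns all the values $S_k(D')$.

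Finally I would apply Lemma~\ref{lem:recovering-cq-from-semi-counting-equivalent-cq} to the class $C_k$ containing the target CQ $p$. That lemma asks for an oracle $\Amc$ delivering $\sum_{i:\,p_i\in C_k} c_i\cdot \#p_i(D')$ on input $D'$, which is exactly $S_k(D')$, realized by the interpolation procedure of the previous paragraph (itself using the given UCQ oracle). Its hypotheses are met: the CQs in $C_k$ are equality-free, semi-counting equivalent, pairwise not counting equivalent, with non-zero coefficients, and $\class{D}$ is closed under direct product and contains the relevant $D_{p_i}$. The lemma then outputs $\#p(D)$, as required. I expect the main obstacle to be the running-time bookkeeping of the nested oracle simulations: each call that Lemma~\ref{lem:recovering-cq-from-semi-counting-equivalent-cq} makes to $\Amc$ unfolds into $m$ calls of the UCQ oracle, and one must check that every intermediate database stays of size $f(K)\cdot\textit{poly}(||D||)$ — which holds because both interpolation layers only multiply by parameter-size databases and never raise the parameter — while the oracle's own running time does not count towards ours, so the overall composition is fpt.
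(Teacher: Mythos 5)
Your proposal is correct and follows essentially the same route as the paper: partition into semi-counting equivalence classes, use the witness database from Lemma~\ref{lem:semi-counting-equality-witness-database-cq} together with the product rule and a Vandermonde system to isolate each class sum $S_k(D')$, and then hand that as the oracle to Lemma~\ref{lem:recovering-cq-from-semi-counting-equivalent-cq}. The only difference is cosmetic (you make the distinctness and positivity of the class values $v_k$ explicit, which the paper leaves implicit in its appeal to the witness lemma).
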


\begin{proof}
	Let $D_{\circ}$ be an $\Sbf$\=/database as in Lemma~\ref{lem:semi-counting-equality-witness-database-cq}.
	Let $\varphi_1, \dots, \varphi_k$ be the set of equivalence classes of the semi\=/counting equivalence.
	Then, for every $1 \leq i \leq k$ for  $q,q' \in \varphi_i$ we have that $\#q(D_{\circ}) = \#q'(D_{\circ})$.
	Let $d_i$ denote the value $\#q(D_{\circ})$ for some $q \in \varphi_i$.
	For an $\Sbf$\=/database $D \in \class{D}$,  let $\varphi_j(D) = \sum_{p_i \in \varphi_j} c_i \cdot \#p_i(D)$.

	Then, for $l \geq 0$ we have the following.

	\[\#q(D \times D_{\circ}^l) = \sum_{1 \leq i \leq n} c_i \cdot \#p_i(D \times D_{\circ}^l) = \sum_{1 \leq i \leq n} c_i \cdot \#p_i(D) (\#p_i \big(D_{\circ})\big)^l = \sum_{1 \leq j \leq k} \varphi_j(D)\cdot d_j^l\]

    First equality holds because of $(\dagger)$, the second equality holds by product rule, see Lemma~\ref{lem:counting-and-products},
    and the final equality holds by definitions of $\varphi_i(D)$ and $d_i$.

	Invoking the above equation for integers $l=1, 2, \dots, k$
    we obtain a system of linear equations where $\varphi_j(D)$ are the unknowns, $(d_j)^l$ form the matrix of coefficients,
	and $\#q(D \times D_{\circ}^l)$ are the constant terms. Since the matrix is a~Vandermonde matrix, this system of equations has an unique solution and gives an fpt algorithm $\Amc(\varphi_j, D)$ with access to an oracle for $\mn{AnswerCount}(\{q\}, \class{D})$ that given one of the equivalence classes $\varphi_j$ and a database $D \in \class{D}$ computes the values $\varphi_j(D) = \sum_{p_i \in \varphi_j} c_i \cdot \# p_i(D)$.

	Let $\varphi$ be the equivalence class containing CQ $p$.
	The fpt algorithm promised in the lemma works by invoking the algorithm from Lemma~\ref{lem:recovering-cq-from-semi-counting-equivalent-cq} with the set of queries $\varphi$ and the oracle being the algorithm $\Amc(\varphi, \cdot)$.
\end{proof}

We can now prove the blackbox Theorem~\ref{thm:chenMengelStronger}.

\begin{proof}[Proof of Theorem~\ref{thm:chenMengelStronger}]

    Let $\{p_1, \dots, p_n\} \subseteq \mn{cl}^{\class{D}}_{\mn{CM}}(q)$ be a maximal set of
    pairwise not semi\=/\linebreak[4]counting equivalent CQs such that $p_1 = p$. Let $c_1, \dots, c_n$
    be the sequence of non\=/zero integers such that for every $\Sbf$\=/database $D' \in \class{D}$ we have that
    $\#q(D') = \sum_{i=1}^{n} c_i \#p_i(D')$. By the definition of the Chen-Mengel closure,
    we know that such set of CQs exists and that this sequence of integers %exists and
    is well defined. Now, for every $p_i$ we enumerate $\class{D}$ to find the promised databases $D_{p_i''}$
    and thus the equality\=/free CQs $p_i''$. We can do this, as we have access to a procedure
    that decides counting equality over $\class{D}$.
    Since for all $1 \leq i \leq n$ we have that $p_i$ and $p_i''$ are counting equivalent over $\class{D}$,
    the equality $\#q(D') = \sum_{i=1}^{n} c_i \#p''_i(D')$ hold for every database $D' \in \class{D}$.

    Before we proceed, we observe that by the definition of UCQs, either UCQ $q$ is Boolean and so is every CQ in $\mn{cl}^{\class{D}}_{\mn{CM}}(q)$
    or $q$ has at least one answer variable and so does every CQ in $\mn{cl}^{\class{D}}_{\mn{CM}}(q)$.
    Hence, either all CQs $p_1'', \dots, p_n''$ are Boolean or none is.

    If $q$ has a non\=/empty set of answer variables then we simply apply Lemma~\ref{lem:recovering-cqs-from-cm-closure-cq}.
    Otherwise, $q$ is Boolean and so is every query in $\{p''_1, \dots, p''_n\}$. Hence all those CQs
    are semi\=/counting equivalent and we can apply Lemma~\ref{lem:recovering-cq-from-semi-counting-equivalent-cq} directly.
\end{proof}

\section{Additional Details for Section~\ref{subsect:removeonto}}
\label{app:thirdblack}
%\section{Proofs from trichotomy}

\newcommand{\setV}{V}

For the sake of completeness, this section provides
the construction from \cite{countingTrichotomy} that allows us to remove markings from the query.
In particular, we provide the algorithm promised by the below statement.

%\begin{lem*}[Lemma~\ref{lemma:CQ-to-marked-CQ}]
%    \lemCQtoMarked
%\end{lem*}

\bigskip
\noindent
{\bf Theorem~\ref{lemma:CQ-to-marked-CQ-byChenMengel} \cite{countingTrichotomy}.}%
\textit{\lemCQtoMarked}

%\cMP{why the above shows as C.1?}
\begin{proof}
	We need to exhibit an fpt algorithm that given a CQ $q(\bar{x})$
	and an $\Sbf^{\marked}$\=/database $D$ computes $\#q^{\marked}(D)$.
	The algorithm may ask the oracle for the values $\#q(D')$
	%where $q$ is the CQ $q^{\marked}$ with the markings removed and
	for $D' \in \class{D}$.
	Let $\setV$ % = \bar{x}$.
    be the set of variables in $\bar x$.

	%First, observe that $q$ can be computed in time that depends only on $q^{\marked}$.
    Let $D^{\times}$ be the product of $D_{q^{\marked}}$ and
    $D$, thus in particular $\dom(D^{\times}) =
    \dom(D_{q^{\marked}}) \times \dom(D)$.  Let $D^{\circ} \subseteq
    D^{\times}$ be the subdatabase induced by the set $\{(x,a) \in
    \dom(D^{\times}) \mid x \in \mn{var}(q) \text{ and } R_x(a) \in D\}$.  Clearly, $D^{\circ} \in
    \class{D}$.

	%    For a CQ $p(\bar{x})$ and an $\Sbf^{\marked}$\=/database $B$, l
	Let $\Hmc$ be the set of functions $\fun{g}{\setV}{\dom(D^{\circ})}$
	such that $g$ can be extended to a~homomorphism $\fun{g}{q}{D^{\circ}}$
	such that for all $x \in \setV$, $g(x) = (y,b)$ implies $x=y$.
	Moreover, let $\Hmc'$ be the set of functions $\fun{g}{\setV}{\dom(D^{\circ})}$
	such that $g$ can be extended to a homomorphism $\fun{g}{q}{D^{\circ}}$
	and satisfy %$\{ y \in \setV \mid \exists x \in \setV: g(x) = (y,b) \} = \setV$.
	$\{ x \in \setV \mid \exists b:\ (x,b) \in g(\setV)\} =
        \setV$, that is, all variables from $V$ occurs in the first
        component
        of some element hit by $g$.
	Finally, let $\Imc$ be the set of mappings  $\fun{g}{\setV}{\setV}$
	that can be extended to an automorphism $\fun{g}{q}{q}$.
	\begin{itemize}
		\item[] \textit{Claim 1.} There is a bijection between $q^{\marked}(D)$ and $\Hmc$.
		\item[] \textit{Claim 2.} $|\Hmc'| = |\Imc| \cdot |\Hmc|$.
	\end{itemize}
	Claim~1 follows from the fact that we simulate markings by
        enforcing that for all functions $q \in \Hmc$ and all answer
        variables $x$ we have $g(x) = (x,b)$ for some
        $b \in \dom(D^{\circ})$.  Claim~2 is shown by arguing that
        every function in $\Hmc'$ can be obtained as the composition of a
        function from $\Hmc$ with a permutation of answer variables
        from $\Imc$.  The precise proofs of the above claims can be
        found in~\cite{countingTrichotomy}.

	For $T \subseteq \setV$, let $\Hmc_{T}$ be the set of functions $\fun{g}{\setV}{\dom(D^{\circ})}$
	that can be extended to a homomorphism $\fun{g}{q}{D^{\circ}}$
	and satisfy $\{ y \in \setV \mid \exists x \in \setV: g(x) =
        (y,b) \} \subseteq T$,
        that is, $g$ maps all variables from $V$ to elements that have
        a variable from $T$ in its first component.
	By an inclusion\=/exclusion argument, we get the equation
	\[
	|\Hmc'| = \sum_{T \subseteq \setV} (-1)^{|\setV \setminus T|} |\Hmc_T|.
	\]
	The above equation used together with Claims~1 and~2 gives the following.

	\begin{itemize}
		\item[] \textit{Claim 3.} $\#q^{\marked}(D) = \frac{1}{|\Imc|} \cdot \sum_{T \subseteq \setV} (-1)^{|\setV \setminus T|} |\Hmc_T|$.
	\end{itemize}

	Since $|\Imc|$ can be computed brute-force, we focus on how to compute the values $|\Hmc_T|$ for all $T \subseteq \setV$.
	Fix $T \subseteq \setV$. For $i > 0$, let $\Hmc_{i,T}$ be the
        set of functions $\fun{g}{\setV}{\dom(D^{\circ})}$ such that
        $g$ can be extended to a homomorphism $\fun{g}{q}{D^{\circ}}$
        such that for exactly $i$ variables $x \in \setV$, the first
        component of $g(x)$ is in $T$, that is, $g(x) = (y,b)$ implies
        $y \in T$. Note that $\Hmc_{|\setV|,T} = \Hmc_{T}$.

	Let $D^{\circ}_{j,T}$ be the $\Sbf$\=/database obtained from the database $D^{\circ}$ by cloning
	every element from the set~$\{(x,a) \in \dom(D^{\circ}) \mid x \in T\}$ exactly $j{-}1$ times.
	Note that $D^{\circ}_{1,T} = D^{\circ}$ and $D^{\circ}_{j,T} \in \class{D}$ for all $j > 0$. Moreover, the following holds.
	\begin{itemize}
		\item[] \textit{Claim 4.} For $j > 0$, $\#q(D^{\circ}_{j,T}) = \sum_{i = 0} ^{|\setV|} j^i |\Hmc_{i,T}|$.
	\end{itemize}

	Invoking the equation from Claim~4 for $j= 1, 2, \dots, |V|+1$,
    we obtain a system of $|V|+1$ linear equations where $|\Hmc_{i,T}|$ are the unknowns,
	$j^i$ form the matrix of coefficients, and $\#q(D_{j,T})$ are the constant terms. Since the matrix is a Vandermonde matrix, this system of equations has a unique solution. Moreover, since the class $\class{D}$ is closed under cloning, the values $\#q(D^{\circ}_{j,T})$ can be effectively computed
	using an oracle for $\mn{AnswerCount}(\{q\}, \class{D})$.
	% Hence, for $q^\marked$ and $D^{\circ}$ there is, and can be computed, an algorithm $\textit{solve}_{q^{\marked}, D^{\circ}} (T)$ that given
	% %the query $q^{\marked}$, a set $T \subseteq \setV$, and the database $D^{\circ}$ computes the value $|\Hmc_{T}|$.
	% a set $T \subseteq \setV$ computes the value $|\Hmc_{T}|$.
	Hence there is an algorithm $\textit{solve}(q^{\marked}, D^{\circ},T)$ that given
	%the query $q^{\marked}$, a set $T \subseteq \setV$, and the database $D^{\circ}$ computes the value $|\Hmc_{T}|$.
	$q^{\marked}$, $D^\circ$, and a set $T \subseteq \setV$
        computes the value $|\Hmc_{T}|$.

	\begin{itemize}
		\item[] \textit{Claim 5.}  Algorithm $\textit{solve}(q^{\marked}, D^{\circ},T)$
        %$\textit{solve}(q^{\marked}, T, D^{\circ})$
         is an fpt algorithm with access to an oracle for
         $\mn{AnswerCount}(\{q\}, \class{D})$, the parameter being the
         size of the query $q^{\marked}$ (which dominates the size of $T$).
	\end{itemize}

	All we need to show is that \textit{solve} works in time
        $f(||q^m||) \cdot \textit{poly}(||D^{\circ}||)$
	for some computable function $f$.
	Since we can use the standard algorithm to solve the system of linear equations in time polynomial in the size of the system,
	it is enough to show that the system can be constructed in the desired time.

    The system has $|V|+1$ equation. Since $0 \leq i \leq |V|$ and $1 \leq j \leq |V|+1$,
    the coefficients $j^i$ do not depend on the database and can be constructed in the desired time.

    For the constant terms
	$\#q(D^{\circ}_{j,T})$ we observe that we can compute
        $D^{\circ}_{j,T}$ in time  bounded by $f'(||q^m||) \cdot \textit{poly}(||D^{\circ}||)$
	for some computable function $f'$. Indeed, since $D^{\circ}$ is a subdatabase of the product database $D_{q^{\marked}} {\times} D$,
	it uses no more than $||q^{m}||$ relational symbols. Moreover, every fact in $D^{\circ}$ has arity not greater than $||q^{\marked}||$
	and, thus, cannot give rise to more than $j^{||q^{\marked}||}$ facts. Hence, $||D^{\circ}_{j,T}|| \leq  j^{|q^{\marked}|} ||D^{\circ}||$
	and $D^{\circ}_{j,T}$ can be computed in the desired time by simply enumerating all facts in database $D^{\circ}$.
	In consequence, there is an fpt algorithm with an oracle that constructs the desired system of equations. This ends the
	proof of Claim~5.

        \smallskip
	To conclude, the algorithm that given CQ $q^{\marked}$ and database $D$ computes $\#q^\marked(D)$ works as follows.
	First, we construct the unmarked query $q$ and the
        database~$D^{\circ}$,
        and compute the size of the set $\Imc$.
	Then, for all $T \subseteq \setV$ we invoke $\textit{solve}(q^m, T, D^{\circ})$ to compute
	the values $|\Hmc_{T}|$. Finally, we use the equation from Claim~3 to compute the value $\#q^{\marked}(D)$.

	For the time complexity of the algorithm, we observe that $q$, $|\Imc|$, and $D^{\circ}$ can be easily computed by an fpt algorithm.
	Since the size of the equation in Claim~3 depends only on the size of the query $q$, we have no more than
    a constant number of values $|\Hmc_{T}|$ to compute. By Claim~5 each value $|\Hmc_{T}|$
    can be computed by an fpt algorithm with access to an oracle computing $\#q(D')$ for $D' \in \class{D}$.
    Hence, the overall running time is bounded by $f(||q||) \cdot p(||D||)$ for some computable function $f$ and a polynomial $p$.
%
    % Thus, there is an fpt algorithm that given a CQ $q(\bar{x})$
    % and an $\Sbf^{\marked}$\=/database $D$ computes $\#q^{\marked}(D)$.
    % The algorithm may ask the oracle for the values $\#q(D')$
    %where $q$ is the CQ $q^{\marked}$ with the markings removed and
   % for $D' \in \class{D}$.
\end{proof}

\section{Additional Details for Section~\ref{sect:approx}}

For the proof of Lemma~\ref{prop:firstapprox}, it remains to establish
Points~(A) and~(B) used in the proof given in the main part of the
paper. For the reader's convenience, we repeat the central definitions.

Let $I$ be an instance and $\Tmc$ a set of TGDs. For each fact in
$\mn{ch}_\Tmc(I)$, we want to identify a source fact in $I$.  Start
with setting $\mn{src}(R(\bar c))=R(\bar c)$ for all
$R(\bar c) \in I$. Next assume that $R(\bar c) \in \mn{ch}_\Tmc(I)$
was introduced by a chase step that applies a TGD $T \in \Tmc$ at a
tuple $(\bar d, \bar d')$, and let $R'$ be the relation symbol in the
guard atom in $\mn{body}(T)$. Then we set
$\mn{src}(R(\bar c))=R'(\bar d, \bar d')$ if
$\bar d \cup \bar d' \subseteq \mn{adom}(I)$ and
$\mn{src}(R(\bar c))=\mn{src}(R'(\bar d, \bar d'))$ otherwise. For any
guarded set $X$ of $I$, define $\mn{ch}_\Tmc(I)|^\downarrow_{X}$ to
contain those facts $R(\bar c) \in \mn{ch}_\Tmc(I)$ such that the
constants in $\mn{src}(R(\bar c))$ are exactly those in $X$.

We shall actually consider such subinterpretations not only of the
final result $\mn{ch}_\Tmc(I)$ of the chase, but also of the instances
constructed as part of a chase sequence $I_0,I_1,\dots$ for $I$
with~$\Tmc$. In fact, we can define $I_i|^\downarrow_X$ in exact
analogy with $\mn{ch}_\Tmc(I)|^\downarrow_{X}$, for all $i \geq 0$.
\begin{lem}
  \label{lem:treesarenice}
  Let %$I$ be an instance, \Tmc a set of TGDs,
  $I_0,I_1,\dots$ be a chase
  sequence of $I$ with \Tmc and $i \geq 0$.
  Then
  \begin{enumerate}

  \item[(A)] for all guarded sets $X$ in~$I$, there is a homomorphism
    from $I_i|^\downarrow_{X}$ to $\mn{ch}_\Tmc(I_i|_{X})$ that is the
    identity on all constants in $X$;

  \item[(B)] if $c \in \mn{adom}(I_i)$ is a null and
    $R_1(\bar c_1),R_2(\bar c_2) \in I_i$ such that $c$
    occurs in both $c_1$ and~$c_2$, then
    $\mn{src}(R_1(\bar c_1))=\mn{src}(R_2(\bar c_2))$.

  % Let $I$ be an instance, \Smc a set of TGDs, $I_0,I_1,\dots$ a chase
  % sequence of $I$ with \Smc, and $S$ a guarded set in $I$. Then for
  % every $i \geq 0$ there is a homomorphism from $I_i|^\downarrow_{S}$
  % to $\mn{ch}_S(I_i|_{S})$ that is the identity on all constants in
  % $S$. Moreover, $h_i(c)=h_{i+1}(c)$ for all $c$ with $h_i(c)$
  % defined.
  \end{enumerate}
\end{lem}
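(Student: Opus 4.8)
The plan is to prove both statements by induction on the length $i$ of the chase sequence, establishing (B) first and then using it to build the homomorphism required for (A). Throughout I exploit repeatedly that guardedness forces the image $\bar d \cup \bar d'$ of the entire body of an applied TGD to lie among the constants of its guard fact $R'(\bar d,\bar d')$, and I recall that $\mn{src}(R(\bar c))$ is always a fact of $I$ on constants in $\mn{adom}(I)$.

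For (B), the base case $i=0$ is vacuous since $I_0=I$ contains no nulls. For the step $I_i \xrightarrow{T,(\bar d,\bar d')} I_{i+1}$ adding the head atoms $\psi(\bar d,\bar d'')$ with fresh nulls $\bar d''$, I would take a null $c$ occurring in two facts $R_1(\bar c_1),R_2(\bar c_2)\in I_{i+1}$. If both facts already lie in $I_i$, the claim is the induction hypothesis. If $c$ is one of the fresh nulls $\bar d''$, then $c$ occurs only in newly added head atoms, so both facts are head atoms and share the single source $\mn{src}(R'(\bar d,\bar d'))$ (resp.\ $R'(\bar d,\bar d')$) dictated by this chase step. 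In the remaining case $c$ is an old null lying in a new head atom, say $R_2(\bar c_2)$; since head constants come from $\bar d\cup\bar d''$ and $c$ is not fresh, $c\in\bar d$, so $c$ also occurs in the guard fact $R'(\bar d,\bar d')\in I_i$. As $c$ is a null we have $\bar d\cup\bar d'\not\subseteq\mn{adom}(I)$, whence $\mn{src}(R_2(\bar c_2))=\mn{src}(R'(\bar d,\bar d'))$. Applying the induction hypothesis to the occurrences of $c$ in $R_1(\bar c_1)$ and $R'(\bar d,\bar d')$ (or, if $R_1(\bar c_1)$ is itself new, reading off its source directly) then equates the two sources.

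For (A), I would fix a guarded set $X$ and build the homomorphism $h$ by replaying, in creation order, the chase steps that produced the facts of $I_i|^\downarrow_{X}$; here $h$ is the identity on $\mn{adom}(I)$ and sends each null whose source lies on $X$ to a witness in $\mn{ch}_\Tmc(I_i|_{X})$. Two invariants, proved by the same induction together with guardedness, carry the argument: first, every fact in $I_i|^\downarrow_{X}$ has its $\mn{adom}(I)$-constants inside $X$, so $h$ genuinely lands in $\mn{ch}_\Tmc(I_i|_{X})$ and is the identity on $X$; second, by (B), whenever a step producing an $X$-fact fires its guard $R'(\bar d,\bar d')$, every body atom maps to a fact that is either on constants in $X$ — hence already in the base $I_i|_{X}$ — or contains an $X$-null and therefore lies in the already-processed set $I_i|^\downarrow_{X}$. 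Given this, at each replayed step the body maps into $\mn{ch}_\Tmc(I_i|_{X})$ under $h$; since $\mn{ch}_\Tmc(I_i|_{X})\models\Tmc$, the matching head is satisfied there, which lets me extend $h$ to the fresh nulls of that step so that the new head atoms are covered.

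The main obstacle I anticipate is (A), and specifically justifying the replay: one must ensure that every fact consumed by a chase step inside the $X$-piece is available when the step is processed, and that the piece never drags in an $\mn{adom}(I)$-constant outside $X$. Both points reduce to guardedness pinning the whole TGD body inside its guard fact, with (B) guaranteeing that the nulls occurring in those body atoms cannot have leaked in from a different source. The anchor of the replay is the base fact on $X$ lying in $I_i|_{X}$ together with the identity on $X$; once these invariants are in place, the extension of $h$ at each step is routine.
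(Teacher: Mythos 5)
Your proof is correct and follows essentially the same route as the paper's: induction along the chase sequence, with (B) established by a case analysis on where the shared null can occur, and (A) obtained by extending the homomorphism step by step, using guardedness and (B) to argue that each applied TGD's body already maps into $\mn{ch}_\Tmc(I_i|_{X})$ so that the head can be matched there. If anything, you make explicit (via your two invariants) the justification for why the body atoms land in $I_i|_{X}$ or in the already-processed part, a point the paper's sketch passes over silently.
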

\begin{proof} %[Proof (sketch).]
  The proof of both (A) and (B) is by induction on $i$. We only
  present
  the more interesting proof of (A). The induction
  start holds as
  $I_0|^\downarrow_X=I_0|_X \subseteq \mn{ch}_\Tmc(I|_X)$. %  and the
  % induction start of (B) holds since $I_0$ contains no null.

  For the induction step, assume that $I_{i+1}$ was obtained from
  $I_i$ by applying a TGD
  $T= \phi(\bar x,\bar y) \rightarrow \exists \bar z \, \psi(\bar
  x,\bar z)$ at a tuple $(\bar c, \bar c')$. Let $R$ be the relation
  symbol used in a guard atom of $\phi$, and let $Z$ be the constants
  in $\mn{src}(R(\bar c, \bar c'))$.

  Now consider any guarded set $X$ in~$I$.  If $X \neq Z$, then by
  definition of $\cdot|^\downarrow_X$ in terms of $\mn{src}$, we must
  have $I_{i+1}|^\downarrow_{X} =I_{i}|^\downarrow_{X}$ and it
  suffices to use the induction hypothesis. Thus assume that $X=Z$.
%
  % First assume that $X \neq Z$. Then by definition of
  % $\cdot|^\downarrow_X$ in terms of $\mn{src}$, all facts in
  % $I_{i+1}|^\downarrow_{X}\setminus I_{i}|^\downarrow_{X}$ contain
  % only constants in $\bar c$, but no nulls.\footnote{For example,
  %   assume that $R(a,b),A(a) \in I \subseteq I_i$ and that the TGD
  %   $A(x) \rightarrow B(x)$ is applied at $a$. For $X=\{a,b\}$, we
  %   have $X \neq Z=\{a\}$, and
  %   $I_{i+1}|^\downarrow_{X}\setminus I_{i}|^\downarrow_{X} = \{ B(a)
  %   \}$.}  By induction hypothesis, there is a homomorphism $h_i$
  % from $I_i|^\downarrow_{X}$ to $\mn{ch}_\Tmc(I_i|_{X})$ that is the
  % identity on all constants in $X$. Clearly, $h_i$ is also a
  % homomorphism from $I_{i+1}|^\downarrow_{X}$ to
  % $\mn{ch}_\Tmc(I_{i+1}|_{X})$.
%
  %Now assume that $X=\bar c \cup \bar c'$.
  Then clearly
  $I_{i+1}|^\downarrow_{X}\setminus I_{i}|^\downarrow_{X} = \psi(\bar c,
  \bar c'')$ where $\bar c''$ consists of nulls that do not occur
  in $I_i$.  By induction hypothesis, there is a homomorphism $h_i$
  from $I_i|^\downarrow_{X}$ to $\mn{ch}_\Tmc(I_i|_{X})$ that is the
  identity on all constants in $X$. Applicability of $T$ at
  $(\bar c, \bar c')$ implies $\phi(\bar c,\bar c') \subseteq I_i$ and
  thus
  $\phi(h_i(\bar c),h_i(\bar c')) \subseteq \mn{ch}_\Tmc(I_i|_{X})$.
  It follows $T$ has been applied at $(h(\bar c),h(\bar c'))$ in (any
  fair chase sequence that produces) $\mn{ch}_\Tmc(I_i|_{X})$.  As a
  consequence, there are constants $\bar d$ such that $\psi(h(\bar
  c),\bar d) \subseteq \mn{ch}_\Tmc(I_i|_{X})$. We extend $h_i$ to $h_{i+1}$
  so that $h_{i+1}(\bar c'')=\bar d$. Clearly, $h_{i+1}$ is a homomorphism from
  $I_{i+1}|^\downarrow_{X}$ to $\mn{ch}_\Tmc(I_{i+1}|_{X})$.
\end{proof}

We next prove the part of Theorem~\ref{thm:CTWSSjointapprox} that is
concerned with contract treewidth. % For simplicity, we assume that
% (U)CQs in OMQs do not contain equality atoms.
The proof uses minors.
We recall that an undirected graph $G$ is a \emph{minor} of an
undirected graph $H$ if $G$ can be obtain from $H$ by contracting
edges and then taking a subgraph.
\begin{lem}
    \label{thm:CTWapprox}
    Let $(\Omc,\Sbf,q) \in (\class{G},\class{UCQ})$ be an OMQ and $k \geq
    1$. Then $Q^{\text{CTW}}_k$ is a CTW$_k$-approximation of $Q$.
\end{lem}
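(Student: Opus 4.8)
Points~1 and~2 of the definition of a CTW$_k$-approximation hold by construction: every collapsing $p$ of a CQ in $q$ satisfies $p \subseteq_\Omc q$, and $q^{\text{CTW}}_k$ retains exactly those collapsings of contract treewidth at most $k$. The work lies in Point~3, and I would follow the template of the starsize proof in Lemma~\ref{thm:SSapprox}. Thus let $P=(\Omc',\Sbf,p)$ with $P \subseteq Q$ and $p$ of contract treewidth at most $k$, let $D$ be an \Sbf-database, and let $\bar c \in P(D)$. The plan is to build a pointed database $(D',\bar c)$ with (i) $\bar c \in P(D')$, (ii) the contract treewidth of $(D',\bar c)$ is at most $k$, and (iii) a homomorphism from $D'$ to $D$ that is the identity on $\bar c$. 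From (i) and $P \subseteq Q$ I get $\bar c \in Q(D')$, hence a homomorphism $g$ from some CQ $q_i$ of $q$ into $\mn{ch}_\Omc(D')$ with $g(\bar x)=\bar c$; collapsing $q_i$ along $g$ yields a CQ $\widehat q$, and it then suffices to prove that $\widehat q$ has contract treewidth at most $k$, since composing $g$ with the homomorphism from~(iii) places $\bar c$ in $Q^{\text{CTW}}_k(D)$. The essential difference to the starsize case is that the counting argument there is replaced by the minor-monotonicity of treewidth.

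For the construction of $D'$ I would fix the CQ $p'$ in $p$ and the homomorphism $h \colon p' \to \mn{ch}_{\Omc'}(D)$ witnessing $\bar c \in P(D)$, together with a tree decomposition $(T,\chi)$ of $\mn{contract}(p')$ of width at most $k$; so every bag $\chi(t)$ is a set of at most $k+1$ answer variables. For each node $t$ let $D_t$ be the copy of $D$ in which every constant outside $h(\chi(t))$ is renamed to a fresh $t$-indexed constant, and set $D'=\bigcup_t D_t$. Property~(iii) is immediate, and (i) holds because the answer variables of every $\bar x$-component of $p'$, as well as the answer variables of every atom of $p'$, form a clique in $\mn{contract}(p')$ and hence sit inside a single bag $\chi(t)$, so the corresponding part of $p'$ can be mapped into $\mn{ch}_{\Omc'}(D_t)$ via the isomorphism $D \to D_t$ that fixes $h(\chi(t))$ and its extension along the chase. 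For~(ii) I would transfer $(T,\chi)$ to the labelling $t \mapsto h(\chi(t))$: since renamed constants are copy-specific, every edge of $\mn{contract}(D',\bar c)$ joins two constants of a common $h(\chi(t))$, so this is a tree decomposition of $\mn{contract}(D',\bar c)$ of width at most $k$. The one point where care is needed, and where this case is genuinely more involved than starsize, is that $h$ need not be injective on $\bar x$: a repeated answer constant merges two bag-vertices and can raise treewidth (a path in $\mn{contract}(p')$ whose endpoints collapse becomes a cycle). I would remove this by a preprocessing step that, using closure of $\class{D}^\Sbf_{\Omc}$ under cloning (Lemma~\ref{lem:closure}), clones the repeated answer constants of $D$ into pairwise distinct ones so that the answer tuple becomes repetition-free, carries out the construction there, and finally projects the clones back; the projection realizes the required answer $\bar c$ while leaving the bound on the contract treewidth intact.

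The remaining and principal obstacle is to bound the contract treewidth of $\widehat q$ by that of $(D',\bar c)$. The plan is to show that $\mn{contract}(\widehat q)$ is a minor of $\mn{contract}(D',\bar c)$ and then invoke minor-monotonicity together with~(ii). Concretely, each $\bar x$-component of $\widehat q$ that links two answer variables is mapped by $g$ into $\mn{ch}_\Omc(D')$; using that $\Omc$ is guarded and the chase-locality statements of Lemma~\ref{lem:treesarenice}, I would discard the nulls introduced by the chase and push this connection down to a path in the Gaifman graph of $D'$ between the two answer constants that uses no other answer constant as an inner node. Such paths realize the edges of $\mn{contract}(\widehat q)$ either directly as edges of $\mn{contract}(D',\bar c)$ or, when a routing passes through intermediate answer constants, as paths that are contracted to edges, which is exactly where minors rather than subgraphs enter. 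I expect this chase-routing, turning the component structure of $\widehat q$ into a minor model inside the contract of $D'$, to be the technical heart of the proof; everything else is bookkeeping. Finally, $\widehat q$ is a collapsing of $q_i$ of contract treewidth at most $k$, hence a disjunct of $q^{\text{CTW}}_k$, and composing $g$ with the homomorphism from~(iii) gives $\bar c \in Q^{\text{CTW}}_k(D)$, establishing $P \subseteq Q^{\text{CTW}}_k$ and thereby Point~3.
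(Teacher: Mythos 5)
Your architecture matches the paper's: Points~1 and~2 by construction, and for Point~3 a database $D'$ with properties (i)--(iii), followed by a collapsing $\widehat q$ of some disjunct of $q$ and a treewidth-of-minors argument. The genuine gap is in your treatment of the non-injectivity of $h$ on $\bar x$, which is precisely the step the paper flags as the ``additional complication''. You propose to clone the repeated answer constants of $D$ so that the answer tuple becomes repetition-free and then carry out the construction with an injective $h$. This preprocessing cannot work in general: if two distinct answer variables $x_1,x_2$ lie in the same $\bar x$-component of $p'$ and $h(x_1)=h(x_2)=c$, then after cloning $c$ into $c^1,c^2$ the quantified variables of that component would need images in the chase of the cloned database adjacent to both $c^1$ and $c^2$, and these need not exist. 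Concretely, for $p'=\exists y\, (R(x_1,y)\wedge R(x_2,y))$, $\Omc'=\{A(x)\rightarrow \exists y\, R(x,y)\}$ and $D=\{A(c)\}$ we have $(c,c)\in P(D)$, but in $\mn{ch}_{\Omc'}(\{A(c^1),A(c^2)\})$ the constants $c^1$ and $c^2$ have disjoint $R$-successors, so $(c^1,c^2)\notin P(\{A(c^1),A(c^2)\})$. Hence property~(i) is lost and there is nothing to ``project back''. The same non-injectivity also undercuts your verification of~(ii): relabelling the bags of the tree decomposition of $\mn{contract}(G_{p'})$ by $t\mapsto h(\chi(t))$ violates the connectedness condition for any constant with two $h$-preimages whose bag-subtrees are disjoint.

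The paper resolves this differently. It first passes to an intermediate database in which answer variables are separated only across the equivalence classes of the relation $\sim$ generated by co-occurrence in an $\bar x$-component; this much cloning is always compatible with the homomorphism because each quantified variable reaches at most one such class without crossing an answer variable. The residual identifications within a class are then harmless: any two identified answer variables are joined by a path of edges in $\mn{contract}(G_{p'})$, so the contract of the constructed database is a \emph{minor} (not a subgraph) of $\mn{contract}(G_{p'})$, and minor-monotonicity of treewidth yields~(ii). In other words, the contraction/minor argument you reserve for the final step is actually needed at the construction of $D'$; for the final step the paper shows that, thanks to the collapsing used to define $\widehat q$, the chase-projected paths avoid answer constants as inner nodes, so $\mn{contract}(G_{\widehat q})$ embeds as a subgraph directly. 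To repair your proof, replace the ``make $h$ injective'' preprocessing by the $\sim$-based separation and move the minor argument to the verification of~(ii).
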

%\CTWapprox*
%
\begin{proof}
    % {\color{blue}Have to assume original query has no equalities or
    %   equalities contribute to Gaifman graph}
  Let $Q(\bar x)=(\Omc,\Sbf,q) \in (\class{G},\class{UCQ})$.  By
  construction of $Q_{k}^{CTW} = (\Omc, \Sbf, q^{\text{CTW}}_k)$, it
  is clear that Points~1 and~2 of the definition of
  CTW$_k$-approximations are satisfied. It thus remains to establish
  Point~3.

    Thus let $P(\bar x) = (\Omc',\Sbf,p) \in (\class{G},\class{UCQ})$
    such that $P \subseteq Q$ with $p$ of contract treewidth at most
    $k$.  We have to show that $P \subseteq Q_{k}^{\text{CTW}}$. Let
    $D$ be an \Sbf-database and let $\bar c \in P(D)$.  Thus there is
    a homomorphism $h$ from some CQ $p'(\bar x)$ in $p$ to
    $\mn{ch}_{\Omc'}(D)$ such that $h(\bar x )=\bar c$.

    The general strategy of the proof is the same as in the proof of
    Lemma~\ref{thm:SSapprox}, but there is an additional complication.
    Ideally, we would like to use the homomorphism $h$ and the fact
    that $p'$ has contract treewidth at most~$k$ to construct from
    $D$ a database $D''$ such that the following three conditions are
    satisfied:
    \begin{enumerate}
    \item[(a)] $\bar c \in P(D'')$;
    \item[(b)] the contract treewidth of $(D'',\bar c)$ is at most $k$;
    \item[(c)] there is a homorphism from $D''$ to $D$ that is the
      identity on $\bar c$.
    \end{enumerate}
    Let us briefly argue how this helps to prove
    Lemma~\ref{thm:CTWapprox}.  Point~(a) and $P \subseteq Q$ imply
    that there is a homomorphism~$g$ from some CQ $q'(\bar x)$ in $q$
    to $\mn{ch}_\Omc(D'')$ with $g(\bar x)=\bar c$.  We obtain a CQ
    $\widehat q(\bar x)$ from $q'$ by identifying variables, achieving
    that for all distinct variables $y_1,y_2$ in~$\widehat q$,
    $g(y_1)=g(y_2)$ implies that
    $y_1=y_2 \in \widehat q$. Point~(b) then
    implies that $\widehat q(\bar x)$ is a CQ in~$q^{\text{CTW}}_k$
    and we may use Point~(c) to show that $\bar c \in Q_{k}^{CTW}(D)$,
    as required.

    The additional complication is as follows. To ensure Point~(b), it
    is necessary that the homomorphism $h$ from $p'$ to $D$ that we
    use in the construction of $D''$ is `as injective as
    possible'. Intuitively, this is because an injective homomorphism
    provides a much closer link between $p'$ and $D$, and such a close
    link is needed to transfer the bound on contract treewidth from
    $p'$ to $D''$. In particular, non-injectivities on answer
    variables must be avoided as much as possible. To address this
    issue, we first construct from $(D,\bar c)$ another pointed
    database $(D',\bar c')$ and a homomorphism $h'$ from $D'$ to $D$
    with $h(\bar c')=\bar c$, breaking as many non-injectivities on
    answer variables as possible. We then construct $D''$
    starting from $D'$ and $h'$ rather than from $D$ and $h$.

    We now describe the construction of $(D',\bar c')$ and $h'$ in
    detail.
    Let `$\sim$' denote the smallest
    equivalence relation on the variables in $\bar x$ such that $x_1 \sim x_2$ whenever
    $p'$ has an $\bar x$-component $S$ such that $x_1,x_2 \in
    S$. % \footnote{We remark that $x_1,x_2$ being in the same $\bar
       %  x$-component of $p'$ implies $x_1 \sim x_2$ implies that $x_1,x_2$
       %  are in the same maximal connected component of $p'$, but the
       %  converse implications fail.}
    We construct  $(D',\bar c')$
%      We first construct an \Sbf-database
 %     $D'$ and a tuple $\bar c' \in \mn{adom}(D')$
      such that:
    \begin{enumerate}

        %  \item $h'(y) \notin \bar c$ for all quantified variables $y$ of $p'$;

        \item there is a homomorphism $h'$ from $p'$ to
        $\mn{ch}_{\Omc'}(D')$ such that $h'(\bar x) = \bar c'$ and
        $h'(x_1)=h'(x_2)$, with $x_1,x_2 \in \bar{x}$, implies
        $x_1 \sim x_2$;
        % one of the following holds:
        % %
        % \begin{enumerate}

        % \item $x_1=x_2 \in p'$;

        % \item $G_{p'}$ contains a path $y_1,\dots,y_n$ with $y_1=x_1$,
        %   $y_n=x_2$, and $h(y_i) \notin \mn{adom}(D')$;

        % \end{enumerate}

        \item there is a homomorphism from $D'$ to $D$ that maps $\bar
        c'$ to $\bar c$.

    \end{enumerate}
    Informally, the condition on $h'$ in Point~1 says that $h'$ avoids
    non-injectivities on answer variables as much as possible.
    %
    % To improve readability of the construction, we assume that there are
    % no equality atoms in $p'$, and thus the second condition on $h'$ in
    % Point~1 simply says that all constants in $\bar a'$ are
    % distinct. Our proof, however, easily extends also to the case with
    % equality atoms.
    %
    % Let
    %   $p_1,\dots,p_m$ be the maximal connected components of $p'$.
    Let $\Cmc_1,\dots,\Cmc_m$ be the equivalence classes of `$\sim$'.
    %   For $c \in \mn{adom}(D)$, set
    % % $\delta(c) = \{ c \} \cup \{ c^i \mid
    % %   \ \exists x \in \Cmc_i: h(x)=c\}$.
    % $\delta(c) = \{ c^1 \} \cup \{ c^i \mid
    %   \ \exists x \in \Cmc_i: h(x)=c\}$.
    Define
    $$
    D'=\{ R(c^{i_1}_1,\dots,c^{i_n}_n) \mid
    R(c_1,\dots,c_n) \in D \text{ and } 1 \leq i_1,\dots,i_n \leq m \}.
    $$
    %
    % If $\bar x = x_1,\dots,x_n \in \Cmc_{i_1} \times \cdots \times
    % \Cmc_{i_n}$ and $\bar a = a_1 \cdots a_n$, set $\bar a' = a_1^{i_1}
    % \cdots a_n^{i_n}$.
    It is easy to see that Point~2 is indeed satisfied as long as we
    construct $\bar c'$ from $\bar c$ by replacing each component $c$
    with some $c^i$, $i \leq i \leq m$. To define the
    homomorphism $h'$ required by Point~1, we need two preliminaries.

    First, by definition of `$\sim$' we find, for each quantified
    $y \in \mn{var}(p')$, at most one $i$ such that there is an answer
    variable $x \in \Cmc_i$ that is reachable from $y$ in $G_{p'}$
    without passing an answer variable. Set $\rho(y)=i$ and
    $\rho(y)=1$ if there is no such $i$. Moreover, for each variable
    $x$ in $\bar x$, set $\rho(x)=i$ if $x \in \Cmc_i$. It is easy to
    see that when quantified variables $y_1,y_2$ co-occur in an atom in
    $p'$, then $\rho(y_1)=\rho(y_2)$.

    Second, by construction of $D'$ we find for each fact
    $\alpha=R(c^{i_1}_1,\dots,c^{i_n}_n) \in D'$ a homomorphism
    $g_\alpha$ from $D$ to $D'$ such that $g_\alpha(c_j)= c^{i_j}_j$
    for $1 \leq j \leq n$. We can extend $g_\alpha$ to a homomorphism
    from $\mn{ch}_{\Omc'}(D)$ to $\mn{ch}_{\Omc'}(D')$. We now define
    $h'$ as follows:
    \begin{itemize}

        \item \label{item:ctw-approx-1} if $x \in \mn{var}(p_i)$ and $h(x) \in \mn{adom}(D)$, then $h'(x)=h(x)^{\rho(x)}$;

        \item \label{item:ctw-approx-2} if $x \in \mn{var}(p_i)$ and
          $h(x)\notin \mn{adom}(D)$ is in the tree-like structure that
          the chase has generated below fact
          $R(c_1,\dots,c_n)$,\footnote{This can be made precise in the
            same way as in the proof of
            Lemma~\ref{prop:firstapprox}. We prefer to remain on the
            intuitive level here to not distract from the main proof.}
          then $h'(y)=g_\alpha(h(y))$,
          $\alpha=R(c^{\rho(x)}_1,\dots,c^{\rho(x)}_n)$.

    \end{itemize}
    Moreover, set $\bar c' = h'(\bar c)$. We argue that $h'$ is indeed
    a homomorphism from $p'$ to $\mn{ch}_{\Omc'}(D')$. Let $R(\bar y)$
    be an atom in $p'$. % with $\bar y = y_1,\dots,y_n$.
    First assume that $h(y)\in \mn{adom}(D)$ for all variables $y$ in
    $\bar y$. Let $\alpha = R(h'(\bar y))$. Then $g_\alpha$ is a
    homomorphism from $\mn{ch}_{\Omc'}(D)$ to $\mn{ch}_{\Omc'}(D')$
    with $g_\alpha(h(\bar y))=h'(\bar y)$. Together with
    $R(h(y)) \in \mn{ch}_\Omc(D)$, this yields
    $R(h'(y)) \in \mn{ch}_\Omc(D')$ as required.

    Now
    assume that $\bar y$ contains at least one variable $y$ with
    $h(y) \notin \mn{adom}(D)$. Then $y$ is a quantified variable.  By
    definition of $\rho$, $\rho(y)$ must be identical for all
    quantified variables $y$ in $\bar y$, and it must also be
    identical to $\rho(x)$ for all answer variables in $\bar y$.  This
    means that $h'(y)$ is defined based on the same homomorphism
    $g_\alpha$ for all variables $y$ in $\bar y$, and in particular
    $g_\alpha(h(\bar y))=h'(\bar y)$. From
    $R(h(y)) \in \mn{ch}_\Omc(D)$, we again obtain
    $R(h'(y)) \in \mn{ch}_\Omc(D')$ as required.

    \medskip
    \noindent
    We next construct an \Sbf-database $D''$ that satisfies Points~(a)
    to~(c) above, in a slightly modified form:
    %
    % and a  homomorphism $h''$ from $p$ to
    % $\mn{ch}_\Omc(D'')$ such that $h'(\bar x) = \bar a$ and
    %
    \begin{enumerate}
        \setcounter{enumi}{2}
        \item \label{item:ctw-approx-3} there is a homomorphism $h''$ from $p'$ to
        $\mn{ch}_{\Omc'}(D'')$ that maps $\bar x$ to $\bar c'$;

        \item \label{item:ctw-approx-4} the contract treewidth of $(D'',\bar c')$ is at most $k$;

        \item \label{item:ctw-approx-5} there is a homomorphism from $D''$ to $D'$ that is the
        identity on $\bar c'$.

    \end{enumerate}
    Set
    $$\Gamma=\{ h'(\bar y \cap \bar x) \mid R(\bar y) \in p' \} \cup \{
    h'(S \cap \bar x) \mid S \text{ $\bar x$-component of } p' \}$$
    and note the tight connection to the definition of contracts, in
    which edges step from atoms (first set in the definition of
    $\Gamma$) and from $\bar x$-components (second set).  For every
    $S \in \Gamma$, let $D_{S}$ be the database obtained from $D'$ by
    renaming every constant $c \notin S$ to $c^{S}$. Now define
    $$D'' = \bigcup_{S \in \Gamma} D_S.$$

    It is easy to see that Point~\ref{item:ctw-approx-5} is
    satisfied. We now argue that Points~3 and~4 also hold. For
    Point~3, we have to construct a homomorphism $h''$ from $p'$ to
    $\mn{ch}_{\Omc'}(D'')$ with $h''(\bar x)=\bar c'$. Start with
    setting $h''(x)=h'(x)$ for all $x\in \bar x$, and thus
    $h''(\bar x)=\bar c'$ as required.  It remains to define $h''$ for
    the quantified variables in $p'$. We do this per
    $\bar x$-component. Thus let $S$ be an $\bar x$-component of $p'$.
    Then $h(S \cap \bar x) \in \Gamma$.  It is not hard to prove that
    $\mn{ch}_{\Omc'}(D'') = \bigcup_{S \in \Gamma}
    \mn{ch}_{\Omc'}(D_S)$. Clearly, there is a homomorphism $h_S$ from
    $D'$ to $D_S$ with $h_S(c)=c$ for all $c \in S$, and $h_S$ can be
    extended to a homomorphism from $\mn{ch}_{\Omc'}(D')$ to
    $\mn{ch}_{\Omc'}(D_S)$.  Define $h''(y)=h_S \circ h'(y)$ for all
    $y \in S$. It can be verified that $h''$ is indeed a homomorphism.
    Details are omitted.

    To prove Point~4, we show that
    $\mn{contract}(G_{D''})$%\footnote{What we mean with
        %$\mn{contract}(G_{D''})$ is $\mn{contract}(G_{q_D})$ where $q_D$
        %is $D$ viewed as a CQ with answer variables~$\bar c$.}
    is a minor of $\mn{contract}(G_{p'})$.
    Here, by $\mn{contract}(G_{D''})$ we mean $\mn{contract}(G_{q_D})$ where $q_D$
    is $D$ viewed as a CQ with answer variables~$\bar c$.
    To this end, we first note that
    \begin{enumerate}
        \item[($*$)] if $\{a,b\}$ is an edge in $\mn{contract}(G_{D''})$,
        then there are $x_a,x_b$ such that $h'(x_a)=a$, $h'(x_b)=b$, and
        $\{x_a,x_b\}$ is an edge in $\mn{contract}(G_{p'})$.

    \end{enumerate}
    Thus let $\{a,b\}$ be an edge in $\mn{contract}(G_{D''})$. By
    construction of $D''$, this implies that there is an $S \in \Gamma$
    with $\{a,b\} \subseteq S$.  Consequently, $p'$ contains an atom
    $R(\bar y)$ such that there are $x_a,x_b \in \bar y \cap \bar x$
    with $h'(x_a)=a$ and $h'(x_b)=b$, or $p'$ has an $\bar x$-component
    $S$ such that there are $x_a,x_b \in S \cap \bar x$ with $h'(x_a)=a$
    and $h'(x_b)=b$. In both cases, $\{x_a,x_b\}$ is an edge in
    $\mn{contract}(G_{p'})$. At this point, we are done if $h'$ is
    injective on $\bar x$ because then $\mn{contract}(G_{D''})$ is
    a subgraph of $\mn{contract}(G_{p'})$. But this need not be the case. However,
    Point~1 above implies that if $h'(x)=h'(x')$, $x,x' \in \bar x$,
    then we find a sequence of variables $x_1,\dots,x_n$ from $\bar x$
    such that $x_1=x$, $x_n=x'$, and $x_i$ is connected to $x_{i+1}$ in
    $G_{p'}$ via a path whose non-end nodes are all from outside $\bar
    x$. Thus, $\mn{contract}(G_{p'})$ contains edges
    $\{x_1,x_2\},\dots,\{x_{n-1},x_n\}$. If we contract all these edges,
    we obtain a minor $G$ of $\mn{contract}(G_{p'})$ that $h'$ maps
    injectively into $\mn{contract}(G_{D''})$ and $(*)$ still
    holds. Thus $\mn{contract}(G_{D''})$ is a subgraph of $G$. Moreover,
    the treewidth of $G$ is not larger than that of
    $\mn{contract}(G_{p'})$ as the latter contains $G$ as a minor.

    \medskip

    Now back to the main proof.  From $\bar c' \in P(D'')$, we obtain
    $\bar c' \in Q(D'')$. Consequently, there is a homomorphism $g$
    from some CQ $q'$ in $q$ to $\mn{ch}_{\Omc}(D'')$ such that
    $g(\bar x)=\bar c'$. Let $\widehat q$ denote the collapsing of
    $q'$ that is obtained by identifying $y_1$ and $y_2$ whenever
    $g(y_1)=g(y_2)$ with at least one of $y_1,y_2$ a quantified
    variable and adding $x_1=x_2$ whenever $g(x_1)=g(x_2)$ and
    $x_1,x_2$ are both answer variables. Note that $g$ is an injective
    homomorphism from $\widehat q$ to $\mn{ch}_\Omc(D'')$, that is, if
    $g(y_1)=g(y_2)$ then $y_1$ and $y_2$ are answer variables and
    $y_1=y_2 \in \widehat q$.  In what follows, we use this fact to
    show that the contract treewidth of $\widehat q$ is at most
    $k$. This finishes the proof as it means that $\widehat q$ is a CQ
    in $q^{\text{CTW}}_k$, and thus $g$ witnesses that
    $\bar c' \in Q^{\text{CTW}}_k(D'')$. Points~2 and~5 above yield a
    homomorphism $g'$ from $D''$ to $D$ such that
    $g'(\bar c') = \bar c$. We can extend $g'$ to a homomorphism from
    $\mn{ch}_\Omc(D'')$ to $\mn{ch}_\Omc(D)$. It is thus easy to see
    that $\bar c \in Q^{\text{CTW}}_k(D)$.

    \medskip Since $g$ is injective and by
    definition of $G_{\widehat q}$, it suffices to show that if $\{ x_1,x_2\}$
    is an edge in $\mn{contract}(G_{\widehat q})$, then
    $\{ g(x_1),g(x_2)\}$ is an edge in
    $\mn{contract}(G_{D''})$.
    Thus let
    $\{ x_1,x_2\}$ be an edge in $\mn{contract}(G_{\widehat q})$.

    First assume that $\{ x_1,x_2\}$ is an edge in the restriction of
    $G_{\widehat q}$ to nodes~$\bar x$. Then $\widehat q$ contains
    an atom $R(\bar z)$ such that $x_1,x_2 \in \bar z$. This implies
    that $g(x_1),g(x_2)$ occur in the fact
    $R(g(\bar z)) \in \mn{ch}_\Omc(D'')$. Since $\Omc \in \class{G}$, $D''$ must contain a fact in which both of
    $g(x_1),g(x_2)$ occur. Thus, $\{ g(x_1),g(x_2)\}$ is an edge in
    $\mn{contract}(G_{D''})$.

    Now assume that $x_1,x_2$ co-occur in some $\bar x$-component of
    $G_{\widehat q}$. Then $G_{\widehat q}$ contains a path
    $z_1,\dots,z_n$ such that $z_1=x_1$, $z_n =x_n$, and
    $z_2,\dots,z_{n-1}$ are quantified variables.  Let
    $z_{i_1},\dots,z_{i_k}$ denote the subsequence of $z_1,\dots,z_n$
    obtained by dropping all $z_i$ such that $g(z_i)$ is a constant that
    was introduced by the chase. Since $\Omc \in \class{G}$,
    $\{ z_{i_j}, z_{i_{j+1}} \}$ is an edge in $G_{D''}$ for
    $1 \leq j < k$.  We know that $g(z_{i_j})\notin\bar c'$ for
    $2 \leq j< k$ since if $g(z_{i_j}) =a \in \bar c'$, then $z_{i_j}$
    was identified with some $x \in \bar{x}$ such that $g(x)=a$ during
    the construction of $\widehat q$, in contrary to the fact that
    $z_{i_j}$ is a quantified variable.  It follows that
    $\{ g(x_1),g(x_2)\}$ is an edge in $\mn{contract}(G_{D''})$.
\end{proof}

{\bf Acknowledgements.} This research was funded by the
DFG project QTEC. We
thank the anonymous reviewers for useful comments.

\end{document}